\documentclass[11pt]{article}

\usepackage[margin=1in,headheight=14pt,footskip=30pt]{geometry}
\usepackage{amsmath,amssymb,amsthm}
\usepackage{array,graphicx}
\usepackage{tikz}
\usepackage{pdflscape}
\usetikzlibrary{arrows.meta,positioning,calc,fit,backgrounds}
\providecommand{\toprule}{\hline}\providecommand{\midrule}{\hline}\providecommand{\bottomrule}{\hline}
\usepackage[hidelinks]{hyperref}
\hypersetup{
  pdftitle={Step Recursion: A Three-Parameter Refinement of the Grzegorczyk Hierarchy},
  pdfauthor={Kirill Osipov},
  pdfsubject={Step recursion and refined Grzegorczyk classes},
  pdfkeywords={bounded recursion, step recursion, Grzegorczyk hierarchy, generalized inverses, trace sparsity}
}

\newcommand{\Hclass}[3]{\mathcal H^{#1}_{#2,#3}}

\newcommand{\Eclass}[1]{\mathcal E^{#1}}

\newcommand{\Nat}{\mathbb N}
\newcommand{\NZ}{\operatorname{NZ}}
\newcommand{\ISZ}{\operatorname{ISZ}}
\newcommand{\dotminus}{\mathbin{\dot{-}}}
\newcommand{\sem}[1]{\mathopen{\lbrack\!\lbrack}#1\mathclose{\rbrack\!\rbrack}}
\newtheorem{theorem}{Theorem}
\newtheorem{corollary}[theorem]{Corollary}
\newtheorem{lemma}[theorem]{Lemma}
\newtheorem{proposition}[theorem]{Proposition}
\newtheorem{remark}[theorem]{Remark}
\newtheorem{definition}[theorem]{Definition}

\title{Step Recursion:\\A Three-Parameter Refinement of the Grzegorczyk Hierarchy}
\author{Kirill Osipov\\{\normalsize Independent researcher}}
\date{August $2026$\\[0.5em]
}

\begin{document}
\maketitle
\begin{abstract}
We ask whether asymptotic recursion depth determines the expressive strength
of a bounded recursive algebra, and prove that it does not.  We replace ordinary predecessor
recursion by generalized-inverse descent along a fixed iterate $g_n^{[l]}$ and
obtain classes $\Hclass{m}{n}{l}$, where $m$ measures initial-function strength,
$n$ the growth row, and $l$ the traversal stride.

For all rows $n,n'\ge2$ we prove an exact inclusion criterion.  At a fixed row $n\ge2$ three
regimes occur: below the critical basis ($m<n$), equal-row inclusion is exactly
reverse divisibility $l'\mid l$; at $m=n$ every stride collapses to one class;
and from $m=n+1$ this class is ordinary bounded recursion $\Eclass{m}$.  Hence
pairwise $\Theta$-equivalent descent depths can induce infinite descending
chains, infinite antichains, and copies of every finite partial order.  The
separation is therefore controlled by traversal alignment rather than by
growth rate or recursion depth alone.

The proof combines exact-depth simulation, trace sparsity, and selected
dependency chains.  The exceptional doubling row has the same reverse-
divisibility order at basis zero, but all strides collapse from basis one
onward; from basis three it equals ordinary bounded recursion, while at basis
two $\Hclass{2}{1}{l}\subsetneq\mathsf{FP}$.  At basis zero the doubling-row
classes are proper subclasses of deterministic functional logspace, so the same
dual-divisibility order already occurs inside $\mathsf{FL}$.
\end{abstract}

\noindent\textbf{Keywords.}
bounded recursion, step recursion, subrecursive hierarchy, Grzegorczyk
hierarchy, generalized inverses, trace sparsity, functional logspace.

\section{Introduction}

\subsection{The operation}

Ordinary bounded recursion constructs
\[
 f(\bar x,0)=g(\bar x),\qquad
 f(\bar x,y+1)=h(\bar x,y,f(\bar x,y)),
\]
subject to an earlier pointwise bound, so recursion on $y$ permits one update
at each predecessor stage.  Step recursion changes only this schedule.  For a
strictly increasing $\varphi:\Nat\to\Nat$ with $\varphi(x)\ge x+1$, let
\[
 \rho_\varphi(0)=0,\qquad
 \rho_\varphi(y)=\min\{z:\varphi(z)\ge y\}\quad(y>0).
\]
The recursion clause becomes
\[
 f(\bar x,0)=g(\bar x),\qquad
 f(\bar x,y)=h\bigl(\bar x,\rho_\varphi(y),f(\bar x,\rho_\varphi(y))\bigr),
\]
and follows the descent
$y,\rho_\varphi(y),\rho_\varphi^{[2]}(y),\ldots,0$.
For example, $\varphi(x)=x^2+2$ sends the recursion starting at $38$ through
$38,6,2,0$, so the relevant resource is the descent depth
\[
 D_{\rho_\varphi}(y)=\min\{t:\rho_\varphi^{[t]}(y)=0\}.
\]

We fix Grzegorczyk bases $B_m$ and
\[
 g_0(x)=x+1,\qquad g_1(x)=2x+1,\qquad g_n=e_n\quad(n\ge2).
\]
For $l\ge1$, $\Hclass{m}{n}{l}$ is the closure of $B_m$ under composition
and bounded step recursion with step $g_n^{[l]}$.  Thus $m$ measures the
initial-function strength, $n$ the growth scale, and $l$ the stride through
its canonical layers.  Below the horizontal collapse threshold, a six-layer
move can be simulated by two three-layer moves, while strides $2$ and $3$ are
incomparable: the order is divisibility rather than numerical size.

\subsection{Context and significance}

Classical subrecursive hierarchies separate functions by changing the initial
basis, the permitted recursion scheme, or a syntactic measure of recursion.
The Grzegorczyk hierarchy and its function-algebraic presentations retain
ordinary predecessor recursion while calibrating generator strength
\cite{Grzegorczyk1953,Rose1984,Clote1999}.  Recursion-number and ranking
approaches measure the structural complexity of primitive-recursive
definitions \cite{Parsons1968,Schwichtenberg1969,BellantoniNiggl1999}, while
loop hierarchies measure program or nesting structure
\cite{MeyerRitchie1967,Cleave1963,GoetzeNehrlich1978,GoetzeNehrlich1980}.
Axt's subrecursive hierarchy and later iteration results, together with the
classical elimination literature, vary the hierarchy construction or modify and
iterate recursion schemes themselves
\cite{Axt1959,Axt1965,Axt1966,Rodding1964,Gladstone1967,Gladstone1971,Georgieva1977}.
Other nearby lines characterize complexity by machine bounds, ordinal descent,
safe or ramified recursion, path orders, discrete equations, or cyclic proof
systems
\cite{Ritchie1963,Cobham1965,LobWainer1970,Wainer1972,BellantoniCook1992,
DalLagoMartiniZorzi2010,AvanziniDalLago2018,DasOitavem2018,
AvanziniMoser2013,BournezDurand2019,CurziDas2026,
TabatabaiGreatiRamanayake2025}.

The parameter isolated here is different.  We keep both the bounded-recursion
format and the growth row fixed and vary only the generalized-inverse
\emph{traversal} through that row.  In particular, varying $l$ changes neither
recursion rank nor recursion-nesting depth.  Thus $l$ is neither a stronger
generator nor a change of recursion scheme: it specifies which canonical
layers are visited by the same bounded recursive mechanism.
The main theorem shows that asymptotic descent depth is not a complete
invariant of the resulting algebra.  Below the critical basis, all fixed
strides have pairwise $\Theta$-equivalent depths, yet their inclusion order is
exactly dual divisibility.  Consequently a single fixed Grzegorczyk sector
contains infinite descending chains, infinite antichains, and copies of every
finite partial order.  Rich partial orders occur in broader subrecursive
settings \cite{Machtey1971}; here the order is forced canonically by one
arithmetic traversal parameter.

Equally important is the disappearance of this information.  At the critical
basis the whole stride order collapses, and one basis level later the common
class is ordinary bounded recursion.  The exceptional doubling row shows that
the phenomenon is not tied to the rapidly growing rows: reverse divisibility
reappears at basis zero on dyadic layers and vanishes as soon as the initial
basis is strengthened.  The classification therefore locates both a
traversal-sensitive region and the exact threshold at which initial-function
strength erases traversal information.

\subsection{Results}

The principal theorem gives a complete cross-row classification for all
$n,n'\ge2$ and, in particular, shows that $\Theta$-equivalence of recursion
depth does not determine class equality; row zero is integrated separately.  In the exceptional doubling row $n=1$, the only
stride-sensitive basis is $m=0$, where equal-row inclusion is again exactly
reverse divisibility; all strides collapse for $m\ge1$.  The main phenomenon
is already visible at one fixed row $n\ge2$:
\[
 \boxed{m<n:\ \text{reverse divisibility}}
 \quad\longrightarrow\quad
 \boxed{m=n:\ \text{stride collapse}}
 \quad\longrightarrow\quad
 \boxed{m\ge n+1:\ \Hclass{m}{n}{l}=\Eclass{m}}.
\]
Thus traversal remains visible below basis level $n$, disappears exactly at
$m=n$, and one level later step recursion recovers ordinary bounded recursion.
The complete cross-row criterion is
\[
\Hclass{a}{n}{l}\subseteq\Hclass{b}{n'}{l'}
\quad\Longleftrightarrow\quad
 a\le b\ \text{ and }\
\begin{cases}
\text{no further condition}, & n>n',\\
 b\ge n\text{ or }l'\mid l, & n=n',\\
 b\ge n'+1, & n<n'.
\end{cases}
\]
The doubling row has its own sharp transition.  At basis zero,
\[
 \Hclass{0}{1}{p}\subseteq\Hclass{0}{1}{q}
 \quad\Longleftrightarrow\quad q\mid p,
\]
and every class in this basis-zero stride sector is a proper subclass of
$\mathsf{FL}$.  Thus the dual-divisibility order already occurs inside
functional logspace.  All strides collapse from basis one onward.  Moreover,
$\Hclass{m}{1}{l}=\Eclass{m}$ for $m\ge3$; at basis $2$ we construct the
base-two vertical bridge and prove $\Hclass{2}{1}{l}\subsetneq\mathsf{FP}$,
while equality with $\Eclass{2}$ would imply $\mathsf P=\mathsf{NP}$.

The proof architecture is modular; the three main mechanisms can be read as follows.
\begin{center}
\small
\renewcommand{\arraystretch}{1.08}
\begin{tabular}{@{}>{\raggedright\arraybackslash}p{0.37\linewidth}c>{\raggedright\arraybackslash}p{0.47\linewidth}@{}}
\toprule
\textbf{Mechanism} & & \textbf{Main role} \\
\midrule
Exact-depth transfer & $\longrightarrow$ & positive inclusions \\
Trace sparsity & $\longrightarrow$ & saturation and vertical separations \\
Canonical zones and selected dependency chains & $\longrightarrow$ & equal-row divisibility \\
\bottomrule
\end{tabular}
\end{center}
Vertical depth domination handles the remaining changes of row.  Formal
definitions begin in Section~\ref{sec:framework}.

\section{Framework}
\label{sec:framework}

Throughout the paper,
\[
 \Nat=\{0,1,2,\ldots\},\qquad
 \Nat_{>0}:=\Nat\setminus\{0\}.
\]
For $x,y\in\Nat$, write
\[
 x\dotminus y:=\max\{x-y,0\}
\]
for truncated subtraction.  All functions are total maps from a positive finite Cartesian power of
$\Nat$ to $\Nat$.  All indices that specify a generating function or an
iteration count are fixed parameters, not numerical inputs.  For a unary function $u$, write
\[
 u^{[0]}(x)=x,\qquad u^{[t+1]}(x)=u(u^{[t]}(x)).
\]
Thus $u^{[t]}$ is the $t$-fold iterate of $u$.  For functions of the same
arity, $f\le g$ means $f(\bar x)\le g(\bar x)$ for every input.  A function
$g$ satisfying $f\le g$ is a \emph{majorant} of $f$.  For a statement $\mathsf A$,
$[\mathsf A]$ is $1$ if $\mathsf A$ is true and $0$ otherwise; a \emph{predicate} is a
$\{0,1\}$-valued function.  A function is \emph{internal to} a class if it
belongs to that class.

When $G$ is unary and $c$ is fixed, saying that a quantity depending on $N$
is bounded by $G^{[c]}(N+c)$ means that it is at most
$G^{[c]}(N+c)$ for every $N$.  The dependence of $c$ is stated where it is used; $c$ never depends on
$N$ or on the input values.

The initial functions are the unary zero function
\[
 Z(x)=0,
\]
the unary successor function
\[
 S(x)=x+1,
\]
and, for every $k\ge1$ and $1\le i\le k$, the projection
\[
 P_i^k(x_1,\ldots,x_k)=x_i.
\]
Zero functions of higher arity are obtained as $Z\circ P_1^k$; projections
also provide dummy variables and permutations of arguments.

We use Rose's level numbering for the Grzegorczyk hierarchy
\cite[Chapter~2, especially pp.~31--36]{Rose1984}, but index its
generating functions from $1$: for $i\ge1$, our $e_i$ is Rose's
$e_{i-1}$.  Thus
\[
 e_1(x,y)=x+y,\qquad e_2(x)=x^2+2,
\]
and, for $r\in\Nat$,
\[
 e_{r+3}(0)=2,\qquad
 e_{r+3}(x+1)=e_{r+2}(e_{r+3}(x)).
\]
Hence $e_1$ is binary and $e_k$ is unary for $k\ge2$.  Equivalently,
$e_{r+3}(x)=e_{r+2}^{[x]}(2)$.  For $m\in\Nat$, let
\[
 B_m=\{Z,S\}\cup\{P_i^k:k\ge1,\ 1\le i\le k\}
      \cup\{e_k:1\le k\le m\}.
\]
This is Rose's level-$m$ initial basis; only the symbols for its generators
have been shifted by one.

If $f$ is $r$-ary and $h_1,\ldots,h_r$ are $k$-ary, their composition is
\[
 f(h_1,\ldots,h_r)(\bar x)
   =f(h_1(\bar x),\ldots,h_r(\bar x)).
\]

\begin{definition}[finite-stage closure]
\label{def:finite-stage-closure}
Let $C$ be a family of functions and let $\mathcal O$ be a collection of
operations on functions.  Put $C^{(0)}=C$, and let $C^{(s+1)}$ contain
$C^{(s)}$ together with every function obtained by one application of an
operation in $\mathcal O$ to functions in $C^{(s)}$.  We write
\[
 \operatorname{Cl}_{\mathcal O}(C)=\bigcup_{s\in\Nat}C^{(s)}.
\]
Thus $\operatorname{Cl}_{\circ}(C)$ denotes closure under composition.
A function used at stage $s+1$ is called
\emph{earlier} if it belongs to $C^{(s)}$.
\end{definition}

\begin{definition}[bounded recursion]
\label{def:bounded-recursion}
Fix $k\ge1$.  Let $g:\Nat^k\to\Nat$,
$h:\Nat^{k+2}\to\Nat$, and $b:\Nat^{k+1}\to\Nat$ be earlier functions.
Bounded recursion forms the function $f:\Nat^{k+1}\to\Nat$ defined by
\[
 f(\bar x,0)=g(\bar x),\qquad
 f(\bar x,y+1)=h(\bar x,y,f(\bar x,y)),
\]
provided
\[
 f(\bar x,y)\le b(\bar x,y)
\]
for all arguments.  We denote this operation by $\mathrm{BR}$, write
$\operatorname{Cl}_{\circ,\mathrm{BR}}(C)$ for closure under composition
and bounded recursion, and define
\[
 \Eclass{m}=\operatorname{Cl}_{\circ,\mathrm{BR}}(B_m).
\]
These are the bounded-recursion classes used in the article.
\end{definition}

\begin{definition}[bounded step recursion]
\label{def:descent-closure}
A \emph{descent function} is a nondecreasing unary function $\rho$ satisfying
\[
 \rho(0)=0,\qquad \rho(y)<y\quad(y>0).
\]
Its \emph{descent depth} is the total function
\[
 D_\rho(y):=\min\{t\in\Nat:\rho^{[t]}(y)=0\}.
\]
The minimum exists because every positive iterate strictly decreases until it
reaches $0$.  Moreover, $D_\rho$ is nondecreasing.  Indeed, if $x\le y$, then
monotonicity of $\rho$ gives $\rho^{[t]}(x)\le\rho^{[t]}(y)$ for every $t$;
therefore $\rho^{[t]}(y)=0$ implies $\rho^{[t]}(x)=0$, and hence
$D_\rho(x)\le D_\rho(y)$.  Fix $k\ge1$.  Let $g:\Nat^k\to\Nat$,
$h:\Nat^{k+2}\to\Nat$, and $b:\Nat^{k+1}\to\Nat$ be earlier functions.
Bounded step recursion along $\rho$ forms the function
$f:\Nat^{k+1}\to\Nat$ defined by
\[
 f(\bar x,0)=g(\bar x),\qquad
 f(\bar x,y)=h\bigl(\bar x,\rho(y),f(\bar x,\rho(y))\bigr)
 \quad(y>0),
\]
provided $f(\bar x,y)\le b(\bar x,y)$ for all arguments.  We denote by
$\mathrm{SR}_\rho$ the operation that maps the admissible data $(g,h,b)$ to
$f$, and write
\[
 A_\rho(C)=\operatorname{Cl}_{\circ,\mathrm{SR}_\rho}(C).
\]

If $\varphi:\Nat\to\Nat$ is strictly increasing and
$\varphi(x)\ge x+1$, define its generalized inverse by
\[
 \rho_\varphi(0)=0,\qquad
 \rho_\varphi(y)=\min\{z\in\Nat:\varphi(z)\ge y\}\quad(y>0).
\]
This is a descent function.  We call $\mathrm{SR}_{\rho_\varphi}$
\emph{step recursion with step $\varphi$}.
\end{definition}

Define a sequence of increasing functions by
\[
 g_0(x)=x+1,\qquad
 g_1(x)=S(e_1(x,x))=2x+1,\qquad
 g_n=e_n\quad(n\ge2).
\]
For $n\in\Nat$ and $l\ge1$, set
\[
 \varphi_{n,l}=g_n^{[l]},
\]
so that
\[
 \varphi_{0,l}(x)=x+l,\qquad
 \varphi_{1,l}(x)=2^l(x+1)-1,\qquad
 \varphi_{n,l}=e_n^{[l]}\quad(n\ge2).
\]
Let $\rho_{n,l}=\rho_{\varphi_{n,l}}$ and put
\[
 D_{n,l}:=D_{\rho_{n,l}}.
\]
Finally, put
\[
 \Hclass{m}{n}{l}=A_{\rho_{n,l}}(B_m).
\]
The family
\[
 \bigl\{\Hclass{m}{n}{l}:m,n\in\Nat,\ l\ge1\bigr\}
\]
is the three-parameter \emph{step-recursion hierarchy} introduced in this
paper.  It refines the Grzegorczyk hierarchy by separating the strength
of the initial basis from the growth scale and stride of the recursion step.
For $n\ge2$, row $n$ uses the Grzegorczyk generator $e_n$ directly.
Row $1$ instead uses the strict unary diagonal majorant
$g_1(x)=S(e_1(x,x))=2x+1$.  The parameter $m$ determines the initial functions,
$n$ selects $g_n$, and $l$ is the number of iterations used to form the step.  The family with fixed
$n$ is the $n$-th row.  Since $B_a\subseteq B_b$ for $a\le b$,
\begin{equation}
\label{eq:basis-monotonicity}
 \Eclass{a}\subseteq\Eclass{b},\qquad
 \Hclass{a}{n}{l}\subseteq\Hclass{b}{n}{l}.
\end{equation}

\begin{remark}[index convention]
\label{rem:convention-note}
The lowest choices are $\varphi_{0,1}(x)=x+1$ and
$\varphi_{1,1}(x)=2x+1$.  For $n\ge2$ the step is a fixed iterate of the
Grzegorczyk generating function $e_n$.  The classes with $n=1$ are not
completely classified at the low initial bases $m\le2$; Section~\ref{sec:doubling-row}
proves the base-two bridge and saturation from initial basis $3$ onward.
\end{remark}

For example, step recursion with step $\varphi_{0,l}(x)=x+l$ uses
\[
 \rho_{0,l}(y)=y\dotminus l,
 \qquad
 D_{0,l}(y)=\left\lceil\frac yl\right\rceil.
\]
For example, when $l=2$, the successive recursion arguments starting from
$7$ are
\[
 7,\ 5,\ 3,\ 1,\ 0,
\]
so the descent depth is $4=\lceil 7/2\rceil$.
For $\varphi_{1,l}(x)=2^l(x+1)-1$,
\[
 \rho_{1,l}(y)=\left\lfloor\frac y{2^l}\right\rfloor,
 \qquad
 D_{1,l}(y)=\left\lceil\frac{\log_2(y+1)}l\right\rceil.
\]
For $\varphi_{2,1}(x)=x^2+2$, repeated applications of the generalized
inverse reduce the argument approximately by successive square roots, and
$D_{2,1}(y)$ has order $\log\log(y+4)$.

Bounded recursion is precisely step recursion with step $x+1$.
Indeed, for $\varphi_{0,1}(y)=y+1$,
\[
 \rho_{0,1}(0)=0,\qquad \rho_{0,1}(y+1)=y,
\]
so the defining equations and admissible bounds of the two operations are
identical.  Hence
\begin{equation}
\label{eq:ordinary-boundary}
 \boxed{\Eclass{m}=\Hclass{m}{0}{1}.}
\end{equation}
The value $l=0$ is excluded because it makes $\varphi_{n,0}$ the identity,
whose generalized inverse is not a strict descent.

\begin{lemma}[generator identities]
\label{lem:generator-identities}
For every $r\ge2$,
\[
 e_{r+1}(x)=e_r^{[x]}(2)
 \qquad\text{and}\qquad
 e_r(x)\le e_{r+1}(x).
\]
In particular, $e_r(x)\ge x+2$ and every unary generating function $e_r$,
$r\ge2$, is strictly increasing.
\end{lemma}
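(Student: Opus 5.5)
The plan is to handle the iteration identity first and then derive the inequality and the monotonicity from it by a joint induction on $r$.

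For the identity $e_{r+1}(x)=e_r^{[x]}(2)$ with $r\ge2$, write $r+1=r'+3$ with $r'=r-2\ge0$. The defining clauses become $e_{r+1}(0)=2$ and $e_{r+1}(x+1)=e_r(e_{r+1}(x))$. An induction on $x$ gives the identity: the case $x=0$ is $e_r^{[0]}(2)=2$, and the step is $e_{r+1}(x+1)=e_r(e_r^{[x]}(2))=e_r^{[x+1]}(2)$. This is just the equivalent form already noted after the definition.

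Next I will show, by induction on $r\ge2$, that $e_r(x)\ge x+2$ and that $e_r$ is strictly increasing.

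\begin{itemize}
\item For $r=2$ both facts are immediate from $e_2(x)=x^2+2$.
\item For the step, assume both facts for $e_r$. Then $e_r(z)\ge z+2\ge z+1$, so the sequence $e_r^{[x]}(2)$ is strictly increasing in $x$. Hence $e_{r+1}$ is strictly increasing.
\item Iterating $e_r(z)\ge z+2$ gives $e_r^{[x]}(2)\ge 2+2x\ge x+2$. So $e_{r+1}(x)\ge x+2$.
\end{itemize}

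For the comparison $e_r(x)\le e_{r+1}(x)$ I will use the identity. At $x=0$ we have $e_{r+1}(0)=2\le e_r(0)$, so the inequality can only be checked there by direct computation.

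\begin{itemize}
\item For $r=2$: $e_2(0)=2$ and $e_3(0)=2$.
\item For $r\ge3$: $e_r(0)=2=e_{r+1}(0)$, since every $e_k$ with $k\ge3$ takes the value $2$ at $0$.
\end{itemize}

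For $x\ge1$ I will prove by induction on $x$ the stronger claim $e_{r+1}(x)\ge e_r(x)$, using $e_{r+1}(x+1)=e_r(e_{r+1}(x))$. By the inductive hypothesis $e_{r+1}(x)\ge e_r(x)\ge x+2\ge x+1$. Monotonicity of $e_r$ then gives $e_{r+1}(x+1)\ge e_r(x+1)$.

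Two base values need explicit computation:

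\begin{itemize}
\item $e_{r+1}(0)=2\le e_r(0)$ holds with equality (computed above);
\item $e_{r+1}(1)=e_r(2)\ge e_r(1)$ holds by monotonicity.
\end{itemize}

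In fact the step from $x$ to $x+1$ only needs $e_{r+1}(x)\ge x+1$, which the bound $e_{r+1}(x)\ge x+2$ already supplies. So the induction can start at $x=0$ with the equality $e_{r+1}(0)=e_r(0)=2$.

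The only delicate point is exactly this base value. The inequality holds with equality at $0$, and that equality depends on the convention $e_2(0)=2$, so I will check it explicitly rather than rely on a growth argument.
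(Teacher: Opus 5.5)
Your proof is correct. The first two parts match the paper: the iteration identity by induction on $x$, and $e_r(x)\ge x+2$ with strict increase via $e_r^{[x]}(2)\ge 2x+2$.

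You genuinely diverge on the comparison $e_r\le e_{r+1}$. The paper uses a nested induction on the index. For $r=2$ it uses the chain $e_3(x+1)=e_2(e_3(x))\ge e_2(e_2(x))\ge e_2(x+1)$. For $r\ge3$ it needs the previous comparison $e_r\ge e_{r-1}$ inside the chain $e_{r+1}(x+1)=e_r(e_{r+1}(x))\ge e_r(e_r(x))\ge e_{r-1}(e_r(x))=e_r(x+1)$. That step also uses monotonicity of $e_r$, which the paper only establishes in its final sentence, so its argument is implicitly a simultaneous induction on $r$.

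You instead establish the growth bound and monotonicity first, for all $r$; these do not need the comparison. You then get $e_{r+1}(x+1)=e_r(e_{r+1}(x))\ge e_r(x+1)$ directly from $e_{r+1}(x)\ge x+1$ and monotonicity of $e_r$, together with $e_{r+1}(0)=e_r(0)=2$. This argument is pointwise and needs neither an induction on $x$ nor the lower-index comparison $e_r\ge e_{r-1}$. It also makes the dependency order explicit rather than implicit. The paper's route is the classical Grzegorczyk-style index induction and yields the slightly stronger intermediate bound $e_{r+1}(x+1)\ge e_r(e_r(x))$, but nothing later uses it.

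The write-up could be tightened in three places:
\begin{enumerate}
\item At $x=0$ you wrote $e_{r+1}(0)=2\le e_r(0)$, which is the reverse of the needed direction. This is harmless because both values equal $2$.
\item The claim you call ``stronger'' is the same claim.
\item The appeal to the inductive hypothesis $e_{r+1}(x)\ge e_r(x)$ is superfluous, as you yourself note.
\end{enumerate}
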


\begin{proof}
The iteration identity is induction on $x$ from the defining equations.
The final two assertions hold for $e_2(x)=x^2+2$.  For the first index
comparison, induction on $x$ gives
\[
 e_3(x+1)=e_2(e_3(x))\ge e_2(e_2(x))\ge e_2(x+1).
\]
If $e_r\ge e_{r-1}$ for some $r\ge3$, the same induction gives
\[
 e_{r+1}(x+1)=e_r(e_{r+1}(x))
 \ge e_r(e_r(x))\ge e_{r-1}(e_r(x))=e_r(x+1).
\]
All unary generators take the value $2$ at $0$.  Finally, if
$e_r(x)\ge x+2$, then
$e_{r+1}(x+1)=e_r(e_{r+1}(x))\ge e_{r+1}(x)+2$; induction on $x$ gives
$e_{r+1}(x)\ge2x+2\ge x+2$ and strict increase.
\end{proof}

\begin{lemma}[stage-preserving composition majorants]
\label{lem:majorants}
Let $\rho$ be any descent function.  Consider either finite-stage construction
\[
 \operatorname{Cl}_{\circ,\mathrm{BR}}(B_m)
 \qquad\text{or}\qquad
 \operatorname{Cl}_{\circ,\mathrm{SR}_{\rho}}(B_m),
\]
and write $C^{(s)}$ for its stages from
Definition~\ref{def:finite-stage-closure}.  For every $s$ and every
$k$-ary $f\in C^{(s)}$, there is a coordinatewise nondecreasing
\[
 T_f\in \operatorname{Cl}_\circ(B_m)\cap C^{(s)}
\]
such that $f(\bar x)\le T_f(\bar x)$ for all $\bar x$.

If $m=0$, $k$-ary composition terms are constants or $x_i+c$; if $m=1$,
they are affine functions with nonnegative integer coefficients.  If
$m\ge2$ and $F=e_m$, every $k$-ary composition term $T$ satisfies
\[
 T(\bar x)\le F^{[c]}(M+c),\qquad M=\max_i x_i,
\]
for some constant $c$ depending on $T$.
\end{lemma}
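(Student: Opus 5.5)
We argue by induction on the stage $s$, proving two things at once: the existence of $T_f\in\operatorname{Cl}_\circ(B_m)\cap C^{(s)}$ with $f\le T_f$ coordinatewise nondecreasing, and the stated normal forms for composition terms. We settle the normal forms first, since the inductive step uses them to keep majorants inside $\operatorname{Cl}_\circ(B_m)$. We do this by a separate induction on composition terms over $B_m$.

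\textbf{Normal forms.} For $m=0$ the generators are $Z$, $S$ and projections, so every term is a constant or some $x_i+c$. Both shapes are closed under substitution: substituting into $x_i+c$ adds a constant, and substituting into a constant gives a constant. For $m=1$ we add $e_1(x,y)=x+y$, and affine forms with nonnegative integer coefficients are closed under addition and substitution. For $m\ge2$, put $F=e_m$. We first record that $F(x)\ge x+2$ and that $F$ is increasing (Lemma~\ref{lem:generator-identities}). From this, $x+1\le F(x)$, $x_i\le M$, $0\le F^{[0]}(M)$, and $x+y\le 2\max(x,y)+2\le F^{[c_0]}(\max+c_0)$ for a small fixed $c_0$; indeed $e_2(z)=z^2+2\ge 2z+2$ once $z\ge2$, which can be forced by shifting the argument. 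Likewise $e_k\le e_m$ for $k\le m$. For composition $T=G(H_1,\dots,H_r)$ with bounds $G\le F^{[c]}(\max+c)$ and $H_j\le F^{[c_j]}(M+c_j)$, monotonicity gives
\[
T\le F^{[c]}\bigl(F^{[c']}(M+c')+c\bigr)\le F^{[c+c'+c]}(M+c+c')
\]
with $c'=\max_j c_j$. Here we use $y+c\le F^{[c]}(y)$, which follows from $F(z)\ge z+1$. Hence a single constant $c''$ works.

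\textbf{Majorants along stages.} For $s=0$, every basis function is itself coordinatewise nondecreasing and lies in $\operatorname{Cl}_\circ(B_m)\cap C^{(0)}$, so we take $T_f=f$. For a composition $f=G(H_1,\dots,H_r)$ created at stage $s+1$, we take $T_f=T_G(T_{H_1},\dots,T_{H_r})$. This is a composition term, it is monotone because its components are, and it belongs to $C^{(s+1)}$ because one composition step applies to the stage-$s$ functions $T_G,T_{H_j}$. The bound holds by monotonicity of $T_G$. For a function $f$ obtained by recursion (either $\mathrm{BR}$ or $\mathrm{SR}_\rho$) with bound $b\in C^{(s)}$, we have $f\le b\le T_b$, and we set $T_f=T_b\in C^{(s)}\subseteq C^{(s+1)}$. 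The descent function $\rho$ plays no role here, which is why the lemma holds uniformly in $\rho$.

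\textbf{Main obstacle.} The main obstacle is the stage bookkeeping. We need $T_f\in C^{(s)}$ at the \emph{same} stage as $f$, but the composite $T_G(T_{H_j})$ only appears one stage after its components. This works only because the components lie in $C^{(s)}$ exactly when $f\in C^{(s+1)}$. A second, minor point is handling arity mismatches with dummy projections without changing stages; this needs the convention that projections are already in $C^{(0)}$.
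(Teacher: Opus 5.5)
Your proof is correct and follows essentially the paper's argument: you induct on the stage, compose the inductive majorants at a composition step so the result lands in $C^{(s+1)}$, reuse the earlier majorant of the declared bound at a recursion step, and run a separate structural induction on composition terms for the normal forms. Two remarks are harmless: the stage argument needs only closure of $\operatorname{Cl}_\circ(B_m)$ under composition, not the normal forms, and no arity adjustment ever arises, since each $T_{H_j}$ already has the arity of $H_j$.
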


\begin{proof}
For the stage-preserving assertion, induct on $s$.  Every member of
$C^{(0)}=B_m$ is coordinatewise nondecreasing and majorizes itself.  Suppose
the assertion holds at stage $s$.  A function carried from $C^{(s)}$ keeps
its previous majorant.  If
\[
 f=F(G_1,\ldots,G_r)
\]
is added by composition from functions in $C^{(s)}$, compose the induction
majorants of $F,G_1,\ldots,G_r$.  Monotonicity of the majorant of $F$ shows
that the resulting composition majorizes $f$, and one application of
composition places it in $C^{(s+1)}$.  If $f$ is added by bounded recursion
or bounded step recursion with declared bound $b\in C^{(s)}$, the induction
majorant of $b$ already lies in $C^{(s)}$ and majorizes $f$.  Thus the
majorant never occurs later than the function it bounds.

At $m=0$, substitution among zero, successor, and projections produces
exactly constants and shifted projections.  Adding $e_1(x,y)=x+y$ at $m=1$
produces exactly nonnegative affine forms, by structural induction.

For $m\ge2$, Lemma~\ref{lem:generator-identities} shows that every unary
generating function in $B_m$ is bounded by $F=e_m$ and that
\[
 F(u)\ge e_2(u)=u^2+2\ge2u.
\]
Induct on a composition term.  Zero, projections, and successor satisfy the
claim after increasing $c$.  A unary generator costs at most one additional
outer application of $F$.  If two arguments of addition are at most $A$,
then $a+b\le2A\le F(A)$.  Taking the maximum of the finitely many constants
at a composition step completes the induction.
\end{proof}

\begin{corollary}[arbitrary and monotone bound presentations]
\label{cor:bound-presentation-equivalence}
The classes $\Eclass{m}$ and $\Hclass{m}{n}{l}$ are unchanged if their
recursion rules are required to use a coordinatewise nondecreasing declared
bound.  Equivalently, an arbitrary earlier pointwise bound may always be
replaced by an earlier monotone composition majorant.
\end{corollary}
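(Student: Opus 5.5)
The corollary follows from Lemma~\ref{lem:majorants} by induction on stages; the only real content is that the monotone majorant is available no later than the original bound. Write $C^{(s)}$ for the stages of the original closure (either $\operatorname{Cl}_{\circ,\mathrm{BR}}(B_m)$ or $\operatorname{Cl}_{\circ,\mathrm{SR}_{\rho_{n,l}}}(B_m)$). Write $\widetilde C^{(s)}$ for the stages of the restricted closure, in which every application of $\mathrm{BR}$ or $\mathrm{SR}_\rho$ must come with a coordinatewise nondecreasing declared bound.

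The inclusion $\widetilde C^{(s)}\subseteq C^{(s)}$ is immediate. A restricted recursion step is in particular an unrestricted one, so induction on $s$ gives it.

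For the converse I will show $C^{(s)}\subseteq\widetilde C^{(s)}$ by induction on $s$.
\begin{itemize}
\item Stage $0$ is $B_m$ in both constructions.
\item A function carried over from the previous stage, or added by composition of functions in $C^{(s)}\subseteq\widetilde C^{(s)}$, lies in $\widetilde C^{(s+1)}$ directly.
\item Suppose $f$ is added at stage $s+1$ by $\mathrm{BR}$ or $\mathrm{SR}_\rho$ from data $(g,h,b)$ with $g,h,b\in C^{(s)}$. Apply Lemma~\ref{lem:majorants} to $b$. It yields a coordinatewise nondecreasing $T_b\in\operatorname{Cl}_\circ(B_m)\cap C^{(s)}$ with $b\le T_b$.
\end{itemize}
In the recursion case, $f\le b\le T_b$ pointwise. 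Hence $(g,h,T_b)$ is admissible data defining the same $f$, since the defining equations do not involve the bound. By the induction hypothesis $g,h,T_b\in\widetilde C^{(s)}$. The bound $T_b$ is monotone, so $f\in\widetilde C^{(s+1)}$.

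Taking unions over $s$ gives equality of the classes. The same argument shows that an arbitrary earlier bound may be replaced by an earlier monotone composition majorant, and it stays at the same stage.

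The one point to check is the phrase ``earlier'' in Lemma~\ref{lem:majorants}: $T_b$ must lie in $C^{(s)}$ itself, not merely in the class. That is exactly the stage-preserving clause of the lemma, so no further obstacle is expected.
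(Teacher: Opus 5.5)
Your proposal is correct and is essentially the paper's own proof. Both arguments use the stage-preserving clause of Lemma~\ref{lem:majorants} to replace the declared bound $b\in C^{(s)}$ by a coordinatewise nondecreasing composition majorant $T_b\in C^{(s)}$; since the recursion equations do not involve the bound, the same function is defined. Your explicit parallel-stage induction $C^{(s)}\subseteq\widetilde C^{(s)}$ only spells out bookkeeping that the paper leaves implicit.
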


\begin{proof}
A monotone bound is a special case of an arbitrary pointwise bound.  In the
other direction, let $C^{(s)}$ be the stage sequence of the relevant closure
and suppose a recursion is introduced at stage $s+1$ with an arbitrary
declared bound $b\in C^{(s)}$.  Lemma~\ref{lem:majorants} supplies
a coordinatewise nondecreasing $B\in C^{(s)}$ with $b\le B$.  The same base
and transition equations define the same function and satisfy
$f\le b\le B$, so the recursion is admissible with the earlier monotone
bound $B$.
\end{proof}

Corollary~\ref{cor:bound-presentation-equivalence} shows that allowing an
arbitrary earlier pointwise bound gives the same classes as Rose's
monotone-bound presentation.

\begin{lemma}[generalized-inverse and descent-depth identities]
\label{lem:step-function-arithmetic}
Let $\varphi:\Nat\to\Nat$ be strictly increasing, unbounded, and satisfy
$\varphi(x)\ge x+1$.  Let $\rho$ be its generalized inverse.  Then the
following \emph{threshold law} holds:
\begin{equation}
\label{eq:threshold-law}
 D_\rho(y)\le d\quad\Longleftrightarrow\quad y\le\varphi^{[d]}(0).
\end{equation}
Thus $D_\rho(y)$ is the least $d$ such that
$y\le\varphi^{[d]}(0)$.  Moreover, for every $k\ge1$,
the generalized inverse of $\varphi^{[k]}$ is $\rho^{[k]}$, and its
descent depth is
\[
 D_{\rho^{[k]}}(y)=
 \left\lceil\frac{D_\rho(y)}k\right\rceil.
\]
In particular,
\[
 \rho_{n,kl}=\rho_{n,l}^{[k]},\qquad
 D_{n,kl}(y)=\left\lceil\frac{D_{n,l}(y)}{k}\right\rceil.
\]
\end{lemma}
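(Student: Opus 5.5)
The whole lemma should follow from one Galois-type equivalence for the generalized inverse: for $y>0$ and every $z$,
\[
 \rho(y)\le z\iff y\le\varphi(z).
\]
The forward direction uses that $\varphi$ is increasing: if $\rho(y)\le z$ then $y\le\varphi(\rho(y))\le\varphi(z)$. The reverse direction is minimality of $\rho(y)$. Unboundedness of $\varphi$ makes the minimum exist, and $\varphi(x)\ge x+1$ gives $\rho(y)<y$. For $y=0$ the equivalence is also trivially true, since $\rho(0)=0\le z$ and $0\le\varphi(z)$. So it holds for all $y,z$.

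\emph{Threshold law.} Induct on $d$, using the equivalent form $\rho^{[d]}(y)=0\iff y\le\varphi^{[d]}(0)$. This suffices because the iterates $\rho^{[t]}(y)$ are nonincreasing and stay at $0$ once they reach it, so $D_\rho(y)\le d$ holds exactly when $\rho^{[d]}(y)=0$.
\begin{itemize}
\item The case $d=0$ is immediate.
\item For the step, $\rho^{[d+1]}(y)=0$ means $\rho^{[d]}(\rho(y))=0$. By induction this is $\rho(y)\le\varphi^{[d]}(0)$, and by the Galois equivalence it is $y\le\varphi(\varphi^{[d]}(0))=\varphi^{[d+1]}(0)$.
\end{itemize}

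\emph{Inverse of an iterate.} Iterating the Galois equivalence gives $\rho^{[k]}(y)\le z\iff y\le\varphi^{[k]}(z)$, by induction on $k$. The step is $\rho^{[k+1]}(y)=\rho^{[k]}(\rho(y))\le z\iff\rho(y)\le\varphi^{[k]}(z)\iff y\le\varphi^{[k+1]}(z)$. This equivalence characterizes the generalized inverse of $\varphi^{[k]}$ uniquely: that inverse is the least $z$ with $y\le\varphi^{[k]}(z)$, which is exactly $\rho^{[k]}(y)$, and the case $y=0$ is consistent. Note also that $\varphi^{[k]}$ is again strictly increasing, unbounded and satisfies $\varphi^{[k]}(x)\ge x+1$, so the hypotheses of the lemma apply to it.

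\emph{Depth formula.} Apply the threshold law to $\varphi^{[k]}$ and then to $\varphi$:
\[
 D_{\rho^{[k]}}(y)\le d\iff y\le(\varphi^{[k]})^{[d]}(0)=\varphi^{[kd]}(0)\iff D_\rho(y)\le kd\iff\lceil D_\rho(y)/k\rceil\le d.
\]
Since these hold for every $d$, the two depths are equal.

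\emph{Specialization to rows.} The identities for $\rho_{n,kl}$ and $D_{n,kl}$ follow by applying the above with $\varphi=\varphi_{n,l}$, using $\varphi_{n,kl}=g_n^{[kl]}=\varphi_{n,l}^{[k]}$. This requires checking that $\varphi_{n,l}$ meets the hypotheses. Each $g_n$ is strictly increasing with $g_n(x)\ge x+1$: this is clear for $n=0,1$, and holds for $n\ge2$ by Lemma~\ref{lem:generator-identities}. These properties pass to iterates.

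I expect no step here to be a real obstacle. The one point needing care is the boundary case $y=0$ in the Galois equivalence, since $\rho(0)$ is defined separately from the minimum; I would settle it explicitly before the inductions.
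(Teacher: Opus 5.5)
Your proposal is correct and follows the paper's route. It proves the threshold law by induction on $d$ using the generalized-inverse relation $\rho(y)\le z\iff y\le\varphi(z)$, identifies the inverse of $\varphi^{[k]}$ with $\rho^{[k]}$ from monotonicity, and then derives the ceiling formula; the only variation is that you get the ceiling by applying the threshold law to $\varphi^{[k]}$, whereas the paper notes directly that $(\rho^{[k]})^{[d]}=\rho^{[kd]}$ vanishes at $y$ exactly when $kd\ge D_\rho(y)$.
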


\begin{proof}
The first assertion is induction on $d$.  At $d=0$ both sides say $y=0$.
For the induction step, the defining generalized-inverse relation gives
\[
 \rho(y)\le\varphi^{[d]}(0)
 \quad\Longleftrightarrow\quad
 y\le\varphi^{[d+1]}(0).
\]
Strict increase shows directly that the generalized inverse of
$\varphi^{[k]}$ is the $k$-fold inverse $\rho^{[k]}$.  It reaches $0$ after
the least $d$ with $kd\ge D_\rho(y)$, which is the displayed ceiling.
Taking $\varphi=\varphi_{n,l}$ gives the final formulas.
\end{proof}

\begin{remark}[depth versus iterated inverse]
The symbol $D_{\rho^{[k]}}$ denotes the descent depth of the $k$-fold inverse
$\rho^{[k]}$.  It is not an iterate of the numerical function $D_\rho$.
Lemma~\ref{lem:step-function-arithmetic} records the relation
$D_{\rho^{[k]}}(y)=\lceil D_\rho(y)/k\rceil$.
\end{remark}

\section{Step functions, generalized inverses, and ordinary calibration}

Whenever a displayed bounded recursion or bounded step recursion below has no
genuine parameter, we introduce one dummy parameter and remove it afterward
by composition with a projection.  Thus every such construction is a literal
instance of Definition~\ref{def:bounded-recursion} or
Definition~\ref{def:descent-closure}, respectively.  Every fixed numeral used below is
obtained by a fixed finite iterate of successor applied to zero, with any
dummy variables supplied by projections.

\begin{lemma}[internal zero-or-value selector]
\label{lem:internal-zero-value-selector}
Let $\rho$ be any descent function, let $C$ contain zero and all projections,
and put $\mathcal A=A_\rho(C)$.  Then the function
\[
 \operatorname{ZV}(u,0)=0,\qquad
 \operatorname{ZV}(u,c)=u\quad(c>0)
\]
belongs to $\mathcal A$.  In particular,
$\operatorname{ZV}\in\Hclass{m}{n}{l}$ for every $m,n\in\Nat$ and $l\ge1$.
\end{lemma}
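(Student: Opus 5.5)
We build $\operatorname{ZV}$ with a single bounded step recursion along $\rho$ in the second argument, with $u$ as parameter. The point is that the recursion only needs to tell whether its argument is $0$ or positive. It never needs to know which predecessor it came from, so the structure of $\rho$ is irrelevant.

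Put
\[
 f(u,0)=g(u):=Z(u)=0,\qquad
 f(u,c)=h\bigl(u,\rho(c),f(u,\rho(c))\bigr)\quad(c>0),
\]
with transition $h(u,z,w):=P_1^3(u,z,w)=u$ and declared bound $b(u,c):=P_1^2(u,c)=u$. All three data functions are zero or projections, so they lie in $C\subseteq\mathcal A$ and are available at the first stage. The recursion therefore produces its function at stage at most $2$ of the finite-stage closure.

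Next we check that this $f$ equals $\operatorname{ZV}$ and that the bound holds.
\begin{itemize}
\item At $c=0$ the base clause gives $f(u,0)=0$.
\item For $c>0$ the transition clause gives $f(u,c)=h(\dots)=u$, whatever the earlier value $f(u,\rho(c))$ is.
\item Hence $f=\operatorname{ZV}$. The bound $f(u,c)\le u$ holds because both $0\le u$ and $u\le u$.
\item The function $f$ is well defined and total: $\rho(c)<c$ for $c>0$ makes the descent well founded, and $D_\rho$ is total by Definition~\ref{def:descent-closure}.
\end{itemize}
This is a literal instance of $\mathrm{SR}_\rho$ with $k=1$ parameter, so no dummy-parameter device is needed. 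It follows that $\operatorname{ZV}\in A_\rho(C)$.

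For the particular case, take $C=B_m$, which contains $Z$ and all projections, and take $\rho=\rho_{n,l}$. This is a descent function by Definition~\ref{def:descent-closure}, since $\varphi_{n,l}$ is strictly increasing with $\varphi_{n,l}(x)\ge x+1$; for $n\ge2$ this follows from Lemma~\ref{lem:generator-identities}. Then $\Hclass{m}{n}{l}=A_{\rho_{n,l}}(B_m)$ contains $\operatorname{ZV}$.

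The only point needing care is this admissibility check: $h$ and $b$ must be earlier functions of the correct arities ($k+2=3$ and $k+1=2$), which projections supply. There is no genuine obstacle.
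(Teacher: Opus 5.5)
Your proof is correct and is essentially the paper's own argument. It uses a single bounded step recursion along $\rho$ on $c$, with parameter $u$, base $Z$, transition $P_1^3$ projecting to $u$ (ignoring the lower address and the previous value), and the projection $u$ as declared bound.
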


\begin{proof}
Use one bounded step recursion along $\rho$, with $u$ as a parameter and $c$
as the recursion argument.  Its base is the zero function, its transition is
the projection to $u$ and ignores both the lower address and the previous
value, and its declared bound is the projection $u$.  Thus the recursion
returns $0$ at $c=0$ and $u$ at every positive $c$.
\end{proof}

\begin{lemma}[bounded controls and numerical selection]
\label{lem:ezero-selector}
The functions predecessor, truncated subtraction, minimum,
\[
 \NZ(0)=0,\quad \NZ(t+1)=1,\qquad
 \ISZ(0)=1,\quad \ISZ(t+1)=0,
\]
and the capped selector
\[
 \operatorname{Sel}(c,u,v,0)=\min(u,c),\qquad
 \operatorname{Sel}(c,u,v,t+1)=\min(v,c)
\]
belong to $\Eclass{0}$.  If $u,v\le c$, then
\[
 \operatorname{Sel}(c,u,v,t)=
 \begin{cases}
  u,&t=0,\\
  v,&t>0.
 \end{cases}
\]
Equality of two numbers is also decidable in $\Eclass{0}$.
\end{lemma}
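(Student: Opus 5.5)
We construct each function explicitly in $\Eclass{0}$, which is closed under composition and ordinary bounded recursion over the basis $\{Z,S\}$ and projections. The order is: predecessor first, then truncated subtraction, $\NZ$ and $\ISZ$, then minimum, then $\operatorname{Sel}$, and finally equality. Each recursion below carries a declared bound that is a projection or a constant, so admissibility is immediate. As the paper stipulates, a dummy parameter is introduced whenever a recursion has no genuine parameter.

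\textbf{Predecessor, $\NZ$, $\ISZ$.} Predecessor $P(y)$ is given by bounded recursion with base $0$, transition $h(y,w)=y$ (the projection to the counter) and bound $y$. $\NZ$ has base $0$, transition the constant $1=S(Z(\cdot))$, and bound $1$. $\ISZ$ has base $1$, transition $0$, and bound $1$. \textbf{Truncated subtraction.} $x\dotminus y$ is defined by recursion on $y$ with parameter $x$: the base is $x$, the transition is $h(x,y,w)=P(w)$, and the bound is the projection $x$, which is valid since the values only decrease.

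\textbf{Minimum.} This is the step most likely to cause difficulty, because addition is not available in $\Eclass{0}$, so the usual identity $\min(u,c)=u\dotminus(u\dotminus c)$ must be checked to use only truncated subtraction. It does. Hence $\min(u,c)=u\dotminus(u\dotminus c)$ is a composition of truncated subtractions and lies in $\Eclass{0}$, and it is bounded by $c$.

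\textbf{Capped selector.} $\operatorname{Sel}$ is defined by bounded recursion on $t$ with parameters $c,u,v$. The base is $\min(u,c)$, the transition is $h(c,u,v,t,w)=\min(v,c)$, and the declared bound is the projection $c$. Both the base and transition values are at most $c$, so the recursion is admissible. If $u,v\le c$, the minima return $u$ and $v$, which gives the displayed case split.

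\textbf{Equality.} Equality is $[x=y]=\ISZ\bigl((x\dotminus y)\dotminus 0\bigr)$ combined symmetrically, which needs a way to merge two tests without addition. Taking a minimum does not merge them correctly. Instead we use a second truncated subtraction: $[x=y]=\ISZ(\NZ(x\dotminus y)\dotminus\ISZ(\NZ(y\dotminus x)))$.

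We verify this formula case by case.
\begin{itemize}
\item If $x=y$, both differences are $0$. Then the inner expression is $0\dotminus 1=0$, and $\ISZ(0)=1$.
\item If $x>y$, the inner expression is $1\dotminus\ISZ(0)=1\dotminus1=0$, which would give $1$ and is wrong.
\end{itemize}
So this formula fails, and we must merge with a selector instead. Take $\operatorname{Sel}(1,\ISZ(x\dotminus y),0,y\dotminus x)$: it returns $\ISZ(x\dotminus y)$ when $y\le x$, and $0$ when $y>x$. This is exactly $[x=y]$, and its arguments are at most the cap $1$.

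Checking these small case analyses is the only real work in the proof. The rest is routine verification that every bound is an earlier projection or constant.
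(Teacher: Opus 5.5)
Your proof is correct and follows the paper's route: predecessor, truncated subtraction by iterated predecessor, $\min(u,c)=u\dotminus(u\dotminus c)$, bounded recursions for $\NZ$, $\ISZ$ and $\operatorname{Sel}$ with bounds $1$, $1$, $c$, and equality via a capped selector on one-sided tests. The paper uses $\operatorname{Sel}(1,0,\ISZ(y\dotminus x),\ISZ(x\dotminus y))$ and you use $\operatorname{Sel}(1,\ISZ(x\dotminus y),0,y\dotminus x)$; both are valid. In a write-up, delete the abandoned formula and the claim that a minimum cannot merge the tests, which is false: the tests are bits, so $\min(\ISZ(x\dotminus y),\ISZ(y\dotminus x))$ is their conjunction and already equals $[x=y]$.
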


\begin{proof}
Use the bounded recursions
\[
 \operatorname{pred}(0)=0,\qquad \operatorname{pred}(t+1)=t,
\]
\[
 x\dotminus0=x,\qquad
 x\dotminus(t+1)=\operatorname{pred}(x\dotminus t).
\]
They have projection bounds, and
$\min(x,c)=x\dotminus(x\dotminus c)$.  The displayed definitions of
$\NZ$, $\ISZ$, and $\operatorname{Sel}$ are bounded recursions with bounds
$1$, $1$, and $c$, respectively.  Finally, put
\[
 \operatorname{Eq}(x,y)
 :=\operatorname{Sel}
 \bigl(1,0,\ISZ(y\dotminus x),\ISZ(x\dotminus y)\bigr).
\]
If $x>y$, the last selector returns $0$; if $x\le y$, it returns $1$ exactly
when $y\le x$.  Hence $\operatorname{Eq}(x,y)=1$ exactly when $x=y$.
\end{proof}

\begin{lemma}[fixed step functions in the ordinary hierarchy]
\label{lem:step-function-ezero}
For every fixed $n\in\Nat$ and $l\ge1$, the functions
\[
 \rho_{n,l}(y),\qquad
 R_{n,l}(y,t)=\rho_{n,l}^{[t]}(y),\qquad
 D_{n,l}(y)
\]
belong to $\Eclass{0}$.
\end{lemma}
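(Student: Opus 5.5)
The plan is to use that all three functions are bounded by $y$ (since $\rho_{n,l}(y)\le y$ and $D_{n,l}(y)\le y$). So each can be computed by a bounded recursion whose declared bound is a projection, which is admissible in $\Eclass{0}$. The one real difficulty is that $\varphi_{n,l}$ itself is not in $\Eclass{0}$, so we cannot evaluate it directly. We only ever need it \emph{capped}, however, and capped values stay below a bound that is a projection plus a constant.

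\textbf{Step 1 (capped generators).} For each fixed unary $u$ among $g_0,g_1,e_2,e_3,\ldots$, show that
\[
 \widehat u(x,c)=\min(u(x),c)
\]
belongs to $\Eclass{0}$; this is the step I expect to be the main obstacle. First build capped addition $\widehat{\mathrm{add}}(a,b,c)=\min(a+b,c)$ by recursion on $b$:
\[
 \widehat{\mathrm{add}}(a,b+1,c)=\min(\widehat{\mathrm{add}}(a,b,c)+1,c),
\]
with bound $c$ and using $\min$ and $\mathrm{pred}$ from Lemma~\ref{lem:ezero-selector}. Capped $g_0,g_1$ and capped $e_2(x)=x^2+2$ then follow by recursion on $x$ with bound $c$; for $e_2$ use
\[
 \min\bigl((x+1)^2+2,c\bigr)=\widehat{\mathrm{add}}\bigl(\min(x^2+2,c),\,2x+1,\,c\bigr),
\]
obtaining $2x+1$ from capped additions as well. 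For $e_{r+3}$, Lemma~\ref{lem:generator-identities} gives $e_{r+3}(x)=e_{r+2}^{[x]}(2)$. Since $\min(e_{r+2}(\min(v,c)),c)=\min(e_{r+2}(v),c)$ by monotonicity, the recursion
\[
 \widehat e_{r+3}(x+1,c)=\widehat e_{r+2}(\widehat e_{r+3}(x,c),c),\qquad \widehat e_{r+3}(0,c)=\min(2,c),
\]
with bound $c$, works by induction on the row. Composing $l$ times in the same way gives $\widehat\varphi_{n,l}(x,c)=\min(\varphi_{n,l}(x),c)$ in $\Eclass{0}$.

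\textbf{Step 2 ($\rho_{n,l}$).} Define $\rho=\rho_{n,l}$ by ordinary bounded recursion on $y$ with bound the projection $y$. Take $\rho(0)=0$ and
\[
 \rho(y+1)=\rho(y)+\ISZ\bigl((y+1)\dotminus\widehat\varphi_{n,l}(\rho(y),y+1)\bigr)\cdot{}
\]
More precisely, increment $\rho(y)$ exactly when $\varphi_{n,l}(\rho(y))<y+1$, using the selector $\operatorname{Sel}$ with cap $y+1$ to choose between $\rho(y)$ and $\rho(y)+1$, so that no multiplication is needed. Correctness rests on the fact that $\rho(y+1)\in\{\rho(y),\rho(y)+1\}$: since $\varphi$ is strictly increasing, the least $z$ with $\varphi(z)\ge y+1$ is either $\rho(y)$ or its successor, and $\varphi(\rho(y))\ge y$. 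The step $y=0$ is checked directly, since $\varphi(0)\ge1$ gives $\rho(1)=0$ and the formula returns $0$.

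\textbf{Step 3 ($R_{n,l}$ and $D_{n,l}$).} $R_{n,l}$ is a bounded recursion on $t$ with parameter $y$:
\[
 R(y,0)=y,\qquad R(y,t+1)=\rho(R(y,t)),
\]
with bound $y$. For the depth, note that $D_{n,l}(y)$ is the number of $t<y$ with $R(y,t)>0$: iterates decrease strictly until they reach $0$, so $D(y)\le y$. So define a counter
\[
 K(y,0)=0,\qquad K(y,t+1)=K(y,t)+\NZ(R(y,t)),
\]
with bound $t$, and set $D_{n,l}(y)=K(y,y)$; the increment is again realised by $\operatorname{Sel}$ to stay within $\Eclass{0}$. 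Dummy parameters and fixed numerals are handled by the conventions stated at the start of this section.
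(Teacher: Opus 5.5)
Your proof is correct, but it extracts $\rho_{n,l}$ and $D_{n,l}$ from the capped step function by a different mechanism than the paper.

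Step~1 matches the paper in substance. The paper obtains $\min(x^2+2,c)$ from capped multiplication; you obtain it incrementally from capped addition via $(x+1)^2+2=(x^2+2)+(2x+1)$. Your identity $\min(e(\min(v,c)),c)=\min(e(v),c)$ needs $e(c)\ge c$ as well as monotonicity, and Lemma~\ref{lem:generator-identities} provides that.

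For the inverse and the depth, the two routes differ:
\begin{itemize}
\item The paper builds a generic bounded least-witness operator $L_P$, which stores the first witness as the positive code $z+1$ under the bound $z+1$. It applies this operator to the predicate $\operatorname{Eq}(\widehat\varphi_{n,l}(z,y),y)$ to get $\rho_{n,l}$, and reuses it on $\ISZ(R_{n,l}(y,t))$ to get the depth.
\item You use that strict increase of $\varphi_{n,l}$ forces $\rho_{n,l}(y+1)-\rho_{n,l}(y)\in\{0,1\}$. So one capped evaluation per recursion step decides the increment. You then obtain $D_{n,l}(y)$ by counting the $t<y$ with $R_{n,l}(y,t)>0$, rather than locating the first zero.
\end{itemize}
Your route is shorter and avoids the positive-code trick, but it depends on the unit-increment property of this particular generalized inverse. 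The paper's operator does not depend on the predicate, so it establishes bounded minimization in $\Eclass{0}$ once and serves both functions.

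One repair is needed in the write-up. The displayed formula in your Step~2 is unfinished and its test is inverted: $\ISZ\bigl((y+1)\dotminus\widehat\varphi_{n,l}(\rho(y),y+1)\bigr)=1$ exactly in the non-increment case. Replace it by $\operatorname{Sel}\bigl(y+1,\rho(y),S(\rho(y)),(y+1)\dotminus\widehat\varphi_{n,l}(\rho(y),y+1)\bigr)$, which is what your prose describes.
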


\begin{proof}
By Lemma~\ref{lem:ezero-selector}, the elementary bounded controls used below
belong to $\Eclass{0}$.  First evaluate the fixed generator under a cap.
Capped addition and multiplication are bounded recursions computing
$\min(x+t,c)$ and $\min(xt,c)$.  Hence
$\widehat g_i(x,c)=\min(g_i(x),c)$ is available for $i=0,1,2$.  If
$\widehat e_r(x,c)=\min(e_r(x),c)$ is available for fixed $r\ge2$, the bounded
recursion
\[
 T_r(c,0)=\min(2,c),\qquad
 T_r(c,x+1)=\widehat e_r(T_r(c,x),c)
\]
computes $\min(e_{r+1}(x),c)$: the computation is exact before reaching the
cap and then remains there.  External induction on the fixed index, followed
by $l$ fixed compositions, gives
\[
 \widehat\varphi_{n,l}(z,c)=\min(\varphi_{n,l}(z),c)\in\Eclass{0}.
\]

We use one bounded least-witness construction twice.  For any earlier
predicate $P(\bar a,z)$ put
\[
 Q_P(\bar a,z)=\operatorname{Sel}(z+1,0,z+1,P(\bar a,z)),
\]
\[
 L_P(\bar a,0)=0,\qquad
 L_P(\bar a,z+1)=\operatorname{Sel}
 \bigl(z+1,Q_P(\bar a,z),L_P(\bar a,z),\NZ(L_P(\bar a,z))\bigr).
\]
At stage $z+1$ the stored positive code is at most $z+1$, so this is bounded
recursion with the earlier bound $z+1$.  If a witness occurs below $U$, then
$\operatorname{pred}(L_P(\bar a,U))$ is its least index.

Apply this to
\[
 P(y,z)=\operatorname{Eq}(\widehat\varphi_{n,l}(z,y),y).
\]
It holds exactly when $\varphi_{n,l}(z)\ge y$, and $z=y-1$ is a witness for
$y>0$; hence $\rho_{n,l}(y)=\operatorname{pred}(L_P(y,y))\in\Eclass{0}$.
Next
\[
 R_{n,l}(y,0)=y,\qquad
 R_{n,l}(y,t+1)=\rho_{n,l}(R_{n,l}(y,t))
\]
is bounded by $y$.  Strict descent reaches $0$ by time $y$, so applying the
same construction to $\ISZ(R_{n,l}(y,t))$ with search length $y+1$ gives
$D_{n,l}(y)$.  Thus all three fixed controls belong to $\Eclass{0}$.
\end{proof}

\begin{lemma}[ordinary upper calibration]
\label{lem:calibration}
For all $m,n\in\Nat$ and $l\ge1$,
\[
 \Hclass{m}{n}{l}=A_{\rho_{n,l}}(B_m)\subseteq\Eclass{m}.
\]
\end{lemma}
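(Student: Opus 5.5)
We argue by induction on the finite stages $C^{(s)}$ of the closure $A_{\rho_{n,l}}(B_m)=\operatorname{Cl}_{\circ,\mathrm{SR}_{\rho_{n,l}}}(B_m)$ that every member of $C^{(s)}$ lies in $\Eclass{m}$. The base $C^{(0)}=B_m$ is immediate, and composition steps are trivial because $\Eclass{m}$ is closed under composition. The only real work is to show that $\Eclass{m}$ is closed under $\mathrm{SR}_{\rho_{n,l}}$. So we fix $g,h,b\in\Eclass{m}$ defining $f$ by bounded step recursion along $\rho:=\rho_{n,l}$, with $f\le b$. By Corollary~\ref{cor:bound-presentation-equivalence} (or Lemma~\ref{lem:majorants}) we may take $b$ coordinatewise nondecreasing.

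The plan is to simulate the descent $y,\rho(y),\rho^{[2]}(y),\ldots,0$ by an ordinary recursion run backwards along the trace. We use the control $R(y,t)=\rho^{[t]}(y)$ and the depth $D(y)=D_{n,l}(y)$, which lie in $\Eclass{0}\subseteq\Eclass{m}$ by Lemma~\ref{lem:step-function-ezero}. Define by ordinary bounded recursion on $j$:
\[
 F(\bar x,y,0)=g(\bar x),\qquad
 F(\bar x,y,j+1)=h\bigl(\bar x,\,R(y,D(y)\dotminus j),\,F(\bar x,y,j)\bigr).
\]
Then, by induction on $j\le D(y)$, we have $F(\bar x,y,j)=f(\bar x,R(y,D(y)\dotminus j))$. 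The key point is that the node at trace position $D(y)-(j+1)$ has $\rho$-image the node at position $D(y)-j$, and that node is positive exactly when $j+1\le D(y)$. This gives $f(\bar x,y)=F(\bar x,y,D(y))$, which is a composition of earlier functions.

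The step I expect to be the main obstacle is bounding $F$ by an earlier function of $\Eclass{m}$, as Definition~\ref{def:bounded-recursion} requires. For $j\le D(y)$, each value is $f$ at some point $\le y$, so monotonicity of $b$ gives $F(\bar x,y,j)\le b(\bar x,y)$. For $j>D(y)$, however, the recursion keeps applying $h$ at address $R(y,0\cdot)=y$, and those values are uncontrolled. I would handle this by capping: replace the transition with $\min\bigl(h(\cdots),\,b(\bar x,y)\bigr)$, using $\min\in\Eclass{0}$ from Lemma~\ref{lem:ezero-selector}. On the relevant range $j\le D(y)$ the cap is inert, because the true values already satisfy $f\le b$. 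The recursion is then admissible with the declared bound $b(\bar x,y)$, which is a composition with projections and hence earlier.

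The same capping argument works uniformly for every $m$, including $m=0,1$, because it needs only $\min$, $\dotminus$, $R$ and $D$, all of which lie in $\Eclass{0}$. This completes the stage induction.
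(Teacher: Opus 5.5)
Your proof is correct and is essentially the paper's argument. The paper also folds forward along the trace using addresses $R(y,D(y)\dotminus t)$, caps the state by an earlier monotone majorant $B(\bar x,y)$, and reads off the value at $t=D(y)$. The only difference is cosmetic: the paper uses the capped selector $\operatorname{Sel}$ with control $\NZ(D(y)\dotminus t)$, which also freezes the state after stage $D(y)$, whereas your plain $\min$-cap suffices because only index $D(y)$ is ever used.
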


\begin{proof}
Induct on the finite-stage construction of $A_{\rho_{n,l}}(B_m)$.  Only the
step-recursion case is nontrivial.  Suppose
\[
 f(\bar x,0)=g(\bar x),\qquad
 f(\bar x,y)=h(\bar x,\rho(y),f(\bar x,\rho(y)))\quad(y>0),
\]
where $\rho=\rho_{n,l}$ and $b$ is an earlier declared bound.  By the induction
hypothesis $g,h\in\Eclass{m}$, and Lemma~\ref{lem:majorants} supplies an
earlier coordinatewise nondecreasing composition majorant $B\ge b$.
Lemma~\ref{lem:step-function-ezero} puts $\rho$, its iterator $R$, and its
depth $D$ in $\Eclass{0}\subseteq\Eclass{m}$.

Write $d=D(y)$ and $y_i=\rho^{[i]}(y)$, so $y_d=0$.  Ordinary bounded
recursion, with bound $B(\bar x,y)$, folds forward along this chain:
\[
 V(\bar x,y,0)=g(\bar x),
\]
\[
 V(\bar x,y,t+1)=\operatorname{Sel}\!\left(
 B(\bar x,y),V(\bar x,y,t),
 h\bigl(\bar x,R(y,D(y)\dotminus t),V(\bar x,y,t)\bigr),
 \NZ(D(y)\dotminus t)\right).
\]
For $t\le d$, induction gives
$V(\bar x,y,t)=f(\bar x,y_{d-t})$.  Every transition value is at most
$b(\bar x,y_{d-t-1})\le B(\bar x,y)$ by monotonicity of $B$; after $t=d$
the selector holds the value fixed.  Hence
$V(\bar x,y,D(y))=f(\bar x,y)$, proving that the step-recursion clause belongs
to $\Eclass{m}$ and completing the construction induction.
\end{proof}

\begin{lemma}[internal controls for an arbitrary descent]
\label{lem:internal-controls}
Let $\rho$ be any descent in the sense of
Definition~\ref{def:descent-closure}, let $C$ contain zero, successor, and
projections, and put $\mathcal A=A_\rho(C)$.  Then $\mathcal A$ contains
the functions
\[
 \rho,\qquad D_\rho(y)=\min\{t:\rho^{[t]}(y)=0\},
 \qquad\NZ,\qquad\ISZ,
\]
and the depth bit $D_\rho(y)\bmod2$.  In particular, every
$\Hclass{m}{n}{l}$ contains $\rho_{n,l}$, $D_{n,l}$, these predicates, and
\[
 \operatorname{bit}_{n,l}(y)=D_{n,l}(y)\bmod2.
\]
\end{lemma}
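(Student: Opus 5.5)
Every function in the statement can be obtained from a single bounded step recursion along $\rho$, with the recursion argument $y$, a dummy parameter added and later removed by a projection as agreed at the start of this section, and a transition built only from zero, successor, projections and functions already shown to lie in $\mathcal A$. So the plan is to list, for each function, the base $g$, the transition $h(\bar x,\rho(y),v)$ and the declared bound $b$. Then I check two things: the recursion really computes the intended function, and $b$ is an earlier member of $\mathcal A$ bounding it everywhere. Only the operations available in $\mathcal A$ are used: composition and $\mathrm{SR}_\rho$ over zero, successor and projections. Ordinary bounded recursion is not available, so Lemma~\ref{lem:ezero-selector} cannot be invoked here.

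\textbf{The descent $\rho$.} Take base $0$ and transition $h(\bar x,z,v)=z$, the projection onto the lower address. The recursion then returns $0$ at $y=0$ and $\rho(y)$ for $y>0$, which is exactly $\rho$ because $\rho(0)=0$. For the bound, use the projection $y$, which is admissible since $\rho(y)\le y$.

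\textbf{The predicates $\NZ$ and $\ISZ$.} For $\NZ$, take base $0$, transition the constant $1=S(Z(\cdot))$, and declared bound the constant $1$. For $\ISZ$, take base $1$, transition $0$, and bound $1$. These mirror Lemma~\ref{lem:internal-zero-value-selector}.

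\textbf{The depth $D_\rho$.} Take base $0$ and transition $h(\bar x,z,v)=S(v)$, so that $f(y)=f(\rho(y))+1$ for $y>0$. Induction along the descent gives $f=D_\rho$, because $D_\rho(y)=D_\rho(\rho(y))+1$ whenever $y>0$. For the bound I would use the projection $y$. This requires $D_\rho(y)\le y$, which holds because each positive iterate of $\rho$ drops by at least $1$, so $0$ is reached within $y$ steps.

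\textbf{The depth bit.} Take base $0$ and transition $h(\bar x,z,v)=\ISZ(v)$, with bound the constant $1$. This is legitimate because $\ISZ$ was already constructed and hence is earlier. On values in $\{0,1\}$, $\ISZ$ flips the bit, so the same induction along the descent gives $f(y)=D_\rho(y)\bmod 2$.

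\textbf{The case $\Hclass{m}{n}{l}$.} This follows by taking $C=B_m$ and $\rho=\rho_{n,l}$, which is a descent function by Definition~\ref{def:descent-closure}.

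I do not expect a real obstacle. The only points that need care are the bookkeeping of dummy parameters, so that each recursion is literally an instance of Definition~\ref{def:descent-closure}, and the bound $D_\rho(y)\le y$, which uses strict descent.
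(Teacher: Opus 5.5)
Your proposal is correct and follows the paper's proof essentially verbatim. Both build each function by one bounded step recursion with a dummy parameter: the lower-address projection with bound $y$ for $\rho$, successor of the previous value with bound $y$ (using $D_\rho(y)\le y$) for $D_\rho$, constant transitions with bound $1$ for $\NZ$ and $\ISZ$, and the $\ISZ$-toggle with bound $1$ for the depth bit.
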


\begin{proof}
With a dummy parameter $x$, define
\[
 F(x,0)=0,\qquad F(x,y)=\rho(y)\quad(y>0).
\]
This is a step recursion whose transition returns $\rho(y)$ and whose
bound is $y$.  Diagonal composition gives $\rho$.  Similarly,
\[
 G(x,0)=0,\qquad
 G(x,y)=G(x,\rho(y))+1\quad(y>0)
\]
has bound $y$ and computes $D_\rho(y)$, since strict descent gives
$D_\rho(y)\le y$.

The recursions with base $0$ and constant transition $1$, and with base $1$
and constant transition $0$, have bound $1$ and give $\NZ$ and
$\ISZ$.  Finally,
\[
 B(x,0)=0,\qquad
 B(x,y)=\ISZ(B(x,\rho(y)))\quad(y>0)
\]
is bounded by $1$ and toggles once at every recursion step.  Induction on depth
gives $B(x,y)=D_\rho(y)\bmod2$.  Substitution of
$\rho=\rho_{n,l}$ gives the final assertion.
\end{proof}

\section{Direct trace bounds}
\label{sec:direct-traces}

\begin{definition}[trace band]
\label{def:trace-band}
A pair of coordinates $(m,n)$ lies in the \emph{trace band} when
\[
 n\ge2\qquad\text{and}\qquad m\le n.
\]
\end{definition}

\begin{lemma}[trace-band depth overhead]
\label{lem:depth-overhead}
Suppose $(m,n)$ lies in the trace band and
$f\in\Hclass{m}{n}{l}$ is $k$-ary.  A constant $c\ge1$ exists such that,
with $M=\max_i x_i$,
\[
 D_{n,l}(f(\bar x))
 \le c\bigl(1+D_{n,l}(M+c)\bigr).
\]
Moreover, for the fixed parameters $n$ and $l$ under consideration,
\[
 D_{n,l}(z)=O_{n,l}(\log\log(z+4)).
\]
\end{lemma}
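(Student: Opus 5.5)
The plan is to avoid a fresh induction over the closure. Instead, reduce the first claim to the stage-preserving majorant of Lemma~\ref{lem:majorants}, and then push a fixed number of generator applications through the threshold law~\eqref{eq:threshold-law}.

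\textbf{Step 1: majorize $f$ by iterates of $e_n$.} Given $f\in\Hclass{m}{n}{l}$, Lemma~\ref{lem:majorants} supplies a monotone composition term $T_f\in\operatorname{Cl}_\circ(B_m)$ with $f\le T_f$. I would show there is a constant $c$ with
\[
T_f(\bar x)\le e_n^{[c]}(M+c).
\]
\begin{itemize}
\item If $2\le m\le n$, Lemma~\ref{lem:majorants} gives $T_f(\bar x)\le e_m^{[c]}(M+c)$. Lemma~\ref{lem:generator-identities} gives $e_m\le e_n$, and all $e_r$ are increasing, so $e_m^{[c]}\le e_n^{[c]}$ pointwise.
\item If $m\le 1$, $T_f$ is affine with nonnegative coefficients, say $T_f\le aM+b$. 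Enlarging $c$ gives $aM+b\le (M+c)^2+2=e_2(M+c)\le e_n(M+c)$, using $n\ge2$.
\end{itemize}
This is the only place the trace-band hypothesis $m\le n$ enters.

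\textbf{Step 2: generator applications cost bounded depth.} Write $\varphi=\varphi_{n,1}=e_n$. Suppose $D_{n,1}(z)=d$. By~\eqref{eq:threshold-law}, $z\le e_n^{[d]}(0)$. Since $e_n$ is increasing,
\[
e_n^{[c]}(z)\le e_n^{[d+c]}(0),
\]
and the threshold law again gives $D_{n,1}(e_n^{[c]}(z))\le D_{n,1}(z)+c$.

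Lemma~\ref{lem:step-function-arithmetic} gives $D_{n,l}=\lceil D_{n,1}/l\rceil$. Hence
\[
D_{n,l}(e_n^{[c]}(z))\le D_{n,l}(z)+c.
\]
$D_{n,l}$ is nondecreasing, as shown after Definition~\ref{def:descent-closure}. Combining with Step~1,
\[
D_{n,l}(f(\bar x))\le D_{n,l}(T_f(\bar x))\le D_{n,l}(M+c)+c\le c\bigl(1+D_{n,l}(M+c)\bigr)
\]
for $c\ge1$. This is in fact an additive overhead, stronger than required.

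\textbf{Step 3: the $\log\log$ bound.} By Lemma~\ref{lem:generator-identities}, $e_n\ge e_2$, so $\varphi_{n,1}^{[d]}(0)\ge e_2^{[d]}(0)$. From $e_2(u)=u^2+2\ge u^2$ and $e_2^{[2]}(0)=6$, induction gives $e_2^{[d]}(0)\ge 2^{2^{d-2}}$ for $d\ge2$.

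Now let $d=D_{n,1}(z)\ge3$. Then $z>\varphi_{n,1}^{[d-1]}(0)\ge2^{2^{d-3}}$, so $d\le \log_2\log_2(z+4)+3$. Hence
\[
D_{n,l}(z)\le D_{n,1}(z)=O(\log\log(z+4)),
\]
with the implied constant depending only on $n$ and $l$.

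\textbf{Main obstacle.} I expect the only delicate point to be Step~1: getting a single uniform form $e_n^{[c]}(M+c)$ across both the affine case $m\le1$ and the case $m\ge2$. This is exactly why $n\ge2$ is needed, since it guarantees $e_2\le e_n$. Everything after that is bookkeeping with the threshold law.
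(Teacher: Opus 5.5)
Your proposal is correct and follows essentially the same route as the paper: majorize $f$ by the composition term from Lemma~\ref{lem:majorants}, bound it by $e_n^{[c]}(M+c)$ (affine case via $e_2\le e_n$, case $m\ge2$ via $e_m\le e_n$), then use the threshold law to show that the fixed generator iterates add only a constant to the depth, and finish with the doubly exponential growth of $e_2^{[t]}(0)$. The one difference is minor: the paper works directly with $\varphi_{n,l}=G^{[l]}$ and absorbs the $a$ extra $G$-iterates as $\lceil a/l\rceil$ extra $\varphi_{n,l}$-steps, while you go through $D_{n,1}$ and the identity $D_{n,l}=\lceil D_{n,1}/l\rceil$; both give the same additive overhead.
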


\begin{proof}
By Lemma~\ref{lem:majorants}, $f$ has a monotone composition majorant $T$.
Put $G=g_n=e_n$ and $\varphi=G^{[l]}$.  There are constants $a,c_0$
with
\[
 T(\bar x)\le G^{[a]}(M+c_0).
\]
For $m=0,1$, one iterate of $G\ge e_2$ after a fixed shift majorizes every
constant, shifted projection, or affine form.  For $m\ge2$, the last part of
Lemma~\ref{lem:majorants} applies and $e_m\le G$.  If
$d=D_{n,l}(M+c_0)$, the generalized-inverse identities give
\[
 M+c_0\le\varphi^{[d]}(0)=G^{[ld]}(0),
\]
whence
\[
 f(\bar x)\le G^{[a+ld]}(0)
 \le\varphi^{[d+\lceil a/l\rceil]}(0).
\]
The threshold law gives the claimed depth estimate.  Since $G\ge e_2$ and
$e_2^{[t]}(0)\ge2^{2^{t-1}}$ for $t\ge1$, the numerical bound is
$D_{n,l}(z)=O_{n,l}(\log\log(z+4))$.
\end{proof}

\begin{definition}[derivations]
\label{def:finite-derivations}
Fix $B_m$ and a descent function $\rho$.  A \emph{derivation} is a finite
tree.  Each node $\delta$ denotes a function, written $\sem{\delta}$.
The nodes are:
\begin{enumerate}
\item For $R\in B_m$, $\operatorname{Init}(R)$ is a leaf and
\[
 \sem{\operatorname{Init}(R)}=R.
\]
\item If $\eta$ denotes an $r$-ary function and
$\delta_1,\ldots,\delta_r$ denote functions of one common arity, then
$\operatorname{Comp}(\eta;\delta_1,\ldots,\delta_r)$ denotes their
composition:
\[
 \sem{\operatorname{Comp}(\eta;\delta_1,\ldots,\delta_r)}(\bar x)
 =\sem{\eta}\bigl(\sem{\delta_1}(\bar x),\ldots,
                 \sem{\delta_r}(\bar x)\bigr).
\]
\item Suppose $\gamma,\eta,\beta$ denote the base
$g:\Nat^k\to\Nat$, transition $h:\Nat^{k+2}\to\Nat$, and bound
$b:\Nat^{k+1}\to\Nat$.  If the step recursion from
Definition~\ref{def:descent-closure} satisfies its bound $f\le b$, then
$\operatorname{SR}_{\rho}(\gamma,\eta,\beta)$ denotes $f$.
\end{enumerate}
Thus $\sem{\delta}(\bar x)$ is the value denoted by $\delta$ at
$\bar x$.  The \emph{construction rank} is
\[
\begin{aligned}
 \operatorname{rk}(\operatorname{Init}(R))&=0,\\
 \operatorname{rk}(\operatorname{Comp}(\eta;\bar\delta))
 &=1+\max\bigl(\operatorname{rk}(\eta),\max_i\operatorname{rk}(\delta_i)\bigr),\\
 \operatorname{rk}(\operatorname{SR}_{\rho}(\gamma,\eta,\beta))
 &=1+\max\bigl(\operatorname{rk}(\gamma),
               \operatorname{rk}(\eta),
               \operatorname{rk}(\beta)\bigr).
\end{aligned}
\]
\end{definition}

\begin{definition}[evaluation values and addresses]
\label{def:derivation-evaluation-values}
When a derivation node of the form
$\operatorname{SR}_{\rho}(\gamma,\eta,\beta)$ is invoked at
$(\bar p,z)$, its \emph{address} is the recursion argument $z$.  For a
derivation $\delta$ and input $\bar x$, let
$\operatorname{Addr}_{\delta}(\bar x)$ be the finite multiset of the
addresses of all such node invocations while computing
$\sem{\delta}(\bar x)$.  It includes invocations inside base and transition
evaluations.  Let $\operatorname{Val}_{\delta}(\bar x)$ be the finite set of
all numerical values created in the same evaluation.

For $\operatorname{Init}(R)$, the value set contains the inputs and
$R(\bar x)$, and the address multiset is empty.  For a composition, evaluate the argument derivations and then $\eta$ at
their outputs, and take the corresponding unions.

For
\[
 \delta=\operatorname{SR}_{\rho}(\gamma,\eta,\beta)
\]
evaluated at $(\bar p,z)$, put $d=D_\rho(z)$,
\[
 z_j=\rho^{[d-j]}(z)\quad(0\le j\le d).
\]
These $z_j$ are values, not additional addresses of the outer node.  Put
\[
 s_0=\sem{\gamma}(\bar p),\qquad
 s_{j+1}=\sem{\eta}(\bar p,z_j,s_j)\quad(0\le j<d).
\]
Then $z_0=0$, $z_d=z$, and
$s_j=\sem{\delta}(\bar p,z_j)$.  The value set contains $\bar p,z$, all
$z_j,s_j$, and the values created in the displayed base and transition
evaluations.  The address multiset contains the outer address $z$ and all
addresses from those evaluations.  Thus one invocation of the outer
step-recursion node contributes its invocation argument $z$ as one recorded
address; the points $z_0,\ldots,z_d$ form the internal schedule of that single
invocation.  Recording those points as addresses as well would not change any
later estimate, because $D_\rho(z_j)=j\le D_\rho(z)$.  The bound derivation
$\beta$ is a static admissibility certificate: it certifies
$s_j\le\sem{\beta}(\bar p,z_j)$, but its evaluation is not part of the
operational value or address sets.
\end{definition}

\begin{lemma}[derivation-tree representation]
\label{lem:derivation-tree-representation}
Every function in $A_\rho(B_m)$ has a derivation.  Replacing a shared
use of an earlier function by a separate copy preserves denotation.
\end{lemma}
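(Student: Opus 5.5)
The plan is to induct on the finite-stage construction of $A_\rho(B_m)=\operatorname{Cl}_{\circ,\mathrm{SR}_\rho}(B_m)$ from Definition~\ref{def:finite-stage-closure}, proving the stronger claim that every $f\in C^{(s)}$ is $\sem{\delta}$ for some derivation $\delta$ of construction rank at most $s$. At stage $0$, each $R\in B_m$ is denoted by the leaf $\operatorname{Init}(R)$, which has rank $0$.

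For the inductive step, suppose $f\in C^{(s+1)}$. If $f$ already lies in $C^{(s)}$, it keeps its derivation. If $f$ is a composition $F(G_1,\ldots,G_r)$ with $F,G_i\in C^{(s)}$, choose derivations $\eta,\delta_1,\ldots,\delta_r$ for these functions by the induction hypothesis. The node $\operatorname{Comp}(\eta;\delta_1,\ldots,\delta_r)$ then denotes $f$, directly from the semantic clause for composition, and its rank is at most $s+1$. If $f$ arises by bounded step recursion from earlier $g,h,b$ with $f\le b$, choose derivations $\gamma,\eta,\beta$ for them. The admissibility condition in clause~3 of Definition~\ref{def:finite-derivations} is exactly the side condition $f\le b$ of Definition~\ref{def:descent-closure}, so $\operatorname{SR}_\rho(\gamma,\eta,\beta)$ is a legal node. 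Its denotation equals $f$ because both are determined by the same equations $f(\bar x,0)=g(\bar x)$ and $f(\bar x,y)=h(\bar x,\rho(y),f(\bar x,\rho(y)))$. These equations have a unique solution by well-founded induction on $y$, since $\rho(y)<y$ for $y>0$. Arities match because the recursion data have the arities required in the definition.

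For the second sentence, note that a derivation is a tree, so a function used twice, for example as both $F$ and some $G_i$, or as both transition and bound, is represented by two separately chosen subtrees. Denotation is defined compositionally on nodes and depends only on the denotations of the children. So one can induct on tree height to show that replacing a subtree by any other derivation with the same denotation, in particular an isomorphic copy, leaves $\sem{\cdot}$ unchanged at every ancestor. The $\operatorname{SR}$ admissibility condition also refers only to denotations, so it is preserved.

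The argument is essentially routine. The only point needing care is the well-definedness of the $\operatorname{SR}$ denotation and its agreement with $f$, which is where the descent property $\rho(y)<y$ is used. This step is short.
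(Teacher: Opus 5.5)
Your proposal is correct and follows the same route as the paper: induction on the finite-stage construction, with each composition or bounded step recursion becoming the corresponding tree node over (copies of) earlier derivations, and copying preserving both denotation and the bound condition. The rank-at-most-$s$ strengthening and your remark on well-definedness via $\rho(y)<y$ are harmless extra detail beyond the paper's brief argument.
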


\begin{proof}
Induct on the finite-stage construction of
Definition~\ref{def:finite-stage-closure}.  Initial functions give initial
derivations.  A composition or bounded step recursion applied at the next
stage becomes the corresponding tree node with copies of the finitely many
earlier derivations.  Copying preserves the denoted function and the bound condition.
\end{proof}

\begin{lemma}[extensional evaluation invariance]
\label{lem:extensional-evaluation}
Fix a derivation $\delta$.  Its denoted output, value set, and address multiset
at an input depend only on the numerical input tuple.  In particular, if
unary external substitutions $T_1,\ldots,T_k$ and
$U_1,\ldots,U_k$ satisfy
\[
 T_i(y)=U_i(y)\qquad(1\le i\le k),
\]
then the two evaluations of $\delta$ at those substituted tuples have the
same output, the same internal values, and the same recorded addresses at
that $y$.
\end{lemma}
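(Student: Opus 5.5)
The plan is a structural induction on the derivation $\delta$, following the clauses of Definition~\ref{def:derivation-evaluation-values}. We prove a statement about a triple of data. For every node $\delta$ and every input tuple $\bar x$, the output $\sem{\delta}(\bar x)$, the set $\operatorname{Val}_\delta(\bar x)$ and the multiset $\operatorname{Addr}_\delta(\bar x)$ are all determined by $\bar x$ alone. They do not depend on how $\bar x$ was produced. Concretely, we prove that the map
\[
 \bar x\longmapsto\bigl(\sem{\delta}(\bar x),\operatorname{Val}_\delta(\bar x),\operatorname{Addr}_\delta(\bar x)\bigr)
\]
is a well-defined function of the tuple. The substitution claim then follows immediately. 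If $T_i(y)=U_i(y)$ for all $i$, the two substituted tuples $(T_1(y),\ldots,T_k(y))$ and $(U_1(y),\ldots,U_k(y))$ are literally the same element of $\Nat^k$. Applying the same function to the same argument gives the same triple.

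The induction runs as follows.
\begin{itemize}
\item \textbf{Leaf $\operatorname{Init}(R)$.} The triple is $R(\bar x)$, the set of inputs together with $R(\bar x)$, and the empty multiset. All three are read off from $\bar x$.
\item \textbf{Composition $\operatorname{Comp}(\eta;\delta_1,\ldots,\delta_r)$.} By the induction hypothesis, each $\delta_i$ yields a triple determined by $\bar x$. In particular the intermediate tuple $(\sem{\delta_1}(\bar x),\ldots,\sem{\delta_r}(\bar x))$ is determined by $\bar x$. Applying the hypothesis for $\eta$ at that tuple gives its triple. The composite triple is the outer output together with unions (multiset unions for addresses) of determined objects, so it too is determined by $\bar x$.
\item \textbf{Step recursion $\operatorname{SR}_\rho(\gamma,\eta,\beta)$ at $(\bar p,z)$.} The depth $d=D_\rho(z)$ and the schedule $z_j=\rho^{[d-j]}(z)$ are functions of $z$, since $\rho$ is a fixed function. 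Then $s_0=\sem{\gamma}(\bar p)$ is determined by $\bar p$. By an inner induction on $j\le d$, each $s_{j+1}=\sem{\eta}(\bar p,z_j,s_j)$ is determined, and the value and address contributions of that evaluation are determined as well, by the hypothesis for $\eta$ at the determined tuple $(\bar p,z_j,s_j)$. The node's value set is the union of $\bar p$, $z$, the $z_j$, the $s_j$, and these contributions. Its address multiset is $\{z\}$ together with the sub-multisets. Both are therefore determined by $(\bar p,z)$. The bound $\beta$ is not evaluated operationally, so it contributes nothing to the triple.
\end{itemize}

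The only point needing care is that the triple is well-defined at all. The definition describes the value set and address multiset via an "evaluation". We must make sure this evaluation is the recursive specification above, and not something sensitive to term sharing or to the syntactic origin of the arguments. Lemma~\ref{lem:derivation-tree-representation} already lets us work with trees that have separate copies. We therefore read Definition~\ref{def:derivation-evaluation-values} as a recursive definition over the finite tree, and the structural induction is exactly the justification that it is well posed. This is where I expect the (mild) subtlety. The remaining steps are bookkeeping of finite unions.
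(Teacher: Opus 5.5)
Your proof is correct and follows the paper's argument. The paper also inducts on the derivation tree: the initial case is immediate, the composition case uses that the head derivation receives the same numerical tuple, and the step-recursion case uses that the same outer address fixes the descent schedule, so the base and every transition are evaluated at identical tuples. Your formulation as ``the triple is a function of the input tuple'' repackages that same induction; the base evaluation's value and address data are determined by the hypothesis for $\gamma$, just as the transition data are by the hypothesis for $\eta$.
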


\begin{proof}
Induct on the derivation tree.  The assertion is immediate for an initial
node.  At a composition node, equality of the numerical argument outputs
makes the head derivation receive the same numerical tuple, and the induction
hypothesis applies to every subevaluation.  At a step-recursion node, equality
of the input tuple gives the same outer address, the same descent schedule,
and, inductively, the same base value and the same transition value at every
schedule point.  The unions defining the value and address data are therefore
identical.
\end{proof}

\begin{lemma}[absorption of fixed growth envelopes]
\label{lem:envelope-absorption}
Let $G:\Nat\to\Nat$ be nondecreasing and satisfy $G(x)\ge x+2$.  Every
finite expression obtained from a variable $N$, fixed numerals, and maps of
the form
\[
 x\longmapsto G^{[a]}(x+b)
 \qquad(a,b\in\Nat\text{ fixed})
\]
is at most
\[
 G^{[c]}(N+c)
\]
for all $N$, for some $c\in\Nat$ independent of $N$.  A finite maximum of
such expressions has an upper bound of the same form.
\end{lemma}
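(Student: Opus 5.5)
We argue by structural induction on the expression, proving the slightly stronger statement that every such expression $E(N)$ admits some $c$ with $E(N)\le G^{[c]}(N+c)$ for all $N$. The key observation is that the family of envelopes $\{G^{[c]}(N+c):c\in\Nat\}$ is increasing in $c$ pointwise. Indeed, monotonicity of $G$ together with $G(x)\ge x+2\ge x$ gives
\[
 G^{[c]}(N+c)\le G^{[c]}(N+c+1)\le G^{[c+1]}(N+c+1).
\]
Consequently a bound with one constant remains valid after the constant is increased, and any finite maximum of envelope bounds is dominated by the envelope with the largest constant. This settles the final sentence of the lemma once the main claim is established.

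For the base cases, the variable $N$ satisfies $N\le G^{[0]}(N+0)$, so $c=0$ suffices. A fixed numeral $k$ satisfies $k\le N+k=G^{[0]}(N+k)\le G^{[k]}(N+k)$, so $c=k$ suffices.

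For the inductive step, consider an expression $G^{[a]}(E+b)$ where $E\le G^{[c]}(N+c)$. We first absorb the additive shift $b$ into iterates using $G(x)\ge x+2$. Applying this $b$ times gives $x+b\le G^{[b]}(x)$ for $b\ge0$; a single application would cover two units at a time, but $b$ applications is simpler and sufficient. Since $G^{[a]}$ is nondecreasing,
\[
 G^{[a]}(E+b)\le G^{[a]}\bigl(G^{[b]}(G^{[c]}(N+c))\bigr)=G^{[a+b+c]}(N+c)\le G^{[a+b+c]}(N+a+b+c),
\]
and the last step again uses monotonicity. Thus $c'=a+b+c$ works, and $c'$ depends only on the fixed data $a,b,c$, not on $N$.

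If the expression syntax also allows maps with several arguments, such as sums, the same method extends provided those maps are themselves dominated by envelope maps; the lemma as stated has only unary maps of the displayed form, so no such extension is needed. The only points requiring care, and hence the main obstacle, are that monotonicity of $G$ is used at every substitution and that the constants never depend on $N$. Both are ensured by the induction proceeding over the fixed, finite syntax of the expression.
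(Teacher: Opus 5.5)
Your proof is correct and takes essentially the same route as the paper: induction over the expression, absorbing each fixed shift $b$ into extra iterates of $G$ via $G(x)\ge x+2$, and handling numerals and finite maxima by enlarging the constant. You use $b$ extra iterates where the paper uses $\lceil b/2\rceil$, which makes no difference, and you state explicitly that $G^{[c]}(N+c)$ is pointwise nondecreasing in $c$, which the paper's proof leaves implicit.
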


\begin{proof}
For each fixed $b$, choose $s$ with $2s\ge b$.  Then
$G^{[s]}(x)\ge x+b$, and hence
\[
 G^{[a]}(x+b)\le G^{[a+s]}(x).
\]
Induction over the finite expression absorbs one fixed shift and one fixed
iterate count at each node.  Increasing the final constant bounds fixed
numerals and all members of a finite family simultaneously.
\end{proof}

\begin{lemma}[uniform internal-value envelope]
\label{lem:derivation-internal-envelope}
Suppose $(m,n)$ lies in the trace band, put $G=g_n=e_n$, and fix a finite
derivation $\delta$ over $B_m$ using $\rho_{n,l}$.  There is a constant
$c_\delta\ge1$ such that, whenever all input coordinates are at most $N$,
\[
 v\le G^{[c_\delta]}(N+c_\delta)
 \qquad
 \bigl(v\in\operatorname{Val}_{\delta}(\bar x)\bigr).
\]
Consequently every recursion address in the evaluation satisfies
\begin{equation}
\label{eq:derivation-depth-envelope}
 D_{n,l}(a)
 \le C_\delta\bigl(1+D_{n,l}(N+C_\delta)\bigr)
 \qquad
 \bigl(a\in\operatorname{Addr}_{\delta}(\bar x)\bigr)
\end{equation}
for a derivation-dependent constant $C_\delta$.  At the inputs
$Y_t=G^{[t]}(0)$ the right-hand side is $O_\delta(t+1)$.
\end{lemma}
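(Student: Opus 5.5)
The plan is to argue by structural induction on the derivation tree $\delta$, using the evaluation clauses of Definition~\ref{def:derivation-evaluation-values}. The invariant is the value envelope: for each node there is $c$ such that every created value is at most $G^{[c]}(N+c)$ whenever all inputs are at most $N$. The depth bound will then follow from the threshold law. Lemma~\ref{lem:envelope-absorption} is the bookkeeping device; it applies because $G=e_n$ with $n\ge2$ is strictly increasing and satisfies $G(x)\ge x+2$ (Lemma~\ref{lem:generator-identities}).

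First, the leaves. For $\operatorname{Init}(R)$ with $R\in B_m$, the values are the inputs and $R(\bar x)$. In the trace band $m\le n$, every generator is majorized by a fixed expression in $G$.
\begin{itemize}
\item If $m\le1$, zero and successor are absorbed by a shift, and $x+y\le 2N\le G(N)$.
\item If $m\ge2$, then $e_m\le e_n=G$ by Lemma~\ref{lem:generator-identities}.
\end{itemize}

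Second, composition. Given $\operatorname{Comp}(\eta;\delta_1,\ldots,\delta_r)$, the values of each $\delta_i$ are at most $G^{[c_i]}(N+c_i)$. The head $\eta$ is evaluated at arguments bounded by $N'=\max_i G^{[c_i]}(N+c_i)$, so its values are at most $G^{[c_\eta]}(N'+c_\eta)$. Nesting and Lemma~\ref{lem:envelope-absorption} yield a single constant for the whole node.

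Third, step recursion, which is the main step. For $\operatorname{SR}(\gamma,\eta,\beta)$ at $(\bar p,z)$ with all inputs at most $N$, the schedule points satisfy $z_j\le z\le N$ because descent is nonincreasing. The key point is to bound the accumulated values $s_j$ uniformly, without iterating the transition's envelope $d$ times. This is exactly what the admissibility certificate provides: $s_j=\sem{\delta}(\bar p,z_j)\le\sem{\beta}(\bar p,z_j)$. By Lemma~\ref{lem:majorants}, $\sem\beta$ has a monotone composition majorant $T_\beta$. As in the proof of Lemma~\ref{lem:depth-overhead}, this gives $T_\beta\le G^{[a]}(N+a)$ in the trace band, with $a$ independent of $N$.

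So every $s_j$ and $z_j$ is at most $G^{[a]}(N+a)$, and the same holds for $s_0=\sem\gamma(\bar p)$. Each transition evaluation $\sem\eta(\bar p,z_j,s_j)$ therefore has inputs bounded by $N''=G^{[a]}(N+a)$. Applying the inductive envelope of $\eta$ at $N''$, and that of $\gamma$ at $N$, bounds all values created internally. This bound is uniform in $j$ and so independent of the depth $d$. Absorbing once more gives $c_\delta$. The only delicate point is this use of the static bound $\beta$ instead of iterating $\eta$'s envelope; $\beta$ itself is never evaluated but only majorized.

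For the address consequence: every address is a recursion argument, and hence a created value, so $a\le G^{[c]}(N+c)$ with $c=c_\delta$. Set $d'=D_{n,l}(N+c)$. The threshold law~\eqref{eq:threshold-law} gives $N+c\le G^{[ld']}(0)$, so
\[
 a\le G^{[c+ld']}(0)\le\varphi_{n,l}^{[d'+\lceil c/l\rceil]}(0),
\]
and hence $D_{n,l}(a)\le d'+\lceil c/l\rceil$. This is of the form~\eqref{eq:derivation-depth-envelope}. The same monotonicity argument lets one pass from $N+c$ to $N+C_\delta$.

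Finally, at $Y_t=G^{[t]}(0)$ take $N=Y_t$. Since $G(x)\ge x+2$, we have $Y_t+C\le G^{[t+C]}(0)$. Then $D_{n,l}(Y_t+C)\le\lceil (t+C)/l\rceil$ by the threshold law, so the right-hand side is $O_\delta(t+1)$.
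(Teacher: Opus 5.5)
Your proof is correct and follows essentially the same induction as the paper. Leaves are handled by the generator comparisons, and a composition by feeding the argument envelope into the head's envelope. At a step-recursion node, the admissibility certificate $s_j\le\sem{\beta}(\bar p,z_j)$ bounds all states uniformly in the depth $d$, and the threshold law then gives the address depths. The only differences are cosmetic: you bound $\sem{\beta}$ through its composition majorant from Lemma~\ref{lem:majorants} (as in Lemma~\ref{lem:depth-overhead}) rather than through the induction hypothesis on the bound subderivation, and you make the shift $Y_t+C\le G^{[t+C]}(0)$ explicit where the paper leaves it implicit; both are fine.
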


\begin{proof}
We induct on construction rank and use
Lemma~\ref{lem:envelope-absorption} whenever finitely many fixed shifts,
iterates, or maxima have to be combined.

For an initial derivation the assertion follows from
Lemma~\ref{lem:majorants}, since every generator in $B_m$ is bounded, on
inputs at most $N$, by a fixed iterate of $G$ after a fixed shift.  The input
coordinates themselves satisfy the same estimate.

For a composition, the induction hypotheses uniformly bound every value in
the argument evaluations.  Their outputs are therefore bounded by one common
quantity $A=G^{[a]}(N+a)$.  Apply the induction hypothesis for the head
derivation with input cap $A$ and absorb the resulting fixed composition of
$G$-iterates into one envelope of the required form.

Consider a step-recursion derivation at input $(\bar p,z)$ with coordinates
at most $N$.  Every $z_j$ is at most $z\le N$.  By the
induction hypothesis for the bound derivation, its output on $(\bar p,z_j)$
is at most $B=G^{[b]}(N+b)$, uniformly in $j$.  Since
$s_j=\sem{\delta}(\bar p,z_j)$, admissibility gives
$s_j\le\sem{\beta}(\bar p,z_j)\le B$.  The evaluation of $\gamma$ has inputs at most $N$.  Every evaluation of
$\eta$ has all parameters and its address at
most $N$, and its previous-state input at most $B$.  Applying the induction
hypotheses to $\gamma$ and $\eta$ with input caps $N$ and $\max(N,B)$
therefore bounds not only their outputs but every internal value and every
recursion address created in every evaluation of $\gamma$ or $\eta$.
Lemma~\ref{lem:envelope-absorption} combines these finitely many envelope
forms into the required bound, independently of the outer recursion depth
$d$.

Every address belongs to the value set, so monotonicity of $D_{n,l}$ and the
threshold law give the first depth bound.  More explicitly, if
$D_{n,l}(N+c)=d$, then
$N+c\le G^{[ld]}(0)$, and a fixed additional number of $G$-iterations adds
only a fixed amount to the $\rho_{n,l}$-depth.  Enlarging constants yields
\eqref{eq:derivation-depth-envelope}.  Finally,
$D_{n,l}(Y_t)=\lceil t/l\rceil$, up to the harmless initial value $Y_0=0$,
so the canonical-input estimate follows.
\end{proof}

\begin{definition}[descent-composition functions]
\label{def:descent-composition}
Fix $m\in\Nat$ and a descent function $\rho$.  Put
\[
 \mathcal C_m^\rho
 :=\operatorname{Cl}_\circ\bigl(B_m\cup\{\rho\}\bigr)
\]
for the all-arity composition closure, and let $\mathcal M_m^\rho$ be its
unary part.  Every function in $\mathcal C_m^\rho$ is coordinatewise
nondecreasing, and every function in $\mathcal M_m^\rho$ is therefore
nondecreasing.
\end{definition}

\begin{theorem}[direct derivation trace theorem]
\label{thm:structural-trace}
Fix a $k$-ary derivation $\delta$ over $B_m$ using a descent $\rho$.
There is a nondecreasing function
$Q_\delta:\Nat\to\Nat_{>0}$, depending only on $\delta$, with the following
property.  Let $J\subseteq\Nat$ be a nonempty finite consecutive interval and
let $T_1,\ldots,T_k\in\mathcal M_m^\rho$.  Suppose that for every $y\in J$
and every address
\[
 a\in\operatorname{Addr}_{\delta}
       \bigl(T_1(y),\ldots,T_k(y)\bigr)
\]
one has $D_\rho(a)\le K$.  Then $J$ has a partition into at most
$Q_\delta(K)$ consecutive subintervals on each of which
\[
 y\longmapsto
 \sem{\delta}\bigl(T_1(y),\ldots,T_k(y)\bigr)
\]
agrees with a member of $\mathcal M_m^\rho$.

Constants $A_\delta,r_\delta\ge1$ exist such that
\begin{equation}
\label{eq:direct-trace-growth}
 Q_\delta(K)
 \le\exp\!\bigl(A_\delta(K+1)^{r_\delta}\bigr).
\end{equation}
The same bound applies after restricting $J$ to any nonempty consecutive
subinterval.
\end{theorem}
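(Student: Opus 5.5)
We argue by structural induction on the derivation $\delta$, and prove a stronger claim that keeps the induction closed under composition. For every finite family of unary functions in $\mathcal M_m^\rho$ substituted into $\delta$, and every interval $J$ on which all recorded addresses have $\rho$-depth at most $K$, $J$ splits into at most $Q_\delta(K)$ consecutive pieces. On each piece the output map $y\mapsto\sem{\delta}(T_1(y),\ldots,T_k(y))$ agrees with a member of $\mathcal M_m^\rho$. The final sentence of the theorem, about restricting to a subinterval, is automatic: a partition of $J$ induces a partition of any consecutive subinterval with no more pieces, and the address hypothesis is inherited. By Lemma~\ref{lem:extensional-evaluation}, everything depends only on the numerical tuple $(T_i(y))_i$, so we may freely replace substitutions by functions agreeing on a piece.

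\textbf{Initial leaves and composition.} For $\operatorname{Init}(R)$, the map $y\mapsto R(T_1(y),\ldots,T_k(y))$ is already in $\mathcal M_m^\rho$, since that class is closed under composition with $B_m$. So $Q=1$.

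For $\operatorname{Comp}(\eta;\delta_1,\ldots,\delta_r)$, we proceed in three steps.
\begin{enumerate}
\item Apply the induction hypothesis to each $\delta_i$ on $J$. The addresses of the subevaluations are among those of $\delta$, so the depth bound $K$ is inherited.
\item Take the common refinement, which has at most $\sum_i Q_{\delta_i}(K)$ pieces. On each piece every argument is a unary member $U_i\in\mathcal M_m^\rho$.
\item Apply the hypothesis for $\eta$ with substitutions $U_1,\ldots,U_r$ on each piece.
\end{enumerate}
This gives $Q_\delta(K)\le\sum_iQ_{\delta_i}(K)\cdot Q_\eta(K)$. That recursion is compatible with a bound of the form $\exp(A(K+1)^r)$.

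\textbf{Step recursion.} This is the main case. Let $\delta=\operatorname{SR}_\rho(\gamma,\eta,\beta)$ be evaluated at $(T_1(y),\ldots,T_{k-1}(y),T_k(y))$, so the address is $z(y)=T_k(y)$. The hypothesis gives $D_\rho(z(y))\le K$, so the outer depth $d(y)=D_\rho(z(y))$ takes at most $K+1$ values. Because $z$ and $D_\rho$ are nondecreasing, $J$ splits into at most $K+1$ consecutive pieces on which $d$ is constant, say $d$.

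On such a piece we unroll the recursion $d$ times. The schedule points are $z_j(y)=\rho^{[d-j]}(T_k(y))$, which lie in $\mathcal M_m^\rho$ because $\rho$ is available in the composition closure. The base value $s_0(y)=\sem{\gamma}(\bar T(y))$ is handled by the hypothesis for $\gamma$. Inductively, if $s_j$ agrees piecewise with members of $\mathcal M_m^\rho$, then $s_{j+1}(y)=\sem{\eta}(\bar T(y),z_j(y),s_j(y))$ is handled piecewise by the hypothesis for $\eta$, with all arguments unary members of $\mathcal M_m^\rho$. The addresses of these inner evaluations are among the recorded addresses, so the depth bound $K$ applies throughout.

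The count per constant-depth piece is at most $Q_\gamma(K)\,Q_\eta(K)^{d}\le Q_\gamma(K)\,Q_\eta(K)^{K}$. Summing over pieces gives $Q_\delta(K)\le (K+1)\,Q_\gamma(K)\,Q_\eta(K)^{K}$. The bound derivation $\beta$ plays no role, because it is not evaluated.

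\textbf{Main obstacle: the growth estimate \eqref{eq:direct-trace-growth}.} The $K$-th power compounds exponents. Suppose $\log Q_\eta\le A(K+1)^{r}$. Then
\[
\log Q_\delta\le \log(K+1)+A(K+1)^{r}+K\,A(K+1)^{r},
\]
which is at most $A'(K+1)^{r+1}$. So each step-recursion node raises $r$ by one, and composition nodes only add constants. Since $\delta$ is a fixed finite tree, $r_\delta$ is bounded by $1$ plus the number of nested SR nodes along paths of the tree, and $A_\delta$ absorbs all the constants.

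The delicate point is making the piece counting honest. Each refinement step must be done on consecutive subintervals, and we must not double-count. When applying the hypothesis for $\eta$ on a sub-piece, the substitutions must be fixed members of $\mathcal M_m^\rho$, which is exactly why the strengthened hypothesis quantifies over all substitutions. We would state the monotone, nondecreasing choice of $Q_\delta$ explicitly: take a maximum over $K'\le K$.
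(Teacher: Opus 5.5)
Your proposal is correct and follows the paper's proof essentially step for step. The shared structure is: induction on the derivation; common refinement followed by the head hypothesis at composition nodes; at step-recursion nodes, a split into at most $K+1$ constant-depth intervals and an unrolling of at most $K$ transition applications, giving the same recurrence $(K+1)Q_\gamma(K)Q_\eta(K)^K$; and the same degree-raising argument for \eqref{eq:direct-trace-growth}. Your one deviation, bounding the common refinement by $\sum_i Q_{\delta_i}(K)$ instead of the paper's $\prod_i Q_{\delta_i}(K)$, is valid because breakpoint sets simply unite, and it is slightly sharper without affecting the final growth bound.
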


\begin{proof}
We induct on construction rank and define $Q_\delta$ recursively.

For $\delta=\operatorname{Init}(R)$, the composed unary function
$R(T_1,\ldots,T_k)$ belongs to $\mathcal M_m^\rho$, so put
$Q_\delta(K)=1$.

Suppose
\[
 \delta=\operatorname{Comp}(\eta;\delta_1,\ldots,\delta_r).
\]
Apply the induction hypothesis successively to the argument derivations and
take the common refinement.  It has at most
$\prod_iQ_{\delta_i}(K)$ pieces.  On each piece the argument outputs agree
with unary functions $U_i\in\mathcal M_m^\rho$.  On the current piece one has
$U_i(y)=\sem{\delta_i}(T_1(y),\ldots,T_k(y))$ for every $i$.  Hence the head
derivation $\eta$ receives exactly the same numerical input tuple as in the
original composition evaluation.  By
Lemma~\ref{lem:extensional-evaluation}, its internal value and address data
are exactly those of the corresponding original subevaluation.  The
evaluations used to compute the external substitution functions $U_i$ are
not part of $\operatorname{Addr}_\eta$, so every recorded address still
satisfies the same bound.  A further application of the induction hypothesis
for $\eta$ gives
\[
 Q_\delta(K)
 =Q_\eta(K)\prod_{i=1}^rQ_{\delta_i}(K).
\]

Now suppose
\[
 \delta=\operatorname{SR}_\rho(\gamma,\eta,\beta).
\]
After substituting the unary functions $T_1,\ldots,T_k$, write the resulting
parameter functions as $\bar P$ and the final recursion argument as $A$.  All belong to
$\mathcal M_m^\rho$.  The monotonicity of $D_\rho$ was proved after
Definition~\ref{def:descent-closure}; hence $A$ and $D_\rho\circ A$ are
nondecreasing.  Its values lie in
$\{0,\ldots,K\}$, so its nonempty level sets partition $J$ into at most
$K+1$ consecutive intervals.

Fix one such interval and let the common depth be $d$.  Put
\[
 A_j=\rho^{[d-j]}\circ A\qquad(0\le j\le d).
\]
On the current interval,
$A_0(y)=0$, $A_d(y)=A(y)$, and
$\rho(A_{j+1}(y))=A_j(y)$ for every $y$; moreover every $A_j$ belongs to
$\mathcal M_m^\rho$.  Define unary functions
\[
 W_0(y)=\sem{\gamma}(\bar P(y)),\qquad
 W_{j+1}(y)=\sem{\eta}(\bar P(y),A_j(y),W_j(y))
 \quad(0\le j<d).
\]
Then $W_d(y)=\sem{\delta}(\bar P(y),A(y))$.

The induction hypothesis for $\gamma$ partitions the interval into at most
$Q_\gamma(K)$ pieces on which $W_0$ agrees with a member of
$\mathcal M_m^\rho$.  Suppose after a refinement that $W_j$ agrees on each
current piece with such a unary function $U$.  On that piece apply the induction hypothesis
for $\eta$ to $(\bar P,A_j,U)$.  Since $U(y)=W_j(y)$ there, the transition
derivation is evaluated at the same numerical tuple
$(\bar P(y),A_j(y),W_j(y))$ as in the original recursion evaluation.
Lemma~\ref{lem:extensional-evaluation} identifies its internal value and
address data with the corresponding original transition subevaluation.
Evaluations of the external substitution functions are not included in
$\operatorname{Addr}_\eta$, so every recorded address is bounded by $K$.  This multiplies the number of pieces by at most
$Q_\eta(K)$ and makes $W_{j+1}$ piecewise represented by
$\mathcal M_m^\rho$.  Repeating at most $d\le K$ times gives
\[
 Q_\delta(K)
 =(K+1)Q_\gamma(K)Q_\eta(K)^K.
\]
The bound derivation supplies the value envelope and certifies admissibility;
it is not evaluated in this recurrence.

These recursive definitions are nondecreasing in $K$.  Finally, prove
\eqref{eq:direct-trace-growth} simultaneously by induction.  Products add
logarithms in the composition case.  In the recursion case,
\[
 \log Q_\delta(K)
 \le\log(K+1)+\log Q_\gamma(K)+K\log Q_\eta(K),
\]
so multiplication by $K$ raises the degree of a fixed polynomial bound by at
most one.  Restricting the original interval preserves every hypothesis and
uses the same recurrence.
\end{proof}

\section{Finite-range trace consequence}
\label{sec:finite-range-trace-consequence}

\begin{theorem}[finite-range trace bound]
\label{thm:trace}
Suppose $(m,n)$ lies in the trace band.  If unary
$f\in\Hclass{m}{n}{l}$ has finite range and
\[
 \Delta_f(N)=|\{y<N:f(y)\ne f(y+1)\}|,
\]
then constants $C,r\ge1$, depending only on one fixed derivation of
$f$, satisfy
\begin{equation}
\label{eq:weak-finite-range-trace}
 \Delta_f(N)
 \le
 \exp\!\left(
   C\bigl(1+D_{n,l}(N+C)\bigr)^r
 \right).
\end{equation}
For $n\ge2$ this implies
\[
 \Delta_f(N)=N^{o(1)}.
\]
The constants are derivation-dependent; no uniform rate over all derivations
of members of the class is asserted.
\end{theorem}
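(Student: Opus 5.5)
Fix one derivation $\delta$ of $f$; Lemma~\ref{lem:derivation-tree-representation} supplies it. Apply Theorem~\ref{thm:structural-trace} with $k=1$, $T_1$ the identity (which lies in $\mathcal M_m^\rho$), and $J=\{0,1,\ldots,N\}$. The address hypothesis comes from Lemma~\ref{lem:derivation-internal-envelope}. For $y\le N$, every address $a\in\operatorname{Addr}_\delta(y)$ satisfies $D_{n,l}(a)\le C_\delta(1+D_{n,l}(N+C_\delta))$. So we may take $K=C_\delta(1+D_{n,l}(N+C_\delta))$. Theorem~\ref{thm:structural-trace} then partitions $J$ into at most $Q_\delta(K)\le\exp(A_\delta(K+1)^{r_\delta})$ consecutive pieces. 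On each piece $f$ agrees with some unary $U\in\mathcal M_m^{\rho}$.

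The key step is that on each piece $f$ is constant. Every $U\in\mathcal M_m^\rho$ is nondecreasing by Definition~\ref{def:descent-composition}. Since $f$ has finite range, $U$ takes only finitely many values on that piece, but a nondecreasing function can still increase there. So we need a further bound on the number of changes of $U$ within a piece. The plan here is to use monotonicity together with the finite range directly. On a piece, $f=U$ is nondecreasing and takes values in the finite range $\operatorname{ran}(f)$, so $f$ changes value at most $|\operatorname{ran}(f)|-1$ times on that piece. This count is a constant depending only on $f$, hence only on the fixed derivation.

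Counting changes then gives the bound. Changes occur either inside a piece, at most $|\operatorname{ran}f|-1$ per piece, or at the boundaries between consecutive pieces, at most $Q_\delta(K)-1$ of them. Hence
\[
 \Delta_f(N)\le |\operatorname{ran}f|\cdot Q_\delta(K)
 \le |\operatorname{ran}f|\exp\!\bigl(A_\delta(C_\delta(1+D_{n,l}(N+C_\delta))+1)^{r_\delta}\bigr).
\]
Absorbing the factor $|\operatorname{ran}f|$ and the inner constants into a single $C$, with $r=r_\delta$, gives \eqref{eq:weak-finite-range-trace}. We use $(C_\delta(1+D)+1)^{r}\le (2C_\delta)^r(1+D)^r$ and $\log|\operatorname{ran} f|\le C'(1+D)^r$.

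For the asymptotic consequence, Lemma~\ref{lem:depth-overhead} gives $D_{n,l}(N+C)=O(\log\log(N+C+4))$ for $n\ge2$. The exponent is therefore $O((\log\log N)^r)=o(\log N)$, so $\Delta_f(N)=N^{o(1)}$. The main obstacle I expect is the within-piece step: making sure pieces of the trace partition carry only boundedly many changes. The monotonicity-plus-finite-range argument seems to handle it. Care is also needed that the interval $J=\{0,\ldots,N\}$ covers the pairs $(y,y+1)$ with $y<N$, which it does.
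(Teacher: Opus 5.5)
Your proposal is correct and follows essentially the same route as the paper: fix one derivation, apply Theorem~\ref{thm:structural-trace} with the identity substitution on $\{0,\ldots,N\}$, take the address-depth bound from Lemma~\ref{lem:derivation-internal-envelope}, allow at most $|\operatorname{ran}f|-1$ changes per piece (since $f$ is nondecreasing with finite range there) plus the piece boundaries, absorb the constants, and use Lemma~\ref{lem:depth-overhead} for the $N^{o(1)}$ consequence. One expository slip: your sentence that $f$ is constant on each piece is neither true in general nor needed, but the monotonicity-plus-finite-range count you then use is exactly the paper's argument.
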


\begin{proof}
Let the range of $f$ lie in $\{0,\ldots,M\}$ and choose a unary
derivation $\delta$ of $f$ by Lemma~\ref{lem:derivation-tree-representation}.
Apply Theorem~\ref{thm:structural-trace} to $T(y)=y$ on
$\{0,\ldots,N\}$.  Lemma~\ref{lem:derivation-internal-envelope} supplies a
common address-depth bound
\[
 K\le C_0\bigl(1+D_{n,l}(N+C_0)\bigr).
\]
Thus the interval has a partition into at most
$\exp(C_1(1+D_{n,l}(N+C_1))^r)$ pieces on each of which $f$ agrees with a
nondecreasing function.  On one piece a finite-range nondecreasing function
changes value at most $M$ times.  Including the boundaries gives
\[
 \Delta_f(N)\le(M+1)q,
\]
where $q$ is the number of pieces, and constants absorb the factor $M+1$.
For $n\ge2$, Lemma~\ref{lem:depth-overhead} gives
$D_{n,l}(N+C)=O(\log\log(N+4))$.  Every fixed power of $\log\log N$ is
$o(\log N)$, so the right-hand side is $N^{o(1)}$.
\end{proof}

\begin{corollary}[class-level finite-range sparsity]
\label{cor:finite-range-sparsity}
Let $n\ge2$, $m\le n$, and $l\ge1$.  Every unary finite-range
$f\in\Hclass{m}{n}{l}$ satisfies
\[
 \Delta_f(N)=N^{o(1)}.
\]
\end{corollary}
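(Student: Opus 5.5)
The corollary is the class-level restatement of Theorem~\ref{thm:trace}, so the plan is to check that its hypotheses hold and then read off the conclusion. First, the assumptions $n\ge2$ and $m\le n$ say exactly that $(m,n)$ lies in the trace band of Definition~\ref{def:trace-band}. Hence Theorem~\ref{thm:trace} applies to every unary finite-range $f\in\Hclass{m}{n}{l}$.

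Given such an $f$, I will use Lemma~\ref{lem:derivation-tree-representation} to fix one derivation $\delta$ of $f$. Theorem~\ref{thm:trace} then gives constants $C,r\ge1$, depending on $\delta$, with
\[
 \Delta_f(N)\le\exp\!\bigl(C(1+D_{n,l}(N+C))^r\bigr).
\]
Lemma~\ref{lem:depth-overhead} gives $D_{n,l}(N+C)=O_{n,l}(\log\log(N+C+4))$. For fixed $C$ this is $O(\log\log(N+4))$, and its $r$-th power is $o(\log N)$. So the exponent is $o(\log N)$, which means $\Delta_f(N)=N^{o(1)}$.

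The only point that needs care is quantifier order. The constants $C,r$, and so the implicit rate in $o(1)$, depend on the chosen derivation of $f$ and are not uniform over the class. The corollary asserts sparsity for each $f$ separately, so this dependence is allowed. Apart from making that explicit, I see no real obstacle.
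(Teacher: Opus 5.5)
Your proposal is correct and is the paper's own proof: fix one derivation of $f$ and apply Theorem~\ref{thm:trace}, whose statement already includes the $N^{o(1)}$ consequence for $n\ge2$ that you re-derive from Lemma~\ref{lem:depth-overhead}. Your remark that the constants depend on the chosen derivation also matches the caveat the paper states in Theorem~\ref{thm:trace}.
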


\begin{proof}
Choose one derivation witnessing membership and apply
Theorem~\ref{thm:trace}.
\end{proof}

\section{The ordinary and horizontal thresholds}

\begin{theorem}[ordinary saturation and strictness]
\label{thm:saturation}
For every $n\ge2$ and $l\ge1$,
\[
 \boxed{
 \Hclass{m}{n}{l}
 =\Eclass{m}
 \quad\Longleftrightarrow\quad
 m\ge n+1.}
\]
Equivalently,
\[
 \Hclass{m}{n}{l}\subsetneq\Eclass{m}
 \qquad(m\le n).
\]
\end{theorem}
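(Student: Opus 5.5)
We prove the two directions separately, using Lemma~\ref{lem:calibration} throughout for the inclusion $\Hclass{m}{n}{l}\subseteq\Eclass{m}$. The claim therefore reduces to two statements. For $m\ge n+1$ we must show $\Eclass{m}\subseteq\Hclass{m}{n}{l}$. For $m\le n$ we must exhibit a function in $\Eclass{m}\setminus\Hclass{m}{n}{l}$.

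\emph{Strictness for $m\le n$.} Here $(m,n)$ lies in the trace band, so Corollary~\ref{cor:finite-range-sparsity} applies: every unary finite-range member of $\Hclass{m}{n}{l}$ satisfies $\Delta_f(N)=N^{o(1)}$. The parity predicate $y\bmod2$ lies in $\Eclass{0}\subseteq\Eclass{m}$. It is the bounded recursion with base $0$, transition $\ISZ$ of the previous value, and bound $1$. Its change count is $\Delta(N)=N$, which is not $N^{o(1)}$. Hence parity lies outside $\Hclass{m}{n}{l}$, and the inclusion is proper. This direction should be routine.

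\emph{Saturation for $m\ge n+1$.} It suffices to show that $\Hclass{m}{n}{l}$ is closed under ordinary bounded recursion. Since it contains $B_m$, it then contains $\Eclass{m}$. Let $f(\bar x,y+1)=h(\bar x,y,f(\bar x,y))$ with $g,h,b\in\Hclass{m}{n}{l}$, and by Corollary~\ref{cor:bound-presentation-equivalence} take a monotone bound $B$.

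The plan is to simulate the $y$ predecessor steps by a step recursion whose descent depth exceeds $y$. Because $e_{n+1}\in B_m$, we can form $Y=e_{n+1}^{[c]}(y+c)$ for a fixed $c$ with $Y\ge\varphi_{n,l}^{[y]}(0)$. This works because $e_{n+1}(x)=e_n^{[x]}(2)$ dominates $e_n^{[lx]}(0)$ after a fixed shift and iterate. By the threshold law (Lemma~\ref{lem:step-function-arithmetic}), $D_{n,l}(Y)\ge y$.

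We then define an auxiliary step recursion $V(\bar x,y,z)$ along $\rho_{n,l}$ in the address $z$. Its value is $f(\bar x,\min(D_{n,l}(z),y))$. The transition at address $z$ takes the previous value $V(\bar x,y,\rho(z))$, which equals $f(\bar x,\min(d-1,y))$ where $d=D_{n,l}(z)$. If $d\le y$, the transition applies $h(\bar x,d-1,\cdot)$; otherwise it leaves the value unchanged.

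The internal controls needed here are $D_{n,l}$ from Lemma~\ref{lem:internal-controls}, together with a comparison and a selector. These are to be built inside $\Hclass{m}{n}{l}$ from $\NZ$, $\ISZ$, $\operatorname{ZV}$ (Lemma~\ref{lem:internal-zero-value-selector}) and truncated subtraction. Truncated subtraction itself needs a construction: we obtain it by step recursion along $\rho$ on a dilated argument, using the same depth-as-counter trick with $e_{n+1}$ available.

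The declared bound is $B(\bar x,y)$, using monotonicity of $B$. Finally $f(\bar x,y)=V(\bar x,y,Y)$, which follows by composition.

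\emph{Main obstacle.} The hardest step is internalizing the counter arithmetic: computing $d-1$ from $z$ and comparing $d$ with $y$ inside $\Hclass{m}{n}{l}$ without ordinary recursion. The first attempt will be to pass $D_{n,l}(\rho(z))$ as the transition's address argument, so that it equals $d-1$ directly. Predecessor and truncated subtraction on small counters would then be obtained by step recursion on $\varphi_{n,l}^{[\cdot]}(0)$-encoded inputs, again via $e_{n+1}$. An alternative is to define $V$ simply as $f(\bar x,D_{n,l}(z))$ and evaluate it at an address $z$ with $D_{n,l}(z)=y$ exactly. Such an address is $\varphi_{n,l}^{[y]}(0)$, which is computable by a bounded step recursion with bound $e_{n+1}^{[c]}(y+c)$.
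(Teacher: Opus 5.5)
Your architecture matches the paper's. You use Lemma~\ref{lem:calibration} for the upper inclusion and parity with Corollary~\ref{cor:finite-range-sparsity} for strictness when $m\le n$; that half is exactly the paper's argument. For $m\ge n+1$ you use a depth-as-counter simulation of bounded recursion. The gap is the step you flag as the main obstacle, and neither proposed resolution closes it. Your main route clamps at $\min(D_{n,l}(z),y)$, so it needs the comparison $[D_{n,l}(\rho_{n,l}(z))<y]$, i.e.\ truncated subtraction, inside $\Hclass{m}{n}{l}$. Your alternative needs an address of depth exactly $y$. In both cases you obtain the missing object by a step recursion along $\rho_{n,l}$. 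But such a recursion on an argument $w$ performs exactly $D_{n,l}(w)$ updates. Running exactly $t$ predecessor steps, or iterating $\varphi_{n,l}$ exactly $y$ times to produce $\varphi_{n,l}^{[y]}(0)$, therefore already needs an input of exact depth $t$ or $y$. The only other option is a clamp, which uses the very comparison being built. Your point $Y=e_{n+1}^{[c]}(y+c)$ gives only $D_{n,l}(Y)\ge y$, which does not control the count, so the argument is circular here.

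The missing idea is that $e_{n+1}(x)=e_n^{[x]}(2)$, which you use only as an inequality, gives exact depth by composition alone. Put $E(y)=e_{n+1}(ly)=\varphi_{n,l}^{[y]}(2)$. Then $\rho_{n,l}(E(0))=\rho_{n,l}(2)=0$ and $\rho_{n,l}(E(y+1))=E(y)$, so $D_{n,l}(E(y))=y+1$ exactly and $\rho_{n,l}(E(y))$ has depth exactly $y$. With this ladder no comparison is needed. The paper builds a step recursion with $W(\bar x,z)=f(\bar x,\max(D_{n,l}(z)-1,0))$. Its transition uses the addition-based selector $\operatorname{Sel}^{+}$, a one-step recursion with bound $u+v$. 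The selector holds the previous value when $\rho_{n,l}(z)=0$ and otherwise applies $h$ at address $D_{n,l}(\rho_{n,l}(\rho_{n,l}(z)))$. The earlier bound is $B(\bar x,D_{n,l}(z))$, and $W(\bar x,E(y))=f(\bar x,y)$. Your simpler alternative $V(\bar x,z)=f(\bar x,D_{n,l}(z))$, evaluated at $\rho_{n,l}(E(y))$, also works once the ladder is available. Your clamped route could be completed too, via $x\dotminus t=D_{n,l}(I_\rho(E(x),E(t)))$ with the iterator of Lemma~\ref{lem:depth-retraction}, but the clamp is then unnecessary.
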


\begin{proof}

The upper inclusion is Lemma~\ref{lem:calibration}.

Suppose $m\ge n+1$.  The initial basis then contains $e_{n+1}$.
Put $G=g_n=e_n$ and $\rho=\rho_{n,l}$.  Since addition and $e_{n+1}$ are in
the initial basis, the fixed composition $M_l(y)=ly$ belongs to the class, and
hence, by the generator identity $e_{n+1}(x)=e_n^{[x]}(2)$ of
Lemma~\ref{lem:generator-identities}, so does
\[
 E(y)=e_{n+1}(M_l(y))=G^{[ly]}(2)=\varphi_{n,l}^{[y]}(2).
\]
Strict increase gives $\rho(\varphi_{n,l}(x))=x$, while
$\rho(2)=0$.  Therefore
\[
 \rho(E(0))=0,\qquad \rho(E(y+1))=E(y),\qquad D_{n,l}(E(y))=y+1.
\]
Addition also supplies the bounded selector
\[
 \operatorname{Sel}^{+}(u,v,0)=u,\qquad
 \operatorname{Sel}^{+}(u,v,c)=v\quad(c>0),
\]
with bound $u+v$: it is a step recursion along the descent $\rho$ on $c$ whose transition
ignores the recursive value.  Lemma~\ref{lem:internal-controls} supplies
$D_{n,l}$.  For a bounded recursion $f(\bar x,0)=g(\bar x)$,
\[
 f(\bar x,t+1)=h(\bar x,t,f(\bar x,t)),\qquad
 f(\bar x,t)\le b(\bar x,t),
\]
apply Lemma~\ref{lem:majorants} to $b(\bar x,t)$ and choose an earlier
coordinatewise nondecreasing composition function $B(\bar x,t)$ such that
$b(\bar x,t)\le B(\bar x,t)$ for all $\bar x,t$.  Define
\[
 T(\bar x,u,v)=\operatorname{Sel}^{+}
 \bigl(v,h(\bar x,D_{n,l}(\rho(u)),v),u\bigr)
\]
and the step recursion
\[
 W(\bar x,0)=g(\bar x),\qquad
 W(\bar x,z)=T(\bar x,\rho(z),W(\bar x,\rho(z)))\quad(z>0).
\]
Induction on $d=D_{n,l}(z)$ gives
\[
 W(\bar x,z)=f(\bar x,\max(d-1,0)).
\]
For $d\le1$ the selector takes the first branch.  For $d\ge2$, putting
$u=\rho(z)$ gives $D_{n,l}(\rho(u))=d-2$ and the second branch gives exactly
$f(\bar x,d-1)$.  The depth is nondecreasing, so
\[
 \widehat b(\bar x,z)=B(\bar x,D_{n,l}(z))
\]
belongs to the class and bounds $W$.  Indeed, $\rho$ is nondecreasing, so
$D_{n,l}$ is nondecreasing, and
\[
 W(\bar x,z)\le
 b(\bar x,\max(D_{n,l}(z)-1,0))
 \le B(\bar x,\max(D_{n,l}(z)-1,0))
 \le\widehat b(\bar x,z).
\]
Finally
\[
 W(\bar x,E(y))=f(\bar x,y).
\]
Thus every bounded recursion over $\Hclass{m}{n}{l}$ is available,
and induction over an $\Eclass{m}$ construction gives
\[
 \Eclass{m}\subseteq\Hclass{m}{n}{l}.
\]

For $m\le n$, define parity by the bounded recursion
\[
 p(0)=0,\qquad p(t+1)=\ISZ(p(t)),
\]
with declared bound $1$.  Hence $p\in\Eclass{0}\subseteq\Eclass{m}$,
$p(t)=t\bmod2$, and $\Delta_p(N)=N$.  The pair $(m,n)$ lies in the trace band.
Corollary~\ref{cor:finite-range-sparsity} gives
$\Delta_f(N)=N^{o(1)}$ for every finite-range
$f\in\Hclass{m}{n}{l}$.  Since $p$ changes at every adjacent pair,
$p\notin\Hclass{m}{n}{l}$.  Together with the ordinary upper calibration this
proves strictness.
\end{proof}

\section{Exact-depth alignment and vertical padding}
\label{sec:transfer}

In construction-order displays, $A\prec B$ means that $A$ is available at an
earlier finite stage.  At basis zero, padding a target descent with idle stages
would require retaining an arbitrary accumulator under a bound such as $v+C$,
which is unavailable.  We instead construct from any sufficiently deep target
point $r(y)$ a point $\widehat r(y)$ of exact depth $D_\sigma(y)$.  Matching
source and target depths makes every target stage perform one genuine source
update and removes the need for an unbounded hold/update selector.

\begin{lemma}[depth retraction and successor]
\label{lem:depth-retraction}
Let $\rho$ be the defining descent of $\mathcal A=A_\rho(B_m)$.  Then the
following maps belong to $\mathcal A$:
\[
\begin{aligned}
 I_\rho(a,0)&=a,\\
 I_\rho(a,z)&=\rho\bigl(I_\rho(a,\rho(z))\bigr)
 &&(z>0),\\
 \operatorname{Rev}_a(v)&=I_\rho(a,v),\\
 \operatorname{Ret}_a(v)&=\operatorname{Rev}_a(\operatorname{Rev}_a(v)),\\
 \operatorname{Up}_a(v)&=\operatorname{Rev}_a\bigl(
 \rho(\operatorname{Rev}_a(\operatorname{Ret}_a(v)))\bigr),\\
 \operatorname{Choose}(a,v,0)&=\operatorname{Ret}_a(v),\\
 \operatorname{Choose}(a,v,c)&=\operatorname{Up}_a(v)
 &&(c>0).
\end{aligned}
\]
Moreover
\[
 I_\rho(a,z)=\rho^{[D_\rho(z)]}(a),
 \qquad
 D_\rho(\operatorname{Ret}_a(v))=\min\{D_\rho(v),D_\rho(a)\},
\]
and on its nonzero branch $\operatorname{Up}_a$ increases that minimum by one,
stopping at $D_\rho(a)$.  Every value of $\operatorname{Ret}_a$ and
$\operatorname{Up}_a$ lies in the descent sequence from $a$ and is at most $a$.
\end{lemma}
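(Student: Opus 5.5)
The plan is to first place $I_\rho$ in $\mathcal A$ by one bounded step recursion, then read off the closed form $I_\rho(a,z)=\rho^{[D_\rho(z)]}(a)$ by induction on depth, and finally obtain every other listed map, together with its depth behaviour, by composition and pure arithmetic on depths.

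\textbf{Membership of $I_\rho$.} Take $a$ as the parameter and $z$ as the recursion argument. The base is the projection $a$. The transition is $(a,w,s)\mapsto\rho(s)$, which lies in $\mathcal A$ by Lemma~\ref{lem:internal-controls} composed with a projection. The declared bound is the projection $a$; it is admissible because $\rho$ is nondecreasing with $\rho(s)\le s$, so every value stays at most $a$.

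\textbf{Closed form and membership of the remaining maps.} Induction on $D_\rho(z)$ gives the closed form: $D_\rho(\rho(z))=D_\rho(z)-1$ for $z>0$, so each recursion step applies one more $\rho$. Hence all values lie in the descent sequence $a,\rho(a),\rho^{[2]}(a),\ldots$ and are at most $a$. $\operatorname{Rev}_a$, $\operatorname{Ret}_a$ and $\operatorname{Up}_a$ are then compositions of $I_\rho$ with projections and $\rho$, so they belong to $\mathcal A$. $\operatorname{Choose}$ is one further step recursion on $c$, with parameters $(a,v)$. Its base is $\operatorname{Ret}_a(v)$ and its transition ignores the recursive value and returns $\operatorname{Up}_a(v)$. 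Its bound is the projection $a$, since all outputs lie in the descent sequence from $a$.

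\textbf{Depth arithmetic.} Write $\Delta=D_\rho(a)$. For $j\le \Delta$ one has $D_\rho(\rho^{[j]}(a))=\Delta-j$, and $\rho^{[j]}(a)=0$ once $j\ge\Delta$. Thus
\[
 D_\rho(\operatorname{Rev}_a(v))=\Delta\dotminus D_\rho(v).
\]
Applying this twice gives
\[
 D_\rho(\operatorname{Ret}_a(v))=\Delta\dotminus(\Delta\dotminus D_\rho(v))=\min\{D_\rho(v),\Delta\}.
\]
For $\operatorname{Up}_a$, let $\mu=\min\{D_\rho(v),\Delta\}$. Then $\operatorname{Rev}_a(\operatorname{Ret}_a(v))$ has depth $\Delta-\mu$, and applying $\rho$ gives depth $(\Delta-\mu)\dotminus1$. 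Reversing once more gives
\[
 D_\rho(\operatorname{Up}_a(v))=\Delta\dotminus\bigl((\Delta-\mu)\dotminus1\bigr)=\min\{\mu+1,\Delta\},
\]
which is the stated increase by one, capped at $D_\rho(a)$.

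\textbf{Main obstacle.} The only delicate point is this truncated-subtraction bookkeeping, specifically the cases $\mu=\Delta$ and $a=0$. If $\mu=\Delta$, then $\rho$ is applied to $0$ and leaves it at $0$, so the depth stays at $\Delta$. If $a=0$, every value is $0$, consistent with $\Delta=0$. Because depth determines a member of the descent sequence from $a$ uniquely, these depth identities also pin down the actual values.
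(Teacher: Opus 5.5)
Your proposal is correct and follows the paper's proof essentially step for step. Both arguments use the same ingredients: one step recursion for $I_\rho$ with the projection $a$ as bound, the closed form obtained by induction on $D_\rho(z)$, and the depth-shift identity $D_\rho(\rho^{[k]}(a))=D_\rho(a)\dotminus k$, applied twice for $\operatorname{Ret}_a$ and once more around a $\rho$-step for $\operatorname{Up}_a$. Both also build $\operatorname{Choose}$ as a step recursion with constant transition bounded by $a$; the one difference is that the paper proves the depth-shift identity explicitly, whereas you only state it.
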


\begin{proof}
The recursion for $I_\rho$ is admissible with bound the projection $a$.
Induction on $D_\rho(z)$ gives
\[
 I_\rho(a,z)=\rho^{[D_\rho(z)]}(a).
\]
We use the following elementary identity, valid for every descent and every
$a,k$:
\begin{equation}
\label{eq:depth-shift}
 D_\rho(\rho^{[k]}(a))=\max\{D_\rho(a)-k,0\}.
\end{equation}
Indeed, if $d=D_\rho(a)$ and $k\le d$, then $d-k$ further iterates reach
zero.  Fewer than $d-k$ further iterates would make $a$ reach zero in fewer
than $d$ steps, contradicting the definition of $d$.  If $k>d$, the iterate
is already zero.

Put $A=D_\rho(a)$ and $V=D_\rho(v)$.  The iterate identity and
\eqref{eq:depth-shift} give
\[
 D_\rho(\operatorname{Rev}_a(v))=\max\{A-V,0\}.
\]
Applying this formula a second time yields
\[
 \begin{aligned}
 D_\rho(\operatorname{Ret}_a(v))
 &=\max\{A-\max(A-V,0),0\}\\
 &=\min\{V,A\}.
 \end{aligned}
\]
Let $R=\min\{V,A\}$.  Then
$D_\rho(\operatorname{Rev}_a(\operatorname{Ret}_a(v)))=A-R$.
After one application of $\rho$ this becomes
$\max\{A-R-1,0\}$, and the final reflection gives
\[
 D_\rho(\operatorname{Up}_a(v))=\min\{R+1,A\}.
\]
Thus $\operatorname{Up}_a$ advances the retained depth by exactly one unless
it has already reached $A$.  All displayed maps take values in the descent sequence generated by $\rho$
from $a$, hence their values are at most $a$.

Finally $\operatorname{Choose}$ is the step recursion along the descent $\rho$ on $c$ with
base $\operatorname{Ret}_a(v)$ and transition constantly
$\operatorname{Up}_a(v)$, ignoring the previous state.  The projection $a$
is a monotone bound.  The finite dependency order
\[
 I_\rho\prec\operatorname{Rev},\operatorname{Ret}\prec\operatorname{Up}
 \prec\operatorname{Choose}
\]
therefore defines a valid construction in the closure.
\end{proof}

\begin{lemma}[exact-depth realization]
\label{lem:exact-depth-point}
Let $\rho$ be the defining descent of $\mathcal A=A_\rho(B_m)$, and let
$\sigma\in\mathcal A$ be any source descent.  If $r\in\mathcal A$ satisfies
$D_\rho(r(y))\ge D_\sigma(y)$, then there is $\widehat r\in\mathcal A$ with
\[
 D_\rho(\widehat r(y))=D_\sigma(y)
\]
for every $y$.  Explicitly, with the maps of
Lemma~\ref{lem:depth-retraction},
\[
\begin{aligned}
 I_\sigma(y,0)&=y,\\
 I_\sigma(y,z)&=\sigma\bigl(I_\sigma(y,\rho(z))\bigr)
 &&(z>0),\\
 Q(y,a,0)&=0,\\
 Q(y,a,z)&=\operatorname{Choose}\Bigl(
 a,Q(y,a,\rho(z)),
 \NZ\bigl(I_\sigma(y,\operatorname{Ret}_a(Q(y,a,\rho(z))))\bigr)
 \Bigr)
 &&(z>0),
\end{aligned}
\]
and $\widehat r(y)=Q(y,r(y),r(y))$ works.
\end{lemma}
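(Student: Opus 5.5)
The plan is to check first that every displayed map is admissible in $\mathcal A$, and then to prove by induction on $D_\rho(z)$ one depth invariant for $Q$, from which the claim follows by specialising $a=z=r(y)$.

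\textbf{Admissibility.} $I_\sigma$ is a step recursion along $\rho$ on $z$, with parameter $y$. Its base is a projection and its transition applies $\sigma\in\mathcal A$ to the previous value. Since $\sigma$ is a descent, every value is at most $y$, so the projection $y$ is an admissible bound.

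By the same induction on $D_\rho(z)$ as in Lemma~\ref{lem:depth-retraction}, one gets
\[
 I_\sigma(y,z)=\sigma^{[D_\rho(z)]}(y).
\]
Consequently $\NZ(I_\sigma(y,w))=1$ exactly when $D_\rho(w)<D_\sigma(y)$. Here $\NZ$ comes from Lemma~\ref{lem:internal-controls}.

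$Q$ is a step recursion along $\rho$ on $z$, with parameters $y,a$. Its base is $0$ and its transition is built by composition from $\operatorname{Choose}$, $\operatorname{Ret}_a$, $I_\sigma$ and $\NZ$, all earlier in the dependency order. Lemma~\ref{lem:depth-retraction} shows that every value of $\operatorname{Choose}$ lies in the descent sequence from $a$ and is at most $a$, so the projection $a$ bounds $Q$. The construction order is
\[
 \sigma,\ \operatorname{Choose}\prec I_\sigma\prec Q\prec\widehat r.
\]

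\textbf{Invariant.} Write $A=D_\rho(a)$ and $S=D_\sigma(y)$. The claim to prove is
\[
 D_\rho(Q(y,a,z))=\min\{D_\rho(z),S,A\}.
\]
At $z=0$ both sides are $0$.

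For $z>0$, put $q=Q(y,a,\rho(z))$. Since $D_\rho(\rho(z))=D_\rho(z)-1$, the induction hypothesis gives $D_\rho(q)=R'$ with $R'=\min\{D_\rho(z)-1,S,A\}$.

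By Lemma~\ref{lem:depth-retraction}, $w=\operatorname{Ret}_a(q)$ has depth $R=\min\{R',A\}=R'$. The control bit is therefore $[R<S]$.

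If $R<S$, the $\operatorname{Up}_a$ branch fires and produces depth $\min\{R+1,A\}$. There are two subcases.
\begin{itemize}
\item If $A\le D_\rho(z)-1$, then $R=A<S$ and the new depth is $A=\min\{D_\rho(z),S,A\}$.
\item Otherwise $R=D_\rho(z)-1<S$, so $D_\rho(z)\le S$, and the new depth $\min\{D_\rho(z),A\}$ again equals the target.
\end{itemize}

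If $R\ge S$, then $R=S$, since $R\le S$ always. The $\operatorname{Ret}_a$ branch keeps depth $S$. This equals the target, because then $D_\rho(z)-1\ge S$ and $A\ge S$.

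\textbf{Conclusion.} Substituting $a=z=r(y)$ gives
\[
 D_\rho(\widehat r(y))=\min\{D_\rho(r(y)),D_\sigma(y)\}=D_\sigma(y),
\]
by the hypothesis $D_\rho(r(y))\ge D_\sigma(y)$. The map $\widehat r$ is a composition of $Q$, projections and $r$, so it lies in $\mathcal A$.

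The main obstacle is the case bookkeeping in the induction step. In particular, one must confirm that $\operatorname{Ret}_a$ reproduces the stored depth exactly, because the previous value never exceeds $A$. One must also confirm that the two branches of $\operatorname{Choose}$ agree with the three-way minimum at the boundary cases $R=S$ and $R=A$.
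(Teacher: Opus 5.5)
Your proposal is correct and follows essentially the same route as the paper. It checks admissibility with the projection bounds $y$ and $a$, proves the invariant $D_\rho(Q(y,a,z))=\min\{D_\rho(z),D_\sigma(y),D_\rho(a)\}$ by induction on $D_\rho(z)$, and then substitutes $a=z=r(y)$. Your explicit case split just spells out what the paper compresses into the remark that $\operatorname{Up}_a$ fires exactly when another source level remains and stops at $D_\rho(a)$.
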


\begin{proof}
The iterator $I_\sigma$ is a step recursion along the target descent $\rho$,
bounded by the projection $y$, and $I_\sigma(y,z)=\sigma^{[D_\rho(z)]}(y)$.  The recursion $Q$ is
admissible with bound $a$ because its state always lies in the descent sequence generated by $\rho$ from $a$.  To verify the state invariant, let $t=D_\rho(z)$ and suppose the
preceding state has target depth
\[
 k=\min\{t-1,D_\sigma(y),D_\rho(a)\}.
\]
The retraction keeps that depth, while
\[
 I_\sigma\bigl(y,\operatorname{Ret}_a(Q(y,a,\rho(z)))\bigr)>0
 \quad\Longleftrightarrow\quad k<D_\sigma(y).
\]
Thus $\operatorname{Choose}$ applies $\operatorname{Up}_a$ exactly when one
more source level remains, and $\operatorname{Up}_a$ itself stops at
$D_\rho(a)$.  Starting from depth $0$ at $z=0$, induction on $t$ gives
\[
 D_\rho(Q(y,a,z))=
 \min\{D_\rho(z),D_\sigma(y),D_\rho(a)\}.
\]
Substituting $a=r(y)$ and using $D_\rho(r(y))\ge D_\sigma(y)$ yields
\begin{equation}
\label{eq:exact-depth-alignment}
 D_\rho(\widehat r(y))=D_\sigma(y).
\end{equation}
The finite dependency order $I_\sigma\prec Q\prec\widehat r$, after the maps of
Lemma~\ref{lem:depth-retraction}, defines a valid construction in the closure.
\end{proof}

\begin{lemma}[aligned source simulation]
\label{lem:aligned-simulation}
Under the hypotheses of Lemma~\ref{lem:exact-depth-point}, every admissible
step recursion along the source descent $\sigma$ whose base, transition, and bound belong to
$\mathcal A$ also belongs to $\mathcal A$.
\end{lemma}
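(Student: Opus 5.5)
Given a source recursion $f(\bar x,0)=g(\bar x)$, $f(\bar x,y)=h(\bar x,\sigma(y),f(\bar x,\sigma(y)))$ with $g,h,b\in\mathcal A$, the plan is to run a target recursion along $\rho$ whose $t$-th stage performs the $t$-th source update. The realization $\widehat r$ of Lemma~\ref{lem:exact-depth-point} supplies a target address of exactly the right depth. By Corollary~\ref{cor:bound-presentation-equivalence} we may replace $b$ by an earlier coordinatewise nondecreasing majorant $B\in\mathcal A$.

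First I need the source addresses. The iterator $I_\sigma(y,z)=\sigma^{[D_\rho(z)]}(y)$ of Lemma~\ref{lem:exact-depth-point} is already in $\mathcal A$. Put $d=D_\sigma(y)$ and let the target variable $z$ have $D_\rho(z)=t$. Then the source address to be processed at target stage $t$ is
\[
 \sigma^{[d-t]}(y)=I_\sigma\bigl(y,\operatorname{Rev}_{a}(z)\bigr),\qquad a=\widehat r(y).
\]
This uses $D_\rho(\operatorname{Rev}_a(z))=\max\{D_\rho(a)-t,0\}=d-t$ from Lemma~\ref{lem:depth-retraction} and \eqref{eq:exact-depth-alignment}. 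Both the reflection and $\widehat r$ are in $\mathcal A$, so the lower address $u_t(y,z)=I_\sigma(y,\operatorname{Rev}_{\widehat r(y)}(\rho(z)))$, which has source depth $d-t+1$, is internal as a function of $(y,z)$.

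Next, define with parameters $\bar x,y$ the step recursion along $\rho$
\[
 W(\bar x,y,0)=g(\bar x),\qquad
 W(\bar x,y,z)=h\bigl(\bar x,\ \sigma(I_\sigma(y,\operatorname{Rev}_{\widehat r(y)}(\rho(z)))),\ W(\bar x,y,\rho(z))\bigr)\quad(z>0).
\]
The middle argument is $\sigma^{[d-t+1]}(y)$, which is the source address whose value is being extended. It is computed from $y$ and $\rho(z)$, so it fits the transition format $h'(\bar x,y,\rho(z),\text{prev})$, a composition of earlier functions. I will prove by induction on $t=D_\rho(z)\le d$ that $W(\bar x,y,z)=f(\bar x,\sigma^{[d-t]}(y))$. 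At $t=0$ both sides equal $g(\bar x)$, since $\sigma^{[d]}(y)=0$. The step is exactly the source clause at the point $\sigma^{[d-t]}(y)$, which is positive for $t\le d$. Then $f(\bar x,y)=W(\bar x,y,\widehat r(y))$, because $D_\rho(\widehat r(y))=d$.

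The main obstacle is admissibility: the recursion must also be defined, and bounded, at target points $z$ with $D_\rho(z)>d$. There, $\operatorname{Rev}$ saturates at depth $0$, the address becomes $\sigma(0)=0$, and $h$ is iterated on a stale value with no bound available. At basis zero a hold selector is not available, which is exactly the difficulty flagged before Lemma~\ref{lem:depth-retraction}. My first attempt is to clamp the target argument instead of holding: recurse on the retracted point $\operatorname{Ret}_{\widehat r(y)}(z)$, i.e.\ set the transition to read $\rho(\operatorname{Ret}_a(z))$ rather than $\rho(z)$. A cleaner route, which I would try first, is to make the recursion variable range only over the descent chain of $a$: define $W'(\bar x,y,z)=W(\bar x,y,\operatorname{Ret}_{\widehat r(y)}(z))$. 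Here $W$ is the recursion above taken with the additional parameter $a$ and address $\operatorname{Ret}_a$-normalized inside the transition, so that stages beyond depth $d$ simply recompute the depth-$d$ value. Then every stage value is some $f(\bar x,\sigma^{[j]}(y))$ and is bounded by $B(\bar x,y)$, using monotonicity of $B$ and $\sigma^{[j]}(y)\le y$. This $B(\bar x,y)$ is an earlier function, hence a legitimate declared bound. Finally, the dependency order $I_\sigma,\widehat r\prec W\prec f$ gives a valid finite-stage construction.
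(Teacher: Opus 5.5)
Your handling of the target stages with $D_\rho(z)>d$ is not an admissible construction, and those stages are the whole difficulty of the lemma. In $\mathrm{SR}_\rho$ the previous-state input of the transition is always $W(\bar x,y,\rho(z))$. The transition cannot be made to ``read'' $\rho(\operatorname{Ret}_a(z))$ instead.

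Defining $W'(\bar x,y,z)=W(\bar x,y,\operatorname{Ret}_{\widehat r(y)}(z))$ afterwards does not help. $W$ itself enters the closure only if it satisfies an earlier declared bound at \emph{every} $z$, including the stages you never intend to use. Making those stages ``recompute the depth-$d$ value'' from the available data $(\bar x,y,\rho(z),v)$ means returning $v$ unchanged under a Boolean control. That is exactly the hold/update selector you correctly noted is unavailable at basis zero: its declared bound would have to majorize an arbitrary state $v$ and also $h(\bar x,\cdot,v)$. Without it, the extra stages iterate $h$ on stale states, and no earlier bound exists in general. (Also, when $\operatorname{Rev}_a$ saturates it returns the point $0$, and $I_\sigma(y,0)=y$, so the late addresses do not become $0$. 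This is immaterial to the bound problem.)

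The missing idea is that nothing needs to be held. Because $\widehat r(y)$ has \emph{exact} depth $d$, you can kill every later stage. Use the zero-or-value selector $\operatorname{ZV}$ of Lemma~\ref{lem:internal-zero-value-selector}. Its declared bound is the projection $u$, so it exists at every basis. Put
\[
 W(\bar x,y,w)=\operatorname{ZV}\bigl(h(\bar x,u(y,z),W(\bar x,y,z)),\ \NZ(I_\sigma(y,z))\bigr),\qquad z=\rho(w).
\]
Since $I_\sigma(y,z)=\sigma^{[D_\rho(z)]}(y)$ is positive exactly when $D_\rho(w)\le d$, the stages $t\le d$ carry $f(\bar x,\sigma^{[d-t]}(y))$ and all later stages are $0$. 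Hence $B(\bar x,y)$ is a global declared bound, and $W(\bar x,y,\widehat r(y))=f(\bar x,y)$. This is the paper's construction.

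Your way of producing the address is a correct and shorter substitute for the paper's auxiliary recursions $T$ and $\operatorname{Tail}$; it is the same function:
\[
 u(y,z)=I_\sigma(y,\operatorname{Rev}_{\widehat r(y)}(z))=\sigma^{[\max\{d-D_\rho(z),0\}]}(y).
\]
Note, however, the off-by-one in your display. The middle argument must be $I_\sigma(y,\operatorname{Rev}_{\widehat r(y)}(\rho(z)))=\sigma^{[d-t+1]}(y)$ itself. With the extra outer $\sigma$ it becomes $\sigma^{[d-t+2]}(y)$, and your induction fails.
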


\begin{proof}
Let $\widehat r$ be the exact-depth map of Lemma~\ref{lem:exact-depth-point}.  Define
\[
 J(v,0)=v,\qquad J(v,c)=\sigma(v)\quad(c>0),
\]
a step recursion along the target descent $\rho$ on $c$, bounded by $v$, and
\[
\begin{aligned}
 T(y,z,0)&=y,\\
 T(y,z,w)&=J\Bigl(
 T(y,z,\rho(w)),
 \NZ\bigl(I_\sigma(T(y,z,\rho(w)),z)\bigr)\Bigr)
 \qquad(w>0).
\end{aligned}
\]
Write $d=D_\sigma(y)$ and $e=D_\rho(z)$.  If the current state is
$\sigma^{[r]}(y)$, then its test is nonzero exactly when $r+e<d$.
Hence each target stage applies one $\sigma$ until $r=d-e$, after which the
state is retained.  Since $\widehat r(y)$ provides exactly $d$ stages, induction
gives
\[
 \operatorname{Tail}(y,z):=T(y,z,\widehat r(y))=
 \sigma^{[\max\{D_\sigma(y)-D_\rho(z),0\}]}(y).
\]
The state of $T$ only descends from $y$, so its bound is the projection $y$.
Suppose the source recursion has base $f(\bar x,0)=g(\bar x)$, transition
\[
 f(\bar x,y)=h(\bar x,\sigma(y),f(\bar x,\sigma(y)))\quad(y>0),
\]
and earlier pointwise bound $b(\bar x,y)$.  Apply
Lemma~\ref{lem:majorants} to $b(\bar x,y)$ and choose an earlier
coordinatewise nondecreasing composition function $B(\bar x,y)$ satisfying
$b(\bar x,y)\le B(\bar x,y)$ for all $\bar x,y$.  Use the
zero-or-value selector from Lemma~\ref{lem:internal-zero-value-selector} and
put $E=\operatorname{ZV}$.  Now define a step recursion along the target
descent $\rho$ by
\[
 W(\bar x,y,0)=g(\bar x),
\]
and, with $z=\rho(w)$, put
\[
 W(\bar x,y,w)=E\!\left(
 h(\bar x,\operatorname{Tail}(y,z),W(\bar x,y,z)),
 \NZ(I_\sigma(y,z))\right)
 \qquad(w>0).
\]
Let $d=D_\sigma(y)$ and $t=D_\rho(w)$.  The case $t=0$ is the
recursion base and gives
$W(\bar x,y,0)=g(\bar x)=f(\bar x,0)$.  If $1\le t\le d$, then
$D_\rho(z)=t-1<d$, so $I_\sigma(y,z)>0$, and
\[
 \operatorname{Tail}(y,z)=\sigma^{[d-(t-1)]}(y).
\]
Induction on $t$ therefore gives, for every $0\le t\le d$,
\[
 W(\bar x,y,w)=f(\bar x,\sigma^{[d-t]}(y)).
\]
If $t>d$, the selector returns $0$.  Hence, for every $w$,
\[
 W(\bar x,y,w)\le B(\bar x,y).
\]
For $t\le d$, source admissibility gives a value bounded by $b$ at an
address at most $y$, and the monotonicity of $B$ gives the displayed bound;
for $t>d$, it is immediate.  Thus $B(\bar x,y)$ is a global
target-recursion bound independent of $w$.
At the exact-depth input $\widehat r(y)$ the preceding identity proves
\[
 W(\bar x,y,\widehat r(y))=f(\bar x,y).
\]
The remaining finite dependency order
\[
 T\prec\operatorname{Tail}\prec W
\]
defines a valid construction in the closure after $\widehat r$.  No addition, pairing, or unbounded selector is used, so the
argument applies unchanged when $m=0$.
\end{proof}

\begin{lemma}[exact-depth transfer]
\label{lem:depth-transfer}
Let $\rho$ be the defining descent of the target class
$\mathcal A=A_\rho(B_m)$, in the sense of
Definition~\ref{def:descent-closure}.  Let $\sigma\in\mathcal A$ be any
source descent, not necessarily a generalized inverse.  If
$r\in\mathcal A$ satisfies
\[
 D_\rho(r(y))\ge D_\sigma(y),
\]
then every admissible step recursion along the descent $\sigma$ whose base, transition, and
bound belong to $\mathcal A$ also belongs to $\mathcal A$.
\end{lemma}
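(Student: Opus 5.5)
The statement is essentially a repackaging of the two preceding lemmas, so the plan is to chain them directly rather than build anything new.

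\emph{Step 1 (exact depth).} Given $\sigma\in\mathcal A$ and $r\in\mathcal A$ with $D_\rho(r(y))\ge D_\sigma(y)$, Lemma~\ref{lem:exact-depth-point} yields $\widehat r\in\mathcal A$ with
\[
 D_\rho(\widehat r(y))=D_\sigma(y).
\]
This uses only the maps of Lemma~\ref{lem:depth-retraction}, which require nothing beyond $B_m\supseteq\{Z,S,P_i^k\}$ and step recursion along $\rho$. Nowhere does the construction use that $\sigma$ is a generalized inverse: it needs only that $\sigma$ is a descent lying in $\mathcal A$, so that it can appear as the transition of the iterator $I_\sigma$.

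\emph{Step 2 (simulation).} Next I invoke Lemma~\ref{lem:aligned-simulation}, whose hypotheses are exactly those of Lemma~\ref{lem:exact-depth-point}. Take a source step recursion along $\sigma$ with base $g$, transition $h$ and bound $b$, all in $\mathcal A$. First replace $b$ by a monotone composition majorant $B$, via Lemma~\ref{lem:majorants} applied to $\mathcal A$'s own stage sequence, or Corollary~\ref{cor:bound-presentation-equivalence}. Then build $J$, $T$, $\operatorname{Tail}$ and $W$ as step recursions along $\rho$, and evaluate $W$ at $\widehat r(y)$ to recover $f(\bar x,y)$. Hence $f\in\mathcal A$.

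\emph{Checks.} The point I would double-check is that every auxiliary recursion is admissible inside $\mathcal A$ with an earlier bound. The bounds are projections ($a$ for $Q$, $y$ for $T$) and the global bound $B(\bar x,y)$ for $W$. This is what makes the argument valid even at $m=0$, where no addition or pairing is available.

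A second point to verify is that $\operatorname{Tail}(y,\rho(w))$ feeds $h$ the correct source address at every target stage $t\le D_\sigma(y)$. This is the depth-matching computation already done in the proof of Lemma~\ref{lem:aligned-simulation}.

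Both checks are already contained in the cited lemmas, so I expect no real obstacle. The proof should therefore be a two-line citation of Lemmas~\ref{lem:exact-depth-point} and~\ref{lem:aligned-simulation}, together with the remark that $\sigma$ need only be a descent in $\mathcal A$.
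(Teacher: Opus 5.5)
Your proposal is correct and follows the paper's proof, which likewise chains Lemma~\ref{lem:depth-retraction} and Lemma~\ref{lem:exact-depth-point} to obtain $\widehat r$ with $D_\rho(\widehat r(y))=D_\sigma(y)$. It then applies Lemma~\ref{lem:aligned-simulation} to simulate the $\sigma$-recursion at $\widehat r(y)$, and records only the resulting dependency order $I_\rho\prec\operatorname{Rev},\operatorname{Ret}\prec\cdots\prec W$. Your remark that only $\sigma\in\mathcal A$ as a descent is used, never that it is a generalized inverse, is exactly why the statement is phrased for arbitrary source descents.
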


\begin{proof}
Lemma~\ref{lem:depth-retraction} constructs the descent controls on a fixed sequence generated by the target
descent $\rho$.  Lemma~\ref{lem:exact-depth-point} produces an aligned input
$\widehat r(y)$ of exact depth $D_\sigma(y)$.
Lemma~\ref{lem:aligned-simulation} simulates the source recursion at that
aligned point.  The full dependency order is
\[
 I_\rho\prec\operatorname{Rev},\operatorname{Ret}\prec\operatorname{Up}
 \prec\operatorname{Choose}\prec I_\sigma\prec Q\prec\widehat r\prec T
 \prec\operatorname{Tail}\prec W.
\]
\end{proof}

\begin{lemma}[vertical depth domination]
\label{lem:vertical-domination}
Let $2\le n<n'$ and $l,l'\ge1$.  Put
\[
 \alpha=\varphi_{n,l},\qquad
 \beta=\varphi_{n',l'},\qquad
 y_s=\beta^{[s]}(0),\qquad
 c_s=D_{n,l}(y_s).
\]
Then $D_{n',l'}(y_s)=s$, and for every fixed $C,r\ge1$,
\[
 c_s>\exp(Cs^r)
\]
for all sufficiently large $s$.
\end{lemma}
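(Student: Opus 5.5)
The first claim, $D_{n',l'}(y_s)=s$, follows directly from the threshold law \eqref{eq:threshold-law} of Lemma~\ref{lem:step-function-arithmetic}. Indeed, $y_s=\beta^{[s]}(0)$ satisfies $y_s\le\beta^{[d]}(0)$ exactly when $d\ge s$, because $\beta$ is strictly increasing with $\beta(x)\ge x+1$. For the growth claim, the same threshold law applied to $\alpha$ shows that $c_s$ is the least $d$ with $y_s\le\alpha^{[d]}(0)=e_n^{[ld]}(0)$. It therefore suffices to show $y_s>e_n^{[lD]}(0)$ with $D=\lfloor\exp(Cs^r)\rfloor$, that is, that $y_s$ exceeds a tower of $e_n$-iterates of height about $l\exp(Cs^r)$. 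It is enough to treat $n'=n+1$, since Lemma~\ref{lem:generator-identities} gives $e_{n'}\ge e_{n+1}$ and hence $y_s\ge e_{n+1}^{[l's]}(0)\ge e_{n+1}^{[s]}(0)$.

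The key step compares $e_{n+1}$-iterates with $e_n$-iterate counts. By Lemma~\ref{lem:generator-identities}, $e_{n+1}(x)=e_n^{[x]}(2)\ge e_n^{[x]}(0)$. So one application of $e_{n+1}$ to an argument $x$ produces a value of $e_n$-depth at least about $x$. Precisely, define $h_s$ to be the number of $e_n$-iterates from $0$ needed to reach $e_{n+1}^{[s]}(0)$. Then $h_{s+1}\ge e_{n+1}^{[s]}(0)$, up to an additive constant coming from the offset $2$ versus $0$.

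Consequently $c_s\ge\lceil h_s/l\rceil-O(1)\ge e_{n+1}^{[s-1]}(0)/l-O(1)$. Since $e_{n+1}\ge e_3$ and $e_3(x)=e_2^{[x]}(2)\ge 2^{2^x}$, the iterate $e_{n+1}^{[s-1]}(0)$ grows like a tower of height roughly $s$. Such a tower eventually dominates $\exp(Cs^r)$, already at height $3$ for large $s$. Concretely, it suffices to show $e_{n+1}^{[s-1]}(0)\ge 2^{2^{s-3}}$ for $s\ge3$, and $2^{2^{s-3}}/l$ eventually beats $\exp(Cs^r)$.

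The main obstacle is bookkeeping of the offsets, in particular the base value $2$ in $e_{n+1}(x)=e_n^{[x]}(2)$ and the distinction between $e_n^{[t]}(0)$ and $e_n^{[t]}(2)$. I would handle this with the monotone inequality $e_n^{[t]}(0)\le e_n^{[t]}(2)\le e_n^{[t+1]}(0)$, using $e_n(0)=2$. Inducting on $s$ with this inequality gives the clean bound $y_{s+1}\ge e_n^{[y_s]}(0)$. From that bound, $c_{s+1}\ge y_s/l$, and the remaining step is the elementary check that $y_s\ge e_3^{[s]}(0)$ is eventually larger than $l\exp(Cs^r)$.
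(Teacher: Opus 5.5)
Your proof is correct and follows the paper's core step. The threshold law together with $y_{s+1}=\beta(y_s)\ge e_{n+1}(y_s)=e_n^{[y_s]}(2)$ forces $c_{s+1}\ge y_s/l$, which is the paper's $c_{s+1}>\lfloor y_s/l\rfloor$. The only difference is the finish. You bound $y_s\ge e_3^{[s]}(0)$ directly by a tower in $s$, whereas the paper feeds $y_s>\alpha^{[c_s-1]}(0)$ back in to get the self-recursion $c_{s+1}\ge 2^{c_s}$ for large $s$. Your route is slightly more direct; the index shift $s\mapsto s+1$ is harmlessly absorbed into $C$.
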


\begin{proof}
Put $F=g_n=e_n$, so $\alpha=F^{[l]}$.  Since $n'\ge n+1$ and the generating functions are pointwise increasing
with their indices,
\[
 \beta(x)\ge e_{n+1}(x)=F^{[x]}(2).
\]
The threshold law gives $D_{n',l'}(y_s)=s$ and, whenever $c_s>0$,
\[
 y_s>\alpha^{[c_s-1]}(0)=F^{[l(c_s-1)]}(0).
\]
Moreover,
\[
 y_{s+1}=\beta(y_s)\ge F^{[y_s]}(2).
\]
If $d\le y_s/l$, then
\[
 \alpha^{[d]}(0)=F^{[ld]}(0)
 <F^{[y_s]}(2)\le y_{s+1}.
\]
Another application of the threshold law yields
\[
 c_{s+1}>\left\lfloor\frac{y_s}{l}\right\rfloor.
\]
Since $F\ge e_2$ and
$e_2^{[t]}(0)\ge2^{2^{t-1}}$ for $t\ge1$, the preceding inequalities imply,
after discarding finitely many $s$,
\[
 c_{s+1}\ge2^{c_s}.
\]
Indeed,
\[
 y_s>
 F^{[l(c_s-1)]}(0)
 \ge2^{2^{l(c_s-1)-1}},
\]
and the last quantity eventually exceeds $l2^{c_s}$.  Starting at a
sufficiently large $s_0$, induction now gives, for example,
\[
 c_s\ge2^{2^{s-s_0-1}}\qquad(s>s_0).
\]
This lower bound eventually exceeds every $\exp(Cs^r)$.
\end{proof}

\begin{lemma}[vertical padding]
\label{lem:vertical-padding}
If $n'>n\ge2$, then for every $l,l'\ge1$ an internal $r$ exists in
$\Hclass{m}{n}{l}$ with
\[
 D_{n,l}(r(y))\ge D_{n',l'}(y).
\]
\end{lemma}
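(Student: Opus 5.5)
The plan is to take $r$ to be a fixed shift, $r(y)=y+C$ for a suitable numeral $C$. This is a finite iterate of the successor, so it lies in $B_0\subseteq\Hclass{m}{n}{l}$ for every $m$, and no recursion is needed. A bare $r(y)=y$ will not do. At small arguments the comparison can go the wrong way: for instance $e_{n'}(0)=2$ while $\varphi_{n,l}(0)$ may be large, so $D_{n',l'}$ can exceed $D_{n,l}$ at finitely many small inputs. The shift is there to absorb this finite initial segment.

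We work along the canonical points $y_s=\beta^{[s]}(0)$ of Lemma~\ref{lem:vertical-domination}, with $\beta=\varphi_{n',l'}$ and $c_s=D_{n,l}(y_s)$. By the threshold law~\eqref{eq:threshold-law}, every $y$ with $D_{n',l'}(y)=s\ge1$ satisfies $y_{s-1}<y\le y_s$. Lemma~\ref{lem:vertical-domination} gives $c_s>\exp(s)$ for all sufficiently large $s$, in particular $c_{s-1}\ge s$. Fix $s_0$ such that $c_{s-1}\ge s$ for every $s>s_0$.

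The verification then splits into two cases. Throughout we use that $D_{n,l}$ is nondecreasing (shown after Definition~\ref{def:descent-closure}).
\begin{itemize}
\item If $D_{n',l'}(y)=s>s_0$, then $y+C\ge y>y_{s-1}$, so $D_{n,l}(y+C)\ge c_{s-1}\ge s$.
\item If $D_{n',l'}(y)\le s_0$, it suffices that $D_{n,l}(C)\ge s_0$. By the threshold law this holds as soon as $C>\varphi_{n,l}^{[s_0-1]}(0)$, so we fix such a $C$. Then $D_{n,l}(y+C)\ge D_{n,l}(C)\ge s_0\ge D_{n',l'}(y)$.
\end{itemize}
Together the two cases give $D_{n,l}(r(y))\ge D_{n',l'}(y)$ for all $y$.

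The only substantive step is the growth comparison $c_{s-1}\ge s$ eventually, and Lemma~\ref{lem:vertical-domination} already supplies it. The remaining care is bookkeeping: the index shift from $s$ to $s-1$ that comes from the half-open threshold intervals, and the edge case $s=0$ (that is, $y=0$), which is trivial. The point of $C$ is to make the argument independent of $l,l'$ in the small range. Since $r$ uses only successor and projections, the construction works even at $m=0$, which is exactly what Lemma~\ref{lem:depth-transfer} needs to give vertical inclusions at every basis.
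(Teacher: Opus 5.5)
Your proof is correct, and it differs from the paper's in how the finitely many small inputs are handled. The paper keeps identity padding on large inputs and compares at a fixed $D_{n,l}$-level. If $D_{n,l}(y)=s\ge s_0$, then $y\le\alpha^{[s]}(0)<\beta^{[s]}(0)$, so $D_{n',l'}(y)\le D_{n,l}(y)$. It then patches the finite segment $y<Y$ with the constant $R=\alpha^{[M]}(0)$. That patch needs two extra pieces of machinery: the threshold predicate $\operatorname{GE}_Y$, built from a fixed iterate of $\rho_{n,l}$ after a shift, and a one-step selector recursion $J$ to splice the two branches.

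You instead compare at a fixed $D_{n',l'}$-level, using $y>y_{s-1}$. You absorb the initial segment into one global shift $y\mapsto y+C$, relying only on the fact that $D_{n,l}$ is nondecreasing. The shift therefore never lowers the depth on large inputs, and it guarantees depth at least $s_0$ on small ones. As a result your $r$ is a pure successor term in $\operatorname{Cl}_\circ(B_0)$, with no step recursion or internal controls, so validity at $m=0$ is immediate.

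The paper gains nothing essential by keeping $r(y)=y$ off a finite set. Lemma~\ref{lem:depth-transfer} uses only the depth inequality and membership of $r$ in the class, and the exact-depth construction there uses $r(y)$ only through the projection bound $a$.
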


\begin{proof}
Put $\alpha=\varphi_{n,l}$, $\beta=\varphi_{n',l'}$,
\[
 y_s=\beta^{[s]}(0),\qquad c_s=D_{n,l}(y_s).
\]
Lemma~\ref{lem:vertical-domination} gives $c_s/s\to\infty$.  Choose $s_0$ with
$c_s\ge s+1$ for $s\ge s_0$.  By generalized-inverse identities,
\[
 y_s>\alpha^{[c_s-1]}(0)\ge\alpha^{[s]}(0).
\]
Thus $D_{n,l}(y)=s\ge s_0$ implies
\[
 y\le\alpha^{[s]}(0)<\beta^{[s]}(0),
\]
and hence $D_{n',l'}(y)\le D_{n,l}(y)$.  There are only finitely many
remaining $y$.  Choose $Y\ge1$ exceeding all of them and put
\[
 M=\max_{y<Y}D_{n',l'}(y),
 \qquad
 R=\alpha^{[M]}(0).
\]
The threshold law gives $D_{n,l}(R)=M$, and therefore
\[
 D_{n,l}(R)\ge D_{n',l'}(y)\qquad(y<Y).
\]

It remains to verify that the finite patch is internal even at initial-basis
level $0$.  Choose $p$ with
$C_p=\varphi_{n,l}^{[p]}(0)\ge Y-1$ and put $a=C_p-Y+1$.  By the threshold
law,
\[
 \rho_{n,l}^{[p]}(a+y)=0
 \quad\Longleftrightarrow\quad y<Y.
\]
The shift $a+y$ means the fixed successor iterate $S^{[a]}(y)$, so no
addition is required.  Hence
\[
 \operatorname{GE}_Y(y)=
 \NZ(\rho_{n,l}^{[p]}(S^{[a]}(y)))
\]
belongs to every row.  Define a step recursion along the target descent $\rho$ on a control input
$c$ by
\[
 J(y,0)=R,\qquad J(y,c)=y\quad(c>0).
\]
The fixed shift $S^{[R]}(y)$ bounds both branches.  Consequently the map
\[
 r(y)=J(y,\operatorname{GE}_Y(y))
 =\begin{cases}R,&y<Y,\\y,&y\ge Y\end{cases}
\]
is internal and satisfies the required depth inequality.
\end{proof}

\begin{lemma}[finite controls and shifted tables]
\label{lem:finite-controls}
Fix a class $\Hclass{m}{n}{l}$.
\begin{enumerate}
\item Every map from a fixed finite interval
$\{0,\ldots,C\}$ to $\Nat$ extends, with any fixed default value, to a
member of $\Hclass{m}{n}{l}$.
\item If $a_0,\ldots,a_C$ are fixed nonnegative integers, then the
parameterized table
\[
 T(u,i)=S^{[a_i]}(u)\quad(0\le i\le C)
\]
with any fixed shifted default is in $\Hclass{m}{n}{l}$.
\end{enumerate}
Both assertions hold at $m=0$.
\end{lemma}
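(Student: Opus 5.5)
The plan is to build both assertions from indicator predicates for the individual points $0,\ldots,C$, combined with nested zero-or-value selections. These selections never need addition or an unbounded selector, so the construction is available at $m=0$. First I obtain threshold predicates $\operatorname{GE}_j(y)=[y\ge j]$ for each fixed $j\le C+1$. I use exactly the device from the proof of Lemma~\ref{lem:vertical-padding}. Choose $p$ with $C_p=\varphi_{n,l}^{[p]}(0)\ge j-1$ and put $a=C_p-j+1$. The threshold law of Lemma~\ref{lem:step-function-arithmetic} then gives
\[
 \rho_{n,l}^{[p]}(S^{[a]}(y))=0\iff y<j,
\]
so $\operatorname{GE}_j=\NZ\circ\rho_{n,l}^{[p]}\circ S^{[a]}$ is internal. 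Here Lemma~\ref{lem:internal-controls} supplies $\rho_{n,l}$ and $\NZ$, and for $j=0$ the predicate is the constant $1$.

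Next I define a ``switch'' for selection between two already internal functions. Let
\[
 J(u,v,0)=u,\qquad J(u,v,c)=v\quad(c>0),
\]
formed as a step recursion along $\rho_{n,l}$ on $c$ whose transition ignores the recursive value. At $m=0$ the bound $u+v$ is unavailable, so I restrict to the form actually needed. For part~2, every candidate value is $S^{[a_i]}(u)$, and all of these are bounded by the single shifted projection $S^{[A]}(u)$ with $A=\max(a_0,\ldots,a_C,\text{default shift})$. So I define the switch with parameters $u,i$ only. The $C+2$ branches are written directly into a nested recursion: define $T_C$ to be the default $S^{[a_\ast]}(u)$, and for $j=C,C-1,\ldots,0$ obtain $T_{j-1}$ from $T_j$ by a step recursion on a control argument whose base is $S^{[a_j]}(u)$ and whose transition is the earlier function $T_j(u,i)$. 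Each such recursion has the monotone declared bound $S^{[A]}(u)$. The table is then recovered by nesting
\[
 T(u,i)=\text{``if }\operatorname{GE}_{j+1}(i)=0\text{ then }S^{[a_j]}(u)\text{ else }T_j(u,i)\text{''}
\]
from $j=C$ down to $0$. Substituting $\operatorname{GE}_{j+1}(i)$ into the control argument gives the outermost layer, $j=0$, which selects $S^{[a_0]}(u)$ exactly when $i<1$, and so on inward. Each layer is a literal instance of Definition~\ref{def:descent-closure}, with the base and transition taken from earlier stages. The composition with the control is a separate composition step.

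Part~1 follows from part~2. Take $u$ to be the constant $0$ via $Z$ and diagonal projection, and set $a_i$ equal to the prescribed value at $i$, with the default shift equal to the default value. This yields the finite map extended by the default, and the bound is the fixed numeral $S^{[A]}(0)$.

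The main obstacle is the bookkeeping of bounds at $m=0$. The declared bound must be an earlier monotone function that dominates every branch and involves no addition. The fixed common shift $S^{[A]}(u)$ settles this, and the same reasoning as in Lemma~\ref{lem:vertical-padding}, where $S^{[R]}(y)$ bounded both branches, shows that the construction is admissible. I would also verify that the argument uses nothing row-specific, since only $\rho_{n,l}$, the threshold law, and $\NZ$ enter. Because of this the statement holds uniformly for every $\Hclass{m}{n}{l}$, including row $0$, where $\rho_{0,l}(y)=y\dotminus l$.
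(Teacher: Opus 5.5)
Your proposal is correct and follows essentially the same route as the paper. It uses the same threshold predicates $\operatorname{GE}_j=\NZ\circ\rho_{n,l}^{[p]}\circ S^{[a]}$ and a finite chain of one-step recursions on Boolean controls under the common shifted-projection bound $S^{[A]}(u)$. The only differences are minor streamlinings: you cascade the threshold tests instead of forming $[y=i]$ by Boolean conjunction, and you obtain part~1 as the $u=0$ specialization of part~2 rather than by a separate selector between fixed constants bounded by their maximum.
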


\begin{proof}
We first construct the fixed thresholds directly.  For $i\ge1$, choose
$p$ with $C_p=\varphi_{n,l}^{[p]}(0)\ge i-1$ and put
$a=C_p-i+1$.  The generalized-inverse identities give
\[
 \rho_{n,l}^{[p]}(S^{[a]}(y))=0
 \quad\Longleftrightarrow\quad y<i.
\]
Hence
\[
 \operatorname{GE}_i(y)=
 \NZ\!\left(\rho_{n,l}^{[p]}(S^{[a]}(y))\right)
\]
is the predicate $y\ge i$; take $\operatorname{GE}_0=1$.
Lemma~\ref{lem:internal-controls} supplies the zero and nonzero tests.  A
one-step recursion on a Boolean control, bounded by $1$, supplies Boolean
conjunction, so
\[
 [y=i]=
 \operatorname{GE}_i(y)\wedge
 \ISZ(\operatorname{GE}_{i+1}(y)).
\]
Another one-step recursion selects between any two fixed constants, using
their maximum as a fixed bound.  A finite chain of these selectors proves
the first assertion.

For the second assertion, put $A=\max_i a_i$ including the default shift.
Every branch is bounded by the earlier shifted projection $S^{[A]}(u)$.
A conditional selecting two such branches is a step recursion on its
Boolean control, with the same shifted projection as its declared bound.
A fixed finite chain of these conditionals constructs $T$.  Only successor,
projections, fixed Boolean controls, and displayed shifted-projection bounds
were used, so the construction is valid when $m=0$.
\end{proof}

\begin{proposition}[row zero collapses to the ordinary hierarchy]
\label{prop:row-zero}
For every $m\in\Nat$ and every $l\ge1$,
\[
 \boxed{\Hclass{m}{0}{l}=\Eclass{m}.}
\]
\end{proposition}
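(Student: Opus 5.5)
The inclusion $\Hclass{m}{0}{l}\subseteq\Eclass{m}$ is the special case $n=0$ of Lemma~\ref{lem:calibration}. For the reverse inclusion I will invoke the exact-depth transfer Lemma~\ref{lem:depth-transfer} with target class $\mathcal A=\Hclass{m}{0}{l}=A_{\rho_{0,l}}(B_m)$, target descent $\rho=\rho_{0,l}$ (so $\rho(y)=y\dotminus l$), and source descent $\sigma=\rho_{0,1}$, the predecessor. By \eqref{eq:ordinary-boundary}, bounded recursion is exactly step recursion along $\sigma$. So once the hypotheses of Lemma~\ref{lem:depth-transfer} hold, every bounded recursion with data in $\mathcal A$ stays in $\mathcal A$. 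An induction over the finite-stage construction of $\Eclass{m}=\operatorname{Cl}_{\circ,\mathrm{BR}}(B_m)$ then gives $\Eclass{m}\subseteq\mathcal A$. Composition and the initial functions are common to both closures, so only the bounded-recursion case needs the lemma.

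Two hypotheses need checking. First, $\sigma\in\mathcal A$: Lemma~\ref{lem:internal-controls} only provides $\rho$ itself, so I build the predecessor directly. The case $l=1$ is trivial. For $l\ge2$, the recursion $P(x,0)=0$, $P(x,y)=\ldots$ cannot reach $y-1$ by descending in steps of $l$, so instead I take the step recursion along $\rho$ on the argument $y$ with transition $h(x,z,v)=S^{[l-1]}(z)$. This gives $F(y)=\rho(y)+l-1$ for $y>0$, which equals $y-1$ when $y\ge l$. The small inputs $y<l$ I patch with the finite tables and thresholds $\operatorname{GE}_l$ of Lemma~\ref{lem:finite-controls}, selecting by a one-step conditional whose bound is the projection $y$. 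All of this uses only successor and shifted projections, so it works at $m=0$.

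Second, I need an internal $r$ with $D_{0,l}(r(y))\ge D_{0,1}(y)=y$. Since $D_{0,l}(z)=\lceil z/l\rceil$, I want $r(y)\ge ly$. For $m\ge1$, addition in $B_m$ gives $r(y)=ly$ by composition. For $m=0$ this is the main obstacle, because basis zero only has shifted projections, and any unbounded affine growth would break Lemma~\ref{lem:majorants}. A quick check shows that every function in $\Hclass{0}{0}{l}$ is bounded by $x_i+c$, so no such $r$ exists for $l\ge2$, and the transfer lemma cannot be applied as stated.

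For $m=0$ and $l\ge2$ I would therefore fall back on a direct simulation. The idea is to encode $\Eclass{0}$ bounded recursion by performing $l$ ordinary updates per step-recursion stage. I would use the internal predecessor $\sigma$ and, at address $z$, the transition computing $h$ iterated along $z,z-1,\ldots,\rho(z)+1$ in reverse. This is a fixed composition of $l$ copies of $h$, each fed the address $\sigma^{[j]}$ of the current point, with the bound $B$ from Lemma~\ref{lem:majorants}. Inputs not congruent to the schedule are handled by starting the descent at $y$ itself: the residue block $y\bmod l$ is absorbed in the base, computed by a finite chain of $\operatorname{GE}$-selected compositions of $g$ and $h$. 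I expect verifying that this chunked transition correctly tracks the intermediate addresses $z-j$, and proving its admissibility under a shifted-projection bound, to be the delicate step.
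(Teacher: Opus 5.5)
Your reduction for $m\ge1$ is fine. With $r(y)=ly$ you get $D_{0,l}(ly)=y=D_{0,1}(y)$, and Lemma~\ref{lem:depth-transfer} with $\sigma=\rho_{0,1}$ gives closure under bounded recursion. One slip in building the predecessor: the conditional joining $\rho(y)+l-1$ with the small-input table must be declared with bound $S^{[l-1]}(y)$, not $y$. A step-recursion bound must hold on both branches for every control value, and $\rho(y)+l-1=l-1>y$ for $0<y<l-1$. You are also right that no internal $r$ exists at $m=0$.

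However, $m=0$, $l\ge2$ is where the proposition has real content, and there your plan has a genuine gap. The transition of a step recursion at $y>0$ sees only $\rho(y)=y\dotminus l$, not $y$. So it cannot be fed ``the address $\sigma^{[j]}$ of the current point'' when $y<l$. If you instead feed it $\rho(y),\rho(y)+1,\ldots,\rho(y)+l-1$, the recursion is consistent only at multiples of $l$. At other inputs $h$ is applied to states that are not $f$ at the supplied address, and these off-trajectory values are uncontrolled.

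Take $l=2$, $g=0$, $h(t,v)=v+1+[v\ne t]$ (in $\Eclass{0}$, bounded by $S^{[2]}(v)$) and bound $b(t)=t$. The true function is $f(t)=t$, but your chunked recursion gives $F(2k+1)=4k+2$. By Lemma~\ref{lem:majorants} no basis-zero declared bound majorizes this, so the recursion is inadmissible, even though you only intend to read it at aligned inputs.

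Absorbing the residue in the base does not help. The base $g(\bar x)$ does not see $y$, so $y$ must become a parameter, and then unaligned pairs of parameter and recursion argument produce the same uncontrolled values. A hold-versus-update switch inside the transition would need a selector bounded by the maximum of the previous state and the update. Basis zero lacks such a selector; this is the obstruction noted at the start of Section~\ref{sec:transfer}.

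The missing idea is to keep the chunked recursion on-trajectory at \emph{every} input. The paper builds the round-up $U(y)=l\lceil y/l\rceil$ internally at $m=0$. From the last positive descent node $W(y)=I(y,\rho(y))$ it sets $U(y)=S^{[l-W(y)]}(y)$ for $y>0$ and $U(0)=0$, a shifted table as in Lemma~\ref{lem:finite-controls}. This satisfies $U(\rho(y))+l=U(y)$ for all $y>0$.

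The transition then applies $h$ at the addresses $U(z),\ldots,U(z)+l-1$ with $z=\rho(y)$. This gives $F(\bar x,y)=f(\bar x,U(y))$ for all $y$, and hence the admissible bound $B(\bar x,S^{[l-1]}(y))$.

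The residue is corrected outside the recursion. $T_i(\bar x,y)$ applies $i$ further updates to $F(\bar x,y\dotminus i)$, which equals $f(\bar x,y)$ when $i=y\bmod l$. A finite chain of conditionals on $[y\bmod l=i]$ selects the right branch. This is legal at $m=0$ because every branch is a function of $(\bar x,y)$ under that same common bound. The argument is uniform in $m$, so the paper never needs exact-depth transfer for this proposition.
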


\begin{proof}
The upper inclusion is Lemma~\ref{lem:calibration}; for $l=1$ the reverse
inclusion is \eqref{eq:ordinary-boundary}.  Fix $l\ge2$, write
$\rho(y)=y\dotminus l$, and show that $\Hclass{m}{0}{l}$ is closed under
ordinary bounded recursion.

First construct the alignment controls.  The step recursion
\[
 I(y,0)=y,\qquad I(y,z)=\rho(I(y,\rho(z)))\quad(z>0)
\]
has bound $y$ and value $\rho^{[D_{0,l}(z)]}(y)$.  Thus
$W(y)=I(y,\rho(y))$ is $0$ at $y=0$ and otherwise
$((y-1)\bmod l)+1$.  Lemma~\ref{lem:finite-controls} then gives the round-up
\[
 U(y)=l\left\lceil\frac yl\right\rceil,
 \qquad y\le U(y)\le y+l-1,
 \qquad U(\rho(y))+l=U(y)\ (y>0),
\]
using only a fixed shifted table.  For fixed $j\le l$, above the threshold
$y\ge l$ one has $y\dotminus j=S^{[l-j]}(\rho(y))$; below it a finite table
applies, and $\operatorname{GE}_l$ joins the two branches.  The residue
$y\bmod l$ is a fixed table of $W(y)$.  Hence all these controls are internal,
also at $m=0$.

Let an earlier bounded recursion be given by
\[
 f(\bar x,0)=g(\bar x),\qquad
 f(\bar x,y+1)=h(\bar x,y,f(\bar x,y)),\qquad f\le b.
\]
Choose by Lemma~\ref{lem:majorants} an earlier coordinatewise nondecreasing
composition majorant $B\ge b$.  Let $H(\bar x,z,v)$ be the fixed $l$-fold
composition of $h$ at the successive addresses
$U(z),U(z)+1,\ldots,U(z)+l-1$, and define
\[
 F(\bar x,0)=g(\bar x),\qquad
 F(\bar x,y)=H(\bar x,\rho(y),F(\bar x,\rho(y)))\quad(y>0).
\]
Induction on $D_{0,l}(y)$, using $U(\rho(y))+l=U(y)$, gives
$F(\bar x,y)=f(\bar x,U(y))$.  Every intermediate address is at most
$y+l-1$, so $B(\bar x,S^{[l-1]}(y))$ is an earlier bound.

Write
\[
 \operatorname{sub}_i(y):=y\dotminus i,
 \qquad
 \operatorname{res}(y):=y\bmod l.
\]
The preceding paragraph shows that these fixed controls are internal.  For
$0\le i<l$ put $u_i(y)=U(\operatorname{sub}_i(y))$ and let
$T_i(\bar x,y)$ apply $i$ further $h$-updates to
$F(\bar x,\operatorname{sub}_i(y))$ starting at address $u_i(y)$.
Then
\[
 T_i(\bar x,y)=f(\bar x,u_i(y)+i),
\]
and all branches have the same bound above.  If $i=\operatorname{res}(y)$,
then $u_i(y)=y-i$, hence $T_i(\bar x,y)=f(\bar x,y)$.  A fixed finite chain
of conditionals on $[\operatorname{res}(y)=i]$, supplied by
Lemma~\ref{lem:finite-controls}, therefore computes $f$.  Thus the class is
closed under bounded recursion, so construction induction gives
$\Eclass{m}\subseteq\Hclass{m}{0}{l}$.
\end{proof}

In view of Proposition~\ref{prop:row-zero}, we write
$\Hclass{m}{0}{*}$ for the common class
$\Hclass{m}{0}{l}=\Eclass{m}$, $l\ge1$.

\begin{lemma}[comparison of descent depths]
\label{lem:descent-comparison}
Let $\tau$ and $\rho$ be nondecreasing descent functions.  If
$\tau(y)\le\rho(y)$ for every $y$, then
\[
 \tau^{[j]}(y)\le\rho^{[j]}(y)
 \qquad(j,y\in\Nat),
\]
and consequently
\[
 D_\tau(y)\le D_\rho(y)
 \qquad(y\in\Nat).
\]
\end{lemma}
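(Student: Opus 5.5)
The plan is to prove the iterate inequality $\tau^{[j]}(y)\le\rho^{[j]}(y)$ by induction on $j$, uniformly in $y$, and then read off the depth comparison directly from the definition of $D$.

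\textbf{Iterate inequality.} The case $j=0$ is the identity $y\le y$. For the induction step, assume $\tau^{[j]}(y)\le\rho^{[j]}(y)$ for every $y$. Then
\[
 \tau^{[j+1]}(y)=\tau\bigl(\tau^{[j]}(y)\bigr)\le\tau\bigl(\rho^{[j]}(y)\bigr)\le\rho\bigl(\rho^{[j]}(y)\bigr)=\rho^{[j+1]}(y).
\]
The first inequality uses that $\tau$ is nondecreasing together with the induction hypothesis. The second uses the pointwise hypothesis $\tau\le\rho$ at the argument $\rho^{[j]}(y)$. Only monotonicity of $\tau$ is needed here; monotonicity of $\rho$ plays no role in this step.

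\textbf{Depth comparison.} Put $d=D_\rho(y)$, which exists by Definition~\ref{def:descent-closure}. Then $\rho^{[d]}(y)=0$, and the first part gives $0\le\tau^{[d]}(y)\le0$, so $\tau^{[d]}(y)=0$. Since $D_\tau(y)$ is the least $t$ with $\tau^{[t]}(y)=0$, we get $D_\tau(y)\le d$.

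No step is a real obstacle. The one point to watch is the order of the two inequalities in the induction step: applying the monotone function $\tau$ first is what lets the argument avoid any use of monotonicity of $\rho$.
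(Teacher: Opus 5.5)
Your proof is correct and follows the paper's argument exactly: induction on $j$ via $\tau(\tau^{[j]}(y))\le\tau(\rho^{[j]}(y))\le\rho(\rho^{[j]}(y))$, then evaluating at $j=D_\rho(y)$ to force $\tau^{[j]}(y)=0$. Your side remark that only the monotonicity of $\tau$ is used is also accurate.
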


\begin{proof}
The iterate inequality is induction on $j$.  The case $j=0$ is equality.  If
it holds at $j$, then monotonicity of $\tau$ and the pointwise comparison give
\[
 \tau^{[j+1]}(y)
 =\tau(\tau^{[j]}(y))
 \le\tau(\rho^{[j]}(y))
 \le\rho(\rho^{[j]}(y))
 =\rho^{[j+1]}(y).
\]
At $j=D_\rho(y)$ the right-hand side is $0$, so the left-hand side is also
$0$.  Hence $D_\tau(y)\le D_\rho(y)$.
\end{proof}

\begin{lemma}[fixed-iterate descent-depth recovery]
\label{lem:fixed-power-recovery}
Let $\rho$ be the defining descent of $\Hclass{m}{n}{l}$.  Let $\sigma$ be
the generalized inverse of a fixed strictly increasing step function $\psi$
satisfying $\psi(x)\ge x+1$,
suppose $\sigma\in\Hclass{m}{n}{l}$, and suppose that for some fixed
$c\ge1$,
\[
 \tau:=\sigma^{[c]}\le\rho
\]
pointwise.  Then $D_\sigma\in\Hclass{m}{n}{l}$.
\end{lemma}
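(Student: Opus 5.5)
The plan is to avoid simulating $\sigma$-recursion directly and instead recurse along $\tau=\sigma^{[c]}$, accounting for the last, incomplete block of $\sigma$-steps separately through finite controls. Everything then stays inside the basis-zero toolkit, so the argument works uniformly in $m$.

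\textbf{Step 1: preliminaries.} Since $\sigma\in\Hclass{m}{n}{l}$, the composite $\tau=\sigma^{[c]}$ is internal. By Lemma~\ref{lem:step-function-arithmetic} applied to $\psi$, the map $\tau$ is the generalized inverse of $\psi^{[c]}$, and
\[
 D_\tau(y)=\lceil D_\sigma(y)/c\rceil .
\]
Lemma~\ref{lem:descent-comparison} together with $\tau\le\rho$ gives $D_\tau(y)\le D_\rho(y)$. Hence Lemma~\ref{lem:depth-transfer}, applied with source descent $\tau$ and $r(y)=y$, makes every admissible step recursion along $\tau$ with internal data available in $\Hclass{m}{n}{l}$.

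\textbf{Step 2: the final block.} Decompose
\[
 D_\sigma(y)=\beta(y)+c\cdot\max\{D_\tau(y)-1,0\},
 \qquad
 \beta(y)=D_\sigma\bigl(\tau^{[D_\tau(y)-1]}(y)\bigr)\in\{1,\ldots,c\}
\]
for $y>0$, with $\beta(0)=0$. The point $\tau^{[D_\tau(y)-1]}(y)$ is $I_\tau(y,\tau(y))$, using the $\tau$-iterator of Lemma~\ref{lem:exact-depth-point}, which is a step recursion along $\rho$ bounded by $y$. Then $\beta$ is obtained from the fixed Boolean vector
\[
 \bigl(\NZ(\sigma^{[j]}(\cdot))\bigr)_{j<c}
\]
by a finite chain of conditionals selecting among fixed constants $\le c$, as in Lemma~\ref{lem:finite-controls}. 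No addition is needed.

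\textbf{Step 3: the main recursion.} Define a step recursion along $\tau$ with parameter $y$ and argument $z$:
\[
 F(y,0)=\beta(y),\qquad
 F(y,z)=\begin{cases}v,&\tau(z)=0,\\ S^{[c]}(v),&\tau(z)>0,\end{cases}
 \qquad v=F(y,\tau(z)).
\]
The case distinction is a one-step recursion on the control $\NZ(\tau(z))$ with shifted-projection bound $S^{[c]}(v)$, so it needs no addition. Induction on $D_\tau(z)$ gives
\[
 F(y,z)=\beta(y)+c\cdot\max\{D_\tau(z)-1,0\}.
\]

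\textbf{Step 4: admissibility (the main obstacle).} Finding a declared bound that is an earlier composition term at $m=0$ is the step I expect to be hardest, since $c\cdot D_\tau(z)$ is not a composition term there. The fix is that
\[
 c\bigl(D_\tau(z)-1\bigr)<D_\sigma(z)\le z
\]
by strict descent of $\sigma$. Together with $\beta\le c$, this yields $F(y,z)\le S^{[2c]}(z)$, which is a valid earlier bound.

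\textbf{Step 5: conclusion.} We get
\[
 D_\sigma(y)=\operatorname{ZV}\bigl(F(y,y),\NZ(y)\bigr),
\]
using the selector of Lemma~\ref{lem:internal-zero-value-selector}. The dependency order is
\[
 \tau\prec I_\tau\prec\beta\prec F\prec D_\sigma ,
\]
after the transfer machinery of Lemma~\ref{lem:depth-transfer}.
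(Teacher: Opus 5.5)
Your route differs from the paper's and can be made to work, but Step~2 as written rests on a false identity. The iterator of Lemma~\ref{lem:exact-depth-point} recurses along the class descent $\rho$. So $I_\tau(y,z)=\tau^{[D_\rho(z)]}(y)$, and hence
\[
 I_\tau\bigl(y,\tau(y)\bigr)=\tau^{[D_\rho(\tau(y))]}(y),
\]
whereas you need $\tau^{[D_\tau(\tau(y))]}(y)=\tau^{[D_\tau(y)-1]}(y)$.

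You only know $D_\tau\le D_\rho$. Whenever $D_\rho(\tau(y))\ge D_\tau(y)$, the iterate has already reached $0$. Then every entry of your Boolean vector vanishes, $\beta(y)=0$, and Step~5 outputs $c(D_\tau(y)-1)$ instead of $D_\sigma(y)$. This is the generic situation in the intended application, Lemma~\ref{lem:higher-step-function-internal}, where $\rho=\rho_{n,l}$ and $\sigma=\rho_{s,1}$ with $s>n$, so $D_\rho$ vastly exceeds $D_\tau$. It already happens for $\rho(y)=y\dotminus1$, $\sigma=\tau=\lfloor y/2\rfloor$, $c=1$: at $y=8$ one gets $I_\tau(8,4)=\tau^{[4]}(8)=0$ instead of the last positive point $1$. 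The construction therefore returns $3$ rather than $D_\sigma(8)=4$.

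The repair uses only what you already have. Step~1 makes step recursion along $\tau$ available, so define the iterator along $\tau$ itself:
\[
 I'(y,0)=y,\qquad I'(y,z)=\tau\bigl(I'(y,\tau(z))\bigr)\quad(z>0),
\]
with bound $y$. Then $I'(y,z)=\tau^{[D_\tau(z)]}(y)$. Since $D_\tau(\tau(y))=D_\tau(y)-1$ for $y>0$, the point $I'(y,\tau(y))$ is the last positive point. Alternatively, use a stopping iterator along $\rho$ that applies $\tau$ only while the result stays positive; this is the paper's $K,L$. With that change, Steps~3--5 are correct as written: the induction for $F$, the bound via $c(D_\tau(z)-1)<D_\sigma(z)\le z$, and the final patch all check out.

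For comparison, the paper never invokes exact-depth transfer here and uses only recursions along $\rho$:
\begin{itemize}
\item a counter $C_c$ that adds $c$ at each $\rho$-stage while $I_\tau(y,\rho(z))>0$, so the along-$\rho$ iterator serves only as a positivity test, which is exactly where it is reliable;
\item the stopping iterator $L$;
\item a fixed table $\theta=D_\sigma$ on $\{1,\ldots,\psi^{[c]}(0)\}$;
\item the shifted table $A_c$, which adds the residual to $C_c(\tau(y),\tau(y))=c(D_\tau(y)-1)$;
\item an origin patch.
\end{itemize}
Your version is arithmetically cleaner. Feeding the residual in as the base of the $\tau$-recursion removes both the separate counter and the shifted-table addition. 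The price is relying on the full machinery of Lemmas~\ref{lem:depth-retraction}--\ref{lem:depth-transfer}. That reliance is legitimate: those lemmas precede this one, do not use it, and hold at $m=0$.
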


\begin{proof}
Put $\tau=\sigma^{[c]}$.  The step recursion
\[
 I_\tau(y,0)=y,\qquad I_\tau(y,z)=\tau(I_\tau(y,\rho(z)))\quad(z>0)
\]
has bound $y$ and value $\tau^{[D_\rho(z)]}(y)$.  Define
\[
 C_c(y,0)=0,
\]
\[
 C_c(y,z)=
 \begin{cases}
 S^{[c]}(C_c(y,\rho(z))),&I_\tau(y,\rho(z))>0,\\
 C_c(y,\rho(z)),&I_\tau(y,\rho(z))=0
 \end{cases}
 \qquad(z>0).
\]
The conditional is supplied by Lemma~\ref{lem:finite-controls}.  Since $\tau$
is the inverse of $\psi^{[c]}$,
\[
 D_\tau(y)=\left\lceil\frac{D_\sigma(y)}c\right\rceil,
\]
and $\tau\le\rho$ gives $D_\tau\le D_\rho$ by
Lemma~\ref{lem:descent-comparison}.  Thus $C_c(y,y)=cD_\tau(y)\le y+c$, so
the shifted projection $S^{[c]}(y)$ is an admissible bound.

Next let
\[
 K(v,0)=v,\qquad K(v,u)=\tau(v)\quad(u>0),
\]
and define
\[
 L(y,0)=y,\qquad
 L(y,z)=K\!\left(L(y,\rho(z)),
 \NZ\bigl(\tau(L(y,\rho(z)))\bigr)\right)\quad(z>0).
\]
Both recursions are bounded by $y$.  Since $D_\tau(y)\le D_\rho(y)$,
$L(y,y)$ is the last positive point of the $\tau$-descent from $y$.
The finite set $\{w:D_\tau(w)=1\}=\{1,\ldots,\psi^{[c]}(0)\}$ therefore
admits, by Lemma~\ref{lem:finite-controls}, a fixed table
$\theta(w)=D_\sigma(w)$ (with $\theta(0)=0$).  Put
$s(y)=\theta(L(y,y))$ and use the shifted-table map
$A_c(u,s)=S^{[s]}(u)$ for $0\le s\le c$.

Finally, suppress the positive branch at the origin by a one-step selector
$\operatorname{Orig}(z,0)=0$, $\operatorname{Orig}(z,y)=z$ for $y>0$, and
set
\[
 D_\sigma^\star(y)=\operatorname{Orig}\!\left(
 A_c(C_c(\tau(y),\tau(y)),s(y)),y\right).
\]
If $y>0$, write $d=D_\sigma(y)$ and
$q=D_\tau(y)=\lceil d/c\rceil$.  Then
$L(y,y)=\sigma^{[c(q-1)]}(y)$, so $s(y)=d-c(q-1)$, while
$C_c(\tau(y),\tau(y))=c(q-1)$.  Thus $D_\sigma^\star(y)=d$, and the value at
$0$ is $0$.  Every operation used is an earlier step recursion, fixed
successor shift, or finite table, so the construction is valid also at
$m=0$.
\end{proof}

\begin{lemma}[adjacent vertical step function factorization]
\label{lem:adjacent-factorization}
For every $s\ge2$,
\[
 \rho_{s+1,1}=D_{s,1}\circ\rho_{s,1}.
\]
Moreover, for every $n\ge2$, $m\in\Nat$, and $l\ge1$,
$\rho_{n+1,1}\in\Hclass{m}{n}{l}$.
\end{lemma}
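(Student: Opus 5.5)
The plan is to prove the factorization identity first, by pure arithmetic from the threshold law. Then I would get membership by re-running the construction of Lemma~\ref{lem:fixed-power-recovery}. That construction only ever uses the internal descent $\tau=\rho_{n,l}$, never $\sigma=\rho_{n,1}$ itself.

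\emph{The identity.} Fix $s\ge2$ and $y>0$. By Lemma~\ref{lem:generator-identities} and $e_s(0)=2$,
\[
 e_{s+1}(z)=e_s^{[z]}(2)=e_s^{[z+1]}(0)=\varphi_{s,1}^{[z+1]}(0).
\]
The threshold law \eqref{eq:threshold-law} for $\rho_{s,1}$ then gives
\[
 e_{s+1}(z)\ge y\iff D_{s,1}(y)\le z+1 .
\]
Taking the least such $z$ yields $\rho_{s+1,1}(y)=D_{s,1}(y)-1$; note $D_{s,1}(y)\ge1$ since $y>0$. On the other side, the depth-shift identity \eqref{eq:depth-shift} with $k=1$ gives
\[
 D_{s,1}(\rho_{s,1}(y))=D_{s,1}(y)-1 .
\]
The two sides agree for $y>0$, and both vanish at $y=0$. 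This proves $\rho_{s+1,1}=D_{s,1}\circ\rho_{s,1}$, and also the useful form $\rho_{n+1,1}(y)=D_{n,1}(y)\dotminus1$.

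\emph{Membership.} Work in $\Hclass{m}{n}{l}$ with defining descent $\rho=\rho_{n,l}=\rho_{n,1}^{[l]}$ (Lemma~\ref{lem:step-function-arithmetic}). I would follow the proof of Lemma~\ref{lem:fixed-power-recovery} with $\sigma=\rho_{n,1}$, $c=l$ and $\tau=\rho$. The hypothesis $\sigma\in\Hclass{m}{n}{l}$ is not needed there, because every recursion in that proof applies only $\tau$ and $\rho$, and here both equal the defining descent, which is internal by Lemma~\ref{lem:internal-controls}. The condition $\tau\le\rho$ holds trivially. The steps are as follows.
\begin{enumerate}
\item The counter $C_l(y,y)=lD_{n,l}(y)$ is built from $S^{[l]}$ updates with bound $S^{[l]}(y)$.
\item The map $L(y,y)$ returns the last positive point of the $\rho$-descent from $y$.
\item A finite table on $\{1,\ldots,\varphi_{n,l}(0)\}$, supplied by Lemma~\ref{lem:finite-controls}, records the residual depth.
\end{enumerate}
To land directly on $\rho_{n+1,1}$ and avoid needing predecessor at $m=0$, I would use the table
\[
 \theta'(w)=D_{n,1}(w)\dotminus1 .
\]
This is nonnegative on that set. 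Then I would put
\[
 \rho^\star(y)=\operatorname{Orig}\bigl(A_l(C_l(\rho(y),\rho(y)),\theta'(L(y,y))),y\bigr).
\]
For $y>0$ with $d=D_{n,1}(y)$ and $q=\lceil d/l\rceil$, this evaluates to $l(q-1)+(d-l(q-1))-1=d-1=\rho_{n+1,1}(y)$, and it is $0$ at $y=0$.

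The main obstacle I expect is justifying that the earlier proof goes through verbatim without the hypothesis $\sigma\in\Hclass{m}{n}{l}$. I would address this by checking each recursion in it, namely $I_\tau$, $C_c$, $K$, $L$, the tables and $\operatorname{Orig}$, and noting that each invokes only $\tau=\rho$. All bounds are shifted projections, so the construction is valid at $m=0$. The table values are fixed numerals, which is legitimate since $n$ and $l$ are fixed parameters.
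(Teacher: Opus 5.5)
Your proposal is correct and takes essentially the paper's route: the identity follows from the threshold law together with $e_{s+1}(k)=e_s^{[k]}(2)=e_s^{[k+1]}(0)$, and membership writes $\rho_{n+1,1}(y)=D_{n,1}(y)-1$ as $l\,D_{n,l}(\rho_{n,l}(y))$ plus a residual, read from a fixed table of $D_{n,1}$ at the last positive point of the $\rho_{n,l}$-descent, using only recursions along the defining descent. The paper builds this directly, with an iterator $I(y,\rho(y))$ for the last positive node, a counter $M_l$ for $l D_{n,l}$, and the table $A_l(u,s)=S^{[s-1]}(u)$ that absorbs both the $-1$ and the origin; you instead re-run Lemma~\ref{lem:fixed-power-recovery}, and your observation that its construction never invokes $\sigma$ itself is accurate, so the difference is cosmetic.
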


\begin{proof}
We first prove the algebraic assertion with the variable $s$ only.  The
factorization is
\[
 \rho_{s+1,1}=D_{s,1}\circ\rho_{s,1}.
\]
Put $F=g_s=e_s$ and $z=\rho_{s,1}(y)$.  For every $k$, the threshold
law gives, using $F(0)=2$ and the generator identity
$e_{s+1}(k)=e_s^{[k]}(2)$ of Lemma~\ref{lem:generator-identities},
\[
 \begin{aligned}
 D_{s,1}(z)\le k
 &\Longleftrightarrow z\le F^{[k]}(0)\\
 &\Longleftrightarrow y\le F^{[k+1]}(0)
 =F^{[k]}(2)=e_{s+1}(k)=\varphi_{s+1,1}(k).
 \end{aligned}
\]
The least such $k$ is exactly $\rho_{s+1,1}(y)$.

For the membership assertion, now fix $n,m,l$ as in the statement and write
$\delta=\rho_{n,1}$, $\rho=\rho_{n,l}=\delta^{[l]}$, and
$D=D_{n,l}$.  For $y>0$, let $d=D_{n,1}(y)$ and
$q=D(y)=\lceil d/l\rceil$.  The last positive node of the descent sequence generated by $\rho$,
\[
 W(y)=\rho^{[q-1]}(y),
\]
lies in the fixed interval
$1\le W(y)\le\varphi_{n,l}(0)$.  The iterator
\[
 I(y,0)=y,\qquad I(y,z)=\rho(I(y,\rho(z)))\quad(z>0)
\]
is bounded by $y$ and satisfies $I(y,z)=\rho^{[D(z)]}(y)$; hence
$W(y)=I(y,\rho(y))$ for $y>0$.

Lemma~\ref{lem:finite-controls} supplies a table $R_l$ satisfying
\[
 R_l(w)=D_{n,1}(w)
 \qquad(0\le w\le\varphi_{n,l}(0)).
\]
Thus
\[
 r(y)=R_l(W(y))=d-l(q-1)\in\{1,\ldots,l\}.
\]

The recursion
\[
 M_l(0)=0,\qquad M_l(y)=S^{[l]}(M_l(\rho(y)))\quad(y>0)
\]
computes $lD(y)$.  It has a shifted-projection bound: every step function
step raises its input by at least $2l$.  If $D(y)=q>0$, then
\[
 y>\varphi_{n,l}^{[q-1]}(0)\ge2l(q-1),
\]
so $lD(y)\le y+l$.  Thus $S^{[l]}(y)$ is an explicit admissible bound.
Define the fixed table
\[
 A_l(u,0)=0,\qquad
 A_l(u,s)=S^{[s-1]}(u)\quad(1\le s\le l),
\]
with bounded default $0$ outside $\{0,\ldots,l\}$.  It is assembled from
Lemma~\ref{lem:finite-controls} and is bounded by $S^{[l]}(u)$.  For $y>0$
it gives
\[
 A_l(M_l(\rho(y)),r(y))
 =l(q-1)+(r(y)-1),
\]
and at $y=0$ it gives $0$.  This reconstructs
\[
 \rho_{n+1,1}(y)
 =D_{n,1}(y)-1
 =l(q-1)+(r(y)-1)
\]
for $y>0$, with the already specified value $0$ at the origin.  Hence the
adjacent higher inverse belongs to the lower-index step-recursion class.
\end{proof}

\begin{lemma}[fixed-iterate domination across row indices]
\label{lem:fixed-iterate-domination}
If $s>n\ge2$ and $l\ge1$, a fixed $c\ge1$ satisfies
\[
 g_s^{[c]}(x)\ge\varphi_{n,l}(x)
 \qquad(x\in\Nat).
\]
Consequently
\[
 \rho_{s,1}^{[c]}\le\rho_{n,l}.
\]
\end{lemma}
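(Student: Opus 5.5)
We prove the numerical domination directly and then transfer it to the generalized inverses by a monotonicity argument.

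\emph{Step 1: reduce to a single row step.} Since $s\ge n+1$, Lemma~\ref{lem:generator-identities} gives $g_s=e_s\ge e_{n+1}$ pointwise, and $e_{n+1}$ is strictly increasing. It therefore suffices to find $c$ with
\[
 e_{n+1}^{[c]}(x)\ge e_n^{[l]}(x)\qquad(x\in\Nat),
\]
because iterates of pointwise larger increasing maps are pointwise larger (the same induction as in Lemma~\ref{lem:descent-comparison}).

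\emph{Step 2: the core estimate.} Write $F=e_n$, so that $e_{n+1}(x)=F^{[x]}(2)$. Since $F(x)\ge x+2$, we have $F^{[x]}(2)\ge F^{[x]}(x)$ whenever $x\le 2$. In general, we first use a fixed number of $e_{n+1}$-iterates to push the argument above both $l$ and $2$:
\[
 e_{n+1}(x)\ge 2x+2\ge x+2,
\]
so after $c_0=\lceil l/2\rceil+1$ iterates we have $u:=e_{n+1}^{[c_0]}(x)\ge x+l+2$. One more application then gives
\[
 e_{n+1}(u)=F^{[u]}(2)\ge F^{[l]}\bigl(F^{[u-l]}(2)\bigr)\ge F^{[l]}(x),
\]
because $F^{[u-l]}(2)\ge 2+2(u-l)\ge x$ by $F(y)\ge y+2$ and monotonicity. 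Hence $c=c_0+1$ works. The only care needed is the bookkeeping of shifts; the absorption idea of Lemma~\ref{lem:envelope-absorption} is exactly this.

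\emph{Step 3: inverse comparison.} Put $\psi=g_s^{[c]}$ and $\varphi=\varphi_{n,l}$, so that $\psi\ge\varphi$ and both are strictly increasing. By Lemma~\ref{lem:step-function-arithmetic}, the generalized inverse of $\psi$ is $\rho_{s,1}^{[c]}$. For $y>0$ we have
\[
 \rho_\psi(y)=\min\{z:\psi(z)\ge y\}\le\min\{z:\varphi(z)\ge y\}=\rho_{n,l}(y),
\]
since every witness for $\varphi$ is also a witness for $\psi$. At $y=0$ both inverses equal $0$. This gives $\rho_{s,1}^{[c]}\le\rho_{n,l}$.

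The main (mild) obstacle is Step 2, namely making sure that small arguments such as $x=0$ are not lost. Routing everything through the inequality $F^{[k]}(2)\ge 2k+2$ handles these cases uniformly.
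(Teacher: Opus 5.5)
Your proof is correct and follows the paper's route. Both arguments reduce to $e_{n+1}(x)=F^{[x]}(2)$ with $F=e_n$, split off $l$ outer iterates, use $F^{[k]}(2)\ge 2k+2$ to dominate the inner argument, and then compare generalized inverses by inclusion of witness sets. The only difference is how small arguments are handled: you pre-iterate $e_{n+1}$ a fixed number of times, so the estimate holds for every $x$ with the explicit constant $c=\lceil l/2\rceil+2$. The paper instead proves the inequality for large $x$ with a single application and patches the finitely many remaining arguments by taking the maximum of the required exponents.
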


\begin{proof}
Put $F=g_n=e_n$.  Since $s>n$, the generator comparison and the generator
identity of Lemma~\ref{lem:generator-identities} give
\[
 g_s(x)=e_s(x)\ge e_{n+1}(x)=F^{[x]}(2).
\]
Since $F(u)\ge u+2$, one has
\[
 F^{[x-l]}(2)\ge2+2(x-l)>x
\]
for every sufficiently large $x$.  Hence
$F^{[x]}(2)>F^{[l]}(x)=\varphi_{n,l}(x)$ eventually.  Only finitely many
arguments remain.  For each such fixed argument, the iterates of $g_s$ are
unbounded; choosing one exponent $c$ as the maximum of the finitely many
required exponents gives
\[
 g_s^{[c]}(x)\ge\varphi_{n,l}(x)
 \qquad(x\in\Nat).
\]
Reversing this inequality for generalized inverses gives
\[
 \rho_{s,1}^{[c]}\le\rho_{n,l}.
\]
\end{proof}

\begin{lemma}[higher-index descents and depths are internal to lower rows]
\label{lem:higher-step-function-internal}
If $2\le n<n'$ and $l,l'\ge1$, then
\[
 \rho_{n',l'},D_{n',l'}\in\Hclass{m}{n}{l}
\]
for every $m$.
\end{lemma}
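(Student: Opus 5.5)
The plan is to climb the row index one step at a time and then recover depths with Lemma~\ref{lem:fixed-power-recovery}.

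\textbf{Step 1: $\rho_{s,1}$ for every $s>n$.} Induct on $s$. The base case $s=n+1$ is the membership assertion of Lemma~\ref{lem:adjacent-factorization}, which gives $\rho_{n+1,1}\in\Hclass{m}{n}{l}$. For the inductive step, assume $\rho_{s,1}\in\Hclass{m}{n}{l}$ for some $s>n$. The factorization $\rho_{s+1,1}=D_{s,1}\circ\rho_{s,1}$ of Lemma~\ref{lem:adjacent-factorization} reduces the step to showing $D_{s,1}\in\Hclass{m}{n}{l}$.
\begin{itemize}
\item Take $\sigma=\rho_{s,1}$, the generalized inverse of the fixed strictly increasing step $\psi=g_s$.
\item Lemma~\ref{lem:fixed-iterate-domination} supplies a fixed $c$ with $\rho_{s,1}^{[c]}\le\rho_{n,l}$, where $\rho_{n,l}$ is the defining descent of the class.
\item Lemma~\ref{lem:fixed-power-recovery} then yields $D_{s,1}\in\Hclass{m}{n}{l}$.
\end{itemize}
Composing gives $\rho_{s+1,1}$. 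So the induction produces both $\rho_{s,1}$ and $D_{s,1}$ for all $s>n$.

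\textbf{Step 2: passing to stride $l'$.} By Lemma~\ref{lem:step-function-arithmetic}, $\rho_{n',l'}=\rho_{n',1}^{[l']}$, which is a fixed composition of a member of the class. For the depth, apply Lemma~\ref{lem:fixed-power-recovery} once more with $\sigma=\rho_{n',l'}$, the inverse of the fixed step $\psi=\varphi_{n',l'}$. The domination hypothesis holds because $\rho_{n',l'}^{[c]}=\rho_{n',1}^{[cl']}\le\rho_{n',1}^{[c]}\le\rho_{n,l}$; the first inequality uses that iterates of a descent function are pointwise nonincreasing in the exponent. This gives $D_{n',l'}$.

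An alternative route is $D_{n',l'}(y)=\lceil D_{n',1}(y)/l'\rceil$. Using that route would need an internal ceiling-division, which at $m=0$ is awkward, so I prefer the recovery lemma.

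\textbf{Main obstacle.} The delicate step is checking that Lemma~\ref{lem:fixed-power-recovery} applies exactly as stated at every basis $m$, including $m=0$. Its hypotheses are that $\sigma$ is already internal and that $\sigma^{[c]}\le\rho$, where $\rho$ is the class's own defining descent and not some intermediate one. Both hold by the ordering above: $\rho_{s,1}$ is obtained before $D_{s,1}$ is needed. This is why the induction must alternate between factorization and recovery rather than treating the rows in one pass.

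No growth or addition is used anywhere, so the argument is uniform in $m$.
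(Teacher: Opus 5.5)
Your proposal is correct and follows the paper's proof essentially step for step. The paper also alternates Lemma~\ref{lem:fixed-iterate-domination} with Lemma~\ref{lem:fixed-power-recovery} to obtain $D_{s,1}$, climbs to $\rho_{n',1}$ via the factorization of Lemma~\ref{lem:adjacent-factorization}, and then recovers $D_{n',l'}$ by one more application of the recovery lemma using $\rho_{n',l'}^{[c]}\le\rho_{n',1}^{[c]}\le\rho_{n,l}$. The only cosmetic difference is that you justify the first of these inequalities by monotonicity of iterates in the exponent, whereas the paper cites $\rho_{n',l'}\le\rho_{n',1}$ directly.
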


\begin{proof}
Write $\rho=\rho_{n,l}$.  Lemma~\ref{lem:adjacent-factorization} first
constructs $\rho_{n+1,1}$.  Suppose $s>n$ and $\rho_{s,1}$ has already been
constructed.  Lemma~\ref{lem:fixed-iterate-domination} supplies $c$ with
$\rho_{s,1}^{[c]}\le\rho$.  Lemma~\ref{lem:fixed-power-recovery} constructs
$D_{s,1}$, and Lemma~\ref{lem:adjacent-factorization} then constructs
$\rho_{s+1,1}$.  Repeating these two steps constructs $\rho_{n',1}$ in the
required stage order.

The generalized-inverse identities give
\[
 \rho_{n',l'}=\rho_{n',1}^{[l']},
\]
so the blocked inverse is internal.  Apply
Lemma~\ref{lem:fixed-iterate-domination} at $s=n'$ and enlarge its exponent
if necessary.  Since $\rho_{n',l'}\le\rho_{n',1}$,
\[
 \rho_{n',l'}^{[c]}
 \le \rho_{n',1}^{[c]}
 \le\rho.
\]
Applying Lemma~\ref{lem:fixed-power-recovery} to
$\rho_{n',l'}$ constructs $D_{n',l'}$.
\end{proof}

\begin{lemma}[horizontal collapse]
\label{lem:horizontal-collapse}
If $n\ge1$ and $m\ge n$, then for all $l,l'\ge1$,
\[
 \Hclass{m}{n}{l}=\Hclass{m}{n}{l'}.
\]
\end{lemma}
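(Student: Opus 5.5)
The plan is to prove $\Hclass{m}{n}{l}=\Hclass{m}{n}{1}$ for every $l\ge1$, and to obtain each inclusion by closure under the other class's recursion scheme. Throughout write $\sigma=\rho_{n,1}$ and $\rho=\rho_{n,l}=\sigma^{[l]}$ (Lemma~\ref{lem:step-function-arithmetic}).

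\emph{Easy direction $\Hclass{m}{n}{l}\subseteq\Hclass{m}{n}{1}$.} In $\mathcal A=\Hclass{m}{n}{1}$, Lemma~\ref{lem:internal-controls} makes $\sigma$ internal, and hence so is the fixed iterate $\rho=\sigma^{[l]}$. Because $D_{n,1}(y)\ge\lceil D_{n,1}(y)/l\rceil=D_{n,l}(y)$, the identity $r(y)=y$ satisfies the hypothesis of Lemma~\ref{lem:depth-transfer} with source descent $\rho$. That lemma then gives closure of $\mathcal A$ under $\mathrm{SR}_\rho$. Construction induction over $\Hclass{m}{n}{l}$ finishes this direction. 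It holds for every $m$.

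\emph{Key step: making $\sigma$ internal to $\Hclass{m}{n}{l}$ when $m\ge n$.} This is the only place where $m\ge n$ should be needed. I claim
\[
 \rho_{n,1}=\rho_{n,l}\circ g_n^{[l-1]} .
\]
To see this, note that $g_n$ is strictly increasing. Hence for $y>0$,
\[
 g_n^{[l]}(z)\ge g_n^{[l-1]}(y)\iff g_n(z)\ge y ,
\]
so the two least witnesses coincide; at $y=0$ both sides are $0$. When $m\ge n$, the map $g_n$ is a composition of basis functions: for $n\ge2$ it is $e_n\in B_m$, and for $n=1$ it is $S(e_1(x,x))$. Since $\rho_{n,l}$ is internal by Lemma~\ref{lem:internal-controls}, it follows that $\sigma$ and all its fixed iterates $\sigma^{[j]}$ lie in $\Hclass{m}{n}{l}$. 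Since $m\ge1$, addition is also available, and so is the bounded selector $\operatorname{Sel}^{+}$ from the proof of Theorem~\ref{thm:saturation}.

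\emph{Blocked simulation for $\Hclass{m}{n}{1}\subseteq\Hclass{m}{n}{l}$.} Exact-depth transfer does not seem usable here. The reason is that a fixed composition term only adds a constant to the $\rho$-depth, whereas we would need to multiply it by $l$. I would therefore imitate the proof of Proposition~\ref{prop:row-zero} instead. Take an admissible $\mathrm{SR}_\sigma$ recursion with base $g$, transition $h$ and bound $b$, and let $B\ge b$ be a monotone composition majorant (Lemma~\ref{lem:majorants}). Define $H(\bar x,y,v)$ as the fixed $l$-fold composition
\[
 v_0=v,\qquad v_{j+1}=h\bigl(\bar x,\sigma^{[l-j]}(y),v_j\bigr)\qquad(0\le j<l),\qquad H=v_l .
\]
Then set $F(\bar x,0)=g(\bar x)$ and
\[
 F(\bar x,y)=H\bigl(\bar x,y,F(\bar x,\rho(y))\bigr)\qquad(y>0),
\]
which is a step recursion along $\rho$.

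The block may overshoot the origin. If $\sigma^{[j]}(y)=0$ with $j<l$, the true chain starts from $g(\bar x)$ at address $\sigma^{[j]}(y)$ rather than from the recursive value. To handle this, the update at stage $j$ is guarded by $\operatorname{Sel}^{+}$. The control tests $\NZ(\sigma^{[l-j]}(y))$ and $\NZ(\sigma^{[l-j-1]}(y))$ and resets the state to $g(\bar x)$ at the first stage whose address is $0$. All updates before that stage are discarded.

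Induction on $D_{n,l}(y)$ then gives $F=f$. Every intermediate value is some $f(\bar x,w)$ with $w\le y$, or $g(\bar x)$, so every selector and the recursion itself are bounded by $B(\bar x,y)+g(\bar x)$, which is an earlier bound. Construction induction concludes the proof.

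I expect the main obstacle to be the boundary bookkeeping in this block: the selector guards must be arranged so that every $\operatorname{Sel}^{+}$ bound is an earlier function and the reset fires exactly at the first zero address. The identity $\sigma=\rho\circ g_n^{[l-1]}$ is the conceptual core and is short once observed.
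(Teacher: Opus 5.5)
Your easy direction (exact-depth transfer with $r(y)=y$) and the recovery identity $\rho_{n,1}=\rho_{n,l}\circ g_n^{[l-1]}$ are correct; they are what the paper uses. The gap is in the blocked simulation.

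The recursion $F(\bar x,y)=H(\bar x,y,F(\bar x,\rho(y)))$ is not an instance of $\mathrm{SR}_\rho$. By Definition~\ref{def:descent-closure}, the transition receives only $(\bar x,\rho(y),F(\bar x,\rho(y)))$, never the current address $y$. The intermediate addresses $\sigma^{[j]}(y)$, $1\le j<l$, that your $H$ needs are not functions of $\rho(y)$.

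This is not a formality that bookkeeping can repair. Take $n=1$ and $l=2$. Then $\rho(4)=\rho(6)=1$ but $\sigma(4)=2\ne3=\sigma(6)$. Every $\rho$-recursion on this argument therefore satisfies $F(\bar x,4)=F(\bar x,6)$. On the other hand, the admissible $\sigma$-recursion with base $0$, transition $h(\bar x,a,v)=a$ and bound $y$ gives $f(\bar x,4)=2$ and $f(\bar x,6)=3$.

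The analogy with Proposition~\ref{prop:row-zero} also breaks down at exactly this point. There the block $H$ is evaluated at $\rho(y)$ alone, via the round-up $U$ with $U(\rho(y))+l=U(y)$. It computes $f$ at $U(y)$ and then repairs the result with a residue table. Your $H$ has no such property.

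The missing idea, which is the paper's construction, is to keep $y$ as a parameter and recurse along $\rho$ on a fresh variable $w$. The current coarse block on the $\rho$-descent from $y$ is located with the retraction maps of Lemma~\ref{lem:depth-retraction}, applied to the defining descent $\rho$ of $\Hclass{m}{n}{l}$.

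\begin{itemize}
\item Put $z=\rho(w)$ and suppose $D_\rho(z)<D_\rho(y)$. Then $c=\operatorname{Up}_y(z)$ lies on the descent from $y$, has depth $D_\rho(z)+1$, satisfies $\rho(c)=\operatorname{Ret}_y(z)$, and equals $\operatorname{Ret}_y(w)$.
\item Apply your block at this $c$. Its fine addresses $\sigma^{[j]}(c)$ are now legitimately computable from $c$, because $c$ is a function of the parameter $y$ and the lower address $z$.
\item The control $\operatorname{Rev}_y(\operatorname{Ret}_y(z))$ vanishes exactly when $D_\rho(z)\ge D_\rho(y)$. Feeding it to $\operatorname{Sel}^{+}$ freezes the state once the top is reached.
\end{itemize}

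This gives the invariant $V(\bar x,y,w)=f(\bar x,\operatorname{Ret}_y(w))$, hence $V(\bar x,y,y)=f(\bar x,y)$. Since every address involved lies on the descent from $y$, $B(\bar x,y)$ is a global bound.

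Inside the block, the simplest boundary guard is to skip any update whose produced address $\sigma^{[j]}(c)$ is $0$. The state there is already $f(\bar x,0)$, so no reset to $g(\bar x)$ is needed.
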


\begin{proof}
It suffices first to compare $l$ with $kl$.  Put
$G=g_n$, $\rho=\rho_{n,l}$ and $\sigma=\rho_{n,kl}=\rho^{[k]}$.
Exact-depth transfer with the identity padding gives
$\Hclass{m}{n}{kl}\subseteq\Hclass{m}{n}{l}$.

For the converse work in $\mathcal A=\Hclass{m}{n}{kl}$.  Since $m\ge n$,
$G$ (and addition) is available, and exact cancellation gives the internal
source descent
\begin{equation}
\label{eq:horizontal-rho-recovery}
 \rho(y)=\sigma(G^{[l(k-1)]}(y)).
\end{equation}
Addition supplies the bounded selector
$\operatorname{Sel}^{+}(u,v,0)=u$, $\operatorname{Sel}^{+}(u,v,c)=v$ for
$c>0$, with bound $u+v$.

Consider an earlier $\rho$-recursion $f$ with base $g$, transition $h$ and
bound $b$, and choose an earlier coordinatewise nondecreasing composition
majorant $B\ge b$.  For a current source node $c$ and
$v=f(\bar x,\sigma(c))$, define $v_k=v$ and, for $j=k-1,\ldots,0$,
\begin{equation}
\label{eq:horizontal-block}
 v_j=\operatorname{Sel}^{+}\!\left(
 v_{j+1},h(\bar x,\rho^{[j+1]}(c),v_{j+1}),\rho^{[j]}(c)\right).
\end{equation}
Let $\operatorname{Block}(\bar x,c,v)=v_0$.  Downward induction gives
\begin{equation}
\label{eq:horizontal-block-correct}
 \operatorname{Block}(\bar x,c,f(\bar x,\sigma(c)))=f(\bar x,c),
\end{equation}
including the final incomplete block because zero controls skip the nominal
updates.

To make this block work for arbitrary target inputs, use the retraction maps
of Lemma~\ref{lem:depth-retraction} for $\sigma$.  Put
\[
 \operatorname{Cap}^{\sigma}_a(z)=
 \operatorname{Rev}^{\sigma}_a(\operatorname{Ret}^{\sigma}_a(z)).
\]
If $A=D_\sigma(a)$ and $Z=D_\sigma(z)$, then
$\operatorname{Cap}^{\sigma}_a(z)=0$ exactly when $Z\ge A$; when $Z<A$,
\[
 \sigma(\operatorname{Up}^{\sigma}_a(z))=
 \operatorname{Ret}^{\sigma}_a(z),
\]
and for $w>0$, $z=\sigma(w)$,
$\operatorname{Up}^{\sigma}_a(z)=\operatorname{Ret}^{\sigma}_a(w)$.

Now define a single $\sigma$-recursion, retaining the original source address
$y$ as a parameter:
\[
 V(\bar x,y,0)=g(\bar x),
\]
\begin{equation}
\label{eq:horizontal-coarse-recursion}
 V(\bar x,y,w)=\operatorname{Sel}^{+}\!\left(
 V(\bar x,y,z),
 \operatorname{Block}(\bar x,\operatorname{Up}^{\sigma}_y(z),V(\bar x,y,z)),
 \operatorname{Cap}^{\sigma}_y(z)\right),
 \qquad z=\sigma(w).
\end{equation}
Induction on $D_\sigma(w)$ gives the all-input invariant
\begin{equation}
\label{eq:horizontal-total-invariant}
 V(\bar x,y,w)=f(\bar x,\operatorname{Ret}^{\sigma}_y(w)).
\end{equation}
Indeed, put $z=\sigma(w)$, $A=D_\sigma(y)$, and $Z=D_\sigma(z)$.  If
$Z\ge A$, then $\operatorname{Cap}^{\sigma}_y(z)=0$, the selector retains
the previous value, and both relevant retractions equal the retained point on
the descent from $y$.  If $Z<A$, the control is nonzero;
$\sigma(\operatorname{Up}^{\sigma}_y(z))=\operatorname{Ret}^{\sigma}_y(z)$
and $\operatorname{Up}^{\sigma}_y(z)=\operatorname{Ret}^{\sigma}_y(w)$, so
the induction hypothesis and \eqref{eq:horizontal-block-correct} give the
required current value.  Hence $V(\bar x,y,y)=f(\bar x,y)$.  Every retracted
or intermediate block address lies on the descent from $y$, so it is at most
$y$; source admissibility and monotonicity of $B$ give the global target bound
$B(\bar x,y)$.  Thus $\Hclass{m}{n}{l}\subseteq\Hclass{m}{n}{kl}$.

For arbitrary $l,l'$, apply the result to the common multiple $L=ll'$:
\[
 \Hclass{m}{n}{l}=\Hclass{m}{n}{L}=\Hclass{m}{n}{l'}.
\]
\end{proof}

Whenever Lemma~\ref{lem:horizontal-collapse} applies, write
$\Hclass{m}{n}{*}$ for the common class $\Hclass{m}{n}{l}$, $l\ge1$.

\begin{lemma}[strictness of the initial-basis level]
\label{lem:base-separation}
If $a>b$, then for all step-function parameters,
\[
 \Hclass{a}{n}{l}\nsubseteq\Hclass{b}{n'}{l'}.
\]
\end{lemma}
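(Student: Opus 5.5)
The plan is to find a witness in $B_a$ itself, namely the top generator $e_a$ (for $a\ge1$), and show it lies outside every class with basis $b<a$. This is a growth-rate argument based on the composition-majorant lemma, Lemma~\ref{lem:majorants}, which is stated for an arbitrary descent $\rho$. Since $a>b\ge0$, we have $a\ge1$ and $e_a\in B_a\subseteq\Hclass{a}{n}{l}$. Suppose $e_a\in\Hclass{b}{n'}{l'}=A_{\rho_{n',l'}}(B_b)$. Then Lemma~\ref{lem:majorants} gives a composition majorant $T\in\operatorname{Cl}_\circ(B_b)$ with $e_a\le T$ pointwise. It then suffices to show that no composition term over $B_b$ majorizes $e_a$. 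The row and stride parameters $n',l'$ drop out entirely.

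The argument splits into three cases according to $b$, and uses the explicit shape of composition terms from Lemma~\ref{lem:majorants}.
\begin{enumerate}
\item If $b=0$, then $T$ has the form $\max$-free $x_i+c$ or a constant. But $e_1(x,x)=2x$ already escapes every bound $x+c$, which handles $a=1$. For $a\ge2$, Lemma~\ref{lem:generator-identities} gives $e_a(x)\ge e_2(x)=x^2+2$, which also escapes.
\item If $b=1$, then $T$ is affine with nonnegative integer coefficients, while $e_a(x)\ge x^2+2$ for $a\ge2$. This is superlinear, a contradiction.
\item If $b\ge2$, Lemma~\ref{lem:majorants} gives $T(x)\le e_b^{[c]}(x+c)$ for a fixed $c$. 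It remains to prove the classical fact that $e_{b+1}$ eventually dominates every fixed iterate of $e_b$ after a fixed shift. Since $e_a\ge e_{b+1}$, this gives the contradiction.
\end{enumerate}

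The main step is the domination claim in the last case, and I would prove it directly from the identity $e_{b+1}(x)=e_b^{[x]}(2)$. Using Lemma~\ref{lem:envelope-absorption} with $G=e_b$, I first absorb the shift: $e_b^{[c]}(x+c)\le e_b^{[c']}(x)$ for a fixed $c'$. It then suffices to show $e_b^{[x]}(2)>e_b^{[c']}(x)$ for large $x$. Write $e_b^{[x]}(2)=e_b^{[c']}\bigl(e_b^{[x-c']}(2)\bigr)$. Since $e_b(u)\ge u+2$, we have $e_b^{[x-c']}(2)\ge2+2(x-c')>x$ once $x>2c'$. Strict monotonicity of $e_b$ (Lemma~\ref{lem:generator-identities}) then gives strict inequality, so $e_{b+1}(x)>T(x)$ for all large $x$, which contradicts $e_a\le T$.

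For the binary generator $e_1$ in the case $a=1$, $b=0$, the majorant is binary and has the form $x_i+c$ or a constant. Evaluating at the diagonal $(x,x)$ gives $2x\le x+c$, which fails for large $x$. This completes the case analysis.
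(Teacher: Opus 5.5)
Your proof is correct and follows essentially the paper's route. Both arguments use Lemma~\ref{lem:majorants} to reduce the claim to showing that some generator of $B_a$ has no composition majorant over $B_b$, with the same three-way split on $b$ and the same use of $e_{b+1}(x)=e_b^{[x]}(2)$ when $b\ge2$. The only difference is cosmetic: you take $e_a$ as the witness and pass to $e_{b+1}$ or $e_2$ via Lemma~\ref{lem:generator-identities}, whereas the paper uses $e_{b+1}$ directly.
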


\begin{proof}
The source initial basis contains $e_{b+1}$.  By Lemma~\ref{lem:majorants}, every
target function has a pointwise majorant in
$\operatorname{Cl}_\circ(B_b)$.  We show that no such majorant exists for
$e_{b+1}$.

If $b=0$, the witness $e_1(x,y)=x+y$ cannot be majorized by a constant or
shifted projection $x_i+c$: put $x=y>N>c$.  If $b=1$, the quadratic
$e_2(x)=x^2+2$ eventually exceeds every affine unary term.  If $b\ge2$, put
$F=e_b$.  Lemma~\ref{lem:majorants} bounds every unary composition term
by $F^{[c]}(x+c)$ for a fixed $c$.  But
\[
 e_{b+1}(x)=F^{[x]}(2).
\]
Since $F(u)\ge2u$, choose $x>c$ with
$F^{[x-c]}(2)>x+c$.  Strict monotonicity gives
\[
 e_{b+1}(x)=F^{[c]}(F^{[x-c]}(2))>F^{[c]}(x+c).
\]
Thus $e_{b+1}$ is in the source but not the target.
\end{proof}

\subsection*{Canonical inputs and zones}
Growth estimates alone cannot distinguish equal-row strides because their
descent depths have the same asymptotic scale.  The idea is instead to follow
one dependency from the output through a concrete evaluation.  On canonical
points $Y_t$, a $\rho_{n,q}$-step moves from layer $t$ to layer $t-q$; we
enlarge each layer to a polynomial-width zone so that lower-basis operations
may perturb values without changing the layer they represent.  Along a
selected dependency chain, basis operations only widen the current zone,
whereas each recursion-address step decreases its index by exactly $q$.
Since distinct high zones are separated, reaching $Y_{t-d}$ from $Y_t$ forces
$d$ to be a multiple of $q$.

For example, when $n=2$ and $Y_t=e_2^{[t]}(0)$,
\[
 \rho_{2,l}(Y_t)=Y_{t-l}\qquad(t\ge l).
\]
A stride-$3$ derivation can therefore accumulate only multiples of $3$ in
canonical-layer displacement, whereas $\rho_{2,2}$ has persistent displacement
$2$.  The purpose of the zones below is to make this elementary picture stable
under arbitrary composition and nested step recursion.

Fix $(m,n)$ in the trace band with $m<n$, and fix $q\ge1$.  Put
\[
 G=g_n=e_n,\qquad Y_t=G^{[t]}(0),
\]
and set
\[
 F(x)=2x+2\quad(n=2),\qquad F=e_{n-1}\quad(n\ge3).
\]
Throughout this subsection, a polynomial means a polynomial with
nonnegative integer coefficients, interpreted as a map $\Nat\to\Nat$.
First, for integers $s,p\in\Nat$, define the width-$p$ bounds
\[
 U_s[p]=F^{[p]}(Y_s+p),
 \qquad
 L_s[p]=\min\{u\in\Nat:F^{[p]}(u+p)\ge Y_s\},
\]
and the width-$p$ zone
\[
 \mathcal Z_s[p]
 =
 \{u\in\Nat:L_s[p]\le u\le U_s[p]\}.
\]
For a polynomial $P$ and $s,t\in\Nat$, abbreviate
\[
 U_{s,P}(t):=U_s[P(t)],\qquad
 L_{s,P}(t):=L_s[P(t)],\qquad
 \mathcal Z_{s,P}(t):=\mathcal Z_s[P(t)].
\]
Equivalently,
\[
 u\in\mathcal Z_s[p]
 \quad\Longleftrightarrow\quad
 u\le U_s[p]
 \ \text{and}\ 
 Y_s\le F^{[p]}(u+p).
\]
In particular, $Y_s\in\mathcal Z_{s,0}(t)=\mathcal Z_s[0]$.

\begin{lemma}[quantitative canonical gap]
\label{lem:canonical-gap}
For every polynomial $P$ and every fixed integer offset $a$, whenever the displayed indices are nonnegative,
\begin{equation}
\label{eq:canonical-gap}
 F^{[P(t)]}(Y_{t-a-1}+P(t))<Y_{t-a}
\end{equation}
for all sufficiently large $t$.  The conclusion remains valid after
replacing $P$ by any fixed sum, positive integer multiple, composition with a
fixed power, or larger polynomial.
\end{lemma}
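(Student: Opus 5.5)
The plan is to reduce \eqref{eq:canonical-gap} to a comparison between a polynomially long iterate of $F$ and one application of $G$, evaluated at the single large number $x:=Y_{t-a-1}$. Since $Y_{t-a}=G(x)$, we must show
\[
 F^{[p]}(x+p)<G(x),\qquad p:=P(t),
\]
for all large $t$. The key point is that $p$ is negligible compared with $x$. Indeed, $G\ge e_2$ by Lemma~\ref{lem:generator-identities}, so $Y_s\ge e_2^{[s]}(0)\ge2^{2^{s-1}}$ for $s\ge1$, as already used in Lemma~\ref{lem:depth-overhead}. With $a$ fixed, $x$ is therefore at least doubly exponential in $t$, while $p$ is polynomial in $t$. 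In particular, for large $t$ we have both $p\le\tfrac12\log_2 x$ and $x>3p+2$, and we will use each of these in one of the two cases below.

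\emph{Case $n=2$.} Here $F(u)=2u+2$, so induction on the number of iterates gives $F^{[p]}(u)\le2^p(u+2)$. Hence
\[
 F^{[p]}(x+p)\le2^p(x+p+2)\le\sqrt{x}\,(x+p+2)<x^2+2=G(x)
\]
once $t$ is large. The middle inequality uses $2^p\le\sqrt x$, and the last one uses $p+2\le x$.

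\emph{Case $n\ge3$.} Here $F=e_{n-1}$, and Lemma~\ref{lem:generator-identities} gives $G(x)=e_n(x)=F^{[x]}(2)$, with $F$ strictly increasing and $F(u)\ge u+2$. Put $k=\lceil (x+p)/2\rceil$. Then $F^{[k]}(2)\ge2k+2>x+p$, so strict monotonicity gives
\[
 G(x)=F^{[x-k]}\bigl(F^{[k]}(2)\bigr)>F^{[x-k]}(x+p)\ge F^{[p]}(x+p),
\]
provided $x-k\ge p$. The final step uses only $F(u)\ge u$. The condition $x-k\ge p$ holds as soon as $x>3p+2$, which is true for large $t$.

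The robustness clause needs no separate argument. Any fixed sum, positive integer multiple, or composition with a fixed power of polynomials is again a polynomial with nonnegative coefficients, and a larger polynomial is in particular a polynomial. The proof above used only that $P(t)$ grows polynomially in $t$, so it applies verbatim to each of these, at the price of a larger threshold for $t$.

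The only delicate step is the quantitative comparison between $p$ and $x$, namely $p=O(\mathrm{poly}(\log\log x))$. For this I will use the explicit doubly exponential lower bound on $Y_s$ together with the fact that $a$ is fixed, so that $t\le\log_2\log_2 x+a+2$ for large $t$. The rest is routine monotonicity.
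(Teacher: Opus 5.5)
Your proof is correct and follows the paper's route. Both split into $n=2$ and $n\ge3$, use the doubly exponential bound $Y_s\ge 2^{2^{s-1}}$ to make $P(t)$ negligible, and use $G(x)=F^{[x]}(2)$ in the higher case. The only cosmetic difference is in the $n\ge3$ case: you split the $x$ iterations at $\lceil (x+p)/2\rceil$ using $F(u)\ge u+2$, whereas the paper writes $Y_s=F^{[Y_{s-1}]}(2)$, absorbs the shift with one extra iterate, and then uses $Y_{s-1}+P(t)+1<Y_s$.
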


\begin{proof}
Put $s=t-a-1$.  Since $G\ge e_2$,
\[
 Y_j=G^{[j]}(0)\ge e_2^{[j]}(0)\ge2^{2^{j-1}}
 \qquad(j\ge1).
\]
Consequently $Y_{s-1}>P(t)+2$ and $Y_s>P(t)+2$ for all sufficiently large
$t$.

Suppose first that $n\ge3$.  Then $F=e_{n-1}\ge e_2$ and the generator identity
gives
\[
 Y_s=G(Y_{s-1})=F^{[Y_{s-1}]}(2),
 \qquad
 Y_{s+1}=F^{[Y_s]}(2).
\]
Because $F(Y_s)\ge Y_s^2+2>Y_s+P(t)$, monotonicity yields
\[
 \begin{aligned}
 F^{[P(t)]}(Y_s+P(t))
 &<F^{[P(t)+1]}(Y_s)\\
 &=F^{[Y_{s-1}+P(t)+1]}(2).
 \end{aligned}
\]
Also
\[
 Y_s-Y_{s-1}
 \ge Y_{s-1}^2-Y_{s-1}+2>P(t)+1
\]
eventually, so $Y_{s-1}+P(t)+1<Y_s$.  Strict increase of $F$ therefore gives
\[
 F^{[Y_{s-1}+P(t)+1]}(2)
 <F^{[Y_s]}(2)=Y_{s+1},
\]
which is \eqref{eq:canonical-gap}.

If $n=2$, then $F(x)=2x+2$ and
\[
 F^{[r]}(x)=2^r(x+2)-2,
 \qquad
 Y_{s+1}=Y_s^2+2.
\]
The lower bound on $Y_s$ implies
$P(t)+2<\log_2Y_s$ eventually.  Since also $P(t)+2<Y_s$,
\[
 \log_2(Y_s+P(t)+2)<\log_2Y_s+1.
\]
Hence
\[
 P(t)+\log_2(Y_s+P(t)+2)<2\log_2Y_s,
\]
and exponentiation gives
\[
 F^{[P(t)]}(Y_s+P(t))
 =2^{P(t)}(Y_s+P(t)+2)-2
 <Y_s^2+2=Y_{s+1}.
\]
The listed modifications of $P$ are still fixed polynomials, so the same
argument applies.
\end{proof}

\begin{definition}[selected dependency chains]
\label{def:selected-evaluation-chain}
Let $\delta$ be a derivation over $B_m$ using $\rho_{n,q}$, evaluated at
$\bar x$.  A \emph{selected dependency chain} is a finite sequence
\[
 u_0,u_1,\ldots,u_r=\sem{\delta}(\bar x)
\]
together with concrete value occurrences in that evaluation.  The occurrence
of $u_r$ is the root output, the origin $u_0$ is either one of the formal
input occurrences $x_i$ or a fixed-numeral occurrence, and the intervening
occurrences form one inductively selected dependency path.  At a composition
node, when the selected head chain begins at its $i$-th formal input, it is
prefixed by a selected chain through the $i$-th argument subevaluation.  At a
step-recursion node, a selected transition chain is attached through exactly
one of its parameter, current-address, or previous-state inputs: a parameter
uses the corresponding outer input occurrence, an address uses the appropriate
segment of the outer descent schedule, and a previous state uses a selected
chain for the preceding recursive stage.  No concrete occurrence is reused.
We suppress the occurrence labels and display only the numerical values.
Every step is of one of the following types:
\begin{enumerate}
\item an equality step, $u_{j+1}=u_j$, contributed by a projection;
\item a \emph{basis step} contributed by one concrete occurrence of an
initial function, of one of the following forms:
\[
 u_{j+1}=u_j+1,
 \qquad
 u_{j+1}=x+y\ \text{with }u_j=\max\{x,y\},
 \qquad
 u_{j+1}=e_k(u_j)\quad(2\le k\le m);
\]
In the addition clause, $x$ and $y$ are the numerical values at the two
input occurrences of that concrete addition node.  Only the forms whose
generators occur in $B_m$ are available;
\item a \emph{descent step},
$u_{j+1}=\rho_{n,q}(u_j)$.
\end{enumerate}
A zero occurrence may supply the numeral origin $0$; fixed numeral
subderivations may equivalently be treated as fixed numeral origins.  The
chain is therefore an occurrence-respecting, existentially selected numerical
dependency path, not the full evaluation tree.
\end{definition}

\begin{lemma}[selected-chain existence and polynomial length]
\label{lem:selected-chain-interface}
Let $\delta$ be a fixed derivation over $B_m$ using $\rho_{n,q}$.  There are constants
$A_\delta,C_\delta^{\mathrm{bas}},r_\delta\ge1$ and a finite set
$\mathcal N_\delta\subseteq\Nat$ such that every concrete evaluation in
which every recursion address has depth at most $K$ admits a selected
evaluation chain of length at most
\[
 L_\delta(K):=A_\delta(K+1)^{r_\delta}.
\]
The chain may be chosen with the occurrence labels specified in
Definition~\ref{def:selected-evaluation-chain}.  If it has a numeral origin,
that numeral belongs to $\mathcal N_\delta$.  Every basis step from value $u$ to value $v$ satisfies
\begin{equation}
\label{eq:selected-basis-envelope}
 u\le v\le
 F^{[C_\delta^{\mathrm{bas}}]}(u+C_\delta^{\mathrm{bas}}).
\end{equation}
At input $Y_t=G^{[t]}(0)$, choose $B_\delta\ge1$ so that every
recursion address has depth at most $B_\delta(t+1)$, as supplied by
Lemma~\ref{lem:derivation-internal-envelope}.  Define
\[
 P_\delta(t):=L_\delta\bigl(B_\delta(t+1)\bigr).
\]
Then, for every such evaluation at input $Y_t$, a selected chain may be
chosen with length at most $P_\delta(t)$.
\end{lemma}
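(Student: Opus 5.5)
We argue by induction on construction rank and build the selected chain top-down, from the root output back to an origin.  At each node we keep one selected occurrence and descend into the subevaluation that produced it.  The length function is defined recursively, mirroring the recurrence for $Q_\delta$ in Theorem~\ref{thm:structural-trace}.

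\emph{Initial node.}  An initial node $\operatorname{Init}(R)$ produces a chain of length at most one:
\begin{itemize}
\item a projection gives an equality step from the selected input;
\item successor, or a unary $e_k$ with $k\le m$, gives a basis step from its input;
\item addition gives a basis step from the larger input, so that $u=\max\{x,y\}\le x+y\le 2u\le F(u)$;
\item zero gives the numeral origin $0$.
\end{itemize}

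\emph{Composition node.}  At a composition node we first select a chain in the head $\eta$.  If it begins at the $i$-th formal input, we prefix a chain selected in $\delta_i$; if it begins at a numeral, we stop.  This gives $L_\delta(K)=L_\eta(K)+\max_i L_{\delta_i}(K)$.

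\emph{Step-recursion node.}  At a step-recursion node evaluated at $(\bar p,z)$ with depth $d\le K$, the output is $s_d$, produced by $\eta$ at $(\bar p,z_{d-1},s_{d-1})$.  We select a chain in that transition evaluation and branch on where it begins.
\begin{itemize}
\item If it begins at a parameter, we use the outer input occurrence and stop.
\item If it begins at the address $z_{d-1}$, we append the descent steps $z=z_d\mapsto z_{d-1}$ read backwards.  Concretely, the chain runs from the outer input $z$ through $\rho_{n,q}$-descent steps of the schedule down to $z_{d-1}$.
\item If it begins at the previous state, we recurse into stage $d-1$, and at stage $0$ into the base $\gamma$.
\end{itemize}
Occurrences are never reused because each stage is visited at most once.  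The resulting length is at most
\[
 L_\delta(K)\le K\bigl(L_\eta(K)+K\bigr)+L_\gamma(K)+K.
\]
This is the same shape as $\log Q_\delta$ in Theorem~\ref{thm:structural-trace}, so simultaneous induction yields a polynomial bound $A_\delta(K+1)^{r_\delta}$: multiplication by $K$ raises the degree by one, and sums keep it.

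\emph{Address depths and numeral origins.}  Every address met inside these subevaluations is recorded in $\operatorname{Addr}_\delta$, by Definition~\ref{def:derivation-evaluation-values}.  The hypothesis therefore applies to each subevaluation with the same $K$, which keeps the induction sound.  The numeral origins come only from zero leaves, possibly followed by the finitely many fixed successor chains, so $\mathcal N_\delta=\{0\}$ (or the finite set of fixed numerals) suffices.

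\emph{Basis envelope.}  The envelope \eqref{eq:selected-basis-envelope} is the one place that uses $m<n$.  The available basis forms are $u+1$, addition with $u$ the maximum, and $e_k$ for $k\le m\le n-1$.
\begin{itemize}
\item If $n\ge3$, then $e_k\le e_{n-1}=F$ by Lemma~\ref{lem:generator-identities}, and $u+1$ and $2u$ are at most $F(u+1)$.
\item If $n=2$, then $m\le1$ and only $u+1$ and $x+y\le2u\le F(u)=2u+2$ occur.
\end{itemize}
Lower bounds $u\le v$ are immediate in all cases.  So $C^{\mathrm{bas}}_\delta=1$ works, and we enlarge it if numeral handling requires.

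\emph{Canonical inputs.}  The final assertion follows by substitution.  Lemma~\ref{lem:derivation-internal-envelope} bounds all address depths at input $Y_t$ by $B_\delta(t+1)$.  Since $L_\delta$ is nondecreasing, the chain has length at most $P_\delta(t)$, which is again a polynomial in $t$.

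\emph{Main obstacle.}  The step I expect to be delicate is the bookkeeping at the step-recursion node, specifically the address branch.  The address $z_{d-1}$ is a schedule value, not an independent occurrence, so it must be linked to the outer input by a run of descent steps.  This is legitimate because $z_{j}=\rho_{n,q}(z_{j+1})$ by the definition of the schedule.  It costs at most $d\le K$ steps, which is already absorbed in the polynomial bound.
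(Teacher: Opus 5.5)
Your proposal is correct and follows essentially the paper's argument. Both use induction on construction rank, the same selections at initial nodes, and the same envelope check via $m\le n-1$. Both splice head then argument at composition nodes. At step-recursion nodes both branch on parameter, address segment, or previous state, which gives a polynomial length recurrence.

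The differences are immaterial: you build the chain top-down from $s_d$ rather than forward by induction on $j$, your length bound is slightly looser, and you fix $C_\delta^{\mathrm{bas}}=1$ explicitly. The numeral-origin branch at a step-recursion node is missing from your list, but it is handled exactly as at composition, by stopping there.
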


\begin{proof}
We prove existence and a recursive length bound by induction on construction
rank.  Since a fixed derivation contains only finitely many initial-function
nodes, one constant $C_\delta^{\mathrm{bas}}$ and one finite set of possible
numeral origins $\mathcal N_\delta$ can be chosen for all of them.
At an initial-function occurrence, use exactly the rule listed in
Definition~\ref{def:selected-evaluation-chain}: select the unique input of
successor or of a unary generator, and select a maximum input $u$ of
addition, for which $u\le x+y\le2u$.  A projection gives an equality step,
while zero starts a chain at the fixed numeral $0$.  The required envelope
can be checked directly.  If $n=2$, then $m\le1$ and
$F(x)=2x+2$; thus only successor and possibly addition occur, and
\[
 u+1\le F(u+1),\qquad 2u\le F(u+1).
\]
If $n\ge3$, then $F=e_{n-1}$.  Every unary generator in $B_m$ is bounded
by $F$ because $m\le n-1$ and
Lemma~\ref{lem:generator-identities} compares the generator indices; also
$u+1\le F(u)$ and $2u\le F(u)$ by
Lemma~\ref{lem:generator-identities}.  Enlarging one fixed constant absorbs
all initial nodes of the fixed derivation.  This proves the initial case with
constant length.

Suppose
\[
 \delta=\operatorname{Comp}(\eta;\delta_1,\ldots,\delta_r).
\]
Construct a selected chain in the concrete head evaluation.  If its origin is
a fixed numeral, use that chain.  If its origin is the $i$-th formal input of
the head, that value is the output of $\delta_i$; concatenate a selected
chain for $\delta_i$ with the head chain.  Hence one may take
\[
 L_\delta(K)
 \le L_\eta(K)+\max_iL_{\delta_i}(K).
\]
Only one argument branch is followed.

Now let
\[
 \delta=\operatorname{SR}_{\rho_{n,q}}(\gamma,\eta,\beta)
\]
and evaluate it at $(\bar p,z)$.  Put $d=D_{n,q}(z)$,
\[
 z_j=\rho_{n,q}^{[d-j]}(z)\quad(0\le j\le d),
\]
and
\[
 s_0=\sem{\gamma}(\bar p),\qquad
 s_{j+1}=\sem{\eta}(\bar p,z_j,s_j)\quad(0\le j<d).
\]
We construct by induction on $j$ a selected chain ending at $s_j$.  For
$j=0$, use a selected chain for the base derivation.  Given the chain for
$s_j$, select a chain in the transition evaluation producing $s_{j+1}$.
If its origin is a parameter input, it already begins at the corresponding
formal parameter of the outer derivation.  If it is the address input $z_j$,
prepend the full segment
\[
 z=z_d,\ z_{d-1},\ldots,z_j,
\]
with one descent step between consecutive values.  If it is the previous
state input $s_j$, concatenate the already constructed chain for $s_j$.  If
it is a fixed numeral, retain that numeral-origin chain.  These alternatives
are exhaustive because the transition derivation has ordered inputs
$(\bar p,z_j,s_j)$.

Since $d\le K$, the resulting chain has length at most
\[
 L_\gamma(K)+K L_\eta(K)+K+1.
\]
The declared bound does not create a dependency step; it only certifies the
values $s_j$.  The displayed recurrences imply by induction that
$L_\delta(K)\le A_\delta(K+1)^{r_\delta}$ for fixed constants.  All basis
constants occurring in copied subderivations are absorbed by their finite
maximum.

The final assertion follows from
Lemma~\ref{lem:derivation-internal-envelope} and the definition of
$P_\delta$ above.
\end{proof}

\begin{lemma}[zone uniqueness]
\label{lem:zone-unique}
For every fixed $d$ and polynomial $P$, and for all large $t$,
\begin{equation}
\label{eq:zone-unique}
 Y_{t-d}\in\mathcal Z_{s,P}(t)
 \quad\Longrightarrow\quad
 s=t-d.
\end{equation}
\end{lemma}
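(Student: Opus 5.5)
We argue by contraposition: assuming $Y_{t-d}\in\mathcal Z_{s,P}(t)$ with $s\ne t-d$, we derive a contradiction for large $t$. Put $p=P(t)$. Membership means $Y_{t-d}\le U_s[p]=F^{[p]}(Y_s+p)$ and $Y_s\le F^{[p]}(Y_{t-d}+p)$. Both inequalities compare canonical points through the same width-$p$ envelope, so the two cases $s<t-d$ and $s>t-d$ are handled by the same gap lemma. Since $d$ is fixed and $t$ large, the index $t-d$ is large and positive.

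\emph{Case $s<t-d$.} Here $s\le t-d-1$, and monotonicity of $Y$ and $F$ gives
\[
Y_{t-d}\le F^{[p]}(Y_s+p)\le F^{[p]}(Y_{t-d-1}+p).
\]
Lemma~\ref{lem:canonical-gap} with offset $a=d$ says that $F^{[P(t)]}(Y_{t-d-1}+P(t))<Y_{t-d}$ for all large $t$. This contradicts the display.

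\emph{Case $s>t-d$.} Here $s\ge t-d+1$, and the lower-bound condition gives
\[
Y_{t-d+1}\le Y_s\le F^{[p]}(Y_{t-d}+p).
\]
Lemma~\ref{lem:canonical-gap} with offset $a=d-1$ says that $F^{[P(t)]}(Y_{t-d}+P(t))<Y_{t-d+1}$ for large $t$, again a contradiction. When $d=0$ the offset is $a=-1$. The lemma allows any fixed integer offset provided the indices are nonnegative, which they are here.

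The main obstacle is uniformity in $s$. The threshold "large $t$" must not depend on $s$, which ranges over all of $\Nat$. This is why the argument first reduces to the two neighbouring indices $t-d\pm1$ using monotonicity of $s\mapsto Y_s$ and of $F^{[p]}$. After that reduction, only the two fixed offsets $a=d$ and $a=d-1$ are used, each with the single polynomial $P$. Taking the larger of the two thresholds from Lemma~\ref{lem:canonical-gap}, together with the threshold ensuring $t-d\ge1$, then completes the proof.
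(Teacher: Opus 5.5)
Your proof is correct and matches the paper's argument: you split into $s\le t-d-1$ and $s\ge t-d+1$, use monotonicity of $s\mapsto Y_s$ and of $F^{[p]}$ to reduce each case to the neighbouring index, and close with Lemma~\ref{lem:canonical-gap} at offsets $a=d$ and $a=d-1$ (including $a=-1$ when $d=0$). Your remark that this reduction makes the threshold on $t$ independent of $s$ spells out a point the paper leaves implicit.
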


\begin{proof}
Take $t$ large enough that all indices below are nonnegative.  If
$s\le t-d-1$, then
\[
 Y_{t-d}\le U_{s,P}(t)
 \le F^{[P(t)]}(Y_{t-d-1}+P(t))<Y_{t-d}
\]
by \eqref{eq:canonical-gap} with offset $a=d$.  If
$s\ge t-d+1$, the lower-zone condition gives
\[
 Y_{t-d+1}\le Y_s
 \le F^{[P(t)]}(Y_{t-d}+P(t))<Y_{t-d+1},
\]
where the last inequality is \eqref{eq:canonical-gap} with offset
$a=d-1$.  This includes $d=0$, because fixed integer offsets, including
$a=-1$, are allowed once $t$ is sufficiently large.
\end{proof}

\begin{lemma}[basis-step stability]
\label{lem:basis-function-edge}
If a basis step satisfies $u\le v\le F^{[C]}(u+C)$ and
$u\in\mathcal Z_{s,P}(t)$, then $v\in\mathcal Z_{s,P+C}(t)$.
\end{lemma}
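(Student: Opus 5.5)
We have to show two things about $v$: the upper bound $v\le U_{s}[P(t)+C]$ and the lower-zone condition $Y_s\le F^{[P(t)+C]}(v+P(t)+C)$. Write $p=P(t)$ throughout. The only facts we need are that $F$ is nondecreasing and satisfies $F(x)\ge x$. Both hold in each case: $F(x)=2x+2$ for $n=2$, and $F=e_{n-1}$ with $F(x)\ge x+2$ for $n\ge3$ by Lemma~\ref{lem:generator-identities}. It follows that every iterate $F^{[a]}$ is nondecreasing and dominates the identity.

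\emph{Upper bound.} From $u\in\mathcal Z_s[p]$ we have $u\le U_s[p]=F^{[p]}(Y_s+p)$. Then
\[
 v\le F^{[C]}(u+C)\le F^{[C]}\bigl(F^{[p]}(Y_s+p)+C\bigr).
\]
To compare this with $U_s[p+C]=F^{[p+C]}(Y_s+p+C)$, note that $F^{[p]}$ is nondecreasing and dominates the identity. Hence
\[
 F^{[p]}(Y_s+p)+C\le F^{[p]}(Y_s+p+C),
\]
where we use $F^{[p]}(x+C)\ge F^{[p]}(x)+C$. This last inequality holds because $F(x+C)\ge F(x)+C$ for the concrete $F$: for $2x+2$ it is clear, and for $e_{n-1}$ it follows from strict increase on integers. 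Applying the nondecreasing map $F^{[C]}$ then gives $v\le F^{[C]}(F^{[p]}(Y_s+p+C))=U_s[p+C]$.

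\emph{Lower bound.} This direction is easier. Since $v\ge u$ and the iterates are monotone,
\[
 F^{[p+C]}(v+p+C)\ge F^{[C]}\bigl(F^{[p]}(u+p)\bigr)\ge F^{[p]}(u+p)\ge Y_s,
\]
where the last step is the zone condition for $u$.

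The only delicate point is the shift-commutation $F^{[p]}(x)+C\le F^{[p]}(x+C)$ used in the upper bound. If a uniform version is awkward to state, a cruder route suffices: since $F(x)\ge x+1$, we have $F^{[p]}(x)+C\le F^{[p+C]}(x)$. This yields $v\le F^{[2C+p]}(Y_s+p+C)$, which is in the form $U_s[P(t)+C']$ with $C'=2C$. That constant change is absorbed wherever the lemma is applied, but as stated the sharper commutation inequality gives exactly $P+C$, so I would prove it directly by induction on $p$ using $F(x+C)\ge F(x)+C$.
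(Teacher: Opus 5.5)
Your proof is correct and matches the paper's: the upper bound comes from the shift inequality $F^{[p]}(x+C)\ge F^{[p]}(x)+C$, which follows from strict increase of $F$ on integers, followed by one application of $F^{[C]}$, and the lower bound from $v\ge u$, monotonicity, and $F^{[C]}\ge\mathrm{id}$. Your opening claim that only monotonicity and $F(x)\ge x$ are needed is slightly understated, since the shift inequality really uses strict increase, but you do supply that fact where it is used.
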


\begin{proof}
Put $p=P(t)$.  Since every iterate of the strictly increasing
integer-valued map $F$ satisfies $F^{[p]}(x+C)\ge F^{[p]}(x)+C$, the upper
zone inequality for $u$ gives
\[
 u+C\le F^{[p]}(Y_s+p)+C
      \le F^{[p]}(Y_s+p+C).
\]
Therefore
\[
 v\le F^{[C]}(u+C)
   \le F^{[p+C]}(Y_s+p+C)=U_{s,P+C}(t).
\]
For the lower condition, $v\ge u$ implies
\[
 Y_s\le F^{[p]}(u+p)
     \le F^{[p+C]}(v+p+C).
\]
Thus $v\in\mathcal Z_{s,P+C}(t)$.
\end{proof}

\begin{lemma}[descent-step displacement]
\label{lem:descent-edge}
Write $H_p(x)=F^{[p]}(x+p)$.  For every $p\ge0$,
\begin{equation}
\label{eq:shifted-commutation}
 G^{[q]}(H_p(x))\ge H_p(G^{[q]}(x)).
\end{equation}
Consequently, for every $w$,
\begin{equation}
\label{eq:inverse-shifted-commutation}
 \rho_{n,q}(H_p(w))\le H_p(\rho_{n,q}(w)).
\end{equation}
In particular, for every $s\ge q$ and every
$u\in\mathcal Z_{s,P}(t)$,
\begin{equation}
\label{eq:zone-step-function}
 \rho_{n,q}(u)\in\mathcal Z_{s-q,P}(t).
\end{equation}
\end{lemma}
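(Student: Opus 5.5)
The plan is to reduce \eqref{eq:shifted-commutation} to two one-step facts about $G$ and $F$, and then obtain \eqref{eq:inverse-shifted-commutation} and \eqref{eq:zone-step-function} formally from the generalized-inverse definition.

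\textbf{Step 1: two one-step facts.}
\begin{enumerate}
\item The \emph{shift} fact $G(y+p)\ge G(y)+p$. This holds because $G$ is strictly increasing and integer valued (Lemma~\ref{lem:generator-identities}).
\item The \emph{commutation} fact $G(F(y))\ge F(G(y))$, checked row by row.
\begin{enumerate}
\item If $n=2$, it is the explicit inequality $(2y+2)^2+2\ge 2(y^2+2)+2$.
\item If $n\ge3$, Lemma~\ref{lem:generator-identities} gives $G(y)=e_n(y)=F^{[y]}(2)$. Hence $F(G(y))=F^{[y+1]}(2)=G(y+1)$. Since $F(y)\ge y+1$ and $G$ is increasing, this is at most $G(F(y))$.
\end{enumerate}
\end{enumerate}

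\textbf{Step 2: the iterated inequality \eqref{eq:shifted-commutation}.} First, induction on $p$ gives $G(F^{[p]}(z))\ge F^{[p]}(G(z))$. The induction step uses that $G$ and $F$ are both nondecreasing. Combining this with the shift fact,
\[
 G(H_p(x))=G(F^{[p]}(x+p))\ge F^{[p]}(G(x+p))\ge F^{[p]}(G(x)+p)=H_p(G(x)).
\]
Iterating $q$ times proves \eqref{eq:shifted-commutation}, because $G$ and $H_p$ are monotone.

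\textbf{Step 3: the inverse form \eqref{eq:inverse-shifted-commutation}.} Write $\varphi=\varphi_{n,q}=G^{[q]}$, $\rho=\rho_{n,q}$ and $w'=\rho(w)$.
\begin{enumerate}
\item If $w=0$, then $w'=0$, and the claim $\rho(H_p(0))\le H_p(0)$ follows from $\rho(y)\le y$.
\item If $w>0$, then $\varphi(w')\ge w$. By \eqref{eq:shifted-commutation} and monotonicity of $H_p$,
\[
 \varphi(H_p(w'))\ge H_p(\varphi(w'))\ge H_p(w).
\]
Minimality in the definition of the generalized inverse then gives $\rho(H_p(w))\le w'$.
\end{enumerate}

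\textbf{Step 4: the zone step \eqref{eq:zone-step-function}.} Fix $s\ge q$ and $u\in\mathcal Z_{s,P}(t)$, and put $p=P(t)$. Strict increase of $\varphi$ gives $\rho(\varphi(x))=x$, so $\rho(Y_s)=Y_{s-q}$.
\begin{enumerate}
\item Upper bound. From $u\le H_p(Y_s)$, monotonicity of $\rho$ and \eqref{eq:inverse-shifted-commutation} give
\[
 \rho(u)\le H_p(\rho(Y_s))=H_p(Y_{s-q})=U_{s-q,P}(t).
\]
\item Lower bound, case $u>0$. Here $\varphi(\rho(u))\ge u$. Using \eqref{eq:shifted-commutation} and the lower zone condition,
\[
 \varphi(H_p(\rho(u)))\ge H_p(\varphi(\rho(u)))\ge H_p(u)\ge Y_s=\varphi(Y_{s-q}).
\]
Strict increase of $\varphi$ then yields $Y_{s-q}\le H_p(\rho(u))$, which is exactly the lower zone condition at index $s-q$.
\item Lower bound, case $u=0$. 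Here $\rho(u)=0=u$, and $Y_{s-q}\le Y_s\le H_p(0)$.
\end{enumerate}

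I expect the only delicate point to be the commutation inequality in Step 1 for $n\ge3$. It depends on the exact generator identity $e_n(y)=e_{n-1}^{[y]}(2)$, which is available, so no genuine obstacle is anticipated. The remaining care is to handle the zero cases of the generalized inverse separately, as done above.
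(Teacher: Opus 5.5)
Your proof is correct and follows essentially the paper's route: the one-step commutation $G(F(y))\ge F(G(y))$ checked row by row, induction on $p$ combined with $G(x+p)\ge G(x)+p$, iteration in $q$, and then the threshold-witness argument for $\rho_{n,q}$. The only differences are that you treat $w=0$ and $u=0$ separately, and that you derive the lower zone bound from \eqref{eq:shifted-commutation} and strict increase of $\varphi$ instead of from $Y_{s-q}=\rho_{n,q}(Y_s)\le\rho_{n,q}(H_p(u))\le H_p(\rho_{n,q}(u))$. Note one slip at the end of your Step~3: minimality gives $\rho(H_p(w))\le H_p(w')$, not $\rho(H_p(w))\le w'$.
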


\begin{proof}
We first prove
\begin{equation}
\label{eq:GF-commutation}
 G(F(x))\ge F(G(x)).
\end{equation}
If $n=2$, then $F(x)=2x+2$ and $G(x)=x^2+2$, and direct expansion gives
\[
 G(F(x))=4x^2+8x+6\ge2x^2+6=F(G(x)).
\]
If $n\ge3$, then $G(x)=F^{[x]}(2)$ and $F(x)\ge x+1$, whence
\[
 G(F(x))=F^{[F(x)]}(2)
 \ge F^{[x+1]}(2)=F(G(x)).
\]
Induction on $p$ gives
$G(F^{[p]}(z))\ge F^{[p]}(G(z))$.  Since $G$ is a strictly increasing
integer-valued function, $G(x+p)\ge G(x)+p$.  Therefore
\[
\begin{aligned}
 G(H_p(x))
 &=G(F^{[p]}(x+p))\\
 &\ge F^{[p]}(G(x+p))\\
 &\ge F^{[p]}(G(x)+p)=H_p(G(x)).
\end{aligned}
\]
Iteration in $q$ proves \eqref{eq:shifted-commutation}.

Apply \eqref{eq:shifted-commutation} at $x=\rho_{n,q}(w)$.  By the
definition of the generalized inverse,
$G^{[q]}(\rho_{n,q}(w))\ge w$.  Since $H_p$ is nondecreasing,
\[
 G^{[q]}\bigl(H_p(\rho_{n,q}(w))\bigr)
 \ge H_p\bigl(G^{[q]}(\rho_{n,q}(w))\bigr)
 \ge H_p(w).
\]
Thus $H_p(\rho_{n,q}(w))$ is an admissible threshold witness for
$\rho_{n,q}(H_p(w))$, proving
\eqref{eq:inverse-shifted-commutation}.

Now suppose $u\in\mathcal Z_{s,P}(t)$ and $s\ge q$, and put $p=P(t)$.
The upper zone inequality and monotonicity of $\rho_{n,q}$ give
\[
\begin{aligned}
 \rho_{n,q}(u)
 &\le \rho_{n,q}(H_p(Y_s))\\
 &\le H_p(\rho_{n,q}(Y_s))
  =H_p(Y_{s-q}).
\end{aligned}
\]
For the lower zone inequality, $Y_s\le H_p(u)$ implies
\[
\begin{aligned}
 Y_{s-q}
 &=\rho_{n,q}(Y_s)\\
 &\le\rho_{n,q}(H_p(u))\\
 &\le H_p(\rho_{n,q}(u)).
\end{aligned}
\]
These are exactly the two defining inequalities for
$\rho_{n,q}(u)\in\mathcal Z_{s-q,P}(t)$.
\end{proof}

\begin{theorem}[zone-index invariant along selected dependency chains]
\label{thm:selected-chain-valuation}
Let $u_0,\ldots,u_r$ be a selected dependency chain chosen as in
Lemma~\ref{lem:selected-chain-interface}, and suppose
$u_i\in\mathcal Z_{s,P}(t)$.  For $h\ge i$, let $j_h$ be the number of
descent steps among the transitions from $u_i$ to $u_h$, and let $b_h$ be the
number of basis steps among them.  Put
\[
 p_h:=P(t)+b_hC_\delta^{\mathrm{bas}}.
\]
As long as every one of those descent steps starts with current zone index at
least $q$, the following invariant holds:
\[
 u_h\in\mathcal Z_{s-qj_h}[p_h].
\]
Thus equality steps preserve both the index and the integer width, basis steps
preserve the index and enlarge the width by at most
$C_\delta^{\mathrm{bas}}$, and descent steps change the index exactly by
$-q$.  Since the chain has polynomial length, one polynomial
$P^\star$, depending only on $P$ and the fixed derivation $\delta$, absorbs
all basis widenings.  If the current index falls below $q$, equality and basis
steps remain covered, but the invariant provides no further descent step.
\end{theorem}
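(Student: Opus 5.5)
The plan is a straightforward induction on $h$, starting at $h=i$ and advancing one chain transition at a time, with the three step types of Definition~\ref{def:selected-evaluation-chain} handled separately. At $h=i$ we have $j_i=b_i=0$ and $p_i=P(t)$, so the invariant is exactly the hypothesis $u_i\in\mathcal Z_{s,P}(t)=\mathcal Z_s[P(t)]$. For the step, suppose $u_h\in\mathcal Z_{s-qj_h}[p_h]$ and look at the transition from $u_h$ to $u_{h+1}$.

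\textbf{Equality step.} Here $u_{h+1}=u_h$ and neither counter changes, so there is nothing to prove.

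\textbf{Basis step.} Lemma~\ref{lem:selected-chain-interface} supplies the envelope \eqref{eq:selected-basis-envelope}, namely $u_h\le u_{h+1}\le F^{[C]}(u_h+C)$ with $C=C_\delta^{\mathrm{bas}}$. Apply Lemma~\ref{lem:basis-function-edge} with width $p=p_h$; its proof only uses the single integer $p=P(t)$, so it holds for an arbitrary width. This gives $u_{h+1}\in\mathcal Z_{s-qj_h}[p_h+C]$, and $p_h+C=p_{h+1}$ because $b_{h+1}=b_h+1$.

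\textbf{Descent step.} Here $u_{h+1}=\rho_{n,q}(u_h)$. The hypothesis that the current index $s-qj_h$ is at least $q$ lets us apply \eqref{eq:zone-step-function} of Lemma~\ref{lem:descent-edge} at the integer width $p_h$; that proof likewise only uses $p=P(t)$ through $H_p$. We obtain $u_{h+1}\in\mathcal Z_{s-q(j_h+1)}[p_h]$, so the index drops by exactly $q$ and the width is unchanged.

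It remains to choose the absorbing polynomial $P^\star$. By Lemma~\ref{lem:selected-chain-interface}, at input $Y_t$ the chain has length at most $P_\delta(t)$, which is a polynomial in $t$, so $b_h\le P_\delta(t)$. Set $P^\star:=P+C_\delta^{\mathrm{bas}}P_\delta$; then $p_h\le P^\star(t)$ for every $h$.

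To pass from width $p_h$ to width $P^\star(t)$ we need the zones to be monotone in the width, i.e. $\mathcal Z_s[p]\subseteq\mathcal Z_s[p']$ for $p\le p'$. On the upper side, $F^{[p]}(Y_s+p)\le F^{[p']}(Y_s+p')$ because $F(x)\ge x$ and $F$ is nondecreasing. On the lower side, $Y_s\le F^{[p]}(u+p)\le F^{[p']}(u+p')$ by the same reasoning. Hence every $u_h$ lies in $\mathcal Z_{s-qj_h,P^\star}(t)$.

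Finally, if the current index falls below $q$, the equality and basis cases above never used the index bound, so they still apply. Lemma~\ref{lem:descent-edge}, however, requires $s\ge q$, so the invariant gives no control over further descent steps, exactly as the statement asserts.

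The main obstacle is bookkeeping rather than estimation: we must check that Lemmas~\ref{lem:basis-function-edge} and~\ref{lem:descent-edge}, which are phrased for a polynomial width $P(t)$, really hold for an arbitrary integer width, and that the chain-length polynomial bounds $b_h$ uniformly in the evaluation. Both facts follow directly from the stated proofs.
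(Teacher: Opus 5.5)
Your proposal is correct and follows the paper's proof essentially verbatim. It inducts on $h$ over the equality, basis, and descent step types, using Lemma~\ref{lem:basis-function-edge} and Lemma~\ref{lem:descent-edge} at the fixed input $t$, and then absorbs all widths into $P^\star=P+C_\delta^{\mathrm{bas}}P_\delta$ via width-monotonicity of zones. Your explicit check of that monotonicity, and of the fact that the two edge lemmas hold at an arbitrary integer width, only spells out what the paper asserts in one line.
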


\begin{proof}
We induct on $h-i$.  At $h=i$, one has $j_i=b_i=0$, so the assertion is the
hypothesis $u_i\in\mathcal Z_{s,P}(t)$.  Suppose it holds at $h$.

An equality step leaves the current value unchanged.  For a basis step,
Lemma~\ref{lem:basis-function-edge}, applied at the fixed input $t$, preserves
the current zone index and replaces the integer width $p_h$ by at most
$p_h+C_\delta^{\mathrm{bas}}=p_{h+1}$.  Enlarging the width preserves zone
membership.  For a descent step whose current index $s-qj_h$ is at least
$q$, Lemma~\ref{lem:descent-edge} changes the zone index to
$s-q(j_h+1)$ without enlarging the width.  These are all possible step
types, so the invariant follows.

By Lemma~\ref{lem:selected-chain-interface}, the whole chain has length at
most the fixed polynomial $P_\delta(t)$.  Hence the fixed polynomial
\[
 P^\star(X):=P(X)+C_\delta^{\mathrm{bas}}P_\delta(X)
\]
satisfies $p_h\le P^\star(t)$ for every point of the chosen chain.
Enlarging an integer zone width preserves membership by the defining
inequalities and monotonicity of $F$, so
$\mathcal Z_{s-qj_h}[p_h]\subseteq
\mathcal Z_{s-qj_h,P^\star}(t)$ uniformly at input $Y_t$.  The
descent clause of Lemma~\ref{lem:descent-edge} requires a starting index at
least $q$, which explains the final qualification.
\end{proof}

\begin{lemma}[constant-origin exclusion]
\label{lem:constant-source-exclusion}
Let $\delta$ be a fixed unary derivation, and let $P_\delta$ and
$\mathcal N_\delta$ be supplied by
Lemma~\ref{lem:selected-chain-interface}.  A selected chain beginning at a
fixed numeral and using at most $P_\delta(t)$ basis
steps satisfying \eqref{eq:selected-basis-envelope}, interleaved with
nonincreasing descent steps, has final value smaller than $Y_{t-d}$ for every
fixed $d$ and all sufficiently large $t$.
\end{lemma}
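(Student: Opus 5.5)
The plan is to bound the chain values by a single envelope that starts from a fixed constant, and then compare that envelope with $Y_{t-d}$ using the double-exponential growth of $Y_t$. Let $N_0=\max\mathcal N_\delta$, and set $C=C_\delta^{\mathrm{bas}}$ and $p=P_\delta(t)$. Every descent step is nonincreasing because $\rho_{n,q}(u)\le u$. Every basis step satisfies $v\le F^{[C]}(u+C)$ by \eqref{eq:selected-basis-envelope}, and $F$ is nondecreasing. We induct along the chain, keeping the current bound as a monotone envelope. After $b$ basis steps the value is at most $E_b$, where
\[
 E_0=N_0,\qquad E_{b+1}=F^{[C]}(E_b+C).
\]
Descent and equality steps do not increase this bound, since $E_b$ is increasing in $b$ and the maps involved are monotone. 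Using $F(x)\ge x+2$ as in Lemma~\ref{lem:envelope-absorption}, the shift $+C$ is absorbed into iterates of $F$. So there is a fixed $c$ such that $E_b\le F^{[cb+c]}(c)$, and the final value is at most $F^{[cp+c]}(c)$.

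It then suffices to prove
\[
 F^{[cP_\delta(t)+c]}(c)<Y_{t-d}
\]
for all large $t$. The right-hand side can be bounded below in two ways. Since $G\ge e_2$, we have $Y_{t-d}\ge 2^{2^{t-d-1}}$. Alternatively, we can compare directly with $F$-iterates, as in Lemma~\ref{lem:canonical-gap}.

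\emph{Case $n=2$.} Here $F^{[r]}(x)=2^r(x+2)-2$, so the left side is $2^{cp+c}(c+2)$, which is $2^{O(\mathrm{poly}(t))}$. This is eventually below $2^{2^{t-d-1}}$.

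\emph{Case $n\ge3$.} Here $F=e_{n-1}$ and $Y_{t-d}=G(Y_{t-d-1})=F^{[Y_{t-d-1}]}(2)$. First bound $F^{[k]}(c)\le F^{[k+c']}(2)$ for a fixed $c'$, using $F^{[j]}(2)\ge c$ for a fixed $j$. Then compare exponents: $cP_\delta(t)+c+c'$ is a polynomial in $t$, while $Y_{t-d-1}\ge 2^{2^{t-d-2}}$. So the exponent inequality holds eventually, and strict increase of $F$ gives the claim.

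The main subtlety is the monotone-envelope induction when descent steps are interleaved. There I would state the invariant explicitly: $u_h\le E_{b_h}$, where $b_h$ counts the basis steps so far. This is preserved because $\rho_{n,q}(u)\le u$ and because $F^{[C]}(\cdot+C)$ is monotone. Apart from that, the argument is a routine growth comparison: a polynomial number of $F$-iterates from a constant, against a tower of height $t-d-1$ in $G$.
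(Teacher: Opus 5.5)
Your proof is correct and follows the paper's route: discard the nonincreasing descent steps, compose the basis-step envelopes starting from a fixed numeral to get a bound of the form $F^{[\mathrm{poly}(t)]}(\mathrm{const})$, and compare that bound with $Y_{t-d}$. The only difference is in the final comparison. The paper chooses $s_0$ with every numeral in $\mathcal N_\delta$ at most $Y_{s_0}\le Y_{t-d-1}$ and cites the canonical gap estimate \eqref{eq:canonical-gap}, whereas you re-derive that comparison directly in the two cases $n=2$ and $n\ge3$.
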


\begin{proof}
Delete descent steps and compose the basis-step upper envelopes.  From a fixed
numeral the result is bounded by
$F^{[C(P_\delta(t)+1)]}(C(P_\delta(t)+1))$ for a fixed $C$.  Choose one $s_0$ such that every numeral in $\mathcal N_\delta$ is at most
$Y_{s_0}$.  For large $t$ one has
$s_0\le t-d-1$, so monotonicity followed by the canonical gap estimate
\eqref{eq:canonical-gap} puts this bound below $Y_{t-d}$.
\end{proof}

\begin{lemma}[low-zone barrier]
\label{lem:low-zone}
Let $P,Q$ be polynomials and let $d\ge0$ be fixed.
There is $t_0$ such that for all $t\ge t_0$ the following holds.  If
$u\le U_{s,P}(t)$ for some $s<q$, then after any sequence of at most $Q(t)$
basis steps $v\mapsto w$ satisfying
$v\le w\le F^{[C]}(v+C)$ for a fixed $C$, interleaved in any order with any
number of descent steps $w\mapsto\rho_{n,q}(w)$, the resulting value is
strictly smaller than $Y_{t-d}$.
\end{lemma}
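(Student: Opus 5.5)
Descent steps only decrease the current value, because $\rho_{n,q}(w)\le w$. So the plan is to delete them, as in Lemma~\ref{lem:constant-source-exclusion}, and bound the value by the composite of the basis-step envelopes alone. The one thing to check is that this deletion is monotone. Each basis step satisfies $w\le F^{[C]}(v+C)$, and the map $x\mapsto F^{[C]}(x+C)$ is nondecreasing. So whenever the true current value is at most a running majorant $M$, after a basis step it is at most $F^{[C]}(M+C)$, and after a descent step it is still at most $M$. Starting from $M_0=U_{s,P}(t)$, induction along the sequence therefore gives that the final value is at most
\[
 \bigl(x\mapsto F^{[C]}(x+C)\bigr)^{[Q(t)]}\bigl(U_{s,P}(t)\bigr).
\]

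Next I would absorb everything into one shifted $F$-iterate. As in Lemma~\ref{lem:basis-function-edge}, $F^{[p]}(x+C)\ge F^{[p]}(x)+C$, so $Q(t)$ basis envelopes applied to $F^{[P(t)]}(Y_s+P(t))$ are bounded by $F^{[P(t)+CQ(t)]}(Y_s+P(t)+CQ(t))$. Since $s<q$ and $q$ is fixed, $Y_s\le Y_{q-1}$, and monotonicity of $F$ gives the single bound
\[
 F^{[R(t)]}\bigl(Y_{q-1}+R(t)\bigr),\qquad R:=P+CQ,
\]
where $R$ is a fixed polynomial. This bound is uniform in $s$ and in the interleaving pattern.

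Finally I would apply the canonical gap estimate. Take the offset $a$ with $t-a-1=q-1$, that is, $a=t-q$. This is not a fixed offset, so Lemma~\ref{lem:canonical-gap} does not apply verbatim. Instead, use $Y_{q-1}\le Y_{t-d-1}$ for $t\ge d+q$ and monotonicity to get $F^{[R(t)]}(Y_{q-1}+R(t))\le F^{[R(t)]}(Y_{t-d-1}+R(t))$. Then \eqref{eq:canonical-gap} with polynomial $R$ and fixed offset $a=d$ gives that this is $<Y_{t-d}$ for all large $t$. Choosing $t_0$ as the maximum of $d+q$ and the threshold from Lemma~\ref{lem:canonical-gap} completes the proof.

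The only delicate point is this replacement of the low index $q-1$ by $t-d-1$. The lemma's statement is really a comparison between a fixed low level and a growing high level, and monotonicity of $F^{[R(t)]}(\cdot+R(t))$ makes that replacement legitimate.
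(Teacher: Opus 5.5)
Your proof is correct and follows the paper's own argument. You delete the descent steps, compose the at most $Q(t)$ basis envelopes into one polynomial-width iterate $F^{[R(t)]}(Y_{q-1}+R(t))$, and conclude via $q-1\le t-d-1$, monotonicity, and the canonical gap estimate with fixed offset $d$; the paper does the same but leaves implicit the monotone running-majorant induction and the absorption $R=P+CQ$ (which it calls ``a fixed polynomial majorant $Q^\sharp$''), and the case of fewer than $Q(t)$ basis steps is covered because $x\mapsto F^{[C]}(x+C)$ is inflationary.
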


\begin{proof}
From $s<q$ one has
\[
 u\le U_{s,P}(t)\le F^{[P(t)]}(Y_{q-1}+P(t)).
\]
Delete every descent step for an upper bound.  Applying at most $Q(t)$ basis
envelopes bounds the final value by
\[
 F^{[Q^\sharp(t)]}\bigl(Y_{q-1}+Q^\sharp(t)\bigr),
\]
where $Q^\sharp$ is a fixed polynomial majorant.  For large $t$,
$q-1\le t-d-1$, so monotonicity and \eqref{eq:canonical-gap} give a value
strictly below $Y_{t-d}$.
\end{proof}

\begin{lemma}[divisibility from selected chains]
\label{lem:canonical-path}
Let $f$ be a unary member of $\Hclass{m}{n}{q}$.  If a fixed $d\ge0$ satisfies
\[
 f(Y_t)=Y_{t-d}
\]
for every sufficiently large $t$, then $q\mid d$.
\end{lemma}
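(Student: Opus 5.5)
Fix a unary derivation $\delta$ of $f$ over $B_m$ using $\rho_{n,q}$ (Lemma~\ref{lem:derivation-tree-representation}); we are in the standing setting $m<n$, $n\ge2$. Evaluate at the canonical input $Y_t$ for large $t$. Lemma~\ref{lem:selected-chain-interface} provides a selected dependency chain $u_0,\ldots,u_r=f(Y_t)=Y_{t-d}$ of length at most $P_\delta(t)$. Its basis steps obey the envelope \eqref{eq:selected-basis-envelope}, and its origin is either the formal input $Y_t$ or a numeral in $\mathcal N_\delta$.

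First we rule out a numeral origin. If $u_0\in\mathcal N_\delta$, then Lemma~\ref{lem:constant-source-exclusion} gives $u_r<Y_{t-d}$ for large $t$, contradicting $u_r=Y_{t-d}$. So the origin is $u_0=Y_t\in\mathcal Z_t[0]$, and we may take $P=0$ in Theorem~\ref{thm:selected-chain-valuation} with $i=0$, $s=t$.

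Next we run the invariant along the chain. Let $j$ be the total number of descent steps. Suppose every descent step starts with current zone index at least $q$. Then Theorem~\ref{thm:selected-chain-valuation} gives $u_r\in\mathcal Z_{t-qj,P^\star}(t)$, with $P^\star$ a fixed polynomial depending only on $\delta$. Zone uniqueness, Lemma~\ref{lem:zone-unique} applied with this $d$ and $P^\star$, then forces $t-qj=t-d$ for large $t$, so $d=qj$ and $q\mid d$.

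It remains to show that the index never falls below $q$ before a descent step; this is the main obstacle. Suppose it does. Take the first position $h$ at which the invariant holds with index $s'=t-qj_h<q$. There $u_h\le U_{s'}[p_h]\le U_{s',P^\star}(t)$, since widening the zone only increases the upper bound. The rest of the chain consists of at most $P_\delta(t)$ basis steps with constant $C_\delta^{\mathrm{bas}}$, interleaved with descent steps. Lemma~\ref{lem:low-zone}, with $P=P^\star$, $Q=P_\delta$ and the fixed $d$, then shows $u_r<Y_{t-d}$, again a contradiction.

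The point needing care is uniformity. The thresholds $t_0$ in Lemmas~\ref{lem:zone-unique}, \ref{lem:low-zone} and \ref{lem:constant-source-exclusion} must not depend on $s'$ or $j$. This holds because each lemma is stated with a uniform threshold for fixed polynomials, and $s'<q$ ranges over a finite set. One should also check that equality steps coming from address segments of the descent schedule are handled correctly. They are, because Definition~\ref{def:selected-evaluation-chain} records those segments explicitly as descent steps.
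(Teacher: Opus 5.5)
Your proposal is correct and follows essentially the same route as the paper. Both arguments extract a polynomial-length selected chain at $Y_t$, exclude a numeral origin by Lemma~\ref{lem:constant-source-exclusion}, and propagate zone indices by Theorem~\ref{thm:selected-chain-valuation}; both then rule out any descent below index $q$ via Lemma~\ref{lem:low-zone} and conclude $d=jq$ from Lemma~\ref{lem:zone-unique}, with your remark on threshold uniformity in $j$ and $s'<q$ making explicit what the paper leaves implicit.
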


\begin{proof}
We combine the preceding ingredients: a selected chain reaching $Y_{t-d}$
cannot start at a fixed numeral, lower-basis steps cannot change its canonical
zone index, and every admissible descent changes that index by exactly $q$.
Choose a finite unary derivation $\delta$ of $f$.  At $Y_t$,
Lemma~\ref{lem:selected-chain-interface} gives a selected chain of
length at most $P_\delta(t)$.  By
Lemma~\ref{lem:constant-source-exclusion} and the finiteness of
$\mathcal N_\delta$, there is $t_0$ such that, for every $t\ge t_0$, no
selected chain ending at $Y_{t-d}$ can have a numeral origin.  Since the
derivation is unary, every such chain starts at the unique formal input value
$Y_t\in\mathcal Z_{t,0}(t)$.

Read the chain forward.  Equality and basis steps preserve the current zone
index with polynomial widening, and every descent step at an index at least
$q$ subtracts exactly $q$, by
Theorem~\ref{thm:selected-chain-valuation}.  If a descent step is encountered
when the current index is below $q$, Lemma~\ref{lem:low-zone}, applied to the
remaining polynomially many basis steps, makes the final value smaller than
$Y_{t-d}$, a contradiction.  Thus every descent step occurs at an index at
least $q$.

If the chain contains $j$ descent steps, one fixed polynomial $P^\star$
absorbs all basis widenings and
\[
 Y_{t-d}\in\mathcal Z_{t-jq,P^\star}(t).
\]
Lemma~\ref{lem:zone-unique} forces $t-d=t-jq$.  Hence $d=jq$ and $q\mid d$.
\end{proof}

\begin{lemma}[horizontal divisibility below initial-basis level $n$]
\label{lem:horizontal-valuation}
Let $n\ge2$, $m<n$, and $p,q\ge1$.  Then
\[
 \rho_{n,p}\in\Hclass{m}{n}{q}
 \quad\Longleftrightarrow\quad q\mid p.
\]
\end{lemma}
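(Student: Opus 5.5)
The two directions are handled separately: the forward direction by the canonical-layer divisibility argument, the converse by an explicit iterate inside the class.

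For the direction $q\mid p\Rightarrow\rho_{n,p}\in\Hclass{m}{n}{q}$, write $p=kq$. Lemma~\ref{lem:step-function-arithmetic} gives $\rho_{n,p}=\rho_{n,q}^{[k]}$. Lemma~\ref{lem:internal-controls} places $\rho_{n,q}$ in $\Hclass{m}{n}{q}$, and it does so at every basis, including $m=0$. A fixed $k$-fold composition then puts $\rho_{n,p}$ in the class. No growth or basis assumption enters here, so this direction holds for every $m$.

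For the forward direction, assume $\rho_{n,p}\in\Hclass{m}{n}{q}$ with $m<n$. The plan is to apply Lemma~\ref{lem:canonical-path} to the unary function $f=\rho_{n,p}$ with $d=p$. Put $Y_t=G^{[t]}(0)$ with $G=g_n$, so that $\varphi_{n,p}=G^{[p]}$. The threshold law together with strict increase of $G$ gives
\[
 \rho_{n,p}(Y_t)=\rho_{n,p}\bigl(G^{[p]}(Y_{t-p})\bigr)=Y_{t-p}
 \qquad(t\ge p).
\]
This holds because $G^{[p]}(Y_{t-p})=Y_t$, and every $z<Y_{t-p}$ satisfies $G^{[p]}(z)<Y_t$. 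Hence $f(Y_t)=Y_{t-d}$ with the fixed offset $d=p$ for all large $t$. The hypotheses of the canonical-zone subsection hold because $(m,n)$ lies in the trace band with $m<n$, and the stride there is taken to be $q$. Lemma~\ref{lem:canonical-path} therefore yields $q\mid p$.

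The only point needing care is that Lemma~\ref{lem:canonical-path} was proved under the standing assumption $m<n$. That assumption is used in Lemma~\ref{lem:selected-chain-interface} to bound every basis step by a fixed iterate of $F=e_{n-1}$, or of $F(x)=2x+2$ when $n=2$. Our hypothesis $m<n$ is exactly this, so the selected-chain machinery applies verbatim. The main obstacle is therefore already absorbed in Lemma~\ref{lem:canonical-path}. What remains is to check the exact identity $\rho_{n,p}(Y_t)=Y_{t-p}$ and to note that the lemma only requires it for sufficiently large $t$, which handles the finitely many small $t$ where $t<p$.
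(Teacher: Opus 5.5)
Your proposal is correct and follows the paper's proof essentially line for line. Sufficiency comes from $\rho_{n,p}=\rho_{n,q}^{[k]}$ and composition, and necessity comes from the identity $\rho_{n,p}(Y_t)=Y_{t-p}$ for $t\ge p$, fed into Lemma~\ref{lem:canonical-path} with $d=p$; your explicit checks of that identity and of the standing hypothesis $m<n$ fill in exactly what the paper leaves implicit.
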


\begin{proof}
If $p=kq$, Lemma~\ref{lem:step-function-arithmetic} gives
$\rho_{n,p}=\rho_{n,q}^{[k]}$, so membership follows by composition.
For necessity, put $Y_t=g_n^{[t]}(0)$.  For $t\ge p$,
\[
 \rho_{n,p}(Y_t)=Y_{t-p}.
\]
Lemma~\ref{lem:canonical-path}, applied with $d=p$, gives $q\mid p$.
\end{proof}

\section{Vertical separation}

\begin{lemma}[vertical bit exclusion]
\label{lem:vertical-exclusion}
If $2\le n<n'$, $l,l'\ge1$, and $m\le n'$, then
\[
 \operatorname{bit}_{n,l}\notin\Hclass{m}{n'}{l'}.
\]
\end{lemma}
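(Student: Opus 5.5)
We argue by trace sparsity in the target row, evaluated along the source row's canonical sequence. Since $m\le n'$ and $n'\ge2$, the pair $(m,n')$ lies in the trace band, so Theorem~\ref{thm:trace} applies to every unary finite-range member of $\Hclass{m}{n'}{l'}$. Suppose, for contradiction, that $\operatorname{bit}_{n,l}$ belongs to that class, and fix one derivation $\delta$ of it.

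The difficulty is that $\operatorname{bit}_{n,l}$ changes value only at the thresholds $\varphi_{n,l}^{[d]}(0)$. Hence $\Delta_{\operatorname{bit}_{n,l}}(N)$ is only of order $D_{n,l}(N)$, which is $O(\log\log N)$ and so does not contradict the $N^{o(1)}$ bound. We therefore use the sharper form \eqref{eq:weak-finite-range-trace}, whose exponent is measured in target depth $D_{n',l'}$.

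Put $y_s=\varphi_{n',l'}^{[s]}(0)$ and $c_s=D_{n,l}(y_s)$, as in Lemma~\ref{lem:vertical-domination}. By the threshold law, the number of changes of $D_{n,l}\bmod 2$ on $[0,y_s]$ is exactly $c_s$, since each depth value $0,\dots,c_s$ is attained on an interval. Thus
\[
\Delta_{\operatorname{bit}_{n,l}}(y_s)\ge c_s-1 .
\]
On the other hand, Theorem~\ref{thm:trace} with $N=y_s$ gives
\[
\Delta(y_s)\le\exp\bigl(C(1+D_{n',l'}(y_s+C))^r\bigr).
\]
Since $D_{n',l'}(y_s)=s$ and a fixed shift $C$ raises the depth by at most a bounded amount, we get $D_{n',l'}(y_s+C)\le s+O(1)$. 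The upper bound is therefore $\exp(C'(s+1)^r)$. Lemma~\ref{lem:vertical-domination} says $c_s>\exp(C''s^r)$ for every fixed $C'',r$ and all large $s$, which contradicts the upper bound.

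The one step needing care is the depth shift $D_{n',l'}(y_s+C)\le s+1$ for large $s$. It follows from $y_s+C\le\varphi_{n',l'}(y_s)=y_{s+1}$, which holds once $y_s\ge C$ because $\varphi_{n',l'}(x)\ge x+2$. The constants $C,r$ in Theorem~\ref{thm:trace} depend on $\delta$, but they are fixed before $s\to\infty$, so Lemma~\ref{lem:vertical-domination} applies with those constants.
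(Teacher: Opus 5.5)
Your argument is correct and is essentially the paper's proof: you evaluate the target-depth trace bound \eqref{eq:weak-finite-range-trace} at $y_s=\varphi_{n',l'}^{[s]}(0)$, where $\Delta_{\operatorname{bit}_{n,l}}(y_s)\ge c_s-1$, and this contradicts the growth of $c_s$ from Lemma~\ref{lem:vertical-domination}. One small slip: $\varphi_{n',l'}(x)\ge x+2$ alone does not give $y_s+C\le y_{s+1}$ when $C>2$, so use $\varphi_{n',l'}(x)\ge e_2(x)=x^2+2\ge 2x$ instead (the paper uses $\varphi_{n',l'}\ge e_3$, hence $\varphi_{n',l'}(x)-x\to\infty$).
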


\begin{proof}
Use the notation of Lemma~\ref{lem:vertical-domination}.  For every
$0\le d<c_s$, the threshold law gives
\[
 D_{n,l}(\alpha^{[d]}(0))=d,
 \qquad
 D_{n,l}(\alpha^{[d]}(0)+1)=d+1.
\]
Moreover, $D_{n,l}(y_s)=c_s$ implies
$\alpha^{[d]}(0)<y_s$ for all $d<c_s$.  Thus the least significant
lower-depth bit changes at each of the $c_s$ distinct positions
$\alpha^{[d]}(0)$ below $y_s$, and hence
\[
 \Delta_{\operatorname{bit}_{n,l}}(y_s)\ge c_s.
\]
Because $m\le n'$, the target pair lies in the trace band.  If the bit
belonged to the target class, choose a derivation of it.
Theorem~\ref{thm:trace} would then give constants $C,r$, fixed once
that construction is fixed and therefore independent of $s$, such that
\[
 \Delta_{\operatorname{bit}_{n,l}}(y_s)
 \le
 \exp\!\left(
   C\bigl(1+D_{n',l'}(y_s+C)\bigr)^r
 \right).
\]
Because $n'\ge3$, the generator comparison gives
$\beta(x)\ge e_3(x)$ and hence $\beta(x)-x\to\infty$.  Therefore, for all
sufficiently large $s$, one has $y_s+C\le y_{s+1}$ and hence
$D_{n',l'}(y_s+C)\le s+1$.  The resulting upper bound is
$\exp(C'(s+2)^r)$ for a fixed $C'$, contradicting
Lemma~\ref{lem:vertical-domination}, which gives
$c_s>\exp(C'(s+2)^r)$ eventually.
\end{proof}

\section{Classification for \texorpdfstring{$n,n'\ge2$}{n,n' >= 2}}
\label{sec:classification-proof}

\begin{theorem}[complete classification for $n,n'\ge2$]
\label{thm:global-classification}
Let $a,b\in\Nat$, $n,n'\ge2$, and $l,l'\ge1$.
\begin{enumerate}
\item If $n>n'$, then
\[
 \Hclass{a}{n}{l}\subseteq\Hclass{b}{n'}{l'}
 \quad\Longleftrightarrow\quad a\le b.
\]
\item If $n=n'$, then
\[
 \Hclass{a}{n}{l}\subseteq\Hclass{b}{n}{l'}
 \quad\Longleftrightarrow\quad
 a\le b\ \text{and}\ (b\ge n\ \text{or}\ l'\mid l).
\]
\item If $n<n'$, then
\[
 \Hclass{a}{n}{l}\subseteq\Hclass{b}{n'}{l'}
 \quad\Longleftrightarrow\quad
 a\le b\ \text{and}\ b\ge n'+1.
\]
\end{enumerate}
\end{theorem}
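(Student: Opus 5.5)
The proof will assemble earlier lemmas in a fixed pattern. In every case, necessity of $a\le b$ comes from Lemma~\ref{lem:base-separation}. For sufficiency we always go through the intermediate class $\Hclass{b}{n}{l}$, which contains $\Hclass{a}{n}{l}$ by \eqref{eq:basis-monotonicity}. It then suffices to show that every step recursion along $\rho_{n,l}$ whose data lie in the target $\mathcal A=\Hclass{b}{n'}{l'}$ stays in $\mathcal A$. Once that is known, induction over a construction of $\Hclass{b}{n}{l}$ gives the inclusion. The closure property itself will come from Lemma~\ref{lem:depth-transfer} with source descent $\sigma=\rho_{n,l}$. That lemma needs two things: $\sigma\in\mathcal A$, and an internal $r$ satisfying $D_{\rho_{n',l'}}(r(y))\ge D_{n,l}(y)$.

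For (1), $n>n'$, Lemma~\ref{lem:higher-step-function-internal} places $\rho_{n,l}$ in $\Hclass{b}{n'}{l'}$. The padding $r$ is supplied by Lemma~\ref{lem:vertical-padding}, applied with the roles $n'<n$. This settles (1).

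For (2), $n=n'$, there are three cases.
\begin{itemize}
\item If $b\ge n$, Lemma~\ref{lem:horizontal-collapse} gives $\Hclass{b}{n}{l}=\Hclass{b}{n}{l'}$, which is enough.
\item If $b<n$ and $l'\mid l$, write $l=kl'$. Then $\sigma=\rho_{n,l'}^{[k]}$ is internal by Lemma~\ref{lem:step-function-arithmetic}. The identity map serves as padding, since Lemma~\ref{lem:descent-comparison} gives $D_{n,l}\le D_{n,l'}$.
\item If $b<n$ and $l'\nmid l$, the inclusion fails. Lemma~\ref{lem:internal-controls} puts $\rho_{n,l}$ in $\Hclass{a}{n}{l}$, while Lemma~\ref{lem:horizontal-valuation}, applied at basis $b<n$, excludes it from $\Hclass{b}{n}{l'}$.
\end{itemize}

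For (3), $n<n'$, sufficiency when $b\ge n'+1$ is immediate: Theorem~\ref{thm:saturation} gives $\Hclass{b}{n'}{l'}=\Eclass{b}$, and Lemma~\ref{lem:calibration} gives $\Hclass{a}{n}{l}\subseteq\Eclass{a}\subseteq\Eclass{b}$. For necessity, suppose $b\le n'$. Then $\operatorname{bit}_{n,l}$ lies in $\Hclass{a}{n}{l}$ by Lemma~\ref{lem:internal-controls}, but not in $\Hclass{b}{n'}{l'}$ by Lemma~\ref{lem:vertical-exclusion}, so the inclusion fails. This covers every $b\le n'$ regardless of $a$, which gives the stated equivalence.

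The main obstacle is the bookkeeping in the transfer step, not any new estimate. Lemma~\ref{lem:depth-transfer} requires the base, transition and bound of the source recursion to lie in $\mathcal A$. These are available from the induction hypothesis on the construction of $\Hclass{b}{n}{l}$, so the induction must be run over that construction with target $\mathcal A$ fixed. We also have to confirm that the required auxiliary functions are built in $\mathcal A$ without extra basis strength, at level $m=0$ as well. These are $\sigma$ (via Lemma~\ref{lem:higher-step-function-internal} or via composition) and $r$. The cited lemmas state explicitly that their constructions work at $m=0$.
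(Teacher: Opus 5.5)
Your proposal is correct and follows the paper's proof essentially step for step: $a\le b$ from Lemma~\ref{lem:base-separation}; the case $n>n'$ via Lemmas~\ref{lem:higher-step-function-internal}, \ref{lem:vertical-padding} and \ref{lem:depth-transfer}; the equal-row case via depth transfer when $l'\mid l$, horizontal collapse when $b\ge n$, and Lemma~\ref{lem:horizontal-valuation} for necessity; and the case $n<n'$ via calibration plus saturation, with Lemma~\ref{lem:vertical-exclusion} for necessity. Your explicit induction over a construction of $\Hclass{b}{n}{l}$, with the target class held fixed, only spells out the bookkeeping that the paper compresses into the phrase ``transfer every source recursion to the target.''
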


\begin{proof}
Lemma~\ref{lem:base-separation} gives necessity of $a\le b$ in every
case.  If $n>n'$,
Lemmas~\ref{lem:higher-step-function-internal},~\ref{lem:vertical-padding}, and~\ref{lem:depth-transfer} transfer every source recursion to the target.
This proves the first equivalence.

At equal row, divisibility $l'\mid l$ makes the source descent a fixed
iterate of the target descent, so Lemma~\ref{lem:depth-transfer} gives
inclusion.  If $b\ge n$, Lemma~\ref{lem:horizontal-collapse} gives
inclusion for all iteration counts.  Conversely,
when $b\le n-1$, the source contains $\rho_{n,l}$ and Lemma~\ref{lem:horizontal-valuation} forces $l'\mid l$.  This proves the
second equivalence.

If $n<n'$ and $a\le b$, $b\ge n'+1$, ordinary calibration and
Theorem~\ref{thm:saturation} give
\[
 \Hclass{a}{n}{l}\subseteq\Eclass{a}
 \subseteq\Eclass{b}=\Hclass{b}{n'}{l'}.
\]
Conversely, suppose inclusion holds.  Lemma~\ref{lem:base-separation} gives $a\le b$.  If
$b\le n'$, then the source contains
$\operatorname{bit}_{n,l}$ by Lemma~\ref{lem:internal-controls}, whereas
Lemma~\ref{lem:vertical-exclusion} excludes that function from the target.
Hence $b\ge n'+1$.
\end{proof}

\begin{corollary}[traversal order below collapse]
\label{cor:traversal-order}
Fix $n\ge2$ and $m<n$.  The stride sector
$\{\Hclass{m}{n}{l}:l\ge1\}$ is anti-isomorphic to
$(\Nat_{>0},\mid)$.  Consequently it contains infinite strict descending
chains, infinite antichains, and an order-embedded copy of every finite
partial order.
\end{corollary}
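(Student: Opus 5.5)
The plan is to read the corollary directly off the equal-row case of Theorem~\ref{thm:global-classification}. Take $a=b=m<n$ and $n=n'$. Condition $a\le b$ holds and $b\ge n$ fails, so
\[
 \Hclass{m}{n}{l}\subseteq\Hclass{m}{n}{l'}\quad\Longleftrightarrow\quad l'\mid l .
\]
Consequently the map $l\mapsto\Hclass{m}{n}{l}$ is injective: equality of classes forces $l\mid l'$ and $l'\mid l$, hence $l=l'$ on positive integers. It is also order-reversing in both directions, from $(\Nat_{>0},\mid)$ onto the stride sector ordered by inclusion. This is exactly an anti-isomorphism. No new construction is needed, since injectivity and both directions of the order correspondence come from the single displayed equivalence.

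For the consequences, I transport standard facts about divisibility through this anti-isomorphism.

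\emph{Descending chain.} The sequence $l_k=2^k$ gives $2^k\mid 2^{k+1}$ with the converse failing. Hence
\[
 \Hclass{m}{n}{2^{k+1}}\subsetneq\Hclass{m}{n}{2^{k}},
\]
an infinite strictly descending chain.

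\emph{Antichain.} Distinct primes are pairwise incomparable under divisibility, so the classes $\Hclass{m}{n}{p}$, $p$ prime, form an infinite antichain.

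\emph{Finite partial orders.} Given a finite poset $(X,\le)$, choose distinct primes $p_x$ for $x\in X$ and set $N(x)=\prod_{y\le x}p_y$. Then $x\le x'$ holds iff $N(x)\mid N(x')$. The forward direction uses transitivity. For the converse, $p_x\mid N(x')$ forces $x\le x'$. This embeds $(X,\le)$ into $(\Nat_{>0},\mid)$. Composing with the anti-isomorphism embeds the dual poset; applying the same construction to the dual of $X$ instead gives a copy of $X$ itself.

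I expect no step to be a real obstacle. All the substance sits in Theorem~\ref{thm:global-classification}, specifically the divisibility necessity coming from Lemma~\ref{lem:horizontal-valuation}. The only points to state carefully are that the hypothesis $m<n$ is exactly what rules out the collapse branch $b\ge n$, and that an order-reversing bijection preserves strictness, which gives the strict chain.
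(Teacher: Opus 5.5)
Your proposal is correct and follows essentially the same route as the paper. Both read reverse divisibility off the equal-row case of Theorem~\ref{thm:global-classification}, then use powers of $2$ for the chain, distinct primes for the antichain, and square-free prime products for finite posets. Your down-set map $N(x)=\prod_{y\le x}p_y$, applied to the dual poset, is just the explicit form of the paper's embedding of every finite poset into a (self-dual) Boolean lattice of square-free numbers.
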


\begin{proof}
Theorem~\ref{thm:global-classification} gives reverse divisibility.  Powers
of $2$ give a descending chain, distinct primes an antichain, and square-free
products of finitely many primes realize Boolean lattices, into which every
finite poset embeds.
\end{proof}

\begin{corollary}[asymptotic depth is not a complete invariant]
\label{cor:depth-not-complete}
Fix $n\ge2$ and $m<n$.  For any $l,l'\ge1$ the descent-depth functions
$D_{n,l}$ and $D_{n,l'}$ are $\Theta$-equivalent.  Nevertheless,
\[
 \Hclass{m}{n}{l}=\Hclass{m}{n}{l'}
 \quad\Longleftrightarrow\quad l=l',
\]
and the two classes are incomparable whenever neither $l\mid l'$ nor
$l'\mid l$.
\end{corollary}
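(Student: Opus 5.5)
The claim has three parts: $\Theta$-equivalence of the depths, the equality criterion, and incomparability. I would prove them in that order, taking the incomparability and equality statements directly from Theorem~\ref{thm:global-classification}, part~2, specialised to $a=b=m<n$.

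\textbf{$\Theta$-equivalence.} Put $L=ll'$. Lemma~\ref{lem:step-function-arithmetic} gives
\[
 D_{n,L}(y)=\left\lceil\frac{D_{n,l}(y)}{l'}\right\rceil=\left\lceil\frac{D_{n,l'}(y)}{l}\right\rceil .
\]
Since $\lceil x/k\rceil\le x\le k\lceil x/k\rceil$ for integers $x\ge0$, we get
\[
 D_{n,L}\le D_{n,l}\le l'D_{n,L},
\]
and symmetrically for $l'$. Hence both $D_{n,l}$ and $D_{n,l'}$ lie within constant factors of $D_{n,L}$, and so of each other. More directly, $D_{n,l}=\lceil D_{n,1}/l\rceil$, so every $D_{n,l}$ is $\Theta(D_{n,1})$. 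The only care needed is at $y=0$, where all depths vanish, so the bounds hold for every $y$.

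\textbf{Equality criterion.} Because $m<n$, the condition ``$b\ge n$'' in Theorem~\ref{thm:global-classification}(2) fails, so for $a=b=m$ inclusion reduces exactly to divisibility:
\[
 \Hclass{m}{n}{l}\subseteq\Hclass{m}{n}{l'}\iff l'\mid l .
\]
Equality therefore means $l'\mid l$ and $l\mid l'$. For positive integers this forces $l=l'$, and the converse direction is trivial.

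\textbf{Incomparability.} If neither $l\mid l'$ nor $l'\mid l$, then both inclusions fail by the same criterion. Concretely, $\rho_{n,l}$ lies outside $\Hclass{m}{n}{l'}$ and $\rho_{n,l'}$ lies outside $\Hclass{m}{n}{l}$, by Lemma~\ref{lem:horizontal-valuation}.

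No real obstacle remains, since all the substance sits in the classification theorem. The one point to state carefully is that $\Theta$-equivalence is meant pointwise up to multiplicative constants, which the ceiling identity delivers uniformly in $y$.
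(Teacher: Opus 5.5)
Your proposal is correct and follows essentially the paper's own proof. The $\Theta$-equivalence comes from the ceiling identity $D_{n,l}=\lceil D_{n,1}/l\rceil$ of Lemma~\ref{lem:step-function-arithmetic}. The class statements come from reverse divisibility at $a=b=m<n$; the paper cites Corollary~\ref{cor:traversal-order}, which is itself just Theorem~\ref{thm:global-classification}(2) specialized exactly as you do.
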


\begin{proof}
Lemma~\ref{lem:step-function-arithmetic} gives
$D_{n,l}(y)=\lceil D_{n,1}(y)/l\rceil$, so all fixed-stride depths differ only
by constant factors.  The class statements are immediate from
Corollary~\ref{cor:traversal-order}.
\end{proof}

\begin{remark}
Corollary~\ref{cor:depth-not-complete} is the conceptual content of the stride
parameter: growth scale and asymptotic recursion depth can be held fixed while
the exact alignment of visited canonical layers still changes expressive
strength.
\end{remark}

\begin{corollary}[row-zero targets and comparisons with $n\ge2$]
\label{cor:row-zero-comparisons}
For every $n\in\Nat$,
\[
 \Hclass{a}{n}{l}\subseteq\Hclass{b}{0}{l'}
 \quad\Longleftrightarrow\quad a\le b.
\]
For every $n'\ge2$,
\[
 \Hclass{a}{0}{l}\subseteq\Hclass{b}{n'}{l'}
 \quad\Longleftrightarrow\quad
 a\le b\ \text{and}\ b\ge n'+1.
\]
\end{corollary}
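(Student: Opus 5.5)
The plan is to reduce everything to the ordinary hierarchy through Proposition~\ref{prop:row-zero}, which gives $\Hclass{b}{0}{l'}=\Eclass{b}$ for every $l'$, and then to reuse the results already established for rows $n\ge2$.

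\emph{First equivalence.} Sufficiency is a direct chain: ordinary calibration (Lemma~\ref{lem:calibration}), basis monotonicity \eqref{eq:basis-monotonicity}, and Proposition~\ref{prop:row-zero} give
\[
 \Hclass{a}{n}{l}\subseteq\Eclass{a}\subseteq\Eclass{b}=\Hclass{b}{0}{l'}
\]
for every $n$. Necessity of $a\le b$ is Lemma~\ref{lem:base-separation}. That lemma is stated for arbitrary step-function parameters, and its proof uses only Lemma~\ref{lem:majorants}, which holds for any descent. So it applies with target row $0$ and with source row $0$ or $1$ as well.

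\emph{Second equivalence, sufficiency.} For $n'\ge2$, assume $a\le b$ and $b\ge n'+1$. Proposition~\ref{prop:row-zero}, basis monotonicity, and Theorem~\ref{thm:saturation} give
\[
 \Hclass{a}{0}{l}=\Eclass{a}\subseteq\Eclass{b}=\Hclass{b}{n'}{l'}.
\]

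\emph{Second equivalence, necessity.} Again $a\le b$ comes from Lemma~\ref{lem:base-separation}. To force $b\ge n'+1$, suppose instead $b\le n'$, so that $(b,n')$ lies in the trace band. The source $\Hclass{a}{0}{l}=\Eclass{a}\supseteq\Eclass{0}$ contains the parity function $p(t)=t\bmod 2$ built in the proof of Theorem~\ref{thm:saturation}. This $p$ is finite-range and satisfies $\Delta_p(N)=N$. Corollary~\ref{cor:finite-range-sparsity}, applied to the target $\Hclass{b}{n'}{l'}$, would instead give $\Delta_p(N)=N^{o(1)}$, a contradiction. Hence $p$ is not in the target, and $b\ge n'+1$.

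There is essentially no obstacle here. The only point to check is that Lemma~\ref{lem:base-separation} and the trace-band corollary are used within their stated hypotheses. The corollary needs only $n'\ge2$ and $b\le n'$ on the target side, so the row of the source is irrelevant.
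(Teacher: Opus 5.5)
Your proposal is correct and follows the paper's proof essentially step for step. You use ordinary calibration, Proposition~\ref{prop:row-zero} and Lemma~\ref{lem:base-separation} for the row-zero target, and Theorem~\ref{thm:saturation} for sufficiency in the second equivalence; for necessity of $b\ge n'+1$ you exclude parity from the trace-band target via Corollary~\ref{cor:finite-range-sparsity}, exactly as the paper does.
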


\begin{proof}
The target-row-zero equivalence follows from ordinary calibration and Lemma~\ref{lem:base-separation}.  For the positive direction of the second equivalence, use
$\Hclass{a}{0}{l}=\Eclass{a}\subseteq\Eclass{b}$ and the saturation identity
$\Eclass{b}=\Hclass{b}{n'}{l'}$ when $b\ge n'+1$.

Conversely, Lemma~\ref{lem:base-separation} gives $a\le b$.  The parity function belongs to
$\Eclass{0}\subseteq\Eclass{a}=\Hclass{a}{0}{l}$.  If $b\le n'$, then $(b,n')$ lies in the trace band, so
Corollary~\ref{cor:finite-range-sparsity} excludes parity from
$\Hclass{b}{n'}{l'}$.  Hence inclusion forces $b\ge n'+1$.
\end{proof}

\section{The doubling row and the base-two bridge}
\label{sec:doubling-row}

The trace-band argument used for $n\ge2$ does not directly classify the
row generated by $g_1(x)=2x+1$.  We first settle its basis-zero stride sector
by a dyadic selected-chain argument.  The result is exact reverse divisibility,
so the exceptional row retains the same arithmetic order at its weakest basis
even though its canonical gaps are only exponential.

At initial basis $2$, doubling depth interacts unusually well with
$e_2(x)=x^2+2$.  The algebraic part of the section follows the chain
\[
 \text{normalized squaring}\ \Longrightarrow\ \text{comparison}
 \ \Longrightarrow\ \rho_{2,1}\text{ internal}
 \ \Longrightarrow\ \Hclass{2}{2}{*}\subsetneq\Hclass{2}{1}{*}.
\]
At basis $m\ge3$, an exact linear-depth ladder instead yields
\[
 \Hclass{m}{1}{*}=\Eclass{m}.
\]
We then turn separately to the binary-time size of the exceptional class
$\Hclass{2}{1}{*}$.

\subsection{Basis-zero stride classification}
\label{subsec:basis-zero-doubling}

Fix a target stride $q\ge1$, put $B=2^q$, and write
\[
 \rho(y)=\rho_{1,q}(y)=\left\lfloor\frac yB\right\rfloor,
 \qquad Y_t=2^t-1.
\]
For $t\ge q$ one has
\begin{equation}
\label{eq:canonical-dyadic-shift-main}
 \rho(Y_t)=Y_{t-q}.
\end{equation}

\begin{lemma}[basis-zero internal-value envelope]
\label{lem:basis-zero-internal-envelope}
Let $\delta$ be a fixed derivation over $B_0$ using $\rho_{1,q}$.  There is a constant $c_\delta$ such that, whenever every
input coordinate is at most $N$,
\[
 v\le N+c_\delta
 \qquad
 \bigl(v\in\operatorname{Val}_{\delta}(\bar x)\bigr).
\]
In particular, every recursion address occurring in the evaluation is at most
$N+c_\delta$.
\end{lemma}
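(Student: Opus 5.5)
We argue by induction on construction rank, following the pattern of Lemma~\ref{lem:derivation-internal-envelope} but with additive constants in place of iterates of $G$. The relevant majorant class is $\operatorname{Cl}_\circ(B_0)$, whose members by Lemma~\ref{lem:majorants} are constants or shifted projections $x_i+c$. Every such term is bounded by $M+c$ with $M=\max_i x_i$. We prove the stronger statement that for each derivation $\delta$ there is $c_\delta$ such that every value in $\operatorname{Val}_\delta(\bar x)$, including the output, is at most $N+c_\delta$ whenever all inputs are at most $N$. The address statement then follows, since addresses are values.

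\emph{Initial and composition nodes.} For an initial node $\operatorname{Init}(R)$ with $R\in B_0$, the value set consists of the inputs and $R(\bar x)$, which is $0$, $x_i$, or $x_i+1$; so $c=1$ works. For $\operatorname{Comp}(\eta;\delta_1,\ldots,\delta_r)$, the argument evaluations have all values at most $N+\max_i c_{\delta_i}$. Their outputs are therefore at most $N'=N+\max_ic_{\delta_i}$. Applying the hypothesis for $\eta$ with cap $N'$ bounds every head value by $N'+c_\eta$, so $c_\delta=c_\eta+\max_ic_{\delta_i}$ works.

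\emph{Step-recursion nodes.} Let $\delta=\operatorname{SR}_{\rho_{1,q}}(\gamma,\eta,\beta)$ be evaluated at $(\bar p,z)$ with all coordinates at most $N$. The schedule points $z_j$ are at most $z\le N$. The states satisfy $s_j\le\sem{\beta}(\bar p,z_j)$ by admissibility, and the hypothesis for $\beta$ (applied to its output) gives $s_j\le N+c_\beta$. The base evaluation of $\gamma$ has inputs at most $N$, so its values are at most $N+c_\gamma$. Each transition evaluation of $\eta$ has inputs $(\bar p,z_j,s_j)$, all at most $N+c_\beta$, so its values are at most $N+c_\beta+c_\eta$. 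Hence $c_\delta=\max(c_\gamma,c_\beta+c_\eta)$ bounds everything. This bound does not depend on the depth $d$, because the previous state is controlled by the declared bound rather than by iterating $\eta$'s envelope.

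\emph{Main obstacle.} The one point needing care is that the bound derivation $\beta$ is not evaluated operationally, yet we still need its output bounded. This is harmless: the induction hypothesis applies to any derivation and any input, so it bounds $\sem{\beta}$ at $(\bar p,z_j)$ regardless of whether $\beta$ is operationally run. Equivalently, one may replace $\beta$ by its monotone composition majorant from Lemma~\ref{lem:majorants}, which at $m=0$ is a constant or shifted projection and so directly gives $s_j\le N+c$.
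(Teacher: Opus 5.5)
Your proof is correct and is essentially the paper's argument: induction on the derivation, with the key step being that recursive states are controlled by the declared bound, so the envelope is independent of the recursion depth $d$. The only difference is that you bound the states by applying the induction hypothesis to the bound derivation $\beta$, whereas the paper uses the earlier basis-zero composition majorant from Lemma~\ref{lem:majorants}; you also note that this is an equivalent alternative.
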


\begin{proof}
Induct on the fixed derivation.  For an initial function the claim is
immediate.  At basis zero, Lemma~\ref{lem:majorants} says that every
composition majorant is a constant or a shifted projection, hence is at most
$N+c$ on inputs bounded by $N$.

For a composition, apply the induction hypotheses to the argument
derivations.  Their outputs, and all values created while computing them, are
at most $N+c$.  Apply the induction hypothesis for the head derivation with
input cap $N+c$ and absorb the finitely many constants.

Suppose that $\delta$ is a bounded step-recursion derivation.  Every outer
address is at most the recursion input and hence at most $N$.  The declared
bound has, by Lemma~\ref{lem:majorants}, an earlier basis-zero composition
majorant, so every recursive state is at most $N+c_0$ for a fixed $c_0$.
The base derivation is evaluated on inputs bounded by $N$, and every transition
derivation is evaluated on parameters and an address bounded by $N$ together
with a previous state bounded by $N+c_0$.  Applying the induction hypotheses to the base and transition derivations, and
then taking the maximum of the finitely many resulting constants, gives the
required $N+c_\delta$ envelope uniformly over all recursion stages.  The bound
derivation is used only through its earlier composition majorant above; it is
not part of the operational evaluation.
\end{proof}

\begin{lemma}[dyadic selected-chain extraction]
\label{lem:dyadic-chain-main}
Let $\delta$ be a fixed unary derivation over $B_0$ using $\rho_{1,q}$.
There are a polynomial $P_\delta$ and a finite set of numerals $C_\delta$
such that, for every $t$, the output occurrence in the evaluation of $\delta$
at $Y_t$ admits a dependency chain $u_0,\ldots,u_s$ with
$u_s=\sem{\delta}(Y_t)$, where $u_0$ is either the input occurrence $Y_t$ or
a numeral in $C_\delta$, every edge has one of the forms
\[
 u_{i+1}=u_i,\qquad u_{i+1}=u_i+1,\qquad
 u_{i+1}=\left\lfloor\frac{u_i}{2^q}\right\rfloor,
\]
and $s\le P_\delta(t)$.
\end{lemma}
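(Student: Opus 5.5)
The plan is to specialize the selected-chain construction of Lemma~\ref{lem:selected-chain-interface} to basis zero and row one. That construction builds, by induction on construction rank, a selected dependency chain in any concrete evaluation. At basis zero the only initial functions are zero, successor and projections. So the available step types of Definition~\ref{def:selected-evaluation-chain} reduce to equality steps (projections), successor steps $u\mapsto u+1$, numeral origins (zero occurrences or fixed numeral subderivations), and descent steps $u\mapsto\rho_{1,q}(u)=\lfloor u/2^q\rfloor$. No addition clause and no unary generator occur. This gives exactly the three edge forms displayed in the statement.

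\textbf{Step 1 (origins).} Since $\delta$ is unary, every formal input occurrence carries the value $Y_t$. Every numeral origin arises from one of the finitely many zero leaves, possibly followed by fixed numeral subderivations treated as origins, so the numerals form a finite set $C_\delta$ depending only on $\delta$. I will take $C_\delta=\mathcal N_\delta$ from Lemma~\ref{lem:selected-chain-interface}, or simply $\{0\}$ if numerals are built step by step through successor edges.

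\textbf{Step 2 (length bound).} The inductive construction in Lemma~\ref{lem:selected-chain-interface} gives the recursive length estimates
\[
 L_{\mathrm{Comp}}(K)\le L_\eta(K)+\max_i L_{\delta_i}(K),\qquad
 L_{\mathrm{SR}}(K)\le L_\gamma(K)+K L_\eta(K)+K+1,
\]
whenever every recursion address in the evaluation has depth at most $K$. Hence $L_\delta(K)\le A_\delta(K+1)^{r_\delta}$ for fixed constants. That lemma's final clause used Lemma~\ref{lem:derivation-internal-envelope}, which is stated only in the trace band, so here I substitute Lemma~\ref{lem:basis-zero-internal-envelope}. At input $Y_t$, every value, and in particular every address, is at most $Y_t+c_\delta=2^t-1+c_\delta$. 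Its depth is therefore
\[
 D_{1,q}(a)=\left\lceil\frac{\log_2(a+1)}q\right\rceil\le \frac{t+c'}q+1
\]
for a constant $c'$. Hence $K=O(t+1)$, and $P_\delta(t):=L_\delta(C(t+1))$ is a polynomial in $t$.

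\textbf{Step 3 (the hypotheses of the chain lemma in this setting).} The structural part of Lemma~\ref{lem:selected-chain-interface}, namely existence and the recursive length bound, never used the trace-band hypothesis. That hypothesis entered only through the basis-step envelope \eqref{eq:selected-basis-envelope} with $F$ and through the address-depth supply. The envelope is irrelevant here because successor is the only basis step, and I have just replaced the address-depth supply. So I will re-run the induction verbatim with $\rho_{1,q}$ in place of $\rho_{n,q}$. At a step-recursion node, an address origin prepends the schedule segment $z_d,z_{d-1},\dots,z_j$, which consists of descent edges $z_{i-1}=\lfloor z_i/2^q\rfloor$. A previous-state origin concatenates the chain already built for the preceding stage, and a parameter origin traces back to the outer input occurrence.

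The main obstacle is bookkeeping rather than substance. I must confirm that the earlier proof's existence argument is genuinely independent of the row-$n\ge2$ setting, in particular that the descent schedule edges are exactly $\rho_{1,q}$-steps, and that the depth bound on addresses is linear in $t$ uniformly over all nested invocations. The uniform envelope of Lemma~\ref{lem:basis-zero-internal-envelope} is what delivers that linear bound.
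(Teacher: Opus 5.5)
Your proposal is correct and is essentially the paper's proof. Both build the chain by induction on the derivation, splicing through composition heads, parameters, the descent schedule and previous states, with the recurrence $L_\delta(K)\le L_\gamma(K)+KL_\eta(K)+O(K)$ and Lemma~\ref{lem:basis-zero-internal-envelope} giving $K=O_\delta(t+1)$. The only difference is presentational: you reuse the induction of Lemma~\ref{lem:selected-chain-interface}, whose existence and length parts indeed never use the trace-band hypothesis, while the paper restates that induction directly.
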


\begin{proof}
Induct on a finite derivation of $\delta$.  Initial functions are immediate.
At composition, follow one selected ancestry through the head evaluation and,
when it reaches a head argument, splice in the selected chain of the
corresponding argument evaluation.  At a step-recursion node, follow the
selected ancestry through the last transition.  If it reaches a parameter or
a lower-address occurrence, splice in the corresponding parameter chain or
the finite descent segment; if it reaches the previous-state occurrence,
repeat one stage lower.  The process terminates at a parameter, an address, or
the base evaluation.

For the length bound, Lemma~\ref{lem:basis-zero-internal-envelope} bounds
every value and recursion address created at input $Y_t$ by
$Y_t+c_\delta$.  Hence
$D_{1,q}(a)=\lceil\log_2(a+1)/q\rceil=O_\delta(t+1)$ uniformly for every
address $a$ occurring in the evaluation, so every descent schedule has
$O_\delta(t+1)$ stages.  If $L_\eta(K)$ bounds a selected chain in a transition
subderivation when all descent depths are at most $K$, a recursion node obeys
\[
 L_\delta(K)\le L_\gamma(K)+K L_\eta(K)+O(K).
\]
Composition takes only finite sums and maxima.  Induction on the fixed
derivation gives the required polynomial.
\end{proof}

For a polynomial $P$ put
\[
 \mathcal Z_{s,P}(t)=\{u\in\Nat:|u-Y_s|\le P(t)\}.
\]
Adjacent canonical values satisfy $Y_{r+1}-Y_r=2^r$, so a polynomial-width
neighborhood of $Y_{t-d}$ contains no other canonical value for all large
$t$.  Moreover, if $u\in\mathcal Z_{s,P}(t)$, then successor keeps $u$ in a
polynomially wider neighborhood of $Y_s$, while for $s\ge q$,
\[
 \left\lfloor\frac u{2^q}\right\rfloor
 \in\mathcal Z_{s-q,Q}(t)
\]
for a polynomial $Q$ depending only on $P$ and $q$.  A selected chain that
starts at a fixed numeral, or whose propagated layer index falls below $q$,
can reach only polynomially bounded values after polynomially many remaining
successor steps and therefore cannot end at $Y_{t-d}$ for all large $t$.

\begin{lemma}[dyadic divisibility obstruction]
\label{lem:dyadic-divisibility-main}
Let $f\in\Hclass{0}{1}{q}$ be unary.  If a fixed $d\ge0$ satisfies
\[
 f(Y_t)=Y_{t-d}
\]
for all sufficiently large $t$, then $q\mid d$.
\end{lemma}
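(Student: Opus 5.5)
The proof runs parallel to Lemma~\ref{lem:canonical-path}, with the dyadic zones $\mathcal Z_{s,P}(t)=\{u:|u-Y_s|\le P(t)\}$ replacing the growth-based zones.

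\emph{Step 1: extract a chain.} Fix a unary derivation $\delta$ of $f$ over $B_0$ using $\rho_{1,q}$. Lemma~\ref{lem:dyadic-chain-main} gives, at input $Y_t$, a chain $u_0,\ldots,u_s$ ending at $f(Y_t)=Y_{t-d}$. Its length is $s\le P_\delta(t)$, its steps are equality, successor, or $u\mapsto\lfloor u/2^q\rfloor$, and its origin is either $Y_t$ or a numeral in the finite set $C_\delta$.

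\emph{Step 2: exclude a numeral origin.} Suppose the origin is a numeral. Dropping the descent steps only increases values, so the final value is at most $\max C_\delta+P_\delta(t)$. This is polynomial in $t$, whereas $Y_{t-d}=2^{t-d}-1$ is exponential. Hence for large $t$ the origin must be the input occurrence $Y_t\in\mathcal Z_{t,0}(t)$.

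\emph{Step 3: propagate the zone index.} I will carry along the chain the invariant $u_h\in\mathcal Z_{t-qj_h,P_h}(t)$, where $j_h$ counts the descent steps so far. Equality steps leave the width unchanged, and a successor step increases it by $1$. For a descent step at index $s\ge q$, write $u=Y_s+e$ with $|e|\le p$. Then
\[
\lfloor u/2^q\rfloor=\lfloor (2^{s}-1+e)/2^q\rfloor=2^{s-q}-1+\lfloor(2^q-1+e)/2^q\rfloor,
\]
and the last term has absolute value at most $p/2^q+1\le p+1$. So a descent moves the index from $s$ to $s-q$ and increases the width by at most $1$. Over at most $P_\delta(t)$ steps the total width stays below a fixed polynomial $P^\star(t)=2P_\delta(t)+1$.

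\emph{Step 4: the low-index case.} Suppose some descent step occurs when the index is $s<q$. The value at that point is at most $Y_{q-1}+P^\star(t)$. All later steps are successors or nonincreasing descents, so the final value is at most $Y_{q-1}+2P^\star(t)$. This is polynomial in $t$, hence below $Y_{t-d}$ for large $t$, a contradiction. Thus every descent step occurs at index at least $q$, and the invariant holds along the whole chain.

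\emph{Step 5: conclude.} With $j$ descent steps in total, we get $|Y_{t-d}-Y_{t-jq}|\le P^\star(t)$. Distinct canonical values $Y_a\ne Y_b$ with $\min(a,b)$ large differ by at least $2^{\min(a,b)}$. The step I expect to need the most care is ensuring $\min(t-d,t-jq)$ really is large. One bound is immediate: $t-d$ is large since $d$ is fixed. For $t-jq$, note that the invariant forces $t-jq\ge0$. If $t-jq$ were small, we would have $Y_{t-jq}\le Y_{t-d}/2$ for large $t$, and then $|Y_{t-d}-Y_{t-jq}|\ge Y_{t-d}/2$, which is exponential and so exceeds $P^\star(t)$. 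Therefore $Y_{t-d}=Y_{t-jq}$, so $d=jq$ and $q\mid d$.
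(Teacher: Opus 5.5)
Your proposal is correct and follows the paper's argument: you extract a chain via Lemma~\ref{lem:dyadic-chain-main}, exclude numeral origins and low-index descents by polynomial-versus-exponential barriers, propagate the dyadic zone index with a drop of exactly $q$ per descent, and use exponential separation of the canonical values $Y_s$ to force $d=jq$. Your explicit floor computation for the descent step and your check that $t-jq$ cannot be small fill in details that the paper only sketches in the paragraph preceding the lemma.
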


\begin{proof}
Choose the chain from Lemma~\ref{lem:dyadic-chain-main}.  The preceding
barrier excludes a fixed numeral as its origin, so the chain starts at
$Y_t\in\mathcal Z_{t,0}(t)$.  Equality edges preserve the layer index,
successor edges enlarge only the polynomial zone width, and every descent
edge lowers the layer index by exactly $q$ by
\eqref{eq:canonical-dyadic-shift-main}.  If there are $j$ descent edges,
zone propagation gives
\[
 Y_{t-d}\in\mathcal Z_{t-jq,Q}(t)
\]
for a fixed polynomial $Q$.  Exponential separation of adjacent canonical
values forces $t-d=t-jq$ for all sufficiently large $t$, hence $d=jq$.
\end{proof}

\begin{theorem}[complete basis-zero doubling classification]
\label{thm:basis-zero-doubling-main}
For all $p,q\ge1$,
\[
 \boxed{\Hclass{0}{1}{p}\subseteq\Hclass{0}{1}{q}
 \quad\Longleftrightarrow\quad q\mid p.}
\]
\end{theorem}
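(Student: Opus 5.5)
The two directions are handled separately, following the pattern of the equal-row case of Theorem~\ref{thm:global-classification}.

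For sufficiency, suppose $q\mid p$, say $p=kq$. By Lemma~\ref{lem:step-function-arithmetic} the source descent is $\rho_{1,p}=\rho_{1,q}^{[k]}$. It is therefore internal to $\Hclass{0}{1}{q}$ by $k$ fixed compositions of $\rho_{1,q}$, and $\rho_{1,q}$ itself is internal by Lemma~\ref{lem:internal-controls}. We then apply exact-depth transfer, Lemma~\ref{lem:depth-transfer}, with target $\rho=\rho_{1,q}$, source $\sigma=\rho_{1,p}$ and padding $r(y)=y$. The hypothesis $D_{1,q}(y)=kD_{1,p}(y)\ge D_{1,p}(y)$ holds by the ceiling formula. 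The transfer machinery (Lemmas~\ref{lem:depth-retraction}, \ref{lem:exact-depth-point}, \ref{lem:aligned-simulation}) was explicitly checked to use no addition and no unbounded selector, so it is valid at $m=0$. A construction induction on the $\Hclass{0}{1}{p}$ derivation then shows that every source step recursion whose data are already internal is internal, giving the inclusion.

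For necessity, suppose $\Hclass{0}{1}{p}\subseteq\Hclass{0}{1}{q}$. The source contains $\rho_{1,p}(y)=\lfloor y/2^p\rfloor$ by Lemma~\ref{lem:internal-controls}, so $\rho_{1,p}\in\Hclass{0}{1}{q}$. On the canonical dyadic inputs $Y_t=2^t-1$ we have $\rho_{1,p}(Y_t)=\lfloor (2^t-1)/2^p\rfloor=2^{t-p}-1=Y_{t-p}$ for $t\ge p$, which is \eqref{eq:canonical-dyadic-shift-main} with $p$ in place of $q$. Lemma~\ref{lem:dyadic-divisibility-main} applied to $f=\rho_{1,p}$ with $d=p$ then yields $q\mid p$.

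The main obstacle is not in this assembly but in the soundness of Lemma~\ref{lem:dyadic-divisibility-main}, which is assumed here. Within this proof, the delicate point is confirming that the sufficiency direction really avoids anything beyond $B_0$:
\begin{enumerate}
\item the bound $B(\bar x,y)$ in Lemma~\ref{lem:aligned-simulation} must be a basis-zero majorant, namely a constant or a shifted projection, as supplied by Lemma~\ref{lem:majorants};
\item the zero-or-value selector of Lemma~\ref{lem:internal-zero-value-selector} must be available with only a projection bound.
\end{enumerate}
Both hold by the cited lemmas, so the transfer goes through at $m=0$.
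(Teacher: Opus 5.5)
Your proposal is correct and follows the paper's proof essentially verbatim: sufficiency comes from $\rho_{1,p}=\rho_{1,q}^{[k]}$ plus exact-depth transfer (Lemma~\ref{lem:depth-transfer}) with identity padding, and necessity from applying Lemma~\ref{lem:dyadic-divisibility-main} to $\rho_{1,p}$ with $d=p$. One small slip: the identity $D_{1,q}(y)=kD_{1,p}(y)$ is false in general (e.g.\ $q=1$, $p=2$, $y=1$ gives $1\ne 2$), but the correct ceiling formula $D_{1,p}(y)=\lceil D_{1,q}(y)/k\rceil$ still yields the needed inequality $D_{1,p}(y)\le D_{1,q}(y)$.
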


\begin{proof}
If $p=kq$, Lemma~\ref{lem:step-function-arithmetic} gives
$\rho_{1,p}=\rho_{1,q}^{[k]}$.  The source descent is internal to the target
and $D_{1,q}\ge D_{1,p}$, so exact-depth transfer,
Lemma~\ref{lem:depth-transfer}, simulates every source recursion in the
target.

Conversely, suppose the class inclusion holds.  The source contains its
defining descent $\rho_{1,p}$ by Lemma~\ref{lem:internal-controls}; hence the
target contains it as well.  On canonical inputs,
$\rho_{1,p}(Y_t)=Y_{t-p}$.  Lemma~\ref{lem:dyadic-divisibility-main} yields
$q\mid p$.
\end{proof}

\begin{remark}
Theorem~\ref{thm:basis-zero-doubling-main} removes the only unresolved
equal-row stride family.  Thus the stride order in the doubling row is dual
divisibility at $m=0$ and collapses completely for every $m\ge1$.
\end{remark}

\subsection{A logspace interpretation at basis zero}
\label{subsec:basis-zero-logspace}

The preceding divisibility order is not only an internal feature of the
hierarchy.  It already occurs inside a standard complexity class.  Let
$\mathsf{FL}$ denote the class of total numerical functions computed by
deterministic logspace transducers on standard binary inputs and outputs.
Fix $l\ge1$, put $B=2^l$, and write
$\rho=\rho_{1,l}$, so $\rho(y)=\lfloor y/B\rfloor$.

For a root input tuple $\bar x=(x_1,\ldots,x_k)$, use two kinds of symbolic
descriptors,
\[
 \mathbf C(c),\qquad \mathbf S(i,a,d),
\]
with values
\[
 \operatorname{val}_{\bar x}(\mathbf C(c))=c,\qquad
 \operatorname{val}_{\bar x}(\mathbf S(i,a,d))
 =\left\lfloor\frac{x_i}{B^a}\right\rfloor+d.
\]
They are closed under the basis-zero operations.  Successor increments the
constant or offset field, and descent is exact: if
\[
 r_{i,a}=\left\lfloor\frac{x_i}{B^a}\right\rfloor\bmod B,
\]
then
\begin{equation}
\label{eq:descriptor-descent-main}
 \rho(\mathbf C(c))=\mathbf C(\lfloor c/B\rfloor),\qquad
 \rho(\mathbf S(i,a,d))
 =\mathbf S\!\left(i,a+1,\left\lfloor\frac{r_{i,a}+d}{B}\right\rfloor\right).
\end{equation}
Because $l$ is fixed, $r_{i,a}$ is read from one fixed-size block of the
read-only input.  Write
\[
 \lambda(x):=\lceil\log_2(x+1)\rceil
\]
for binary length, so in particular $\lambda(0)=0$.  A source descriptor is
zero exactly when
\[
 d=0\quad\text{and}\quad la\ge\lambda(x_i),
\]
since $\lfloor x_i/2^{la}\rfloor=0$ exactly under this condition.

\begin{theorem}[symbolic logspace evaluation]
\label{thm:basis-zero-symbolic-logspace}
For every fixed derivation $\delta$ over $B_0$ using $\rho_{1,l}$, the
function $\sem{\delta}$ is computable by a deterministic logspace transducer
in polynomial time.  More precisely, on root inputs of total binary length
$N$, the evaluation can be represented by a descriptor whose numerical fields
are bounded by $N^{c_\delta}$ for a fixed constant $c_\delta$.
\end{theorem}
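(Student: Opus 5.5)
The proof will be a recursive evaluator, by induction on the fixed derivation $\delta$, that works on descriptors instead of binary numerals. The invariant to establish is this: every value created during the evaluation of $\sem{\delta}$ at $\bar x$ equals $\operatorname{val}_{\bar x}$ of some descriptor $\mathbf C(c)$ or $\mathbf S(i,a,d)$ with $c,a,d\le N^{c_\delta}$. Each such descriptor needs only $O(\log N)$ bits. For the bounds I will use Lemma~\ref{lem:basis-zero-internal-envelope}. Every value is at most $\max_i x_i+c_\delta\le 2^N+c_\delta$, so a constant descriptor has $c\le 2^N+c_\delta$. That is too large a priori, so I will argue that a constant can only arise from numeral origins via successors and descents. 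Along any selected chain (Lemma~\ref{lem:dyadic-chain-main}) the number of successor steps is polynomial in $N$, which keeps $c$ and $d$ polynomially bounded. The shift index satisfies $a\le$ the number of descents. Once $la\ge\lambda(x_i)$, I normalise $\mathbf S(i,a,d)$ to $\mathbf C(d)$, so $a\le N/l+1$.

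The evaluator has one clause per node type. Initial functions act on descriptors directly: zero gives $\mathbf C(0)$, projections are copying, and successor increments a field. Descent uses \eqref{eq:descriptor-descent-main}, which needs only one fixed-size input block $r_{i,a}$. Zero tests use the stated criterion ($d=0$ and $la\ge\lambda(x_i)$, where $\lambda(x_i)$ is computed in logspace). Composition is handled by recomputation: a logspace transducer evaluates the head and, each time it needs the $j$-th argument descriptor, recomputes it from scratch. Because the derivation depth is fixed, the stack of simultaneously active frames has constant height. Each frame stores $O(\log N)$ bits, namely descriptors and counters.

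A step-recursion node invoked at $(\bar p,z)$ has a recursion argument $z$ given as a descriptor. Its schedule has $D_{1,l}(z)=O(N)$ stages, and the schedule points $z_j=\rho^{[d-j]}(z)$ are again descriptors, obtained by iterating \eqref{eq:descriptor-descent-main} with a counter. The state $s_j$ is kept as one descriptor and updated by evaluating the transition subderivation on $(\bar p,z_j,s_j)$. Each subcall again uses constant frame depth. Here the recursion is a forward loop with a single stored state, so no recomputation explosion occurs. Composition recomputation multiplies the running time by at most a fixed power per node, which yields polynomial time. At the end the output descriptor is converted to binary: $\lfloor x_i/B^a\rfloor+d$ is a shifted block of the input plus a small number, and this addition can be streamed with a carry in logspace.

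The main obstacle is the polynomial bound on the fields $c$ and $d$ independently of the evaluation order. A naive count bounds $d$ by the total number of successor applications, and nested recursion can make that exponential. I would instead prove the invariant inductively, together with the envelope, by a direct analogue of Lemma~\ref{lem:dyadic-chain-main}: every stored descriptor's value depends along one selected chain of polynomial length $P_\delta(N)$ on an origin, so at most $P_\delta(N)$ successors feed it. Descent steps only shrink offsets, since $\lfloor (r+d)/B\rfloor\le d$. If the chain argument turns out awkward to state for all internal values rather than just the output, the fallback is to bound $d$ through the value envelope. For a source descriptor, $\lfloor x_i/B^a\rfloor+d\le \max_j x_j+c_\delta$ does not bound $d$ directly. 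I would therefore switch representation, normalising any descriptor with $d$ exceeding a polynomial threshold, and argue via the chain bound that such descriptors never occur.
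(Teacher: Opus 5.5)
Your proposal is correct and is essentially the paper's proof: a descriptor-level evaluator by induction on the derivation, a forward loop over the $O(N)$ schedule points with each address recomputed from the original descriptor by iterated descent, a constant number of $O(\log N)$-bit frames, and a streamed final conversion (the paper stores argument descriptors at composition nodes instead of recomputing them, which is equally fine). The one divergence is the field bound, and the obstacle you describe there does not exist: each recursion node runs only $O(N)$ stages and the nesting depth is fixed, so the total number of symbolic steps is polynomial, and therefore the naive count of successor applications already gives $N^{O(1)}$, never an exponential; the paper uses exactly this count, noting that descent never increases an offset, so your selected-chain argument is valid but an unnecessary detour.
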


\begin{proof}
Assign each root input $x_i$ the descriptor $\mathbf S(i,0,0)$.  More generally,
we prove by induction on a subderivation $\eta$ the following invariant: for
every descriptor environment representing the numerical inputs of $\eta$, the
evaluator returns a descriptor representing $\sem{\eta}$ on those inputs, uses
$O_\eta(\log N)$ work space and polynomial time, and keeps every numerical
descriptor field polynomially bounded in $N$.

Evaluate the fixed derivation relative to such an environment.  Zero,
projections, successor, and composition use the descriptor rules above.
Consider a step-recursion node with parameter
descriptors $\bar P$ and recursion-coordinate descriptor $Y$.  By the
induction hypotheses for the argument derivations, the fields of $Y$ are
polynomially bounded in $N$; since a descriptor denotes either such a constant
or a shifted input plus such an offset, its represented value has
$O_\delta(N)$ bits.  Repeatedly apply
\eqref{eq:descriptor-descent-main} until $Y$ becomes zero and count the
number $d$ of descent steps.  Thus $d=O_\delta(N)$.  Evaluate the base derivation on
$\bar P$.  To reconstruct the value from the base toward the root, for
$j=d-1,\ldots,0$ recompute $\rho^{[j+1]}(Y)$ from the original descriptor
$Y$ and evaluate the transition derivation on the parameter descriptors, that
lower-address descriptor, and the current state descriptor.  The declared
bound is a static admissibility certificate and is not evaluated.

Only a fixed number of descriptors and $O(\log N)$-bit counters are live at
any time.  Simultaneous induction on the fixed derivation shows that both the
number of symbolic primitive steps and every descriptor field are bounded by a
polynomial in $N$: composition takes fixed finite sums, while a recursion node
performs $O_\delta(N)$ transition evaluations and at most $O_\delta(N^2)$
descriptor-descent steps when addresses are recomputed.  Offset fields grow
only through successor and are divided, up to a fixed remainder, by descent.
Thus every field uses $O_\delta(\log N)$ work bits.

Finally, a constant descriptor has $O(\log N)$ output bits, while a source
descriptor is the input coordinate shifted right by $la$ bits and then increased
by an $O(\log N)$-bit integer.  Its output length and each output bit can be
produced by repeated scans of the input while storing the shift, bit position,
and carry in logarithmic space.  Hence the denoted numerical function is in
$\mathsf{FL}$.
\end{proof}

\begin{corollary}[divisibility inside functional logspace]
\label{cor:basis-zero-fl}
For every $l\ge1$,
\[
 \boxed{\Hclass{0}{1}{l}\subsetneq\mathsf{FL}.}
\]
Consequently the stride sector $\{\Hclass{0}{1}{l}:l\ge1\}$ realizes the
dual-divisibility partial order entirely among proper subclasses of
$\mathsf{FL}$.
\end{corollary}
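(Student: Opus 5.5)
The statement has two parts: the inclusion $\Hclass{0}{1}{l}\subseteq\mathsf{FL}$, and a separating function in $\mathsf{FL}\setminus\Hclass{0}{1}{l}$. The "consequently" clause then follows by combining both with Theorem~\ref{thm:basis-zero-doubling-main}.

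\emph{Inclusion.} Every member of $\Hclass{0}{1}{l}$ has a derivation by Lemma~\ref{lem:derivation-tree-representation}. Theorem~\ref{thm:basis-zero-symbolic-logspace} then says its denotation is computed by a deterministic logspace transducer on binary inputs and outputs, so $\Hclass{0}{1}{l}\subseteq\mathsf{FL}$ needs no further argument. The only bookkeeping point is that each function has a fixed derivation, so the constant $c_\delta$ in that theorem is fixed per function, which is exactly what membership in $\mathsf{FL}$ requires.

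\emph{Strictness.} I would separate by growth rather than by a trace argument. Addition $e_1(x,y)=x+y$ lies in $\mathsf{FL}$, since binary addition with carry is computed by right-to-left scanning in logarithmic space. By Lemma~\ref{lem:majorants} with $m=0$, every member of $\Hclass{0}{1}{l}$ is majorized by a composition term, which is a constant or $x_i+c$. Setting $x=y>c$ shows $x+y$ has no such majorant, exactly as in the $b=0$ case of Lemma~\ref{lem:base-separation}. Hence $e_1\notin\Hclass{0}{1}{l}$, and the inclusion is proper. If one prefers a unary witness, the doubling map $x\mapsto 2x$ works the same way: it is in $\mathsf{FL}$ as a bit shift, and it exceeds every $x+c$.

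\emph{Order statement.} Theorem~\ref{thm:basis-zero-doubling-main} gives $\Hclass{0}{1}{p}\subseteq\Hclass{0}{1}{q}$ if and only if $q\mid p$. All these classes are proper subclasses of $\mathsf{FL}$ by the two steps above, so the dual-divisibility order is realized entirely among proper subclasses of $\mathsf{FL}$.

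I expect no step to be a real obstacle. The most delicate point is confirming that the convention for $\mathsf{FL}$ (total numerical functions with binary input and output) matches the output format produced in Theorem~\ref{thm:basis-zero-symbolic-logspace}. That theorem already establishes logspace output-bit generation for both constant and source descriptors, so the match holds.
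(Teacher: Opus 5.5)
Your proposal is correct and matches the paper's proof: the inclusion comes from Theorem~\ref{thm:basis-zero-symbolic-logspace}. Strictness holds because binary addition lies in $\mathsf{FL}$, while by Lemma~\ref{lem:majorants} every member has a basis-zero majorant (a constant or some $x_i+c$), which cannot dominate $x+y$; the order clause then follows from Theorem~\ref{thm:basis-zero-doubling-main}, and your alternative unary witness $x\mapsto 2x$ works equally well.
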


\begin{proof}
The inclusion is Theorem~\ref{thm:basis-zero-symbolic-logspace}.  It is strict
because binary addition belongs to $\mathsf{FL}$, whereas
Lemma~\ref{lem:majorants} gives every binary member of
$\Hclass{0}{1}{l}$ a basis-zero composition majorant, hence a constant or a
shifted projection.  No such function majorizes $x+y$.  The final statement
combines strictness with Theorem~\ref{thm:basis-zero-doubling-main}.
\end{proof}

Throughout the remainder of this section write
\[
 \rho=\rho_{1,1},\qquad D=D_{1,1},
 \qquad \rho(x)=\left\lfloor\frac x2\right\rfloor.
\]

\begin{lemma}[normalized squaring]
\label{lem:normalized-squaring}
The halving iterator
\[
 \operatorname{Half}(u,0)=u,\qquad \operatorname{Half}(u,z)=\rho(\operatorname{Half}(u,\rho(z)))\quad(z>0)
\]
belongs to $\Hclass{2}{1}{1}$ and satisfies
\[
 \operatorname{Half}(u,z)=\rho^{[D(z)]}(u)
       =\left\lfloor\frac{u}{2^{D(z)}}\right\rfloor.
\]
There is also a function $\Theta\in\Hclass{2}{1}{1}$ such that
\[
 \Theta(0)=0,\qquad
 \Theta(x)=\left\lfloor\frac{x^2+2}{2^{D(x)}}\right\rfloor
 \quad(x>0).
\]
For $x\ge2$ one has $1\le\Theta(x)<x$.  On every dyadic layer
\[
 I_L=\{2^{L-1},\ldots,2^L-1\}\qquad(L\ge1)
\]
the map $\Theta$ is strictly increasing, and
\[
 D(\Theta(x))\in\{L-1,L\}\qquad(x\in I_L,\ L\ge2).
\]
\end{lemma}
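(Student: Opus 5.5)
The plan is to build $\operatorname{Half}$ directly as one bounded step recursion along $\rho=\rho_{1,1}$, then obtain $\Theta$ by composition and read off the arithmetic properties from elementary dyadic estimates.

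\emph{Construction of $\operatorname{Half}$.} Take $u$ as the parameter and $z$ as the recursion argument. The base is the projection $u$. The transition ignores the address and applies $\rho$ to the previous state, which is allowed because $\rho\in\Hclass{2}{1}{1}$ by Lemma~\ref{lem:internal-controls}. The declared bound is the projection $u$. Induction on $D(z)$ gives $\operatorname{Half}(u,z)=\rho^{[D(z)]}(u)$. The same iterator was already used in Lemma~\ref{lem:depth-retraction}. Since $\rho^{[k]}(u)=\lfloor u/2^k\rfloor$, the closed form follows.

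\emph{Construction of $\Theta$.} Define
\[
 \Theta_0(x)=\operatorname{Half}(e_2(x),x),
\]
which is a composition in $\Hclass{2}{1}{1}$ because $e_2\in B_2$. Then $\Theta_0(0)=\lfloor 2/1\rfloor=2$, so the origin must be corrected. I would do this with a one-step selector like the $\operatorname{Orig}$ construction in Lemma~\ref{lem:fixed-power-recovery}: a step recursion on the control $x$, with base $0$, transition returning the parameter, and projection bound. This yields $\Theta(x)=\operatorname{ZV}(\Theta_0(x),x)$ via Lemma~\ref{lem:internal-zero-value-selector}.

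\emph{Arithmetic properties.} For $x\in I_L$ we have $D(x)=L$, since $D(y)=\lambda(y)=\lceil\log_2(y+1)\rceil$. Hence $\Theta(x)=\lfloor (x^2+2)/2^L\rfloor$ with a fixed denominator on the layer, so $\Theta$ is nondecreasing there. Strict increase needs the numerator to jump by at least $2^L$ between consecutive points. Indeed,
\[
 (x+1)^2-x^2=2x+1\ge 2^L+1 \quad\text{for } x\ge 2^{L-1}.
\]
For the range, the lower bound follows from $x^2+2\ge 2^{2L-2}+2$, which gives $\Theta(x)\ge 2^{L-2}$ for $L\ge2$, so $\Theta(x)\ge1$. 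The upper bound follows from $x^2+2<x\cdot 2^L$, which holds because $x<2^L$ and $x\ge2$ (this needs $x(2^L-x)>2$). Taking the floor then gives $\Theta(x)<x$.

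For the depth claim, combine the two bounds $2^{L-2}\le\Theta(x)<x\le 2^L-1$:
\begin{itemize}
\item the lower bound gives $D(\Theta(x))\ge L-1$;
\item the upper bound gives $\Theta(x)\le 2^L-1$, hence $D(\Theta(x))\le L$.
\end{itemize}

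The main obstacle is the edge arithmetic at small layers. At $L=1$ the layer is $x=1$ and $\Theta(1)=1$, so the claim $\Theta(x)<x$ fails there; this is why the statement requires $x\ge2$. At $L=2$ the point $x=2$ has $x(2^L-x)=4>2$, which is fine. I would check $L=2$ by hand: $\Theta(2)=1$ and $\Theta(3)=2$, consistent with all the claims. Everything else is routine verification with the declared bound given by the projection.
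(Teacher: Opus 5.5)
Your proposal is correct and follows the paper's proof essentially step for step. Like the paper, you build $\operatorname{Half}$ as the projection-bounded $\rho$-iterator and set $\Theta=\operatorname{ZV}(\operatorname{Half}(e_2(x),x),x)$; you then derive $\Theta(x)<x$ from $x(2^L-x)>2$, the depth window from $2^{L-2}\le\Theta(x)\le 2^L-1$, and strict increase from $2x+1>2^L$. The only thing left implicit is the case $x\ge3$ of $x(2^L-x)>2$, which is immediate since then $2^L-x\ge1$.
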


\begin{proof}
The displayed recursion for $\operatorname{Half}$ is bounded by the projection $u$.
Induction on $D(z)$ gives
\[
 \operatorname{Half}(u,z)=\rho^{[D(z)]}(u)
       =\left\lfloor\frac{u}{2^{D(z)}}\right\rfloor.
\]
Since $e_2(x)=x^2+2$ belongs to $B_2$, put
\[
 U(x)=\operatorname{Half}(e_2(x),x),\qquad
 \Theta(x)=\operatorname{ZV}(U(x),x).
\]
Lemma~\ref{lem:internal-zero-value-selector} makes this construction internal
and gives the stated values at $x=0$ and $x>0$.

Fix $x\in I_L$.  Then $2^{L-1}\le x<2^L$, and
\[
 \Theta(x)=\left\lfloor\frac{x^2+2}{2^L}\right\rfloor.
\]
The inequality $x^2+2<(x+1)2^L$ gives $\Theta(x)\le x$.  If $x=2$,
then $L=2$ and $x(2^L-x)=4>2$.  If $x\ge3$, then $2^L-x\ge1$ and
$x(2^L-x)\ge x>2$.  Thus in every case $x\ge2$ one has
\[
 x(2^L-x)>2,
\]
so $x^2+2<x2^L$ and therefore $\Theta(x)<x$.  For $L\ge2$,
\[
 \Theta(x)\ge
 \left\lfloor\frac{2^{2L-2}+2}{2^L}\right\rfloor
 \ge2^{L-2},
\]
while $\Theta(x)<2^L$, proving the depth assertion.  Finally,
\[
 \frac{(x+1)^2+2}{2^L}-\frac{x^2+2}{2^L}
 =\frac{2x+1}{2^L}>1,
\]
so consecutive arguments in $I_L$ have distinct increasing images.
\end{proof}

\begin{lemma}[quadratic iterate separation]
\label{lem:theta-separation}
Let $x\ne y$ and put $L=\max\{D(x),D(y)\}$.  There is
$k<4(L+1)^2$ such that
\[
 D(\Theta^{[k]}(x))\ne D(\Theta^{[k]}(y)).
\]
Moreover, at the least such $k$,
\[
 x<y
 \quad\Longleftrightarrow\quad
 D(\Theta^{[k]}(x))<D(\Theta^{[k]}(y)).
\]
\end{lemma}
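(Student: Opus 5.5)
The plan is to run the two $\Theta$-orbits in parallel and use the layer structure from Lemma~\ref{lem:normalized-squaring}. Write $x_j=\Theta^{[j]}(x)$ and $y_j=\Theta^{[j]}(y)$. The argument has two parts: an order-preservation step that gives the ``moreover'' clause, and a counting step that gives existence with the bound $4(L+1)^2$.

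\emph{Order preservation.} Assume without loss of generality that $x<y$. Suppose $D(x_j)=D(y_j)$ for all $j<k$. I will show by induction on $j\le k$ that $x_j<y_j$.
\begin{itemize}
\item At $j=0$ this is the hypothesis $x<y$.
\item If $x_j<y_j$ and both lie in the same layer $I_{L'}$ with $L'\ge1$, then strict monotonicity of $\Theta$ on $I_{L'}$ gives $x_{j+1}<y_{j+1}$.
\item Depth $0$ means value $0$, so two distinct values of equal depth both lie in a common layer $I_{L'}$, $L'\ge1$.
\end{itemize}
At the least $k$ where the depths differ, we therefore have $x_k<y_k$. Since $D$ is nondecreasing, $D(x_k)\le D(y_k)$, and inequality of the depths makes this strict. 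The same argument with $x$ and $y$ exchanged gives the reverse implication, proving the ``moreover'' clause once existence is known.

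\emph{Reduction of existence to a counting bound.} Layer $I_1=\{1\}$ is a single point. Two distinct values cannot share it, and the pair $\{0,1\}$ has different depths. So if the depths agreed for all $j<4(L+1)^2$, both orbits would have to remain at depth at least $2$ throughout that range. Since $\Theta(x)<x$ for $x\ge2$ and $\Theta(1)=1$, it suffices to prove the following: every orbit starting at depth at most $L$ reaches depth at most $1$ within fewer than $4(L+1)^2$ steps. By Lemma~\ref{lem:normalized-squaring}, each step either keeps the layer or drops it by exactly one. So it is enough to bound the number of consecutive steps an orbit can stay inside one layer $I_{L'}$, $L'\ge2$, by roughly $2L'+2$. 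Summing over $L'\le L$ then gives at most $\sum_{L'\le L}(2L'+2)<4(L+1)^2$.

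\emph{Per-layer bound (the main obstacle).} Write $x=2^{L'}-\mu$ with $1\le\mu\le2^{L'-1}$. The identity
\[
 \frac{x^2+2}{2^{L'}}=2^{L'}-2\mu+\frac{\mu^2+2}{2^{L'}}
\]
shows that the new deficit is $\mu'=2\mu-\lfloor(\mu^2+2)/2^{L'}\rfloor$ when the image stays in the layer, and that staying requires $\mu'\le 2^{L'-1}$. When $\mu$ is small, the floor term is $0$ and $\mu$ doubles exactly. In general, $\mu^2/2^{L'}\le\mu/2$ gives $\mu'\ge\tfrac32\mu-1$, and also $\mu'\ge\mu+1$ (the latter from $\Theta(x)<x$ and strict monotonicity). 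Together these force $\mu$ past $2^{L'-1}$ within $O(L')$ steps. The point needing the most care is the floor and rounding bookkeeping near the layer boundary, so that the constant comes out at most about $2L'+2$. That is where I expect to spend the effort, since the stated bound $4(L+1)^2$ leaves only modest slack.
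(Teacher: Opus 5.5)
Your proposal is correct and follows essentially the paper's route: strict order is preserved while the two iterates share a layer, the deficit $2^{L'}-x$ (your $\mu$, the paper's $a$) grows geometrically inside a layer, and linear per-layer bounds are summed. The slack in $4(L+1)^2$ is generous rather than modest: the paper's single estimate $a'\ge\tfrac43 a$, giving fewer than $3L+2$ steps per layer, already suffices, and your inequalities $\mu'\ge\max\{\mu+1,\tfrac32\mu-1\}$ give $\mu_i-2\ge(3/2)^{i-2}$ for $i\ge2$, which settles the per-layer step without delicate bookkeeping.
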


\begin{proof}
If $D(x)\ne D(y)$, take $k=0$.  Suppose both values lie in $I_L$.
As long as their depths agree, Lemma~\ref{lem:normalized-squaring} preserves
their strict numerical order.

It remains to bound the time spent in one layer.  Write
$x=2^L-a$ with $1\le a\le2^{L-1}$.  While $\Theta(x)$ remains in $I_L$,
its new deficit is
\[
 a'=2a-\left\lfloor\frac{a^2+2}{2^L}\right\rfloor.
\]
Since two distinct values lie in one common layer, $L\ge2$.  If $a=1$, then
$a'=2\ge\frac43a$.  If $a\ge2$, then
\[
 \frac{a^2+2}{2^L}\le\frac a2+\frac12,
 \qquad
 a'\ge2a-\left\lceil\frac a2\right\rceil\ge\frac43a.
\]
Thus the deficit grows by a factor at least $4/3$ while the iterate sequence remains in
$I_L$.  Since $(4/3)^{3L}>2^{L-1}$, the depth drops after fewer than $3L+2$
iterations.  Summing over the layers below $L$ shows that every positive
iterate sequence reaches $1$ in fewer than $4(L+1)^2$ iterations.

If two distinct values had equal depths throughout this period, strict
increase on every common layer would keep their iterates distinct, although
both would eventually equal the unique point $1$ of $I_1$.  Hence a first
depth difference occurs.  At that point the numerical order, preserved up to
the preceding step, and monotonicity of $D$ give the stated direction.
\end{proof}

\begin{lemma}[block-scan correctness]
\label{lem:block-scan}
For parameters $x,y,c$ with $c>0$, put $C=D(c)$.  There are functions
$J,K,\operatorname{Scan},\operatorname{Cmp}\in\Hclass{2}{1}{1}$ with
\[
 J(u,z)=\Theta^{[D(z)]}(u),
 \qquad
 K(u,c,z)=\Theta^{[CD(z)]}(u),
\]
and the following property.  The value
$\operatorname{Scan}(s,x,y,c,z,c)$ retains a nonzero initial state
$s\in\{1,2\}$, and otherwise returns the first nonzero depth-comparison code
for the indices
\[
 CD(z),CD(z)+1,\ldots,CD(z)+C-1,
\]
where code $1$ means that the $x$-iterate has smaller depth and code $2$
means that the $y$-iterate has smaller depth.  Moreover,
$\operatorname{Cmp}(x,y,c,z)$ returns the first nonzero code among the
indices
\[
 0,1,\ldots,CD(z)-1.
\]
The functions $J$ and $K$ are bounded by $u$, while
$\operatorname{Scan}$ and $\operatorname{Cmp}$ are bounded by $2$.
\end{lemma}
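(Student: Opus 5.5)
We construct the four functions in order $J\prec K\prec\operatorname{Scan}\prec\operatorname{Cmp}$, each as a single bounded step recursion along $\rho=\rho_{1,1}$ over earlier members of $\Hclass{2}{1}{1}$. Each is then verified by induction on the depth of its recursion argument, using only the controls of Lemmas~\ref{lem:internal-controls}, \ref{lem:internal-zero-value-selector} and~\ref{lem:finite-controls} and the map $\Theta$ of Lemma~\ref{lem:normalized-squaring}.

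\textbf{Step 1: the iterators $J$ and $K$.} Define
\[
 J(u,0)=u,\qquad J(u,z)=\Theta(J(u,\rho(z)))\quad(z>0).
\]
Induction on $D(z)$ gives $J(u,z)=\Theta^{[D(z)]}(u)$. Since $\Theta(w)\le w$ for all $w$ (and $\Theta(1)=1$, $\Theta(0)=0$), the projection $u$ is an admissible bound. Next define
\[
 K(u,c,0)=u,\qquad K(u,c,z)=J(K(u,c,\rho(z)),c)\quad(z>0),
\]
which applies $C=D(c)$ further iterates of $\Theta$ at every stage. Hence $K(u,c,z)=\Theta^{[CD(z)]}(u)$, again with bound $u$.

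\textbf{Step 2: the block scan.} For a single index the comparison code is a finite function of the two depths
\[
 a=D(K(x,c,z)),\qquad b=D(K(y,c,z)).
\]
To test $a<b$ without arithmetic, we use the retraction maps of Lemma~\ref{lem:depth-retraction}: $D(\operatorname{Ret}_{\bullet}(\cdot))$ computes a minimum of depths, and an $\NZ$ test of $\operatorname{Rev}$ then decides strict inequality. Alternatively, we take the first differing bits of the $\Theta$-iterates, using that the depth of $\lfloor w/2^{D(v)}\rfloor$ is $\max(D(w)-D(v),0)$, so $\NZ(\operatorname{Half}(w,v))$ decides $D(w)>D(v)$. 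The resulting code lies in $\{0,1,2\}$ and is assembled by Boolean selectors with bound $2$.

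$\operatorname{Scan}(s,x,y,c,z,w)$ is then a step recursion on $w$, bounded by $2$, which walks the $C$ indices of the block. At $w=0$ it returns $s$. At $w>0$ it keeps a nonzero previous state; otherwise it computes the code at index $CD(z)+D(\rho(w))$, namely the comparison of $D(\Theta^{[D(\rho(w))]}(K(x,c,z)))$ with the corresponding $y$-depth, obtained by composing $J(K(\cdot,c,z),\rho(w))$. Evaluating at $w=c$ visits exactly the local offsets $0,\dots,C-1$. Since the recursion base is the input state, the offsets must be scanned from the top down, so at the point of stopping the first nonzero code is not yet guaranteed. To fix this we run the recursion over the reversed index $C-1-D(\rho(w))$, using $\operatorname{Rev}_c$ of Lemma~\ref{lem:depth-retraction}: $D(\operatorname{Rev}_c(\rho(w)))=C-D(\rho(w))$ whenever this is nonnegative. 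Then the state at the final stage records the first nonzero code in increasing order. Applying the transition $\ISZ(s)$-guarded selector, the nonzero state is retained.

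\textbf{Step 3: $\operatorname{Cmp}$.} Define
\[
 \operatorname{Cmp}(x,y,c,0)=0,\qquad \operatorname{Cmp}(x,y,c,z)=\operatorname{Scan}\bigl(\operatorname{Cmp}(x,y,c,\rho(z)),x,y,c,\rho(z),c\bigr),
\]
with bound $2$. Induction on $D(z)$ shows that after $D(z)$ blocks every index $0,\dots,CD(z)-1$ has been scanned in increasing block order, and earlier nonzero codes are retained.

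\textbf{Main obstacle.} The delicate point is the within-block ordering in Step~2, namely obtaining the first rather than the last nonzero code under bounded step recursion without a counter. I expect the reversal via $\operatorname{Rev}_c$, together with the exact depth identity~\eqref{eq:depth-shift}, to handle it. Every other bound is a projection or the constant $2$.
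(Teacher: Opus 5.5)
You have a genuine gap in Step~2, and it comes from the ``fix'' rather than from the construction itself. Your Steps~1 and~3, and the unreversed recursion you first write down in Step~2, are exactly the paper's construction. That construction uses $J$ and $K$ with projection bounds, a strict depth comparison built from $\ISZ(\operatorname{Half}(u,v))$, and a bounded table $\operatorname{Upd}$ that keeps a state in $\{1,2\}$ and otherwise writes the new code. The scan is then
\[
 \operatorname{Scan}(s,x,y,c,z,w)=\operatorname{Upd}\bigl(\operatorname{Scan}(s,x,y,c,z,\rho(w)),J(K(x,c,z),\rho(w)),J(K(y,c,z),\rho(w))\bigr).
\]
Your claim that the offsets are ``scanned from the top down'' misreads the direction of a step recursion. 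In $F(w)=h(\rho(w),F(\rho(w)))$, the value at depth $r=D(w)$ is one transition applied to the value at depth $r-1$, and that transition receives an address of depth $r-1$. Unwinding from $F(0)=s$ to $w=c$, the transitions therefore see $D(\rho(w))=0,1,\ldots,C-1$ in that order. So they inspect the indices $CD(z),\ldots,CD(z)+C-1$ in \emph{increasing} order. Induction on $r$ then shows that the state at depth $r$ is $s$ if $s\neq0$, and otherwise the first nonzero code among the first $r$ offsets. That is already the required property, with no counter or reversal. Your own Step~3 relies on this same base-to-root direction when you say the blocks are processed in increasing order.

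With the reversed index $C-1-D(\rho(w))$, the construction becomes wrong. The transition nearest the base inspects offset $C-1$ and the last transition inspects offset $0$. Because a nonzero state is retained, $\operatorname{Scan}$ returns the code at the \emph{largest} offset carrying a nonzero code. Whenever a block contains two distinct nonzero codes, your recursion therefore returns the later one, while the lemma asks for the earlier one. For instance, if $s=0$ and the only nonzero codes are $1$ at offset $0$ and $2$ at offset $2$ (with $C\ge3$), the lemma requires $1$ but your recursion outputs $2$. This matters for the application: Theorem~\ref{thm:base-two-comparison} reads off the numerical order only from the \emph{first} unequal depth, via Lemma~\ref{lem:theta-separation}. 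The fix is to delete the reversal and keep the recursion you originally wrote. A reversed scan could only work with overwrite semantics inside the block plus a separate final combination with $s$, which is an unnecessary complication.
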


\begin{proof}
Define
\[
 J(u,0)=u,
 \qquad
 J(u,z)=\Theta(J(u,\rho(z)))\quad(z>0).
\]
Since $\Theta(u)\le u$, the projection $u$ is a bound, and induction on
$D(z)$ gives $J(u,z)=\Theta^{[D(z)]}(u)$.  Next define
\[
 K(u,c,0)=u,
 \qquad
 K(u,c,z)=J(K(u,c,\rho(z)),c)\quad(z>0).
\]
The same projection is a bound and induction gives
$K(u,c,z)=\Theta^{[CD(z)]}(u)$.

Depth comparison is internal before the scan is defined.  Indeed,
\[
 \operatorname{DLE}(u,v):=\ISZ(\operatorname{Half}(u,v))
\]
satisfies
\[
 \operatorname{Half}(u,v)=0
 \quad\Longleftrightarrow\quad
 u<2^{D(v)}
 \quad\Longleftrightarrow\quad
 D(u)\le D(v).
\]
Using the Boolean controls of Lemma~\ref{lem:finite-controls}, put
\[
 \operatorname{DLT}(u,v)
 :=\operatorname{DLE}(u,v)\wedge
   \ISZ(\operatorname{DLE}(v,u)).
\]
Thus $\operatorname{DLT}(u,v)=1$ exactly when $D(u)<D(v)$.  A fixed table
constructs
\[
 \operatorname{Upd}(s,u,v)=
 \begin{cases}
 s,&s\in\{1,2\},\\
 1,&s=0\text{ and }\operatorname{DLT}(u,v)=1,\\
 2,&s=0\text{ and }\operatorname{DLT}(v,u)=1,\\
 0,&\text{otherwise}.
 \end{cases}
\]
This map is bounded by $2$.  Define
\[
\begin{aligned}
 \operatorname{Scan}(s,x,y,c,z,0)&=s,\\
 \operatorname{Scan}(s,x,y,c,z,w)
 &=\operatorname{Upd}\Bigl(
 \operatorname{Scan}(s,x,y,c,z,\rho(w)),\\
 &\hspace{19mm}J(K(x,c,z),\rho(w)),
 J(K(y,c,z),\rho(w))\Bigr)
 \qquad(w>0).
\end{aligned}
\]
At a node with $D(w)=r>0$, the newly inspected pair is
\[
 \Theta^{[CD(z)+r-1]}(x),
 \qquad
 \Theta^{[CD(z)+r-1]}(y).
\]
Induction on $r$ proves the asserted first-nonzero rule.  At $w=c$ this is the
displayed block of length $C$.

Finally define
\[
 \operatorname{Cmp}(x,y,c,0)=0,
\]
\[
 \operatorname{Cmp}(x,y,c,z)=
 \operatorname{Scan}\bigl(
 \operatorname{Cmp}(x,y,c,\rho(z)),x,y,c,\rho(z),c\bigr)
 \qquad(z>0).
\]
If $D(z)=r>0$, the preceding state has already scanned indices
$0,\ldots,C(r-1)-1$, while the new inner call scans the next block
$C(r-1),\ldots,Cr-1$.  Induction on $r$ proves the final assertion.  The
states of both recursions lie in $\{0,1,2\}$, so the constant bound $2$ is
valid.  The displayed dependency order
\[
 \Theta\prec J\prec K,\operatorname{DLE},\operatorname{DLT}
 \prec\operatorname{Upd}\prec
 \operatorname{Scan}\prec\operatorname{Cmp}
\]
is compatible with the finite-stage closure.
\end{proof}

\begin{theorem}[comparison and parity at initial basis two]
\label{thm:base-two-comparison}
The predicates
\[
 \operatorname{LT}(x,y)=[x<y],
 \qquad
 \operatorname{LE}(x,y)=[x\le y]
\]
belong to $\Hclass{2}{1}{1}$.  Consequently ordinary parity belongs to
$\Hclass{2}{1}{1}$.
\end{theorem}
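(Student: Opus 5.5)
We obtain $\operatorname{LT}$ by running the block-scan machinery of Lemma~\ref{lem:block-scan} long enough to guarantee that the first depth difference guaranteed by Lemma~\ref{lem:theta-separation} has been inspected, and then reading off its code.

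\textbf{Choosing the scan length.} Put $L=\max\{D(x),D(y)\}$. Lemma~\ref{lem:theta-separation} gives a separating index $k<4(L+1)^2$ for $x\ne y$. So we need internal arguments $c,z$ with $D(c)D(z)\ge4(L+1)^2$. We take $c=z=\sigma(x,y)$, where $\sigma$ is an internal value whose depth is at least $2(L+1)$. A concrete choice is $\sigma(x,y)=e_2(S(x+y))$. Addition is available because $m=2\ge1$. Since $e_2(w)\ge w^2$, squaring at least doubles binary length, so $D(\sigma)\ge2D(x+y+1)\ge2(L+1)$ up to a harmless additive constant. If necessary we instead use $e_2(e_2(S(x+y)))$, which has depth about $4(L+1)$.

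\textbf{Strict comparison.} With $C=D(c)$, $\operatorname{Cmp}(x,y,c,z)$ returns the first nonzero code among indices $0,\ldots,CD(z)-1$, and this window covers $k$. Define
\[
 \operatorname{LT}(x,y)=[\operatorname{Cmp}(x,y,\sigma,\sigma)=1],
\]
using the equality test supplied by a fixed finite table from Lemma~\ref{lem:finite-controls}. If $x\ne y$, the least separating $k$ yields code $1$ exactly when $x<y$, by the order clause of Lemma~\ref{lem:theta-separation}. If $x=y$, all iterates coincide, the code stays $0$, and $\operatorname{LT}=0$, which is correct.

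\textbf{Non-strict comparison.} Put $\operatorname{LE}(x,y)=\ISZ(\operatorname{LT}(y,x))$.

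\textbf{Parity.} One route is to compare $x$ with $2\lfloor x/2\rfloor$. Since $\rho(x)=\lfloor x/2\rfloor$ is internal and addition is available, we can form $v=\rho(x)+\rho(x)$. Then $x\bmod2=\operatorname{LT}(v,x)$, because $v\le x$ always and equality holds exactly for even $x$. All bounds here are constants or projections, so every construction is admissible.

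\textbf{Main obstacle.} The key step is the depth inequality $D(c)D(z)\ge4(L+1)^2$ for the chosen $\sigma$. It has to be checked carefully using $D(w)=\lambda(w)$ for the doubling row, together with the edge cases $x=0$ or $y=0$, where $L=0$ and the window must still be nonempty. We first try $\sigma=e_2(e_2(S(x+y)))$, which gives ample slack: $\lambda$ roughly quadruples, so $D(\sigma)^2\ge16(L+1)^2$.
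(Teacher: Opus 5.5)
Your proposal is correct and follows essentially the same route as the paper: evaluate $\operatorname{Cmp}(x,y,c,c)$ with $c=e_2^{[2]}$ applied to a shift of $x+y$, read off the order from the first depth difference guaranteed by Lemma~\ref{lem:theta-separation}, obtain $\operatorname{LE}$ from $\operatorname{LT}$, and get parity as $\operatorname{LT}(\rho(x)+\rho(x),x)$. One correction: squaring only gives the tight bound $D(e_2(v))\ge 2D(v)-1$, so the single-$e_2$ choice is insufficient and the claimed $16(L+1)^2$ slack is false. For $\sigma=e_2(e_2(S(x+y)))$ you only get $D(\sigma)\ge 4L-3$, which still yields the needed $D(\sigma)\ge 2(L+1)$ for $L\ge 3$, with $L\le 2$ checked by hand; the paper avoids the small cases by padding to $S^{[16]}(x+y)$ so that $D(u)\ge 5$.
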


\begin{proof}
Put
\[
 u=S^{[16]}(x+y),
 \qquad
 c=e_2^{[2]}(u),
 \qquad
 R=D(u),
 \qquad
 C=D(c),
\]
and let $L=\max\{D(x),D(y)\}$.  Monotonicity gives $R\ge L$, while
$u\ge16$ gives $R\ge5$.  For every positive $v$,
\[
 D(e_2(v))\ge2D(v)-1:
\]
if $r=D(v)$, then $v\ge2^{r-1}$ and $v^2+3>2^{2r-2}$.
Applying the estimate twice yields $C\ge4R-3$.  Hence in all cases
\[
 C^2\ge4(L+1)^2.
\]
By Lemma~\ref{lem:block-scan},
$\operatorname{Cmp}(x,y,c,c)$ inspects the depth pairs for all indices
$0,\ldots,C^2-1$ in increasing order and retains the first unequal one.
Lemma~\ref{lem:theta-separation} guarantees such an index below
$4(L+1)^2$ whenever $x\ne y$, and identifies the numerical order from the
first unequal depth.  Therefore
\[
 \operatorname{Cmp}(x,y,c,c)=
 \begin{cases}
 1,&x<y,\\
 2,&x>y,\\
 0,&x=y.
 \end{cases}
\]
A fixed table extracts $\operatorname{LT}$ and $\operatorname{LE}$.

Finally put $q=\rho(x)$ and $d=q+q$.  Then $d=x$ for even $x$ and
$d=x-1$ for odd $x$, so
\[
 x\bmod2=\operatorname{LT}(d,x).
\]
Thus parity is internal.
\end{proof}

\begin{theorem}[the base-two vertical bridge]
\label{thm:base-two-first-bridge}
For every $l'\ge1$,
\[
 \rho_{2,1}\in\Hclass{2}{1}{l'}.
\]
\end{theorem}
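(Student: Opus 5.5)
By Lemma~\ref{lem:horizontal-collapse} (applicable since $m=2\ge n=1$) we have $\Hclass{2}{1}{l'}=\Hclass{2}{1}{1}$. So it suffices to construct $\rho_{2,1}$ in $\Hclass{2}{1}{1}$, where $\rho(x)=\lfloor x/2\rfloor$ and $D=D_{1,1}$. The tools available there are:
\begin{itemize}
\item addition and $e_2$ from $B_2$;
\item the iterator $\operatorname{Half}(u,z)=\lfloor u/2^{D(z)}\rfloor$ of Lemma~\ref{lem:normalized-squaring};
\item the comparison predicates $\operatorname{LT},\operatorname{LE}$ of Theorem~\ref{thm:base-two-comparison};
\item the retraction maps of Lemma~\ref{lem:depth-retraction}, Boolean controls and finite tables (Lemma~\ref{lem:finite-controls});
\item the additive selector $\operatorname{Sel}^{+}(u,v,c)$ with bound $u+v$.
\end{itemize}
The plan is first to build the integer square root $\operatorname{isqrt}(v)=\max\{r:r^2\le v\}$ by one step recursion along $\rho$ that processes two binary digits of $v$ per stage. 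Then $\rho_{2,1}$ is read off by a fixed correction.

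For the square root, work with parameter $y$, put $R=D(y)$, and define a recursion $E(y,z)$ along $\rho$ on $z$. The intended invariant is
\[
 E(y,z)=\operatorname{isqrt}\bigl(\lfloor y/4^{\max(R-D(z),0)}\rfloor\bigr).
\]
The base is correct because $4^{R}>y$ gives $E(y,0)=0$. At $z>0$ the current radicand is
\[
 v(y,z)=\operatorname{Half}\bigl(\operatorname{Half}(y,a),a\bigr),
\]
where $a$ is an internal point of depth exactly $R-D(z)$. I would take $a=\operatorname{Rev}_y(z)$, which by Lemma~\ref{lem:depth-retraction} has depth $\max(R-D(z),0)$. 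The previous radicand is $\lfloor v/4\rfloor$, so with $r=E(y,\rho(z))$ the classical digit step gives
\[
 \operatorname{isqrt}(v)\in\{2r,2r+1\}.
\]
The choice between the two is decided by testing $(2r+1)^2\le v$. I would write this test as
\[
 \operatorname{LE}\bigl(e_2(S(r+r)),\,S^{[2]}(v)\bigr),
\]
and realise the choice as $\operatorname{Sel}^{+}$ between $r+r$ and $S(r+r)$ driven by that test. All states are at most $y$, so the projection $y$ is an admissible bound. Evaluating at $z=y$ (where $D(z)=R$) gives $\operatorname{isqrt}(y)$.

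Next comes the read-off. For $y\le 2$ we have $\rho_{2,1}(y)=0$, handled by a finite table joined by $\operatorname{GE}_3$. For $y\ge3$, put $w=y\dotminus 2$ and $r=\operatorname{isqrt}(w)$. Then
\[
 \rho_{2,1}(y)=
 \begin{cases}
 r, & r^2=w,\\
 r+1, & \text{otherwise}.
 \end{cases}
\]
The equality test $r^2=w$ is $\operatorname{LE}(e_2(r),S^{[2]}(w))$, since $r^2\le w$ holds automatically; truncated subtraction and the selection are internal. The dependency order
\[
 \operatorname{Half},\operatorname{LE},\operatorname{Rev}\prec E\prec\rho_{2,1}
\]
is finite-stage valid.

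I expect the main obstacle to be the exact depth bookkeeping in the invariant for $E$. One must check that $\operatorname{Rev}_y(z)$ has depth exactly $R-D(z)$ on all of $0<D(z)\le R$, and that the previous stage $\rho(z)$ then corresponds to depth $R-D(z)+1$, i.e.\ to the radicand $\lfloor v/4\rfloor$. The edge case $D(z)>R$ never occurs for $z\le y$, but it should still be harmless: the radicand is then $y$ at both stages and the step recomputes the same root. I would verify these points first, before assembling the rest of the construction.
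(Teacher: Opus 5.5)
Your route is viable and differs from the paper's. You build a digit-by-digit integer square root along the halving descent and then correct it. The paper instead runs a binary search on the answer itself: with $a=A(y)=2^{D(y)}-1$ it adds the descending weights $2^{D(y)-1},\ldots,1$, keeps a tentative $w$ only when $e_2(w)<y$, and outputs the successor of the largest accepted value.

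As written, however, your construction has a genuine defect at exactly the edge case you flagged, which is not harmless. The digit step always returns $2r$ or $2r+1$. When $D(z)>D(y)$ the radicand stays $y$, the test fails, and the root doubles. By induction $E(y,z)=2^{D(z)-D(y)}\operatorname{isqrt}(y)$ in that range; for example $E(1,2)=2>1$. Definition~\ref{def:descent-closure} demands $f\le b$ at \emph{all} arguments, so the declared bound $y$ is violated and the recursion is not admissible. There are two repairs:
\begin{enumerate}
\item Declare the bound $y+z+z$. In that range the state is below $2^{D(z)}\le 2z$.
\item Gate the update with $\NZ(\operatorname{Rev}_y(\rho(z)))$, which vanishes exactly when $D(z)>D(y)$, so the state freezes at $\operatorname{isqrt}(y)$.
\end{enumerate}
The paper's search has no such issue: every accepted value satisfies $w^2+2<y$, so $y$ bounds the state everywhere. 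Note also that the transition receives only $\rho(z)$, not $z$. You should use $\operatorname{Rev}_y(z)=\rho(\operatorname{Rev}_y(\rho(z)))$ for $z>0$.

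The read-off has two further problems. First, the test $\operatorname{LE}(e_2(r),S^{[2]}(w))$ says $r^2\le w$, which is always true; equality is detected by $\operatorname{LE}(S^{[2]}(w),e_2(r))$. Second, and more substantively, nothing in the paper places truncated subtraction, or even predecessor, in $\Hclass{2}{1}{1}$. Lemma~\ref{lem:ezero-selector} concerns $\Eclass{0}$, whose recursion runs on predecessor, whereas here the only descent is halving. You would have to construct $y\dotminus 2$ separately, or simply avoid it. With $s=\operatorname{isqrt}(y)$ one has $(s-1)^2+2<y<(s+1)^2+2$ for $y\ge4$, hence
\[
 \rho_{2,1}(y)=s+\ISZ\bigl(\operatorname{LE}(y,e_2(s))\bigr)\qquad(y\ge4).
\]
A finite table handles $y\le3$, joined by $\operatorname{GE}_4$ as in the paper's patch.

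With these repairs your argument goes through. The paper's formulation is tidier because searching directly for the largest strict subthreshold $w$ makes the bound $y$ automatic and needs neither subtraction nor a post-hoc correction.
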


\begin{proof}
By horizontal collapse it is enough to work in $\Hclass{2}{1}{1}$.  Define
\[
 A(0)=0,\qquad
 A(z)=S(A(\rho(z))+A(\rho(z)))\quad(z>0).
\]
The recursion has bound $2z+1$ and satisfies
\[
 A(z)=2^{D(z)}-1.
\]
Fix $y$ and put $a=A(y)$.  At a lower recursion node $u$, define the
descending binary weight
\[
 p(a,u)=S(\rho(\operatorname{Half}(a,u))).
\]
As $D(u)$ runs from $0$ to $D(a)-1$, these weights are
\[
 2^{D(y)-1},\ 2^{D(y)-2},\ldots,1
\]
in the evaluation order from the recursion base to its root.

Define a recursion with state $s$ by
\[
 B(y,a,0)=0
\]
and, for $z>0$, with $u=\rho(z)$,
\[
 \begin{aligned}
 p&=p(a,u),\\
 w&=B(y,a,u)+p,\\
 B(y,a,z)&=
 \begin{cases}
 w,&e_2(w)<y,\\
 B(y,a,u),&e_2(w)\ge y.
 \end{cases}
 \end{aligned}
\]
The comparison is supplied by Theorem~\ref{thm:base-two-comparison}.  The
selector has an affine earlier bound, while the recursion itself is bounded
by $y$: every accepted value satisfies $w^2+2<y$, hence $w<y$.

Put $D_0=D(y)=D(a)$ and let $s_t$ be the state after the first $t$
transitions from the base toward the root.  Induction on $t$ shows that
$s_t$ is the largest processed multiple of $2^{D_0-t}$ satisfying
$e_2(s_t)<y$, or $0$ if no such multiple exists.  At the last stage,
\[
 B(y,a,a)=\max\bigl(\{w<2^{D_0}:w^2+2<y\}\cup\{0\}\bigr).
\]
Since $a=2^{D(y)}-1\ge y$, every strict subthreshold witness is included.

Put $V(y)=S(B(y,A(y),A(y)))$.  The fixed-threshold predicate
$\operatorname{GE}_3(y)$ belongs to the class by
Lemma~\ref{lem:finite-controls}.  Define
\[
 \operatorname{Patch}_3(y,0)=0,
 \qquad
 \operatorname{Patch}_3(y,c)=V(y)\quad(c>0).
\]
This is one doubling-step recursion on $c$, bounded by $S(y)$.  Hence
\[
 \widehat\rho(y)
 =\operatorname{Patch}_3\bigl(y,\operatorname{GE}_3(y)\bigr)
 =
 \begin{cases}
 0,&y\le2,\\
 S(B(y,A(y),A(y))),&y\ge3.
 \end{cases}
\]
For $y\ge3$, the successor of the largest strict subthreshold integer is the
least $z$ satisfying $z^2+2\ge y$; for $y\le2$, both functions are $0$.
Thus $\widehat\rho=\rho_{2,1}$.
\end{proof}

\begin{lemma}[an exact linear-depth ladder in the doubling row]
\label{lem:row-one-exact-depth}
For every $m\ge3$ there is a function $Q\in\Hclass{m}{1}{1}$ satisfying
\[
 D(Q(y))=y\qquad(y\in\Nat).
\]
One may take
\[
 Q(0)=0,
 \qquad
 Q(y)=\rho\bigl(D(e_3(y))\bigr)
 \quad(y>0).
\]
\end{lemma}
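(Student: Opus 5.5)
The plan is to prove membership first, by a direct composition, and then verify the depth identity by induction on $y$ through the binary length of $e_3(y)$. Throughout, $D=D_{1,1}$ is the binary length, $D(v)=\lceil\log_2(v+1)\rceil$, and $\rho(v)=\lfloor v/2\rfloor$.

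\emph{Membership.} Since $m\ge3$, $e_3\in B_m$. Lemma~\ref{lem:internal-controls} puts $\rho=\rho_{1,1}$ and $D=D_{1,1}$ in $\Hclass{m}{1}{1}$, so $y\mapsto\rho(D(e_3(y)))$ is a composition. At $y=0$ it gives $e_3(0)=2$, $D(2)=2$, $\rho(2)=1$, which is the wrong value, so the origin must be patched. I would set
\[
 Q(y)=\operatorname{ZV}\bigl(\rho(D(e_3(y))),\,y\bigr),
\]
using the zero-or-value selector of Lemma~\ref{lem:internal-zero-value-selector}. This $Q$ is $0$ at $y=0$ and equals the displayed formula for $y>0$. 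For $y=0$ the identity $D(Q(0))=D(0)=0$ is then immediate.

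\emph{Depth identity.} For $y\ge1$ put $b_y=D(e_3(y))$. A positive $v$ has $D(\lfloor v/2\rfloor)=D(v)-1$, so $D(Q(y))=y$ is equivalent to $b_y=y+1$ in the sense of bit length of $b_y$, that is, to
\[
 2^y\le b_y\le 2^{y+1}-1 .
\]
I would prove the slightly stronger invariant
\[
 2^y+1\le b_y\le 2^{y+1}-1\qquad(y\ge1)
\]
by induction on $y$, using $e_3(y+1)=e_2(e_3(y))=e_3(y)^2+2$ from Lemma~\ref{lem:generator-identities}.

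The base case is $e_3(1)=e_2(2)=6$, so $b_1=3$, which lies in $[3,3]$.

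For the step, suppose $E$ has bit length $b$, so $2^{b-1}\le E<2^b$. Then
\[
 2^{2b-2}\le E^2<E^2+2\le(2^b-1)^2+2<2^{2b},
\]
where the last inequality holds because $b\ge2$. Hence $D(E^2+2)\in\{2b-1,2b\}$. Applied to $b=b_y$ this gives
\[
 b_{y+1}\ge 2(2^y+1)-1=2^{y+1}+1,
 \qquad
 b_{y+1}\le 2(2^{y+1}-1)=2^{y+2}-2\le 2^{y+2}-1,
\]
which closes the induction.

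The only delicate point is choosing the invariant window so that it survives the doubling map $b\mapsto\{2b-1,2b\}$ and also contains the starting value $b_1=3$. The lower bound $2^y+1$ (rather than $2^y$) is what absorbs the $-1$ loss at each squaring. The upper bound $2^{y+1}-1$, rather than $2^{y+1}-2$, is needed for the base case.
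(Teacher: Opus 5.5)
Your proof is correct and follows the paper's argument. Membership uses the same $\operatorname{ZV}$ patch at the origin, and the depth identity rests on the same window $2^y+1\le D(e_3(y))\le 2^{y+1}-1$, proved by induction from $e_3(y+1)=e_3(y)^2+2$; the only cosmetic difference is that you propagate the bit length directly through $b\mapsto\{2b-1,2b\}$, whereas the paper bounds $e_3(y)$ itself by $2^{2^y}<e_3(y)$ and $e_3(y)+1\le 2^{2^{y+1}-1}$ and then reads off the same window.
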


\begin{proof}
The generator $e_3$ belongs to $B_m$ when $m\ge3$, while $D$ and $\rho$
are internal by Lemma~\ref{lem:internal-controls}.  Put
\[
 U(y)=\rho(D(e_3(y))),\qquad
 Q(y)=\operatorname{ZV}(U(y),y).
\]
Lemma~\ref{lem:internal-zero-value-selector} gives $Q(0)=0$ and
$Q(y)=U(y)$ for $y>0$.

Put $a_y=e_3(y)$.  Then $a_1=6$ and $a_{y+1}=a_y^2+2$.  For every $y\ge1$,
\begin{equation}
\label{eq:e3-binary-window}
 2^{2^y}<a_y,
 \qquad
 a_y+1\le2^{2^{y+1}-1}.
\end{equation}
The inequalities hold at $y=1$.  If they hold at $y$, then
\[
 a_{y+1}=a_y^2+2>2^{2^{y+1}},
\]
and
\[
 a_{y+1}+1=a_y^2+3<(a_y+1)^2
 \le2^{2(2^{y+1}-1)}
 <2^{2^{y+2}-1}.
\]
Thus \eqref{eq:e3-binary-window} follows by induction.

Let $L_y=D(a_y)=\lceil\log_2(a_y+1)\rceil$.  Then
\[
 2^y+1\le L_y\le2^{y+1}-1,
\]
so
\[
 2^{y-1}\le\left\lfloor\frac{L_y}{2}\right\rfloor\le2^y-1.
\]
Since $D(z)=\lceil\log_2(z+1)\rceil$, this is exactly
$D(\rho(L_y))=y$.  The origin was patched to $Q(0)=0$.
\end{proof}

\begin{theorem}[ordinary saturation in the doubling row]
\label{thm:row-one-saturation}
For every $l\ge1$,
\[
 \boxed{\Hclass{m}{1}{l}=\Eclass{m}\qquad(m\ge3).}
\]
\end{theorem}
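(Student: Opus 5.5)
The upper inclusion $\Hclass{m}{1}{l}\subseteq\Eclass{m}$ is Lemma~\ref{lem:calibration}. For the reverse inclusion, Lemma~\ref{lem:horizontal-collapse} applies, since $m\ge3\ge1=n$. It makes all strides in row $1$ equal, so it suffices to show $\Eclass{m}\subseteq\Hclass{m}{1}{1}$. The plan is to prove that $\Hclass{m}{1}{1}$ is closed under ordinary bounded recursion and then to induct over an $\Eclass{m}$ construction, exactly as in the saturation half of Theorem~\ref{thm:saturation}.

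I would obtain closure under bounded recursion from exact-depth transfer. Ordinary bounded recursion is step recursion along the source descent $\sigma=\rho_{0,1}$, that is $\sigma(y)=y\dotminus1$, with $D_\sigma(y)=y$. The target is $\mathcal A=\Hclass{m}{1}{1}=A_\rho(B_m)$ with $\rho=\rho_{1,1}$, so two inputs are needed for Lemma~\ref{lem:depth-transfer}.

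First, $\sigma$ must be internal. Predecessor belongs to $\mathcal A$: at any $m\ge1$, Lemma~\ref{lem:horizontal-collapse} and the row-zero identity give nothing directly, but predecessor can be written as $\operatorname{pred}(y)=\rho_{0,1}(y)$. Alternatively, and more directly, $y\dotminus1$ is a one-step recursion on a Boolean control. Concretely, I would use the comparison predicate from Theorem~\ref{thm:base-two-comparison}, which is available since $B_2\subseteq B_m$, so $\Hclass{2}{1}{1}\subseteq\mathcal A$. Then $\operatorname{pred}(y)$ is the largest $w<y$, computed by the same binary-descent maximisation used in Theorem~\ref{thm:base-two-first-bridge} with the test $w+1<y$ in place of $e_2(w)<y$. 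The test $w+1<y$ is an $\operatorname{LT}$ instance, and the recursion is bounded by $y$.

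Second, I need an internal $r$ with $D_\rho(r(y))\ge D_\sigma(y)=y$. This is precisely Lemma~\ref{lem:row-one-exact-depth}: take $r=Q$, so that $D(Q(y))=y$. Lemma~\ref{lem:depth-transfer} then simulates every admissible $\sigma$-recursion whose base, transition and bound lie in $\mathcal A$, and the induction over $\Eclass{m}$ constructions closes.

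I expect the main obstacle to be internality of predecessor, since row $1$ has no direct unit-decrement descent; the rest is bookkeeping of stage order. If the binary-search construction feels heavy, I would try a lighter route first. Write $y=2\rho(y)+(y\bmod 2)$, with parity taken from Theorem~\ref{thm:base-two-comparison}. Then for odd $y$ the predecessor is $\rho(y)+\rho(y)$, and for even $y>0$ it is $2\operatorname{pred}(\rho(y))+1$. This is a step recursion along $\rho$ itself, bounded by $y$, using the addition and parity selectors already internal at basis $2$, with $\operatorname{ZV}$ (Lemma~\ref{lem:internal-zero-value-selector}) handling the origin.
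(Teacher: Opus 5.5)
Your proof is correct, but it reaches the reverse inclusion by a heavier route than the paper. You reduce to $l=1$ by Lemma~\ref{lem:horizontal-collapse} and treat ordinary bounded recursion as step recursion along $\sigma=\rho_{0,1}$. You then apply the general exact-depth transfer of Lemma~\ref{lem:depth-transfer} with padding $r=Q$ from Lemma~\ref{lem:row-one-exact-depth}. This is legitimate, since $D(Q(y))=y=D_\sigma(y)$. The cost is that predecessor must be internal.

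Your binary-maximisation construction does supply predecessor, but it imports the base-two comparison apparatus of Theorem~\ref{thm:base-two-comparison}. Use the test $\operatorname{LT}(w,y)$, so that the final state is already $y\dotminus1$. With the test $w+1<y$ as you wrote it, you get $y\dotminus2$ and would need a successor and a patch at $y\le1$.

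The obstacle you anticipated is in fact illusory. Since $D(\rho(z))=\max\{D(z)-1,0\}$, the ladder you already use gives $\operatorname{pred}(y)=D(\rho(Q(y)))$ by composition. The paper exploits exactly this identity and avoids both predecessor and the transfer machinery. It sets $W(\bar x,0)=g(\bar x)$ and $W(\bar x,z)=h(\bar x,D(\rho(z)),W(\bar x,\rho(z)))$, so that $W(\bar x,z)=f(\bar x,D(z))$ with the earlier bound $b(\bar x,D(z))$. It then reads off $f(\bar x,y)=W(\bar x,Q(y))$.

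Your version is more modular: it exhibits saturation as one more instance of exact-depth transfer once a linear-depth ladder exists. The paper's version is a direct simulation that needs only the ladder $Q$ and the internal controls $\rho$ and $D$.

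Drop the ``lighter route'', which does not work as written. At a positive address $y$ the transition sees only $\rho(y)$ and the previous value. Since $\rho(2k)=\rho(2k+1)$ while $\operatorname{pred}(2k)\neq\operatorname{pred}(2k+1)$, the transition cannot branch on the parity of $y$. To repair it you would have to carry $y$ as a parameter and recover the current node of its descent, for instance with the retraction maps of Lemma~\ref{lem:depth-retraction}. Because your primary construction works, this is not a gap in the proof. Still, the whole predecessor paragraph can be replaced by the one-line identity above.
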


\begin{proof}
The upper inclusion is Lemma~\ref{lem:calibration}.  Let $m\ge3$ and first
take $l=1$.  By Corollary~\ref{cor:bound-presentation-equivalence}, it is
enough to consider ordinary bounded recursion with a bound nondecreasing in
the recursion coordinate.  Let
\[
 f(\bar x,0)=g(\bar x),
 \qquad
 f(\bar x,t+1)=h(\bar x,t,f(\bar x,t)),
 \qquad
 f(\bar x,t)\le b(\bar x,t),
\]
where $g,h,b$ have already been constructed in $\Hclass{m}{1}{1}$ and $b$
is nondecreasing in its final coordinate.  Define
\[
 W(\bar x,0)=g(\bar x),
\]
\[
 W(\bar x,z)=
 h\bigl(\bar x,D(\rho(z)),W(\bar x,\rho(z))\bigr)
 \qquad(z>0).
\]
Because $D(\rho(z))=D(z)-1$ for $z>0$, induction on $D(z)$ gives
\[
 W(\bar x,z)=f(\bar x,D(z)).
\]
The earlier function
\[
 \widehat b(\bar x,z)=b(\bar x,D(z))
\]
is nondecreasing in $z$ and bounds $W$.  If $Q$ is supplied by
Lemma~\ref{lem:row-one-exact-depth}, then
\[
 W(\bar x,Q(y))=f(\bar x,D(Q(y)))=f(\bar x,y).
\]
Thus $\Hclass{m}{1}{1}$ is closed under ordinary bounded recursion, and
construction induction yields $\Eclass{m}\subseteq\Hclass{m}{1}{1}$.
Finally, Lemma~\ref{lem:horizontal-collapse} gives
$\Hclass{m}{1}{l}=\Hclass{m}{1}{1}$ for every $l\ge1$.
\end{proof}

\subsection*{Binary complexity at initial basis two}

The preceding results compare recursion strength inside the hierarchy.  We now
ask a different question: how large is the exceptional class
$\Hclass{2}{1}{*}$ under standard binary computation?

For a fixed arity, write
\[
 \|\bar x\|_{\rm bin}
 :=1+\sum_i\left\lceil\log_2(x_i+1)\right\rceil
\]
for the total binary input length.  Let $\mathsf{FP}$ denote the total
numerical functions computed by deterministic multitape Turing machines in
time polynomial in this length, with inputs and outputs in standard binary
notation.  The arity, $l$, and the derivation are fixed.  Let $\mathsf P$ and
$\mathsf{NP}$ have their usual binary-language meanings.  A numerical
predicate decides the language obtained from the standard binary encoding of
its input tuple.

\begin{theorem}[strict binary polynomial-time inclusion]
\label{thm:base-two-fp}
For every fixed $l\ge1$,
\[
 \boxed{\Hclass{2}{1}{l}\subsetneq\mathsf{FP}.}
\]
In particular, every Boolean-valued member of $\Hclass{2}{1}{l}$ decides a
language in $\mathsf P$.
\end{theorem}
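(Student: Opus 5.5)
By Lemma~\ref{lem:horizontal-collapse} (applicable since $m=2\ge n=1$), $\Hclass{2}{1}{l}=\Hclass{2}{1}{1}$, so I will prove the inclusion for $l=1$, where $\rho(y)=\lfloor y/2\rfloor$ and $D(y)=\lceil\log_2(y+1)\rceil$. The plan is to evaluate a fixed derivation $\delta$ directly in binary, by induction on the derivation tree. For each node I will prove simultaneously two things: a polynomial bound on the binary length of every value created in the evaluation, and a polynomial bound on the running time. Both bounds are to be polynomial in the total input length $N=\|\bar x\|_{\rm bin}$.

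\textbf{Size bound.} I will use Lemma~\ref{lem:majorants} with $m=2$. Every declared bound has a composition majorant $T\le e_2^{[c]}(M+c)$ for a fixed $c$. Since $e_2$ at most squares its argument (plus $2$), $c$ iterates raise the bit length by at most a factor $2^c$ plus a constant. So every recursive state has bit length $O_\delta(N)$ times a fixed constant. For composition and transition evaluations I will redo the envelope argument of Lemma~\ref{lem:derivation-internal-envelope}, tracking bit lengths instead of magnitudes. At each node, inputs of bit length at most $\ell$ produce values of bit length at most $k_\eta\ell+k_\eta$. At a recursion node the previous state is controlled by the static bound, not by iteration, so the constant does not compound with the depth $d$.

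\textbf{Time bound.} This is the main point to check carefully. A step-recursion node at address $z$ has depth $d=D(z)=O(\text{bitlength}(z))$. All schedule points $z_j=\lfloor z/2^{d-j}\rfloor$ are prefixes of $z$ and are computed trivially. We then run $d$ transition evaluations sequentially. Each of them, by induction, takes time polynomial in the bit lengths of its inputs, and these lengths are uniformly $O(N)$ by the size bound. So the cost is $d\cdot\mathrm{poly}(N)=\mathrm{poly}(N)$. The induction must bound the time of $\eta$ as a polynomial in its input length, uniformly for all stages. This works precisely because the state lengths are capped by the bound majorant. Without that cap, nested recursion could blow up. Composition multiplies polynomials finitely often, and $e_2$, successor and projections are polynomial-time in binary. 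The declared bound is never evaluated. This gives $\sem{\delta}\in\mathsf{FP}$, and Boolean members decide languages in $\mathsf P$.

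\textbf{Strictness.} I expect the simplest witness to be growth. $\mathsf{FP}$ contains $x\mapsto 2^{\lambda(x)^2}$, or more simply $x\mapsto x^{\lambda(x)}$ computed as $2^{\lambda(x)^2}$, whose output has $\lambda(x)^2$ bits, polynomial in the input length. By Lemma~\ref{lem:majorants}, every unary member of $\Hclass{2}{1}{l}$ is bounded by $e_2^{[c]}(x+c)\le (x+c)^{2^c}+O(1)$, a fixed polynomial in $x$. The function $2^{\lambda(x)^2}\ge x^{\lambda(x)-1}$ eventually exceeds every fixed polynomial, so it lies in $\mathsf{FP}$ but not in the class. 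The only delicate step remains the uniform bit-length invariant through nested recursion, and I would state it explicitly as an induction hypothesis: \emph{for every input cap $\ell$ on bit lengths, all created values have length at most $k_\eta(\ell+1)$ and evaluation time is at most $(\ell+2)^{k_\eta}$.}
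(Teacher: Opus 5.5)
Your proposal is correct and is essentially the paper's own argument. You run a simultaneous induction on the fixed derivation: admissibility caps every state by a fixed iterate $e_2^{[c]}(M+c)$, so all values have $O(N)$ bits and each recursion node makes only $O(N)$ transition calls on such arguments. Strictness then follows because $2^{\lambda(x)^2}$, the diagonal of the paper's $\operatorname{smash}$, outgrows the polynomial majorants of Lemma~\ref{lem:majorants}.

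Reducing to $l=1$ through Lemma~\ref{lem:horizontal-collapse} is harmless but unnecessary. For any fixed $l$, the schedule points $\lfloor z/2^{l(d-j)}\rfloor$ are still binary prefixes of $z$. Also, $x^{\lambda(x)}$ is not ``computed as'' $2^{\lambda(x)^2}$: one only has $x^{\lambda(x)}<2^{\lambda(x)^2}$. This slip does not affect your witness.
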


\begin{proof}
Fix a derivation and let $N$ be the binary input length.  Induction on the
derivation shows simultaneously that evaluation is polynomial-time and that
all intermediate values have $O(N)$ bits.  This is immediate for $B_2$ and is
preserved by composition.  For a step recursion along
$\rho_{1,l}(y)=\lfloor y/2^l\rfloor$, Lemma~\ref{lem:majorants} gives, for
fixed $r,c$,
\[
 b(\bar x,z)\le e_2^{[r]}(M+c),\qquad M=\max(\bar x,z).
\]
Since fixed iterates of $e_2(t)=t^2+2$ are polynomials in $M$, admissibility
bounds every state by $O(N)$ bits.  The descent has
\[
 D_{1,l}(y)=\left\lceil\frac{\log_2(y+1)}l\right\rceil=O(N)
\]
stages, obtained by fixed right shifts.  Evaluating the base and then the
transition back along these $O(N)$ addresses makes only polynomially many
calls on $O(N)$-bit arguments, so the recursion remains polynomial-time.
Thus $\Hclass{2}{1}{l}\subseteq\mathsf{FP}$.

For strictness, recall $\lambda(x)=\lceil\log_2(x+1)\rceil$ and
$\operatorname{smash}(x,y)=2^{\lambda(x)\lambda(y)}$.  This function is in
$\mathsf{FP}$ by writing one followed by $\lambda(x)\lambda(y)$ zeroes.  On
the other hand, ordinary calibration and Lemma~\ref{lem:majorants} imply that
every $f\in\Hclass{2}{1}{l}\subseteq\Eclass{2}$ has a polynomial numerical
majorant in $M=\max_i x_i$.  On the diagonal,
$\operatorname{smash}(x,x)=2^{L^2}$ with $L=\lambda(x)$ and $x<2^L$, which
eventually exceeds every polynomial in $x$.  Hence
$\operatorname{smash}\notin\Eclass{2}$ and the inclusion is proper.
\end{proof}

\begin{corollary}[conditional strictness at initial basis two]
\label{cor:base-two-conditional-strictness}
For every fixed $l\ge1$,
\[
 \Hclass{2}{1}{l}\subseteq\Eclass{2}.
\]
Moreover,
\[
 \Hclass{2}{1}{l}=\Eclass{2}
 \quad\Longrightarrow\quad
 \mathsf P=\mathsf{NP}.
\]
Consequently, under the hypothesis $\mathsf P\ne\mathsf{NP}$,
\[
 \boxed{\Hclass{2}{1}{l}\subsetneq\Eclass{2}.}
\]
More generally, the inclusion is strict whenever
$\Eclass{2}\nsubseteq\mathsf{FP}$.
\end{corollary}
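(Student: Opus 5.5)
The inclusion $\Hclass{2}{1}{l}\subseteq\Eclass{2}$ is a special case of the ordinary upper calibration, Lemma~\ref{lem:calibration}, so the whole content lies in the implication. We argue by contraposition through Theorem~\ref{thm:base-two-fp}. If $\Hclass{2}{1}{l}=\Eclass{2}$, then Theorem~\ref{thm:base-two-fp} gives $\Eclass{2}\subseteq\mathsf{FP}$. It therefore suffices to prove the general principle stated last: $\Eclass{2}\subseteq\mathsf{FP}$ implies $\mathsf P=\mathsf{NP}$. Once that is done, the strictness under $\mathsf P\ne\mathsf{NP}$ and the strictness whenever $\Eclass{2}\nsubseteq\mathsf{FP}$ both follow immediately from the same theorem.

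For the principle, we use the classical fact that $\Eclass{2}$, which is Rose's linear-space level $\mathcal E^2$, contains an $\mathsf{NP}$-complete predicate in numerical form. Concretely, we choose $\mathrm{SAT}$ under a fixed binary coding $x$ of CNF formulas. The predicate $[x$ codes a satisfiable formula$]$ is decidable in space linear in the binary length of $x$: one enumerates candidate assignments of at most $\lambda(x)$ bits and evaluates each clause. By Ritchie's characterization, the predicates of $\mathcal E^2$ are exactly those decidable in linear space on the numerical (binary) input, and we cite \cite{Ritchie1963,Rose1984} for this. Alternatively, one can build the predicate directly in $\Eclass{2}$. The check for a fixed assignment $a<2^{\lambda(x)}\le x+1$ is computed by bounded recursion with polynomial bounds, and the existential quantifier over $a\le x+1$ becomes a bounded least-witness search, as in the proof of Lemma~\ref{lem:step-function-ezero}, with bound $x+1$.

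With this predicate in hand, $\Eclass{2}\subseteq\mathsf{FP}$ places $\mathrm{SAT}$ in $\mathsf P$ under binary encoding. Since the numerical encoding is a polynomial-time bijection with the standard string encoding, up to leading-bit conventions, this yields $\mathsf P=\mathsf{NP}$.

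The main obstacle is to justify the membership of an $\mathsf{NP}$-complete predicate in $\Eclass{2}$ cleanly under the paper's index convention, which shifts Rose's indices by one, so that $\Eclass{2}$ here has basis $\{+,x^2+2\}$, which is Rose's $\mathcal E^2$. The safest route is to cite the linear-space characterization of Rose's $\mathcal E^2$, note that our $\Eclass{2}$ coincides with it, and then verify that $\mathrm{SAT}$ lies in linear space with respect to the binary length of the numerical code.
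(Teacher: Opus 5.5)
Your proof is correct, but it finds the $\mathsf{NP}$-complete predicate inside $\Eclass{2}$ by a different route from the paper. The skeleton is the same in both: calibration (Lemma~\ref{lem:calibration}) gives the inclusion, Theorem~\ref{thm:base-two-fp} turns equality into $\Eclass{2}\subseteq\mathsf{FP}$, and the ``more generally'' clause is just the contrapositive of that step. The paper then cites Grozea's explicit construction of a SAT predicate already in the lowest class $\Eclass{0}$. This needs no machine characterization of $\Eclass{2}$, but it ties the argument to his packed postfix encoding of CNF formulas and to checking that translation from the standard encoding is polynomial-time. You instead identify $\Eclass{2}$ with Grzegorczyk's $\mathcal E^2$, which the paper's convention guarantees. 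You then invoke Ritchie's theorem that its predicates are exactly the deterministic linear-space ones, so any reasonable binary coding of SAT works. Your route is shorter, more standard, and yields more. Under equality every linear-space predicate would lie in $\mathsf P$, and padding then gives $\mathsf{PSPACE}\subseteq\mathsf P$. So equality would even force $\mathsf P=\mathsf{PSPACE}$, and the hypothesis $\Eclass{2}\nsubseteq\mathsf{FP}$ is in fact equivalent to $\mathsf P\ne\mathsf{PSPACE}$. Two cosmetic slips. First, the ``general principle stated last'' in the corollary is the clause about $\Eclass{2}\nsubseteq\mathsf{FP}$, not the implication $\Eclass{2}\subseteq\mathsf{FP}\Rightarrow\mathsf P=\mathsf{NP}$ that you go on to prove. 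Second, in your alternative direct construction $2^{\lambda(x)}\ge x+1$, not $\le$; one only has $2^{\lambda(x)}\le 2x+1$. This is harmless, since that search bound is also available in $\Eclass{2}$.
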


\begin{proof}
Ordinary calibration gives the inclusion, while
Theorem~\ref{thm:base-two-fp} puts its source in $\mathsf{FP}$.
Grozea~\cite[Definition~1, p.~271; Theorem~3, p.~273]{Grozea2004} defines
his $E_0$ from zero, successor, composition, and limited recursion, which is the
low-level closure needed for the inclusion used here.  In that framework he
defines the packed-array representation for CNF formulas and constructs a
Boolean SAT predicate in the corresponding lowest class, hence in
$\Eclass{0}\subseteq\Eclass{2}$ under the present notation.  In his postfix encoding, symbols are stored
in fixed-width binary fields.  With
$k=\lceil\log_2(v+4)\rceil$, a formula of $s$ symbols and $v$ variables uses
$O(sk+k+v)$ bits, hence polynomial size under the standard CNF encoding.
Converting a standard CNF instance to this packed postfix representation is a
polynomial-time symbol-by-symbol encoding and produces polynomially many bits.
Therefore equality $\Hclass{2}{1}{l}=\Eclass{2}$ would place this fixed SAT
predicate in $\mathsf{FP}$ after polynomial-time preprocessing,
so $\mathsf P=\mathsf{NP}$.  The remaining statements are immediate.
\end{proof}

\begin{corollary}[the base-two vertical bridge]
\label{cor:first-row-one-bridge}
For every $m\ge2$ and every $l,l'\ge1$,
\[
 \Hclass{m}{2}{l}\subseteq\Hclass{m}{1}{l'}.
\]
At $m=2$ the inclusion is strict.  For every $m\ge3$ both sides equal
$\Eclass{m}$.
\end{corollary}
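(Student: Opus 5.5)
The plan is to obtain the inclusion by exact-depth transfer, Lemma~\ref{lem:depth-transfer}, applied with target $\mathcal A=\Hclass{m}{1}{l'}$ and source descent $\sigma=\rho_{2,l}$. First I reduce to $l'=1$. For $m\ge2\ge1$, Lemma~\ref{lem:horizontal-collapse} (row $n=1$, $m\ge n$) gives $\Hclass{m}{1}{l'}=\Hclass{m}{1}{1}$. By construction induction over $\Hclass{m}{2}{l}$ it then suffices to do two things: show that the generators in $B_m$ lie in the target, which is trivial, and show that every $\rho_{2,l}$-step recursion with data in the target lies in the target.

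Second, the source descent must be internal. Theorem~\ref{thm:base-two-first-bridge} places $\rho_{2,1}$ in $\Hclass{2}{1}{1}\subseteq\Hclass{m}{1}{1}$ by \eqref{eq:basis-monotonicity}. Lemma~\ref{lem:step-function-arithmetic} gives $\rho_{2,l}=\rho_{2,1}^{[l]}$, so $\sigma$ is internal by composition.

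Third, Lemma~\ref{lem:depth-transfer} needs an internal padding map $r$ with $D_{1,1}(r(y))\ge D_{2,l}(y)$. I take $r$ to be the identity. Since $2x+1\le x^2+2$ for all $x$, we get $\varphi_{1,1}\le\varphi_{2,1}\le\varphi_{2,l}$. Reversing generalized inverses gives $\rho_{2,l}\le\rho_{1,1}$ pointwise, and both are nondecreasing descents. Lemma~\ref{lem:descent-comparison} then yields $D_{2,l}\le D_{1,1}$. With these three inputs, Lemma~\ref{lem:depth-transfer} simulates every source recursion, which proves the inclusion.

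For the remaining claims:
\begin{itemize}
\item For $m\ge3$, Theorem~\ref{thm:saturation} (with $m\ge n+1=3$) gives $\Hclass{m}{2}{l}=\Eclass{m}$, and Theorem~\ref{thm:row-one-saturation} gives $\Hclass{m}{1}{l'}=\Eclass{m}$.
\item For strictness at $m=2$, I use parity. Theorem~\ref{thm:base-two-comparison} puts ordinary parity in $\Hclass{2}{1}{1}=\Hclass{2}{1}{l'}$. The pair $(2,2)$ lies in the trace band, so Corollary~\ref{cor:finite-range-sparsity} forces $\Delta_p(N)=N^{o(1)}$ for any finite-range member of $\Hclass{2}{2}{l}$. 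Parity has $\Delta_p(N)=N$, exactly as in the proof of Theorem~\ref{thm:saturation}, so it lies in the right-hand class but not the left.
\end{itemize}

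The main obstacle is the internality of $\rho_{2,1}$ at basis two, which is exactly the bridge theorem. Once that is granted, the only care needed is the direction of the depth inequality and the reduction to stride $1$ via horizontal collapse, which requires $m\ge n=1$.
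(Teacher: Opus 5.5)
Your proposal is correct and follows the paper's proof essentially step for step. It uses internality of $\rho_{2,1}$ from Theorem~\ref{thm:base-two-first-bridge}, the pointwise comparison with $\rho_{1,1}$ via Lemma~\ref{lem:descent-comparison}, exact-depth transfer with identity padding, parity for strictness at $m=2$, and the two saturation theorems for $m\ge3$. The only difference is minor: the paper also applies horizontal collapse on the source side to reduce to $\Hclass{m}{2}{1}$, whereas you keep the stride $l$, obtain $\rho_{2,l}=\rho_{2,1}^{[l]}$ by composition, and compare $\varphi_{1,1}\le\varphi_{2,l}$ directly, which works equally well.
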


\begin{proof}
Horizontal collapse reduces the source to $\Hclass{m}{2}{1}$ and the target
to $\Hclass{m}{1}{1}$.  Theorem~\ref{thm:base-two-first-bridge} and
\eqref{eq:basis-monotonicity} put the source descent $\rho_{2,1}$ in the target for every
$m\ge2$.  Moreover,
\[
 \rho_{2,1}(y)\le\left\lfloor\frac y2\right\rfloor=\rho_{1,1}(y).
\]
Indeed, for $y=2k$ one has $k^2+2\ge2k$, while for $y=2k+1$ one has
$k^2+2-(2k+1)=(k-1)^2\ge0$.  Lemma~\ref{lem:descent-comparison} gives
$D_{2,1}(y)\le D_{1,1}(y)$.  Exact-depth transfer,
Lemma~\ref{lem:depth-transfer}, with padding map $r(y)=y$, now simulates every
$\rho_{2,1}$-recursion in the doubling row.

At $m=2$, parity belongs to the target by
Theorem~\ref{thm:base-two-comparison}, whereas the finite-range trace
consequence excludes parity from $\Hclass{2}{2}{l}$.  Hence the inclusion is
strict.  For $m\ge3$, Theorems~\ref{thm:saturation} and
\ref{thm:row-one-saturation} identify both sides with $\Eclass{m}$.
\end{proof}

\begin{proposition}[proved comparisons involving $n=1$]
\label{prop:partial-row-one}
The following statements hold.
\begin{enumerate}
\item For all $l,l'\ge1$,
\[
 \Hclass{0}{1}{l}\subseteq\Hclass{0}{1}{l'}
 \quad\Longleftrightarrow\quad l'\mid l.
\]
\item If $m\ge1$ and $l,l'\ge1$, then
\[
  \Hclass{m}{1}{l}=\Hclass{m}{1}{l'}.
\]
\item For every $a,b\in\Nat$ and $l,l'\ge1$,
\[
  \Hclass{a}{1}{l}\subseteq\Hclass{b}{0}{l'}
  \quad\Longleftrightarrow\quad a\le b.
\]
\item If $m\ge2$, then
\[
 \Hclass{m}{2}{l}\subseteq\Hclass{m}{1}{l'}.
\]
This inclusion is strict at $m=2$ and is equality for $m\ge3$.
\item Let $n\ge2$ and $b\ge2$.  Then
\[
 \Hclass{a}{n}{l}\subseteq\Hclass{b}{1}{l'}
 \quad\Longleftrightarrow\quad a\le b.
\]
\item Let $a\ge2$ and $n'\ge2$.  Then
\[
 \Hclass{a}{1}{l}\subseteq\Hclass{b}{n'}{l'}
 \quad\Longleftrightarrow\quad
 a\le b\ \text{and}\ b\ge n'+1.
\]
\item For every $m\ge3$ and all $l,l'\ge1$,
\[
 \Hclass{m}{0}{l}=\Hclass{m}{1}{l'}=\Eclass{m}.
\]
\end{enumerate}
\end{proposition}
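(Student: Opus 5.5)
The proposition mostly collects earlier results, so I will prove it item by item, assembling them.

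\textbf{Items 1, 2, 4 and 7.} Item~1 is Theorem~\ref{thm:basis-zero-doubling-main}. Item~2 is Lemma~\ref{lem:horizontal-collapse} with $n=1\le m$. Item~4 is Corollary~\ref{cor:first-row-one-bridge}. Item~7 combines Proposition~\ref{prop:row-zero} with Theorem~\ref{thm:row-one-saturation}.

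\textbf{Item 3.} This is the case $n=1$ of Corollary~\ref{cor:row-zero-comparisons}, whose first equivalence was stated for every $n\in\Nat$. Sufficiency follows from ordinary calibration: $\Hclass{a}{1}{l}\subseteq\Eclass{a}\subseteq\Eclass{b}=\Hclass{b}{0}{l'}$. Necessity follows from Lemma~\ref{lem:base-separation}.

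\textbf{Item 5.} Necessity of $a\le b$ is again Lemma~\ref{lem:base-separation}. For sufficiency, I first treat $a=b$ and then use basis monotonicity \eqref{eq:basis-monotonicity}.
\begin{itemize}
\item If $n=2$, item~4 gives $\Hclass{b}{2}{l}\subseteq\Hclass{b}{1}{l'}$.
\item If $n>2$, Theorem~\ref{thm:global-classification}(1) gives $\Hclass{b}{n}{l}\subseteq\Hclass{b}{2}{1}$, and item~4 then finishes.
\end{itemize}
The only subtlety is that $a\le b$ may fail to give $a$ large enough to apply item~4 directly. This is avoided by chaining $\Hclass{a}{n}{l}\subseteq\Hclass{b}{n}{l}$ first, which uses $b\ge2$ and nothing about $a$.

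\textbf{Item 6.} Sufficiency: for $a\le b$ and $b\ge n'+1$,
\[
\Hclass{a}{1}{l}\subseteq\Eclass{a}\subseteq\Eclass{b}=\Hclass{b}{n'}{l'},
\]
by Lemma~\ref{lem:calibration} and Theorem~\ref{thm:saturation}. For necessity, $a\le b$ comes from Lemma~\ref{lem:base-separation}. Suppose $b\le n'$. Since $a\ge2$, Theorem~\ref{thm:base-two-comparison} together with basis monotonicity and item~2 puts ordinary parity into $\Hclass{a}{1}{l}$. But $(b,n')$ lies in the trace band, so Corollary~\ref{cor:finite-range-sparsity} excludes parity from $\Hclass{b}{n'}{l'}$. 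Hence $b\ge n'+1$.

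\textbf{Main obstacle.} I expect the main care to be in item~6, in transferring parity from $\Hclass{2}{1}{1}$ to arbitrary $a\ge2$ and arbitrary stride $l$. This rests on horizontal collapse at $m\ge1$ and on the monotonicity \eqref{eq:basis-monotonicity}. Beyond that, everything is bookkeeping of earlier results.
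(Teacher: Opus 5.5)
Your proposal is correct and follows the paper's proof item by item, citing the same earlier results; in item~6 you also spell out the basis monotonicity and horizontal collapse that the paper uses implicitly when transferring parity from $\Hclass{2}{1}{1}$ to $\Hclass{a}{1}{l}$. The only variation is in item~5: you route every $b\ge2$ through Corollary~\ref{cor:first-row-one-bridge} at level $b$, whereas the paper uses the bridge only for $b=2$ and handles $b\ge3$ via the saturation $\Hclass{b}{1}{l'}=\Eclass{b}$ plus ordinary calibration; both routes are valid.
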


\begin{proof}
The first statement is Theorem~\ref{thm:basis-zero-doubling-main}; the second
is Lemma~\ref{lem:horizontal-collapse}; the third is the $n=1$ instance of
Corollary~\ref{cor:row-zero-comparisons}; the fourth is
Corollary~\ref{cor:first-row-one-bridge}; and the seventh is
Theorem~\ref{thm:row-one-saturation} together with
Proposition~\ref{prop:row-zero}.

For the fifth statement, necessity follows from Lemma~\ref{lem:base-separation}.
If $b=2$ and $a\le2$, Theorem~\ref{thm:global-classification} first embeds
the source into $\Hclass{2}{2}{1}$, and
Corollary~\ref{cor:first-row-one-bridge} embeds that class into
$\Hclass{2}{1}{l'}$.  If $b\ge3$, then
$\Hclass{b}{1}{l'}=\Eclass{b}$, so ordinary calibration and $a\le b$ give
the inclusion.

For the sixth statement, Lemma~\ref{lem:base-separation} gives $a\le b$.  If
$b\le n'$, the target lies in the trace band and therefore excludes parity,
whereas parity belongs to $\Hclass{2}{1}{1}$ by
Theorem~\ref{thm:base-two-comparison} and hence to every
$\Hclass{a}{1}{l}$ with $a\ge2$.  Thus $b\ge n'+1$ is necessary.  Under the
two displayed conditions, ordinary calibration gives
$\Hclass{a}{1}{l}\subseteq\Eclass{a}\subseteq\Eclass{b}$, and Theorem~\ref{thm:saturation} gives $\Eclass{b}=\Hclass{b}{n'}{l'}$.
\end{proof}

\begin{remark}[scope boundary for the doubling row]
\label{rem:row-one-open-families}
The complete classification theorem of Section~\ref{sec:classification-proof}
concerns $n,n'\ge2$ and is unaffected by the exceptional doubling row.  Within
$n=1$, Proposition~\ref{prop:partial-row-one} settles every equal-row stride
comparison and all cases needed for the structural transitions proved above.
What is not claimed here is a complete classification of the residual
cross-row cases in which the source or target has initial basis at most $1$.
Separately, at initial basis $2$ there remains a saturation question internal
to the doubling row,
\[
 \Hclass{2}{1}{*}\stackrel{?}{=}\Eclass{2}.
\]
This basis-two equality problem is distinct from those residual basis-$0/1$
cross-row comparisons.  We have
\[
 \Hclass{2}{2}{*}\subsetneq\Hclass{2}{1}{*}\subseteq\Eclass{2},
 \qquad \Hclass{2}{1}{*}\subsetneq\mathsf{FP}.
\]
Corollary~\ref{cor:base-two-conditional-strictness} makes the middle inclusion
strict under $\mathsf P\ne\mathsf{NP}$; the question mark records only the
absence of an unconditional separation.
\end{remark}

\section{Conclusion}

This paper shows that asymptotic recursion depth is not a complete invariant
of bounded recursive expressiveness.  Step recursion isolates a traversal
parameter while keeping the growth row and recursion format fixed.  For every
fixed $n\ge2$ and $m<n$, all fixed-stride descent depths are
$\Theta$-equivalent, but the corresponding classes form exactly the dual
divisibility order.  Thus one fixed Grzegorczyk sector already contains
infinite descending chains, infinite antichains, and every finite
partial-order pattern without any change of growth scale.  At $m=n$ this
traversal information disappears completely, and at $m=n+1$ the common class
becomes ordinary bounded recursion.  The classification therefore identifies
both a traversal-sensitive invariant and the exact threshold at which stronger
initial functions erase it.

The proof separates positive and negative mechanisms.  Exact-depth transfer
aligns descents even at the bottom basis; direct derivation induction gives
piecewise-monotone trace sparsity, while canonical zones and selected
numerical dependency chains yield reverse divisibility and vertical
separation.

Theorem~\ref{thm:global-classification} gives the complete $n,n'\ge2$
criterion and Corollary~\ref{cor:row-zero-comparisons} integrates row zero.
The exceptional doubling row is also sharp at its weakest basis:
$\Hclass{0}{1}{p}\subseteq\Hclass{0}{1}{q}$ holds exactly when $q\mid p$,
and every class in this dual-divisibility sector is a proper subclass of
$\mathsf{FL}$.  Thus the traversal order already appears inside deterministic
functional logspace, whereas all strides collapse from basis one onward.  It has the strict bridge
$\Hclass{2}{2}{*}\subsetneq\Hclass{2}{1}{*}$ and satisfies
$\Hclass{m}{1}{*}=\Eclass{m}$ for $m\ge3$.  At basis $2$ it lies properly in
$\mathsf{FP}$, while equality with $\Eclass{2}$ would imply
$\mathsf P=\mathsf{NP}$.  These low-basis doubling-row questions lie outside
the complete $n,n'\ge2$ classification and do not affect the traversal-order
theorem.

\section*{Acknowledgements}
The author is especially grateful to his wife, Stephanie Nekrasova, and his
father, Vladimir Osipov, for their unwavering support and encouragement.
Without their support, this work would not have been completed.

\section*{Data availability}
No data was used for the research described in this article.

\section*{Declaration of generative AI and AI-assisted technologies in the manuscript preparation process}
During the preparation of this work, the author used ChatGPT (OpenAI) for
copyediting, grammatical correction, and language polishing of the manuscript.  The author reviewed and edited the resulting
text as needed and takes full responsibility for the content of the article.

\clearpage

\appendix
\section{Inclusion diagrams}
\label{sec:inclusion-diagrams}

For reference, the table below gives the structural regimes proved in the
main text.  Figures~\ref{fig:global-inclusion-grid} and
\ref{fig:iteration-inclusions} then collect the detailed class relations; they
are not used in any proof.

\begin{center}
\small
\renewcommand{\arraystretch}{1.12}
\begin{tabular}{@{}>{\raggedright\arraybackslash}p{0.24\linewidth}>{\raggedright\arraybackslash}p{0.62\linewidth}@{}}
\toprule
\textbf{Sector} & \textbf{Classification} \\
\midrule
$n\ge2$, $m<n$ & equal-row inclusion is reverse divisibility: $\Hclass{m}{n}{l}\subseteq\Hclass{m}{n}{l'}$ iff $l'\mid l$ \\
$n\ge2$, $m=n$ & all strides give the same class \\
$n\ge2$, $m\ge n+1$ & $\Hclass{m}{n}{l}=\Eclass{m}$ \\
$n=1$, $m=0$ & equal-row inclusion is reverse divisibility \\
$n=1$, $m\ge1$ & all strides give the same class \\
$n=1$, $m\ge3$ & $\Hclass{m}{1}{l}=\Eclass{m}$ \\
\bottomrule
\end{tabular}
\end{center}

\clearpage
\begin{landscape}
\begin{figure}[p]
\centering
\resizebox{0.96\linewidth}{!}{%
\begin{tikzpicture}[
  >=Latex,
  x=1.16cm,y=0.52cm,
  cls/.style={draw,rounded corners=2pt,align=center,
              inner xsep=1.5pt,inner ysep=3pt,font=\scriptsize,
              minimum height=0.66cm,text width=1.22cm},
  ecls/.style={cls,very thick,minimum height=0.70cm},
  eqcls/.style={cls,font=\tiny,minimum height=0.56cm},
  bcls/.style={cls,thick,font=\tiny,minimum height=0.58cm},
  hcls/.style={cls,font=\tiny,minimum height=0.58cm},
  rowone/.style={cls,dashed,font=\tiny,minimum height=0.58cm},
  fpbox/.style={draw,double,rounded corners=5pt,very thick},
  edge/.style={->,thick},
  incl/.style={->,thick},
  eqedge/.style={<->,thick},
  comp/.style={->,thick},
  cont/.style={densely dotted,thick},
  rightcont/.style={->,thick},
  lab/.style={midway,fill=white,inner sep=1pt,font=\tiny},
  rowlab/.style={font=\scriptsize,anchor=east},
  ellipsis/.style={font=\scriptsize,fill=white,inner sep=1.2pt}
]
\node[ecls] (E0) at (0,23.50) {$\Eclass{0}$};
\node[ecls] (E1) at (2,23.50) {$\Eclass{1}$};
\node[ecls] (E2) at (4,23.50) {$\Eclass{2}$};
\node[ecls] (E3) at (6,23.50) {$\Eclass{3}$};
\node[ecls] (E4) at (8,23.50) {$\Eclass{4}$};
\node[ellipsis] (Ed) at (10,23.50) {$\cdots$};
\node[ecls] (Eim1) at (12,23.50) {$\Eclass{i-1}$};
\node[ecls] (Ei) at (14,23.50) {$\Eclass{i}$};
\node[ecls] (Eip1) at (16,23.50) {$\Eclass{i+1}$};
\node[ellipsis] (Edp) at (18,23.50) {$\cdots$};
\coordinate (Eright) at (19.35,23.50);
\node[font=\small\bfseries] (Gtitle) at (9,24.82)
 {Grzegorczyk hierarchy};
\foreach \a/\b in {E0/E1,E1/E2,E2/E3,E3/E4,Eim1/Ei,Ei/Eip1}
  \draw[edge] (\a) -- node[lab,above] {$\subsetneq$} (\b);
\draw[cont] (E4) -- (Ed); \draw[cont] (Ed) -- (Eim1);
\draw[rightcont] (Eip1) -- (Eright);
\node[draw,rounded corners=4pt,very thick,
      fit=(Gtitle)(E0)(E1)(E2)(E3)(E4)(Ed)(Eim1)(Ei)(Eip1)(Edp)(Eright),
      inner xsep=7pt,inner ysep=6pt] {};

\node[eqcls] (Q00) at (0,20.50) {$\Hclass{0}{0}{*}$};
\node[eqcls] (Q10) at (2,20.50) {$\Hclass{1}{0}{*}$};
\node[eqcls] (Q20) at (4,20.50) {$\Hclass{2}{0}{*}$};
\node[eqcls] (Q30) at (6,20.50) {$\Hclass{3}{0}{*}$};
\node[eqcls] (Q40) at (8,20.50) {$\Hclass{4}{0}{*}$};
\node[ellipsis] (Q0d) at (10,20.50) {$\cdots$};
\node[eqcls] (Qim10) at (12,20.50) {$\Hclass{i-1}{0}{*}$};
\node[eqcls] (Qi0) at (14,20.50) {$\Hclass{i}{0}{*}$};
\node[eqcls] (Qip10) at (16,20.50) {$\Hclass{i+1}{0}{*}$};
\node[ellipsis] (Q0dp) at (18,20.50) {$\cdots$};
\coordinate (Q0right) at (19.35,20.50);
\node[rowlab] at (-1.05,20.50) {$n=0$};
\foreach \a/\b in {E0/Q00,E1/Q10,E2/Q20,E3/Q30,E4/Q40,Eim1/Qim10,Ei/Qi0,Eip1/Qip10}
  \draw[eqedge] (\a.south) -- node[lab,right,xshift=3pt] {$=$} (\b.north);
\foreach \a/\b in {Q00/Q10,Q10/Q20,Q20/Q30,Q30/Q40,Qim10/Qi0,Qi0/Qip10}
  \draw[edge] (\a) -- node[lab,above] {$\subsetneq$} (\b);
\draw[cont] (Q40) -- (Q0d); \draw[cont] (Q0d) -- (Qim10);
\draw[rightcont] (Qip10) -- (Q0right);

\node[rowone] (R01g) at (0,17.80) {$\Hclass{0}{1}{l}$};
\node[rowone] (R11g) at (2,17.80) {$\Hclass{1}{1}{*}$};
\node[bcls] (R21g) at (4,17.80) {$\Hclass{2}{1}{*}$};
\node[eqcls] (R31g) at (6,17.80) {$\Hclass{3}{1}{*}$};
\node[eqcls] (R41g) at (8,17.80) {$\Hclass{4}{1}{*}$};
\node[ellipsis] (R1d) at (10,17.80) {$\cdots$};
\node[eqcls] (Rim11g) at (12,17.80) {$\Hclass{i-1}{1}{*}$};
\node[eqcls] (Ri1g) at (14,17.80) {$\Hclass{i}{1}{*}$};
\node[eqcls] (Rip11g) at (16,17.80) {$\Hclass{i+1}{1}{*}$};
\node[ellipsis] (R1dp) at (18,17.80) {$\cdots$};
\coordinate (R1right) at (19.35,17.80);
\node[rowlab] at (-1.05,17.80) {$n=1$};
\foreach \a/\b in {R01g/R11g,R11g/R21g,R21g/R31g}
  \draw[comp] (\a) -- node[lab,above] {$\subsetneq$} (\b);
\foreach \a/\b in {R31g/R41g,Rim11g/Ri1g,Ri1g/Rip11g}
  \draw[edge] (\a) -- node[lab,above] {$\subsetneq$} (\b);
\draw[cont] (R41g) -- (R1d); \draw[cont] (R1d) -- (Rim11g);
\draw[rightcont] (Rip11g) -- (R1right);
\foreach \a/\b in {R01g/Q00,R11g/Q10}
  \draw[incl] (\a.north) -- node[lab,right,xshift=3pt] {$\subseteq$} (\b.south);
\draw[incl] (R21g.north) -- node[lab,right,xshift=3pt] {$\subseteq$} (Q20.south);
\foreach \a/\b in {R31g/Q30,R41g/Q40,Rim11g/Qim10,Rip11g/Qip10}
  \draw[eqedge] (\a.north) -- node[lab,right,xshift=3pt] {$=$} (\b.south);
\draw[eqedge] (Ri1g.north) --
  node[lab,right,xshift=3pt] {$=$} (Qi0.south);

\node[hcls] (H02) at (0,15.00) {$\Hclass{0}{2}{1}$};
\node[hcls] (H12) at (2,15.00) {$\Hclass{1}{2}{1}$};
\node[bcls] (H22) at (4,15.00) {$\Hclass{2}{2}{*}$};
\node[eqcls] (H32) at (6,15.00) {$\Hclass{3}{2}{*}$};
\node[eqcls] (H42) at (8,15.00) {$\Hclass{4}{2}{*}$};
\node[ellipsis] (H2d) at (10,15.00) {$\cdots$};
\node[eqcls] (Him12) at (12,15.00) {$\Hclass{i-1}{2}{*}$};
\node[eqcls] (Hi2) at (14,15.00) {$\Hclass{i}{2}{*}$};
\node[eqcls] (Hip12) at (16,15.00) {$\Hclass{i+1}{2}{*}$};
\node[ellipsis] (H2dp) at (18,15.00) {$\cdots$};
\coordinate (H2right) at (19.35,15.00);
\node[rowlab] at (-1.05,15.00) {$n=2$};
\draw[edge] (H02) -- node[lab,above] {$\subsetneq$} (H12);
\draw[edge] (H12) -- node[lab,above] {$\subsetneq$} (H22);
\draw[comp] (H22) -- node[lab,above] {$\subsetneq$} (H32);
\draw[edge] (H32) -- node[lab,above] {$\subsetneq$} (H42);
\foreach \a/\b in {Him12/Hi2,Hi2/Hip12}
  \draw[edge] (\a) -- node[lab,above] {$\subsetneq$} (\b);
\draw[rightcont] (Hip12) -- (H2right);
\begin{scope}[on background layer]
\draw[edge] (H02.east) .. controls +(0.85,1.05) and +(0.85,-1.05) .. (Q00.east);
\draw[edge] (H12.east) .. controls +(0.95,1.05) and +(0.95,-1.05) .. (Q10.east);
\end{scope}
\node[font=\tiny,fill=white,inner sep=0.8pt] at (0.88,20.00) {$\subsetneq$};
\node[font=\tiny,fill=white,inner sep=0.8pt] at (2.88,20.00) {$\subsetneq$};
\draw[cont] (H42) -- (H2d); \draw[cont] (H2d) -- (Him12);
\draw[comp] (H22.north) -- node[lab,left,xshift=-3pt] {$\subsetneq$} (R21g.south);
\foreach \a/\b in {H32/R31g,H42/R41g,Him12/Rim11g,Hip12/Rip11g}
  \draw[eqedge] (\a.north) -- node[lab,right,xshift=3pt] {$=$} (\b.south);
\draw[eqedge] (Hi2.north) --
  node[lab,right,xshift=3pt] {$=$} (Ri1g.south);
\node[hcls] (H03) at (0,12.80) {$\Hclass{0}{3}{1}$};
\node[hcls] (H13) at (2,12.80) {$\Hclass{1}{3}{1}$};
\node[hcls] (H23) at (4,12.80) {$\Hclass{2}{3}{1}$};
\node[bcls] (H33) at (6,12.80) {$\Hclass{3}{3}{*}$};
\node[eqcls] (H43) at (8,12.80) {$\Hclass{4}{3}{*}$};
\node[ellipsis] (H3d) at (10,12.80) {$\cdots$};
\node[eqcls] (Him13) at (12,12.80) {$\Hclass{i-1}{3}{*}$};
\node[eqcls] (Hi3) at (14,12.80) {$\Hclass{i}{3}{*}$};
\node[eqcls] (Hip13) at (16,12.80) {$\Hclass{i+1}{3}{*}$};
\node[ellipsis] (H3dp) at (18,12.80) {$\cdots$};
\coordinate (H3right) at (19.35,12.80);
\node[rowlab] at (-1.05,12.80) {$n=3$};
\foreach \a/\b in {H03/H13,H13/H23,H23/H33}
  \draw[edge] (\a) -- node[lab,above] {$\subsetneq$} (\b);
\draw[comp] (H33) -- node[lab,above] {$\subsetneq$} (H43);
\foreach \a/\b in {Him13/Hi3,Hi3/Hip13}
  \draw[edge] (\a) -- node[lab,above] {$\subsetneq$} (\b);
\foreach \a/\b in {H03/H02,H13/H12}
  \draw[comp] (\a.north) -- node[lab,left,xshift=-3pt] {$\subsetneq$} (\b.south);
\draw[edge] (H23.north) -- node[lab,left,xshift=-3pt] {$\subsetneq$} (H22.south);
\draw[edge] (H33.north) -- node[lab,right,xshift=3pt] {$\subsetneq$} (H32.south);
\foreach \a/\b in {H43/H42,Him13/Him12,Hi3/Hi2,Hip13/Hip12}
  \draw[eqedge] (\a.north) -- node[lab,right,xshift=3pt] {$=$} (\b.south);
\draw[cont] (H43) -- (H3d); \draw[cont] (H3d) -- (Him13);
\draw[rightcont] (Hip13) -- (H3right);

\node[hcls] (H04) at (0,10.60) {$\Hclass{0}{4}{1}$};
\node[hcls] (H14) at (2,10.60) {$\Hclass{1}{4}{1}$};
\node[hcls] (H24) at (4,10.60) {$\Hclass{2}{4}{1}$};
\node[hcls] (H34) at (6,10.60) {$\Hclass{3}{4}{1}$};
\node[bcls] (H44) at (8,10.60) {$\Hclass{4}{4}{*}$};
\node[ellipsis] (H4d) at (10,10.60) {$\cdots$};
\node[eqcls] (Him14) at (12,10.60) {$\Hclass{i-1}{4}{*}$};
\node[eqcls] (Hi4) at (14,10.60) {$\Hclass{i}{4}{*}$};
\node[eqcls] (Hip14) at (16,10.60) {$\Hclass{i+1}{4}{*}$};
\node[ellipsis] (H4dp) at (18,10.60) {$\cdots$};
\coordinate (H4right) at (19.35,10.60);
\node[rowlab] at (-1.05,10.60) {$n=4$};
\foreach \a/\b in {H04/H14,H14/H24,H24/H34,H34/H44}
  \draw[edge] (\a) -- node[lab,above] {$\subsetneq$} (\b);
\foreach \a/\b in {Him14/Hi4,Hi4/Hip14}
  \draw[edge] (\a) -- node[lab,above] {$\subsetneq$} (\b);
\foreach \a/\b in {H04/H03,H14/H13,H24/H23}
  \draw[comp] (\a.north) -- node[lab,left,xshift=-3pt] {$\subsetneq$} (\b.south);
\draw[edge] (H34.north) -- node[lab,left,xshift=-3pt] {$\subsetneq$} (H33.south);
\draw[edge] (H44.north) -- node[lab,right,xshift=3pt] {$\subsetneq$} (H43.south);
\foreach \a/\b in {Him14/Him13,Hi4/Hi3,Hip14/Hip13}
  \draw[eqedge] (\a.north) -- node[lab,right,xshift=3pt] {$=$} (\b.south);
\draw[cont] (H44) -- (H4d); \draw[cont] (H4d) -- (Him14);
\draw[rightcont] (Hip14) -- (H4right);

\node[ellipsis] at (0,8.50) {$\vdots$};
\node[ellipsis] at (2,8.50) {$\vdots$};
\node[ellipsis] at (4,8.50) {$\vdots$};
\node[ellipsis] at (6,8.50) {$\vdots$};
\node[ellipsis] at (8,8.50) {$\vdots$};
\node[ellipsis] at (12,8.50) {$\vdots$};
\node[ellipsis] at (14,8.50) {$\vdots$};
\node[ellipsis] at (16,8.50) {$\vdots$};
\node[ellipsis,anchor=east] at (-1.05,8.50) {$\vdots$};

\node[hcls] (H0im2) at (0,6.40) {$\Hclass{0}{i-2}{1}$};
\node[hcls] (H1im2) at (2,6.40) {$\Hclass{1}{i-2}{1}$};
\node[hcls] (H2im2) at (4,6.40) {$\Hclass{2}{i-2}{1}$};
\node[hcls] (H3im2) at (6,6.40) {$\Hclass{3}{i-2}{1}$};
\node[hcls] (H4im2) at (8,6.40) {$\Hclass{4}{i-2}{1}$};
\node[ellipsis] (Him2d) at (10,6.40) {$\cdots$};
\node[eqcls] (Him1im2) at (12,6.40) {$\Hclass{i-1}{i-2}{*}$};
\node[eqcls] (Hiim2) at (14,6.40) {$\Hclass{i}{i-2}{*}$};
\node[eqcls] (Hip1im2) at (16,6.40) {$\Hclass{i+1}{i-2}{*}$};
\node[ellipsis] (Him2dp) at (18,6.40) {$\cdots$};
\coordinate (Him2right) at (19.35,6.40);
\node[rowlab] at (-1.05,6.40) {$n=i-2$};
\foreach \a/\b in {H0im2/H1im2,H1im2/H2im2,H2im2/H3im2,H3im2/H4im2}
  \draw[edge] (\a) -- node[lab,above] {$\subsetneq$} (\b);
\foreach \a/\b in {Him1im2/Hiim2,Hiim2/Hip1im2}
  \draw[edge] (\a) -- node[lab,above] {$\subsetneq$} (\b);
\draw[cont] (H4im2) -- (Him2d); \draw[cont] (Him2d) -- (Him1im2);
\draw[rightcont] (Hip1im2) -- (Him2right);
\foreach \a/\b in {Him1im2/Him14,Hiim2/Hi4,Hip1im2/Hip14}
  \draw[cont] (\a.north) -- node[lab,right,xshift=3pt] {$=\cdots=$} (\b.south);

\node[hcls] (H0im1) at (0,4.30) {$\Hclass{0}{i-1}{1}$};
\node[hcls] (H1im1) at (2,4.30) {$\Hclass{1}{i-1}{1}$};
\node[hcls] (H2im1) at (4,4.30) {$\Hclass{2}{i-1}{1}$};
\node[hcls] (H3im1) at (6,4.30) {$\Hclass{3}{i-1}{1}$};
\node[hcls] (H4im1) at (8,4.30) {$\Hclass{4}{i-1}{1}$};
\node[ellipsis] (Him1d) at (10,4.30) {$\cdots$};
\node[bcls] (Him1im1) at (12,4.30) {$\Hclass{i-1}{i-1}{*}$};
\node[eqcls] (Hiim1) at (14,4.30) {$\Hclass{i}{i-1}{*}$};
\node[eqcls] (Hip1im1) at (16,4.30) {$\Hclass{i+1}{i-1}{*}$};
\node[ellipsis] (Him1dp) at (18,4.30) {$\cdots$};
\coordinate (Him1right) at (19.35,4.30);
\node[rowlab] at (-1.05,4.30) {$n=i-1$};
\foreach \a/\b in {H0im1/H1im1,H1im1/H2im1,H2im1/H3im1,H3im1/H4im1}
  \draw[edge] (\a) -- node[lab,above] {$\subsetneq$} (\b);
\draw[comp] (Him1im1) -- node[lab,above] {$\subsetneq$} (Hiim1);
\draw[edge] (Hiim1) -- node[lab,above] {$\subsetneq$} (Hip1im1);
\draw[cont] (H4im1) -- (Him1d); \draw[cont] (Him1d) -- (Him1im1);
\draw[rightcont] (Hip1im1) -- (Him1right);
\foreach \a/\b in {H0im1/H0im2,H1im1/H1im2,H2im1/H2im2,H3im1/H3im2,H4im1/H4im2}
  \draw[comp] (\a.north) -- node[lab,left,xshift=-3pt] {$\subsetneq$} (\b.south);
\draw[edge] (Him1im1.north) -- node[lab,right,xshift=3pt] {$\subsetneq$} (Him1im2.south);
\foreach \a/\b in {Hiim1/Hiim2,Hip1im1/Hip1im2}
  \draw[eqedge] (\a.north) -- node[lab,right,xshift=3pt] {$=$} (\b.south);

\node[hcls] (H0i) at (0,2.20) {$\Hclass{0}{i}{1}$};
\node[hcls] (H1i) at (2,2.20) {$\Hclass{1}{i}{1}$};
\node[hcls] (H2i) at (4,2.20) {$\Hclass{2}{i}{1}$};
\node[hcls] (H3i) at (6,2.20) {$\Hclass{3}{i}{1}$};
\node[hcls] (H4i) at (8,2.20) {$\Hclass{4}{i}{1}$};
\node[ellipsis] (Hid) at (10,2.20) {$\cdots$};
\node[hcls] (Him1i) at (12,2.20) {$\Hclass{i-1}{i}{1}$};
\node[bcls] (Hii) at (14,2.20) {$\Hclass{i}{i}{*}$};
\node[eqcls] (Hip1i) at (16,2.20) {$\Hclass{i+1}{i}{*}$};
\node[ellipsis] (Hidp) at (18,2.20) {$\cdots$};
\coordinate (Hiright) at (19.35,2.20);
\node[rowlab] at (-1.05,2.20) {$n=i$};
\foreach \a/\b in {H0i/H1i,H1i/H2i,H2i/H3i,H3i/H4i,Him1i/Hii}
  \draw[edge] (\a) -- node[lab,above] {$\subsetneq$} (\b);
\draw[comp] (Hii) -- node[lab,above] {$\subsetneq$} (Hip1i);
\draw[cont] (H4i) -- (Hid); \draw[cont] (Hid) -- (Him1i);
\draw[rightcont] (Hip1i) -- (Hiright);
\foreach \a/\b in {H0i/H0im1,H1i/H1im1,H2i/H2im1,H3i/H3im1,H4i/H4im1}
  \draw[comp] (\a.north) -- node[lab,left,xshift=-3pt] {$\subsetneq$} (\b.south);
\draw[edge] (Him1i.north) -- node[lab,left,xshift=-3pt] {$\subsetneq$} (Him1im1.south);
\draw[edge] (Hii.north) -- node[lab,right,xshift=3pt] {$\subsetneq$} (Hiim1.south);
\draw[eqedge] (Hip1i.north) -- node[lab,right,xshift=3pt] {$=$} (Hip1im1.south);

\node[hcls] (H0ip1) at (0,0.10) {$\Hclass{0}{i+1}{1}$};
\node[hcls] (H1ip1) at (2,0.10) {$\Hclass{1}{i+1}{1}$};
\node[hcls] (H2ip1) at (4,0.10) {$\Hclass{2}{i+1}{1}$};
\node[hcls] (H3ip1) at (6,0.10) {$\Hclass{3}{i+1}{1}$};
\node[hcls] (H4ip1) at (8,0.10) {$\Hclass{4}{i+1}{1}$};
\node[ellipsis] (Hip1d) at (10,0.10) {$\cdots$};
\node[hcls] (Him1ip1) at (12,0.10) {$\Hclass{i-1}{i+1}{1}$};
\node[hcls] (Hiip1) at (14,0.10) {$\Hclass{i}{i+1}{1}$};
\node[bcls] (Hip1ip1) at (16,0.10) {$\Hclass{i+1}{i+1}{*}$};
\node[ellipsis] (Hip1dp) at (18,0.10) {$\cdots$};
\coordinate (Hip1right) at (19.35,0.10);
\node[rowlab] at (-1.05,0.10) {$n=i+1$};
\foreach \a/\b in {H0ip1/H1ip1,H1ip1/H2ip1,H2ip1/H3ip1,H3ip1/H4ip1,Him1ip1/Hiip1,Hiip1/Hip1ip1}
  \draw[edge] (\a) -- node[lab,above] {$\subsetneq$} (\b);
\draw[cont] (H4ip1) -- (Hip1d); \draw[cont] (Hip1d) -- (Him1ip1);
\draw[rightcont] (Hip1ip1) -- (Hip1right);
\foreach \a/\b in {H0ip1/H0i,H1ip1/H1i,H2ip1/H2i,H3ip1/H3i,H4ip1/H4i,Him1ip1/Him1i}
  \draw[comp] (\a.north) -- node[lab,left,xshift=-3pt] {$\subsetneq$} (\b.south);
\draw[edge] (Hiip1.north) -- node[lab,left,xshift=-3pt] {$\subsetneq$} (Hii.south);
\draw[edge] (Hip1ip1.north) -- node[lab,right,xshift=3pt] {$\subsetneq$} (Hip1i.south);

\node[ellipsis] (V0)  at (0,-1.65)  {$\vdots$};
\node[ellipsis] (V1)  at (2,-1.65)  {$\vdots$};
\node[ellipsis] (V2)  at (4,-1.65)  {$\vdots$};
\node[ellipsis] (V3)  at (6,-1.65)  {$\vdots$};
\node[ellipsis] (V4)  at (8,-1.65)  {$\vdots$};
\node[ellipsis] (Vd)  at (10,-1.65) {$\vdots$};
\node[ellipsis] (Vim1) at (12,-1.65) {$\vdots$};
\node[ellipsis] (Vi)   at (14,-1.65) {$\vdots$};
\node[ellipsis] (Vip1) at (16,-1.65) {$\vdots$};
\node[rowlab] at (-1.05,-1.65) {$n\to\infty$};
\foreach \a/\b in {H0ip1/V0,H1ip1/V1,H2ip1/V2,H3ip1/V3,H4ip1/V4,Hip1d/Vd,Him1ip1/Vim1,Hiip1/Vi,Hip1ip1/Vip1}
  \draw[cont] (\a.south) -- (\b.north);

\begin{scope}[on background layer]
\node[fpbox,fit=(R01g)(R21g)(H0ip1)(H2ip1)(V0)(V1)(V2),
      inner xsep=7pt,inner ysep=6pt] (FPbox) {};
\end{scope}
\node[font=\scriptsize,fill=white,inner sep=1.5pt,anchor=west]
      at ([xshift=-1pt]FPbox.east)
      {$\subsetneq\mathsf{FP}$};

\end{tikzpicture}%
}
\begingroup
\sloppy
\setlength{\emergencystretch}{2em}
\caption{Proved inclusions and equalities among the displayed classes.
Here $i\ge7$ is a schematic index.  All labelled single arrows are solid;
their labels distinguish proved strict inclusions ($\subsetneq$) from
inclusions whose strictness is unresolved ($\subseteq$).  Double arrows denote
equality, with every equality sign placed to the right of its arrow.  Dotted
$=\cdots=$ segments denote omitted equality chains; other dotted segments and
ellipses suppress intermediate classes.  The
terminal right arrows and the bottom vertical ellipses show that the hierarchy
continues without bound as $m$ and $n$ increase.  Differences in node-border
style are used only for visual grouping and carry no additional
inclusion-theoretic meaning.  The double rectangle encloses classes contained
in $\Hclass{2}{1}{*}\subsetneq\mathsf{FP}$.}
\label{fig:global-inclusion-grid}
\endgroup
\fussy
\end{figure}
\end{landscape}
\clearpage

\begin{landscape}
\begin{figure}[p]
\centering
\resizebox{0.94\linewidth}{!}{%
\begin{tikzpicture}[
 >=Latex,
 cls/.style={draw,rounded corners=2pt,align=center,
             inner xsep=6pt,inner ysep=5pt,font=\small,
             minimum width=2.05cm,minimum height=0.72cm},
 edge/.style={->,thick},
 mapedge/.style={->,very thick},
 cont/.style={densely dotted,thick},
 title/.style={font=\large\bfseries},
 note/.style={font=\small,align=center}
]
\node[title] at (-4.0,6.15) {indices $l=1,2,\ldots,10,\ldots$};
\node[cls] (L1)  at (-4.0,5.10) {$\Hclass{m}{n}{1}$};
\node[cls] (L5)  at (-7.4,3.55) {$\Hclass{m}{n}{5}$};
\node[cls] (L2)  at (-5.1,3.55) {$\Hclass{m}{n}{2}$};
\node[cls] (L3)  at (-2.8,3.55) {$\Hclass{m}{n}{3}$};
\node[cls] (L7)  at (-0.5,3.55) {$\Hclass{m}{n}{7}$};
\node[cls] (L10) at (-7.4,1.80) {$\Hclass{m}{n}{10}$};
\node[cls] (L4)  at (-5.1,1.80) {$\Hclass{m}{n}{4}$};
\node[cls] (L6)  at (-2.8,1.80) {$\Hclass{m}{n}{6}$};
\node[cls] (L9)  at (-0.5,1.80) {$\Hclass{m}{n}{9}$};
\node[cls] (L8)  at (-5.1,0.10) {$\Hclass{m}{n}{8}$};
\node (Ldot) at (-2.8,0.10) {$\cdots$};
\foreach \u/\v in {L2/L1,L3/L1,L5/L1,L7/L1,L4/L2,L6/L2,L6/L3,L10/L2,L10/L5,L9/L3,L8/L4}
  \draw[edge] (\u) -- (\v);
\draw[cont] (Ldot) -- (-2.8,0.95);

\node[title] at (6.5,6.15) {scaled copy under $l\mapsto al$};
\node[cls] (A1)  at (6.5,5.10) {$\Hclass{m}{n}{a}$};
\node[cls] (A5)  at (3.1,3.55) {$\Hclass{m}{n}{5a}$};
\node[cls] (A2)  at (5.4,3.55) {$\Hclass{m}{n}{2a}$};
\node[cls] (A3)  at (7.7,3.55) {$\Hclass{m}{n}{3a}$};
\node[cls] (A7)  at (10.0,3.55) {$\Hclass{m}{n}{7a}$};
\node[cls] (A10) at (3.1,1.80) {$\Hclass{m}{n}{10a}$};
\node[cls] (A4)  at (5.4,1.80) {$\Hclass{m}{n}{4a}$};
\node[cls] (A6)  at (7.7,1.80) {$\Hclass{m}{n}{6a}$};
\node[cls] (A9)  at (10.0,1.80) {$\Hclass{m}{n}{9a}$};
\node[cls] (A8)  at (5.4,0.10) {$\Hclass{m}{n}{8a}$};
\node (Adot) at (7.7,0.10) {$\cdots$};
\foreach \u/\v in {A2/A1,A3/A1,A5/A1,A7/A1,A4/A2,A6/A2,A6/A3,A10/A2,A10/A5,A9/A3,A8/A4}
  \draw[edge] (\u) -- (\v);
\draw[cont] (Adot) -- (7.7,0.95);

\draw[mapedge] (0.15,5.10) -- node[above,font=\large] {$l\mapsto al$} (2.35,5.10);
\node[note] at (1.25,-0.72) {$a=1,2,3,\ldots$ is fixed.};
\end{tikzpicture}%
}
\caption{Reverse divisibility for $n\ge2$ and $m<n$.  Each panel displays
a finite portion of the divisibility order; every arrow is a strict inclusion
toward the containing class.  In particular, both proper-divisor arrows from
$6$ to $2,3$ and from $10$ to $2,5$ are shown.  The right diagram is obtained
by the map $l\mapsto al$.  Further divisibility edges, such as $8\to2$ and
$10\to1$, are omitted.}
\label{fig:iteration-inclusions}
\end{figure}
\end{landscape}

\end{document}